\newtheorem{definition}{Definition}
\newenvironment{fminipage}%
  {\begin{Sbox}\begin{minipage}}%
  {\end{minipage}\end{Sbox}\fbox{\TheSbox}}
\newenvironment{algbox}[0]{\vskip 0.2in
\noindent 
\begin{fminipage}{6.3in}
}{
\end{fminipage}
\vskip 0.2in
}
\newtheorem{theorem}{Theorem}
\newtheorem{lemma}[theorem]{Lemma}
\newtheorem*{remark*}{Remark}
\DeclareMathOperator*{\argmin}{arg\,min}
\def\prob#1#2{\mbox{Pr}_{#1}\left[ #2 \right]}
\def\expec#1#2{{\mathbb{E}}_{#1}\left[ #2 \right]}
\def\defeq{\stackrel{\mathrm{def}}{=}}
\def\abs#1{\left|#1  \right|}
\def\norm#1{\left\| #1 \right\|}
\newcommand\Rbb{\mathbb{R}}
\newcommand\Otil{\tilde{O}}
\date{}
\begin{document}

\title{
Faster \texorpdfstring{$p$}{p}-Norm Regression Using Sparsity
}
\author{Mehrdad Ghadiri\footnote{Georgia Institute of Technology, \url{ghadiri@gatech.edu}} \and Richard Peng\footnote{Georgia Institute of Technology \& University of Waterloo, \url{rpeng@cc.gatech.edu}} \and Santosh S. Vempala\footnote{Georgia Institute of Technology, \url{vempala@gatech.edu}}}

\maketitle
\thispagestyle{empty}
\begin{abstract}
For a matrix $A\in \mathbb{R}^{n\times d}$ with $n\geq d$, we consider the dual problems of $\min \|Ax-b\|_p^p, \, b\in \mathbb{R}^n$  and $\min_{A^\top x=b} \norm{x}_p^p,\, b\in \mathbb{R}^d$. 
We improve the runtimes for solving these problems to high accuracy for every $p>1$ for sufficiently \emph{sparse} matrices.
We show that recent progress on fast sparse linear solvers can be leveraged to obtain faster than matrix-multiplication algorithms for any $p > 1$, i.e., in time $\tilde{O}(pn^\theta)$ for some $\theta < \omega$, the matrix multiplication constant.
We give the first high-accuracy input sparsity $p$-norm regression algorithm for solving $\min \|Ax-b\|_p^p$ with $1 < p \leq 2$, via a new row sampling theorem for the smoothed $p$-norm function. This algorithm runs in time $\tilde{O}(\text{nnz}(A) + d^4)$ for any $1<p\leq 2$, and in time $\tilde{O}(\text{nnz}(A) + d^\theta)$ for $p$ close to $2$, improving on the previous best bound where the exponent of $d$ grows with $\max\{p, p/(p-1)\}$. 
\end{abstract}
\newpage
\tableofcontents
\thispagestyle{empty}
\newpage
\pagenumbering{arabic}
\section{Introduction}
The $p$-norm regression problem can be stated in two ways that are dual to each other. The input consists of a matrix $A \in \Rbb^{n \times d}$, $n\geq d$ and a vector $b \in \Rbb^n$ or $b \in \Rbb^d$.
\begin{enumerate}[label=\arabic*., ref=\kern-0.85ex\arabic*]
    \item[(P1)]  \label{P1} $\min\norm{Ax-b}_p^p$.
    \item[(P2)]  \label{P2} $\min_{A^\top x=b} \norm{x}_p^p$. 
\end{enumerate}

For these problems, the \emph{square} setting (i.e., $n=O(d)$) and \emph{tall} setting (i.e., $n\gg d$) have both been intensively studied.
The case of $p=2$, or least squares, is classical linear regression, a problem that has been studied for centuries, and continues to be used in machine learning~\cite{Zhu05} and optimization~\cite{BV18}.
The case of $p=1$ or minimum absolute deviation, is also classical, and a topic that was studied by Laplace~\cite{L1774} and others.
Over the past few decades, $\ell_1$ minimization has turned out to be a very effective tool for sparse recovery and other applications~\cite{candes2006stable,candes2006robust, juditsky2011verifiable,juditsky2011accuracy}.

Regression in other norms, i.e., $p$ between $1$ and $2$, and $p>2$ has also been studied in Statistics for many decades. It is explicitly proposed and studied as a {\em robust} estimator, and the question of efficiently solving $p$-norm regression was stated as an important problem over 50 years ago~\cite{gentleman1965robust}.  
Regression in norms other than $p=1,2$ is widely used in network science~\cite{BH09,FWY20},
and the $p = O(\log{n})$ setting has been surprisingly useful as an algorithmic primitive for network flows~\cite{LS20,KLS20}.

For any $p\geq 1$, the $p$-norm regression problem is a convex optimization problem that can be solved in polynomial time. However, given its many applications and the increasing size of data sets, very efficient algorithms, ideally nearly linear in the input size, are desirable. For $p=2$, the problem can be solved to high accuracy in $\tilde{O}(\text{nnz}(A)+d^{\omega})$ time~\cite{CW13,NelsonN13,W14}\footnote{A complexity of $\tilde{O}(\text{nnz}+\text{poly}(d))$ is usually called {\em input sparsity} time.}, i.e., the dependence on the target error $\epsilon$ is only $O(\log(1/\epsilon))$.
This has been extended to any $p > 1$ but with a complexity of $\tilde{O}_p(\text{nnz}(A)+d^\omega + d^{0.5\max\{p,\frac{p}{p-1}\}+1})$~\cite{bubeck2018homotopy}, which becomes very large as $p$ deviates away from $2$. 
The state-of-the-art for general $p$, for high accuracy solutions, is that both variants above can be solved in time $\tilde{O}(n^{\omega})$, where $\omega$ is the matrix multiplication constant and the dependence on target error $\epsilon$ is logarithmic.
These complexities are a result of progress on fast optimization techniques over the past decade,
giving the current fastest runtimes of $\tilde{O}(n^{\omega})$~\cite{CLS21,B20} and 
$\tilde{O}(nd+d^{2.5})$~\cite{BrandLSS20,BLNPSS0W20,BLLSS0W21}
for solving linear programs.
For dense matrices, further improvements require either solving dense systems faster than matrix multiplication time ($n^\omega$),
or running in time sub-linear to input size ($nd$).
We note that solving linear programs is equivalent to $p$-norm minimization for $p=1$ or $p=\infty$.

Much less understood is the optimal asymptotic complexity of solving sparse optimization problems,
which represent an overwhelming majority of instances that arise in practice. Recent advances for dense matrices rely on fast matrix multiplication and matrix ``heavy-hitter" sketching, neither of which seem natural for sparse matrices. More specifically, (a) the complexity of solving sparse linear systems and (b) the existence of heavy-hitter matrices that could exploit sparsity are both unknown. 
This raises the following questions: (1) Can we go below matrix multiplication time for general $p$-norm regression for sufficiently sparse matrices? and (2) Is there an input sparsity algorithm for (P1) for $1< p \le 2$ (or for (P2) for $p > 2$)? In other words, 
a complexity of $\text{nnz}+\text{poly}(d)$ where the $\text{poly}(d)$ is a fixed polynomial independent of $p$. 

In this paper, we explore sparsity oriented tools and algorithms to speed up high-accuracy optimization. These include sparse linear system solvers, preconditioning, 
and row/column sampling, which can all exploit sparsity.
In fact the latter has lead to input sparsity algorithms for $p$-norm minimization, but with polynomial dependence on the target error $\epsilon$~\cite{DDHKM09,SW11,CDMMMW16,MM13,LMP13,CohenP15}.
Other methods lead to very high exponents in $d$, roughly $d^{0.5\max\{p,\frac{p}{p-1}\}+1}$~\cite{bubeck2018homotopy}.
Recent progress on sparse linear system solvers~\cite{PengVempala21} effectively uses rows of size $o(d)$, and sampling methods naturally preserve row-sparsity. Our goal is faster, high-accuracy algorithms for $p$-norm minimization. Our main contributions are faster algorithms for sparse $p$-norm regression. More specifically, we show that:
\begin{enumerate}
    \item Sparse linear system solvers can be adapted and used to go below the matrix multiplication threshold for general $p$-norm regression, i.e., a complexity of $n^\theta$ for $\theta < \omega$. In the special case of $p=2$ (and $p$ close to $2$), we get a runtime of $\text{nnz}(A)+d^\theta$. 
    \item We show that the sketch-and-precondition
    approach can be extended to general norms, giving the first input sparsity algorithm for (P1) for any $p \in (1,2]$ (and hence for (P2) for $p\geq 2$). The core of this result is a new sampling algorithm and analysis, where we use the smoothed $p$-norm (called the $\gamma$ function) and show that sampling with leverage scores can be used to approximate it. 
\end{enumerate}
We expect that the sparse optimization tools
we study here will be useful in broader settings, including for sparse linear programs.

\subsection{Results}

To state our results, we use the notation $\text{nnz}(A)$ to denote the number of nonzero entries of $A$ and $\text{nnz}_d(A)$ 
to denote the maximum number of nonzero entries in any $d$ rows of an $n \times d$ matrix $A$, with $n \ge d$. The following table provides a quick summary. 

\begin{table}[htb]
  \centering
  \begin{tabular}{p{0.15\textwidth}p{0.3\textwidth}p{0.3\textwidth}
  }
    \toprule
    Problem     & $\min \|Ax-b\|_p$ & $\min_{A^\top x = b} \|x\|_p$ \\
    \midrule
    Any $p$ &  $n^\theta$ & $n^\theta$ \\
    $1 < p < 2$ & $\text{nnz}(A)+d^4$  &  $d^{\frac{2-p}{2+p}}(\text{nnz}(A)+d^{\omega})$~\cite{jambulapati2021improved} \\
    $p=2$ &  $\text{nnz}(A)+d^\theta$   & $\text{nnz}(A)+d^\theta$   \\
    $p$ close to $2$ & $\text{nnz}(A)+d^\theta + d^{0.5\max\{p,\frac{p}{p-1}\}+1}$  & $\text{nnz}(A)+d^\theta + d^{0.5\max\{p,\frac{p}{p-1}\}+1}$ \\
    $p > 2$  & $d^{\frac{p-2}{3p-2}}(\text{nnz}(A)+d^{\omega})$~\cite{jambulapati2021improved}  & $\text{nnz}(A)+d^4$\\
    \bottomrule
  \end{tabular}
  \caption{The complexity of sparse $p$-norm regression. We assume the input is an $n \times d$ matrix $A$ with $n \ge d$. The second column is dual to the first column. The exponent $\theta$ represents a constant smaller than the current matrix multiplication exponent $\omega$. The first, third and fourth rows are the first improvements over fast matrix multiplication. For $1 < p < 2$ for the first problem (and $p > 2$ for the second problem), we get the first input sparsity algorithm. For $p > 2$ for the first problem (and $1 < p < 2$ for the second problem), the current best runtime is due to~\cite{jambulapati2021improved}.}
  \label{table:complexity}
\end{table}

Recent developments for regression problems are based on many ideas, including higher order smoothness~\cite{GDGVSU0W19}, homotopy methods~\cite{bubeck2018homotopy}, and generalized preconditioning~\cite{AdilKPS19,AdilS20}. At a high level, our algorithms are based on combining the latter two methods, which have $\log\left(1 / \epsilon\right)$ dependence on approximation error $\epsilon$, with recent developments of fast solvers for sparse linear systems~\cite{PengVempala21}.
To present our results in detail, we begin with linear regression. The following result of~\cite{CW13,NelsonN13,CohenLMMPS15} shows this problem can be solved in input-sparsity time up to a $d^\omega$ term.

\begin{theorem}[Input Sparsity Time Linear Regression Clarkson-Wooduff, Nelson-Nguyen~\cite{CW13,NelsonN13}]\label{thm:p-2}
The linear regression problem 
\[
\min\norm{Ax-b}_2^2
\]
can be solved to within relative error $1+\epsilon$ in time $\tilde{O}(\text{nnz}(A) + d^\omega)$ where d is the rank of $A$.
\end{theorem}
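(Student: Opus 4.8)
The plan is to reduce the regression problem to a well-conditioned one via a subspace embedding, and then solve the well-conditioned problem by a direct method. First I would sketch: draw a random matrix $\SS \in \Rbb^{m \times n}$ from a distribution supporting fast multiplication against sparse matrices (a CountSketch / sparse-embedding matrix with a single nonzero per column, composed if necessary with a fast Johnson--Lindenstrauss transform), with target dimension $m = \tilde{O}(d)$ (more precisely $m = O(d^2/\eps_0^2)$ for the CountSketch layer and $m = \tilde{O}(d/\eps_0^2)$ after the JL layer, with $\eps_0$ a constant like $1/2$). The key property is that with constant probability $\SS$ is a $(1\pm\eps_0)$ subspace embedding for the column span of the augmented matrix $[A \mid b]$, i.e. $\norm{\SS(Ax-b)}_2 = (1\pm\eps_0)\norm{Ax-b}_2$ for all $x$ simultaneously. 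Computing $\SS A$ and $\SS b$ takes $\tilde{O}(\nnz{A})$ time because each nonzero of $A$ is hashed into exactly one (or $O(\log)$ many) output coordinates, and the subsequent JL compression acts on an $O(d^2) \times d$ matrix in $\tilde{O}(d^\omega)$ (or better) time.

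Next I would use $\SS A$ to build a preconditioner: compute a QR (or SVD) factorization $\SS A = QR$ with $R \in \Rbb^{d\times d}$ invertible, in $\tilde{O}(d^\omega)$ time. Then $AR^{-1}$ has all singular values in $[1/(1+\eps_0), 1/(1-\eps_0)]$, i.e. it is a constant-factor well-conditioned basis for the column space of $A$. Now solve the preconditioned least-squares problem $\min_y \norm{(AR^{-1})y - b}_2^2$ by, say, a constant number $O(\log(1/\eps))$ of iterations of Richardson iteration / LSQR / conjugate gradient: each iteration requires one multiplication by $AR^{-1}$ and one by its transpose, i.e. $O(\nnz{A} + d^2)$ time, and because the condition number is $O(1)$ the error contracts by a constant factor per iteration, so $O(\log(1/\eps))$ iterations suffice to reach relative error $1+\eps$. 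Finally set $x = R^{-1}y$. Total time: $\tilde{O}(\nnz{A} + d^\omega)$, with the $d^\omega$ coming from the QR step and the JL-compressed dense solve, and the rank of $A$ replacing $d$ throughout if $A$ is rank-deficient (one restricts to a maximal independent set of columns, detectable from $R$).

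The main obstacle — and the part that genuinely needs the cited work of Clarkson--Woodruff and Nelson--Nguyen — is establishing that a \emph{sparse} sketching matrix with only $O(d^2)$ (or $\tilde{O}(d)$) rows is an oblivious subspace embedding for an arbitrary $d$-dimensional subspace; the naive analysis via a net over the sphere loses polynomial factors, and the tight bound requires either a moment/chaining argument or the approximate-matrix-multiplication plus Frobenius-norm-bound route. Everything else (QR, iterative refinement, the rank-deficiency bookkeeping) is standard numerical linear algebra. I would therefore cite Theorem statements from \cite{CW13,NelsonN13} for the embedding guarantee and spend the bulk of the writeup on assembling the pipeline and verifying the running-time accounting, in particular that the $\nnz{A}$ term is never multiplied by anything worse than $\mathrm{polylog}$ factors and that composing the CountSketch layer with the JL layer preserves both the embedding quality and the sparsity-respecting runtime.
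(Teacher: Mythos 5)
The theorem you are proving is imported from~\cite{CW13,NelsonN13,CohenLMMPS15}; the paper itself gives no proof of it, only a citation, so there is no in-paper argument to compare against directly. Your reconstruction is, up to some running-time bookkeeping, the standard sketch-and-precondition pipeline that does appear in that line of work, so it is the right shape of argument.

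One accounting detail in your proposal needs repair. You propose a CountSketch with $m = O(d^2)$ rows followed by a fast JL (SRHT) layer to compress $\SS A$ from $d^2\times d$ down to $\tilde{O}(d)\times d$. Applying an SRHT to a $d^2\times d$ matrix costs $O(d^2\log d)$ per column, hence $\tilde{O}(d^3)$ total, which exceeds $d^\omega$ for the current $\omega < 2.38$. The clean way to reach $\tilde{O}(\text{nnz}(A) + d^\omega)$ is to skip the two-stage compression and use an OSNAP-style embedding (Nelson--Nguyen) directly: a single sparse matrix with $\tilde{O}(d)$ rows and $\tilde{O}(1)$ nonzeros per column, so $\SS A$ is computed in $\tilde{O}(\text{nnz}(A))$ and the QR/preconditioner is formed from a $\tilde{O}(d)\times d$ matrix in $\tilde{O}(d^\omega)$. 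With that change your pipeline (sparse oblivious subspace embedding $\to$ QR preconditioner $\to$ $O(\log(1/\epsilon))$ preconditioned iterations) is exactly the argument the citation is standing in for.

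It is worth noting the contrast with the paper's own proof of the analogous \emph{sparse} result (Theorem~\ref{thm:linear-regression}): there the authors deliberately avoid oblivious sketching in favor of iterative leverage-score \emph{sampling} (Algorithm~\ref{alg:spectral_approx}), building a spectral approximation $\widetilde{A}$ out of $\tilde{O}(d)$ rescaled rows of $A$. The reason is that a sampled matrix inherits the row sparsity of $A$, so the Peng--Vempala sparse linear system solver (Theorem~\ref{thm:sparse_inverse}) can be applied to $\widetilde{A}^\top\widetilde{A}$ to beat $d^\omega$; an oblivious sketch like OSNAP densifies the matrix and forfeits that advantage. So the two routes buy different things: your sketching route is the right way to hit the $d^\omega$ baseline stated in this theorem, while the paper's sampling route is what lets them go below $d^\omega$ for sparse $A$ in their new results.
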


Our first theorem shows that for sparse matrices, we can go below the $d^\omega$ threshold and maintain input sparsity time.

\begin{restatable}{theorem}{linearRegression}
\label{thm:linear-regression}[Sparse Linear Regression]
Let $A\in \mathbb{R}^{n\times d}$, where $n\geq d$, be a matrix with condition number $\kappa$. Let 
$x^*=\argmin \norm{Ax-b}_2^2$. There is an algorithm that finds $\overline{x}$ such that  
\[
\norm{A\overline{x}-b}_2^2 \leq (1+\epsilon) \norm{Ax^* - b}_2^2
\]
in time 
\[
\tilde{O} \left( \left( \text{nnz}(A) + \text{nnz}_d(A)^{\frac{\omega-2}{\omega-1}} d^2 + d^{\frac{5\omega-4}{\omega+1}} \right)\log^{2} (\kappa/\epsilon) \log \left( \frac{\kappa \norm{b}_2}{\epsilon OPT} \right) \right),
\]
with probability at least $1- \tilde{O}(d^{-10})$.
\end{restatable}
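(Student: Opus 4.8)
The algorithm follows a \emph{sketch-and-precondition} strategy in which the preconditioner is applied via the sparse linear-system solver of~\cite{PengVempala21} rather than by a direct factorization. First reduce to solving the normal equations $A^\top A x = A^\top b$: since $A^\top(Ax^\ast - b)=0$, the Pythagorean identity gives $\norm{A\overline x-b}_2^2 = OPT + \norm{A(\overline x - x^\ast)}_2^2$, so it suffices to produce $\overline x$ with $\norm{\overline x - x^\ast}_{A^\top A}^2 \le \epsilon\cdot OPT$. Using $\norm{Ax^\ast}_2 \le \sqrt{OPT}+\norm{b}_2$ together with the crude a priori bounds $\sigma_{\min}(A) \ge \norm{A}_2/\kappa$ and $\norm{x^\ast}_2 \le \norm{b}_2/\sigma_{\min}(A)$, this becomes a relative-accuracy requirement of order $\mathrm{poly}(\epsilon\cdot OPT/(\kappa\norm{b}_2))$ in the $A^\top A$-norm, which is the source of the $\log(\kappa\norm{b}_2/(\epsilon\cdot OPT))$ factor.

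\textbf{Step 1: a sparse, row-subsampled preconditioner.}
Using a sparse oblivious subspace embedding (CountSketch / OSNAP, \cite{NelsonN13,CohenLMMPS15}) we obtain in $\tilde{O}(\text{nnz}(A))$ time a constant-factor subspace embedding of $A$, hence $O(1)$-accurate estimates of its leverage scores; sampling $\tilde{O}(d)$ rows of $A$ proportional to these estimates yields a matrix $\tilde A$ (a rescaled submatrix of $A$) with $\tfrac12 A^\top A \preceq \tilde A^\top\tilde A \preceq 2 A^\top A$ with high probability. Because $\tilde A$ consists of an actual subset of $\tilde{O}(d)$ rows of $A$, we have $\text{nnz}_d(\tilde A) \le \text{nnz}_d(A)$, and since any $kd$ rows of $A$ contain at most $k$ times the total sparsity of its $d$ densest rows, also $\text{nnz}(\tilde A) = \tilde{O}(\text{nnz}_d(A))$.

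\textbf{Steps 2--3: preconditioned iteration with a sparse inner solve.}
Solve $A^\top A x = A^\top b$ by preconditioned iterative refinement (preconditioned conjugate gradients or Richardson iteration) with preconditioner $P = \tilde A^\top\tilde A$. Since $P$ is a $2$-approximation of $A^\top A$, the preconditioned operator has $O(1)$ condition number, so $\tilde{O}(\log(\kappa/\epsilon))$ iterations suffice; each iteration performs one multiplication by $A^\top A$, costing $O(\text{nnz}(A))$, and one approximate solve of a system in $P$. Crucially we do \emph{not} form or factor the dense $d\times d$ matrix $\tilde A^\top\tilde A$; instead we apply $P^{-1}$ by invoking the solver of~\cite{PengVempala21} on the sparse, symmetric saddle-point reformulation
\[
\begin{pmatrix} I & \tilde A \\ \tilde A^\top & 0 \end{pmatrix}\begin{pmatrix} r \\ z \end{pmatrix} = \begin{pmatrix} 0 \\ c \end{pmatrix}, \qquad z = -P^{-1}c,
\]
whose dimension is $\tilde{O}(d)$, whose number of nonzeros is $\tilde{O}(\text{nnz}_d(A))$, and whose condition number is $\mathrm{poly}(\kappa)$. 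Instantiating the runtime of~\cite{PengVempala21} with these parameters gives a per-solve cost of $\tilde{O}\big(\text{nnz}_d(A)^{\frac{\omega-2}{\omega-1}}d^2 + d^{\frac{5\omega-4}{\omega+1}}\big)$, up to a $\mathrm{poly}\log(\kappa/\epsilon)$ factor demanded by the precision of the randomized, inexact inner solver. Summing over the $\tilde{O}(\log(\kappa/\epsilon))$ outer iterations, and absorbing the outer count together with the inner solver's logarithmic overhead into the $\log^2(\kappa/\epsilon)$ factor, yields the stated bound; the $\tilde{O}(d^{-10})$ failure probability follows by amplifying and union-bounding over the constantly many randomized primitives.

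\textbf{Main obstacle.}
The crux is Step~1: we must produce a constant-factor preconditioner that is simultaneously a \emph{row subsample} of $A$ (so its sparsity is governed by $\text{nnz}_d(A)$ and $\tilde{O}(d)$ rows) \emph{without} ever paying the $d^\omega$ cost of a dense $d\times d$ factorization for leverage-score estimation — this requires bootstrapping the leverage scores from a coarse embedding (e.g.\ via the sparse solver itself or a recursive sampling scheme) and tracking how multiplicative error in the estimated scores degrades the spectral approximation $\tilde A^\top\tilde A \approx A^\top A$. The remaining difficulty is quantitative: one must run the analysis for \emph{inexact} preconditioned iteration, bound all intermediate condition numbers, and — because $OPT$ may be arbitrarily small — calibrate the inner-solver precision so that the output genuinely satisfies $\norm{A\overline x - b}_2^2 \le (1+\epsilon)OPT$; propagating these requirements through the layered construction is what produces the precise $\log^2(\kappa/\epsilon)\log(\kappa\norm{b}_2/(\epsilon\cdot OPT))$ dependence.
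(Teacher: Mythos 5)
Your overall plan — build an $\Otil(d)$-row spectral approximation $\widetilde{A}$ of $A$ by row sampling, use the sparse block-Krylov solver of~\cite{PengVempala21} to apply $(\widetilde{A}^\top \widetilde{A})^{-1}$, and then run $\Otil(\log(\kappa/\epsilon))$ preconditioned Richardson iterations, calibrated to $\log(\kappa\norm{b}_2/(\epsilon\cdot OPT))$ accuracy via the Pythagorean identity — matches the structure of the paper's argument, and the per-solve cost $\Otil(\text{nnz}_d(A)^{\frac{\omega-2}{\omega-1}}d^2 + d^{\frac{5\omega-4}{\omega+1}})$ from balancing the $m$-terms in Theorem~\ref{thm:sparse_inverse} is right. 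However, Step~1 as written contains a genuine gap that you flag but do not close. You claim that a CountSketch/OSNAP embedding gives $O(1)$-accurate leverage scores in $\Otil(\text{nnz}(A))$ time, but extracting leverage scores from such a sketch requires forming and pseudo-inverting $(SA)^\top (SA)$, which is $\Otil(d^\omega)$; moreover $SA$ is a \emph{dense} linear combination of rows, so feeding it into the sparse solver does not inherit the sparsity parameter $\text{nnz}_d(A)$ that the final bound depends on. This is precisely the $d^\omega$ bottleneck the theorem is supposed to beat. The paper's Algorithm~\ref{alg:spectral_approx} resolves it with the iterative undersample-and-refine scheme of~\cite{CohenLMMPS15} (Lemma~\ref{lemma:undersample}): starting from uniform overestimates, each round samples $\Otil(d)$ rows (a genuine row subsample, so still $\Otil(\text{nnz}_d(A))$ nonzeros), builds the sparse inverse operator for that subsample via Theorem~\ref{thm:sparse_inverse}, uses a Johnson--Lindenstrauss projection to update all $n$ generalized leverage scores in $\Otil(\text{nnz}(A))$ time, and halves $\norm{u}_1$; after $\log(n/d)$ rounds $\norm{u}_1 = O(d)$ and the final sample is a $3$-spectral approximation. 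Your ``main obstacle'' paragraph names the right remedy (recursive bootstrapping through the sparse solver), but a complete proof needs this argument explicitly, since it is the core of why the theorem holds.

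A secondary, more cosmetic difference: rather than applying the solver to a saddle-point reformulation of $\widetilde{A}^\top\widetilde{A} z = c$, the paper invokes Theorem~\ref{thm:sparse_inverse} directly in its $AWA^\top$ form (with $A = \widetilde{A}^\top$, $W=I$) to obtain a linear operator $Z_{\widetilde{A}^\top\widetilde{A}}$. Your saddle-point route is plausible and should give comparable asymptotics since the augmented system has dimension $\Otil(d)$, $\Otil(\text{nnz}_d(A))$ nonzeros, and $\text{poly}(\kappa)$ condition number, but you would need to verify that the accuracy guarantee of the solver survives the indefinite $2\times 2$ block structure (the paper's theorem is stated for a symmetrized positive-definite product), so sticking to the paper's direct form is cleaner.
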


We note that 
the theorem gives an improvement in the complexity of linear regression to $\text{nnz}(A)+o(d^\omega)$ for matrices that have $o(d^{\omega-2})$ nonzeros in each row (or $o(d^{\omega-1})$ entries in any $d$ rows). Moreover, these improvements hold for any value of $\omega > 2$. If each row has $O(1)$ entries, then the runtime with the current value of $\omega$ is bounded by $d^{2.331645}$ up to logarithmic terms. 
We next turn our attention to $p \neq 2$, starting with $p$ close to two. In this setting, \cite{bubeck2018homotopy} used a homotopy method to obtain the following result.

\begin{theorem}[Input Sparsity $p$-norm, $p$ near $2$, Bubeck-Cohen-Lee-Li~\cite{bubeck2018homotopy}]
\label{thm:p2}
Let $1<p<\infty$. The problem $\min \norm{Ax-b}_p^p$ can be solved to within relative error $(1+\epsilon)$ in time $\tilde{O}_p(\text{nnz}(A)+d^{\omega} + d^{0.5\max\{p,\frac{p}{p-1}\}+1})$.
\end{theorem}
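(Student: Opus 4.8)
The plan is to combine an input-sparsity preprocessing step with a homotopy (continuation) method whose iteration count is bounded \emph{without} relying on self-concordance of the natural $\ell_p$ barrier, which has parameter $\Theta(n)$ and is useless here. \textbf{Step 1 (sketch and reduce, cost $\tilde{O}(\text{nnz}(A)+d^{\omega})$).} First I would apply a sparse $\ell_2$ subspace embedding (a Clarkson--Woodruff count-sketch) to $[A \mid b]$, reducing to $\text{poly}(d)$ rows in $\tilde{O}(\text{nnz}(A))$ time, then compute a QR-type factorization of the sketched matrix in $\tilde{O}(d^{\omega})$ time to obtain $R \in \mathbb{R}^{d\times d}$ with $AR^{-1}$ $\ell_2$-well-conditioned. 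Using $AR^{-1}$ I would estimate constant-factor $\ell_p$ Lewis weights (leverage scores already suffice for $p$ near $2$) and sample rows to produce $\widetilde{A} \in \mathbb{R}^{m\times d}$, $\widetilde{b}\in\mathbb{R}^m$ with $\tfrac12\norm{Ax-b}_p^p \le \norm{\widetilde{A}x-\widetilde{b}}_p^p \le 2\norm{Ax-b}_p^p$ for all $x$; the number of rows an $\ell_p$ embedding needs is $m = \tilde{O}_p(d)$ for $1<p\le 2$ and $m=\tilde{O}_p(d^{p/2})$ for $p\ge 2$, i.e. $m = \tilde{O}_p\big(d^{\,0.5\max\{p,\,p/(p-1)\}}\big)$, with the conjugate exponent $p/(p-1)$ entering the $p\le 2$ case through $\ell_p$--$\ell_q$ duality (or, equivalently, through the curvature of $|z|^p$ blowing up near $0$). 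After this step $A$ is never touched again, so $\text{nnz}(A)$ appears only once, and $R$ is kept as a preconditioner.

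\textbf{Step 2 (homotopy from $\ell_2$ to $\ell_p$ on the reduced instance).} On $(\widetilde{A},\widetilde{b})$ I would run a path-following method over the family $f_t(x)=\sum_i h_t\big((\widetilde{A}x-\widetilde{b})_i\big)$, where $h_t$ smoothly deforms the quadratic $z\mapsto z^2$ at $t=0$ into $z\mapsto |z|^p$ at $t=1$; a convenient realization is the smoothed $p$-th power $h_t(z)=(z^2+\mu_t)^{p/2}$ with $\mu_t$ decreased geometrically, so that $f_0$ is a regularized least-squares problem solved by one linear solve and $f_1 = \norm{\widetilde{A}x-\widetilde{b}}_p^p$. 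I maintain a point $x_t$ that is approximately optimal for $f_t$ in the norm $\norm{\cdot}_{\nabla^2 f_t(x_t)}$, decrease $\mu_t$ by a fixed factor, and restore approximate optimality with a few damped Newton steps, each one a solve with $\nabla^2 f_t(x_t) = \widetilde{A}^\top W_t \widetilde{A}$ for a positive diagonal $W_t$; since $W_t$ varies slowly, preconditioning with $R$ (refreshed lazily, at amortized cost no larger than the solves) makes each solve cost $\tilde{O}(\text{nnz}(\widetilde{A})) = \tilde{O}(md) = \tilde{O}_p\big(d^{\,0.5\max\{p,\,p/(p-1)\}+1}\big)$. Because the corrector converges linearly, $O(\log(n/\epsilon))$ extra steps at the end deliver the $(1+\epsilon)$ guarantee, which is why only $\log$ factors (absorbed in $\tilde{O}_p$) depend on $\epsilon$.

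\textbf{Step 3 (bounding the number of homotopy steps --- the crux).} The analysis must show that $\tilde{O}_p(1)$ updates of $\mu_t$, each needing $O(1)$ corrector steps, take us from $\mu_0 \asymp \norm{\widetilde{A}x_0-\widetilde{b}}_\infty^2$ down to a $\mu_{\mathrm{final}}$ small enough that $f_t$ agrees with $\norm{\widetilde{A}x-\widetilde{b}}_p^p$ to the target accuracy. In place of a second-order (self-concordance) bound, which fails for $|z|^p$, I would control how far the minimizer $x^\star_t$ of $f_t$ moves, measured in the current Hessian norm, using the uniform-convexity and H\"{o}lder-smoothness inequalities for $|z|^p$ --- the curvature degenerating like $|z|^{p-2}$ near $0$ for $p\ge 2$, and its dual analogue for $p\le 2$ --- so that one geometric $\mu$-step perturbs $x^\star_t$ by only a constant in that norm, keeping the corrector inside the region of quadratic convergence of Newton's method. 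The main obstacle is exactly this ``beyond self-concordance'' path-stability estimate: making the trade-off between the $\mu$-step size and the degenerating/blowing-up curvature of $h_t$ quantitative enough to pin the step count at $\tilde{O}_p(1)$, while simultaneously controlling the accumulation of sampling and linear-solver errors across all corrector phases. Combining the one-time $\tilde{O}(\text{nnz}(A)+d^{\omega})$ from Step 1 with $\tilde{O}_p(1)$ iterations of cost $\tilde{O}_p\big(d^{\,0.5\max\{p,p/(p-1)\}+1}\big)$ each from Step 2 yields the claimed $\tilde{O}_p\big(\text{nnz}(A)+d^{\omega}+d^{\,0.5\max\{p,p/(p-1)\}+1}\big)$ runtime.
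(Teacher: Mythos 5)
The central gap is in Step~1: after constructing a constant-factor $\ell_p$ embedding $\widetilde{A}$ you assert that ``$A$ is never touched again,'' and then you propose to solve the reduced problem to high accuracy. But solving the sampled problem exactly only yields a constant-factor approximation to $\min\norm{Ax-b}_p^p$; a constant-factor embedding does not transfer $(1+\epsilon)$ accuracy back to the original instance. To get $(1+\epsilon)$ with only $\log(1/\epsilon)$ dependence you would have to either take $\text{poly}(1/\epsilon)$ samples (which destroys the claimed runtime) or keep the original $A$ in the loop. Indeed the Bubeck--Cohen--Lee--Li algorithm that the theorem cites does the latter: sampling/sketching is used only to build a $d\times d$ preconditioner $P^{(k)}$, while the inner optimization is solved by a stochastic first-order method (mini-batch Katyusha) whose per-iteration gradient $\sum_{i\in S}\nabla F_i$ is computed from rows of the \emph{full} matrix $A$, so $\text{nnz}(A)$ genuinely reappears in the inner loop, controlled via the batch-size trade-off. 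The homotopy there is also over the threshold of the piecewise $\gamma_p$ function together with a quadratic extension $f_{t,\ell,u}$ to keep conditioning bounded, not over the $\mu$ in $(z^2+\mu)^{p/2}$, and the inner solver is a stochastic first-order method, not Newton.

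A secondary but telling inconsistency is your explanation of the exponent $0.5\max\{p,\,p/(p-1)\}$. Lewis-weight $\ell_p$ sampling gives $\tilde{O}(d)$ rows for $1<p\le 2$ and $\tilde{O}(d^{p/2})$ rows for $p>2$; there is no $d^{p/(2(p-1))}$ embedding bound for $1<p<2$, and your own text first writes $m=\tilde{O}_p(d)$ before asserting the contradictory $d^{0.5p/(p-1)}$. In the cited result this exponent arises from the condition number $\kappa = O(n^{|1-2/p|})$ of the preconditioned inner problem balanced against the mini-batch size --- an iteration-complexity effect, not an embedding-dimension effect. Finally, $\ell_p$ subspace embeddings preserve $\norm{Ax}_p^p$, not the quadratically-smoothed sums $\sum_i h_t((Ax)_i)$ that your homotopy optimizes; getting row-sampling to preserve such smoothed objectives is precisely the new ingredient developed in this paper (Theorem~\ref{thm:main-sampling}) and cannot be assumed for free.
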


We show that using a sparse inverse operator improves the time complexity for sparse matrices.

\begin{restatable}{theorem}{pCloseToTwo}
\label{thm:p-close-to-two-regression}[Sparse $p$-norm, $p$ near $2$]
Let $A\in \mathbb{R}^{n\times d}$, where $n\geq d$, be a matrix with condition number $\kappa$. Let $x^*=\argmin \norm{Ax-b}_p^p$. 
There is an algorithm that finds $\overline{x}$ such that  
\[
\norm{A\overline{x}-b}_p^p \leq (1+\epsilon) \norm{Ax^* - b}_p^p
\]
in time 
\[
\tilde{O}_p \left(
\left(
\text{nnz}\left(A\right) + d^{0.5\max\left\{p,\frac{p}{p-1}\right\}+1}
+  \text{nnz}_d\left(A\right)^{\frac{\omega-2}{\omega-1}} d^2 + d^{\frac{5\omega-4}{\omega+1}} \right)
\log^{2} \left(\kappa/\epsilon\right) \log \left( \frac{\kappa\norm{b}_2}{\epsilon OPT}
\right)
\right),
\]
with probability at least $1- \tilde{O}(d^{-10})$.
\end{restatable}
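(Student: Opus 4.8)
The plan is to open up the homotopy algorithm of~\cite{bubeck2018homotopy} behind Theorem~\ref{thm:p2} and to observe that its only cost exceeding the $\tilde{O}_p(\text{nnz}(A) + d^{0.5\max\{p,p/(p-1)\}+1})$ ``homotopy overhead'' is a single block of dense $\ell_2$-type linear algebra on $A$ --- the $d^{\omega}$ term --- which we replace by the sparse least-squares routine of Theorem~\ref{thm:linear-regression}. Concretely, the method of~\cite{bubeck2018homotopy} reduces $\min\norm{Ax-b}_p^p$ for $p$ near $2$ to: (i) computing, once, an $\ell_2$ preconditioner / well-conditioned basis for the column space of $A$ --- equivalently, the machinery behind a high-accuracy solve of $\min\norm{Ax-b}_2^2$ --- at cost $\tilde{O}(\text{nnz}(A)+d^{\omega})$, and then (ii) running a homotopy on a smoothing (the $\gamma$-function) parameter, whose $p$-dependent number of Newton-type steps, together with the per-step work on the resulting $O(d)$-sized reweighted systems, totals $\tilde{O}_p(d^{0.5\max\{p,p/(p-1)\}+1})$. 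We keep (ii) verbatim and re-implement (i) using the sparse algorithm of Theorem~\ref{thm:linear-regression} (more precisely, the sparse preconditioner it builds en route to its guarantee), whose cost is $\tilde{O}\big((\text{nnz}(A)+\text{nnz}_d(A)^{\frac{\omega-2}{\omega-1}}d^{2}+d^{\frac{5\omega-4}{\omega+1}})\log^{2}(\kappa/\epsilon)\log(\kappa\norm{b}_2/(\epsilon\,OPT))\big)$; summing the two gives exactly the claimed bound.

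The steps, in order, are as follows. (1) Isolate in~\cite{bubeck2018homotopy} the interface through which the homotopy uses $A$: it is a (re)weighted least-squares oracle $\min_{\Delta}\norm{\mathbf{W}(A\Delta-c)}_2$ with $\mathbf{W}$ diagonal, invoked with a reweighting that the homotopy keeps multiplicatively bounded (so the effective condition number stays $\text{poly}(n,\kappa,\norm{b}_2/OPT)$, which is why $\kappa$ appears only inside logarithms). (2) Observe that $\text{nnz}(\mathbf{W}A)=\text{nnz}(A)$ and $\text{nnz}_d(\mathbf{W}A)=\text{nnz}_d(A)$, so the sparsity parameters driving Theorem~\ref{thm:linear-regression} are preserved, and that the preconditioner it produces is reused across the entire homotopy exactly as the $d^{\omega}$ object is in~\cite{bubeck2018homotopy} (so the sparse cost is paid $\tilde{O}_p(1)$ times, not once per Newton step). (3) Verify robustness to inexactness: feeding the homotopy a solution of relative accuracy $1/\text{poly}(n,\kappa,\norm{b}_2/OPT)$ in place of an exact one --- which is what the iterative sparse solver returns --- does not change the Newton-step analysis of~\cite{bubeck2018homotopy} beyond constant factors, since the homotopy potential is stable under an additive error in the step; here is where the $\log^{2}(\kappa/\epsilon)$ and $\log(\kappa\norm{b}_2/(\epsilon\,OPT))$ factors enter, combining the $\log(1/\epsilon)$ of the high-accuracy homotopy with the per-solve accuracy demand and the iterative solver's logarithmic condition-number dependence. (4) Add up: $\tilde{O}_p(\text{nnz}(A)+\text{nnz}_d(A)^{\frac{\omega-2}{\omega-1}}d^{2}+d^{\frac{5\omega-4}{\omega+1}})$ from the sparse solves, $\tilde{O}_p(d^{0.5\max\{p,p/(p-1)\}+1})$ from the homotopy overhead, all times the stated logarithmic factors; the failure probability $1-\tilde{O}(d^{-10})$ is inherited from the randomized sparse solver via a union bound.

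The part that needs genuine work, rather than bookkeeping, is step (3) together with the ``reuse'' claim in step (2): establishing that the homotopy of~\cite{bubeck2018homotopy} retains its $d^{0.5\max\{p,p/(p-1)\}}$-type iteration count when its least-squares oracle is answered only approximately \emph{and} by a fixed (sparse) preconditioner rather than a freshly computed exact factorization. This amounts to re-running their progress/potential lemma with an additive perturbation in the Newton direction and quantifying how small that perturbation must be --- which in turn fixes the accuracy to which Theorem~\ref{thm:linear-regression} must be invoked, and hence the precise powers of the logarithms. Everything else --- preservation of $\text{nnz}$ and $\text{nnz}_d$ under diagonal reweighting, the $\text{poly}(n,\kappa,\norm{b}_2/OPT)$ conditioning bounds, and the error accounting across the $\tilde{O}(\log(1/\epsilon))$ homotopy phases --- is routine given Theorems~\ref{thm:p2} and~\ref{thm:linear-regression}.
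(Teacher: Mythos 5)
Your proposal matches the paper's approach in substance: Algorithm~3 in the paper is exactly the homotopy of~\cite{bubeck2018homotopy}, with the two $\ell_2$-type bottlenecks — the initial solve of $\min_x \norm{Ax-b}_2^2$, and the construction of the per-phase preconditioners $P^{(k)}=((\widetilde{A}^{(k)})^\top \widetilde{A}^{(k)})^{+}(\widetilde{A}^{(k)})^\top$ — rerouted through the sparse spectral approximation (Theorem~\ref{thm:spectral_approx}) and the sparse inverse operator (Theorem~\ref{thm:sparse_inverse}), which are also the engine behind Theorem~\ref{thm:linear-regression}. Your accounting of what is sparse-solver cost, what is homotopy overhead, and how the logarithmic factors assemble is consistent with the paper's.

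One detail worth correcting in your step (2): the preconditioner is \emph{not} reused across the entire homotopy as a single fixed object. Because the diagonal weights $D^{(k)}$ (determined by $Ax^{(k)}-b$ and the current smoothing level $t_k$) change from phase to phase, the spectral approximation $\widetilde{A}^{(k)}$ of $\sqrt{D^{(k)}}A$ and the associated sparse inverse are rebuilt once per phase, and reused only across the mini-batch Katyusha iterations \emph{within} that phase. The conclusion you draw is still right — the rebuild happens $\tilde{O}_p(1)$ times because the number of phases is polylogarithmic — but the ``one-time $\ell_2$ preconditioner'' framing is not what either~\cite{bubeck2018homotopy} or this paper does. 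Relatedly, your step (3) overstates what needs to be redone: the paper does not re-derive the progress/potential lemma under an additive perturbation of the Newton direction. It simply notes that the sparse inverse operator has Frobenius error $\kappa^{-10}n^{-10}$ (so it is a near-exact inverse of the $\tilde{O}(d)$-row spectral approximation), and the inner Richardson/Katyusha loop already provides arbitrarily accurate solutions, so the error analysis of~\cite{bubeck2018homotopy} is invoked as-is. That shortcut is what keeps the argument to mostly bookkeeping.
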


The above theorem gives improvements for $\frac{2\omega-2}{2\omega-3}<p<2\omega-2$ and matrices that have $o(d^{\omega-2})$ nonzeros in each row.
Next, we turn to the square case for arbitrary $p > 1$.
The current best bound is that of \cite{AdilS20}, which builds on \cite{AdilKPS19}.

\begin{theorem}
[$p$-Norm Regression in Matrix Multiplication Time, Adil-Kyng-Peng-Sachdeva, Adil-Sachdeva \cite{AdilKPS19,AdilS20}]\label{thm:p>2}
The $p$-norm regression problem of the form $\min_{A^\top x=b} \norm{x}_p^p$ can be solved in time $\tilde{O}(p (n^\omega + n^{7/3}))$ to high accuracy.
\end{theorem}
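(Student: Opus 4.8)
The statement is exactly the high-accuracy guarantee of the $\ell_p$ iterative-refinement framework of \cite{AdilKPS19}, sharpened by the faster residual solver of \cite{AdilS20}, so the plan is to reconstruct that argument in three layers: (i) reduce high-accuracy $\ell_p$ minimization to a small number of constant-accuracy ``residual'' subproblems; (ii) solve each residual subproblem with a multiplicative-weights/width-reduction scheme that makes a bounded number of weighted least-squares calls; and (iii) implement those least-squares calls with amortized fast linear algebra.

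For step (i), fix a feasible iterate $x$ with $A^\top x = b$ and expand $\norm{x+\Delta}_p^p$ around $x$: there are constants depending only on $p$ for which
\[
\norm{x+\Delta}_p^p \;\Approx{p}\; \norm{x}_p^p + \langle \boldsymbol{g},\Delta\rangle + \Delta^\top\RR\Delta + \norm{\Delta}_p^p ,
\]
where $\boldsymbol{g} = p\,|x|^{p-1}\sgn{x}$ is the gradient and $\RR$ is diagonal with $\RR_{ii}\propto|x_i|^{p-2}$. Consequently, minimizing $\norm{x}_p^p$ over $\{A^\top x=b\}$ reduces to repeatedly solving the residual problem $\min_{A^\top\Delta=0}\ \langle \boldsymbol{g},\Delta\rangle+\Delta^\top\RR\Delta+\norm{\Delta}_p^p$ and updating $x\leftarrow x+\Delta$. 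A convexity argument shows that a $\kappa$-approximate residual solution decreases the optimality gap by a factor $1-\Omega(1/(p\kappa))$, so $\tilde O(p\log(1/\epsilon))$ residual solves --- which is $\tilde O(p)$ after absorbing the logarithm --- yield a $(1+\epsilon)$-approximate minimizer; this is precisely why only a $\log(1/\epsilon)$ dependence is needed rather than $\mathrm{poly}(1/\epsilon)$.

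For step (ii), one solves a single residual problem (a mixed $\ell_2^2+\ell_p^p$ objective over an affine subspace) to a constant factor. Maintain positive weights $\ww$; in each round compute the weighted projection $\min_{A^\top\Delta=c}\ \Delta^\top(\RR+\WW)\Delta$ with $\WW=\diag(\ww)$, and, according to the $\ell_p$ ``overflow'' of the resulting step, either accept it or perform a width-reduction update that multiplicatively boosts a small set of the weights. A potential-function argument comparing $\sum_i\ww_i$ with the $\ell_p$ energy bounds the number of rounds by $\tilde O(n^{1/3})$ (with the usual $\tilde O_p$ factors for general $p$). For step (iii), each round needs the solution of a linear system $(A^\top(\RR+\WW)A)^{-1}$ applied to a vector, at cost $O(n^\omega)$ if recomputed from scratch. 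Because $\RR+\WW$ changes slowly within a residual solve --- width steps touch few coordinates and accepted steps change weights by bounded multiplicative amounts --- one maintains an approximate inverse under batched low-rank Sherman--Morrison--Woodbury updates with a lazy rebuild schedule, so the $\tilde O(n^{1/3})$ rounds of one residual solve cost $\tilde O(n^\omega+n^{7/3})$ in total; multiplying by the $\tilde O(p)$ outer iterations gives $\tilde O(p(n^\omega+n^{7/3}))$.

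The main obstacle is making the three layers fit together quantitatively, which is the technical heart of \cite{AdilKPS19,AdilS20}. One must: (a) prove the weights are stable in exactly the norm needed for the inverse-maintenance amortization to deliver the $n^{7/3}$ term (a naive bound would give $n^{1/3}\cdot n^\omega$); (b) carry the two sources of inexactness --- the approximate inverse used inside the residual solver, and the constant-accuracy residual solution fed to iterative refinement --- through both loops while keeping the outer iteration count at $\tilde O(p\log(1/\epsilon))$; and (c) handle the $\ell_p$-norm term of the residual objective, which is not quadratic, via the smoothed $p$-norm (the $\gamma$-function) so that the width-reduction potential argument still applies. None of these steps is individually deep, but the bookkeeping that ties the $p$-dependence, the $n^{1/3}$ inner iteration count, and the inverse-maintenance amortization into a clean $\tilde O(p(n^\omega+n^{7/3}))$ bound is where the work lies.
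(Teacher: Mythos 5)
This theorem is quoted from prior work (\cite{AdilKPS19,AdilS20}); the paper does not supply its own proof, so there is nothing internal to check your proposal against line by line. That said, your three-layer reconstruction --- iterative refinement reducing high-accuracy $p$-norm minimization to $\tilde O(p)$ constant-accuracy residual problems, each residual solved by a width-reduced multiplicative-weights scheme making roughly $n^{1/3}$ weighted least-squares calls, and those calls amortized via lazy Sherman--Morrison--Woodbury inverse maintenance to a total of $\tilde O(n^\omega + n^{7/3})$ per residual --- is consistent with how the paper itself summarizes those two sources in Section~\ref{subsec:p-norm} (see Theorems~\ref{thm:reg-using-res}, \ref{thm:sol-weighted-lin-reg}, \ref{thm:res-k} and the surrounding discussion, and the ``amortized cost $n^{\omega - 1/3}$ per iteration, $n^{1/3}$ iterations'' remark in the technical overview). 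One thing you treat too lightly: the clean $\tilde O(p)$ prefactor together with the $n^{1/3}$-type inner count is specifically the contribution of \cite{AdilS20}, which reduces the smoothed $p$-norm problem to smoothed $q$-norm problems with $q=O(\sqrt{\log n})$ via a homotopy over the norms $2^{-k}p,\ldots,p/2,p$; \cite{AdilKPS19} alone has exponent $n^{(p-2)/(3p-2)}$ with a $2^{O(p)}$ multiplicative overhead, which would not give the stated bound. You fold this into the vague phrase ``with the usual $\tilde O_p$ factors for general $p$,'' but it is precisely the reason the theorem cites both papers and is worth stating explicitly.
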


Our next result is an improvement of the above for general $p$ for sufficiently sparse matrices.

\begin{restatable}{theorem}{pNorm}
\label{thm:p-norm}[Sparse general $p$-norm faster than Matrix Multiplication]
Let $A\in \mathbb{R}^{n\times d}$ be a matrix with condition number $\kappa$. Let $x^*=\argmin_{A^\top x=b} \norm{x}_p^p$. Let $m<n^{1/4}$ be the number of blocks in the block Krylov matrix used by the sparse linear system solver. For $2 < p<\infty$, there is an algorithm that finds $\overline{x}$ such that $A \overline{x} = b$ and 
\[
\norm{\overline{x}}_p^p \leq (1+\epsilon) \norm{x^*}_p^p
\]
in time 
\begin{align*}
\tilde{O} \Big(
\Big( \text{nnz}\left(A\right) \cdot n \cdot m^{\frac{(p+2)}{(3p-2)}} + n^2 \cdot m^{3+\frac{(p-2)}{(3p-2)}} & + n^{2+\frac{p-(10-4\omega)}{3p-2}}+ n^\omega m^{2+\frac{(p-2)}{(3p-2)}-\omega} \Big) \\ & \cdot
 n^{o(1)} \left( p\log p \right) \log^{2} (\kappa/\epsilon) \log \left( \frac{\kappa\norm{b}_2}{\epsilon OPT} \right) \Big)
\end{align*}
with probability at least $1- \tilde{O}(n^{-10})$. For $1<p\leq 2$, there is an algorithm that finds $\overline{x}$ such that $A \overline{x} = b$ and $\norm{\overline{x}}_p^p \leq (1+\epsilon) \norm{x^*}_p^p$
in time 
\begin{align*}
\tilde{O} \Big(
\Big( \text{nnz}\left(A\right) \cdot n \cdot m^{\frac{(3p-2)}{(2+p)}} + n^2 \cdot m^{3+\frac{(2-p)}{(2+p)}} & + n^{2+\frac{p/(p-1)-\left(10-4\omega\right)}{3p/\left(p-1\right)-2}}+ n^\omega m^{2+\frac{(2-p)}{(2+p)}-\omega} \Big) \\ & \cdot
 n^{o(1)} \left( p\log p \right) \log^{2} (\kappa/\epsilon) \log \left( \frac{\kappa\norm{b}_2}{\epsilon OPT} \right) \Big)
\end{align*}
with probability at least $1- \tilde{O}(n^{-10})$.
\end{restatable}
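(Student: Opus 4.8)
The plan is to combine the iterative-refinement (and homotopy) machinery for $p$-norm regression with a fast \emph{sparse} linear system solver. Concretely, the framework of \cite{AdilKPS19,AdilS20} reduces solving $\min_{A^\top x=b}\norm{x}_p^p$ to $(1+\epsilon)$ relative accuracy to solving a sequence of weighted linear systems whose coefficient matrices are the diagonally reweighted Newton (saddle-point) matrices of the iterates: each such matrix has size $\Theta(n)$, is block-structured, and inherits the sparsity of $A$ (the $\norm{\cdot}_p^p$ term contributes only a diagonal). There are $\tilde{O}_p\big(n^{(p-2)/(3p-2)}\big)$ of these systems for $p\ge 2$ and $\tilde{O}_p\big(n^{(2-p)/(p+2)}\big)$ for $1<p<2$ (the same count with the H\"older conjugate $q=p/(p-1)$ substituted for $p$, using $(q-2)/(3q-2)=(2-p)/(p+2)$), together with polylogarithmic overheads in $\kappa$ and $1/\epsilon$ coming from the homotopy/iterative-refinement outer loop and from producing a crude initial point using the condition number $\kappa$; the width-reduction analysis keeps the reweightings polynomially bounded along the sequence.

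The speedup comes from solving each of these systems with the block-Krylov sparse solver of \cite{PengVempala21} using $m$ blocks, rather than by a dense factorization costing $n^\omega$. Written as a function of $m$ and the sparsity, that solver's running time splits into exactly the four shapes appearing in the statement: a $\text{nnz}(A)\cdot n\cdot m^{\Theta(1)}$ term from constructing the block-Krylov space via repeated sparse matrix--vector products, an $n^2\cdot m^{\Theta(1)}$ term from arithmetic on the $\Theta(n/m)$-sized blocks, a lower-order $n^{7/3}$-type term, and an $n^\omega m^{\,2+\cdots-\omega}$ term --- strictly below $n^\omega$ because $m<n^{1/4}$ --- from the one dense block-Hankel inversion at its core. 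Multiplying the per-system cost by the number $\tilde{O}_p\big(n^{(p-2)/(3p-2)}\big)$ of systems folds the $p$-dependent exponent into the powers of $m$ (for instance $m^{(p+2)/(3p-2)}$ on the $\text{nnz}(A)\cdot n$ term when $p\ge 2$, and the H\"older-conjugate exponents when $1<p<2$), and together with the outer $p\log p$ and logarithmic factors and the $n^{o(1)}$ polylogarithmic Krylov-space overhead this reproduces both displayed running times; I would not optimize $m$ in the statement, since the optimal choice depends on $p$ and $\omega$.

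The main obstacle is making the interface between these two pieces rigorous, since the block-Krylov solver is randomized and returns only an approximate solution. I would need to show that (i) the accuracy demanded of each reweighted step can be set so that an approximate solve of that quality neither inflates the iteration count of the width-reduction loop nor spoils the final $(1+\epsilon)$ guarantee once wrapped in iterative refinement --- which means tracking solver error through the residual-problem potential and then through the outer recursion --- and that (ii) driving each solve's failure probability down to, say, $n^{-20}$ and union-bounding over all $\tilde{O}_p\big(n^{(p-2)/(3p-2)}\big)$ solves (up to polylogarithmic factors) gives the claimed $1-\tilde{O}(n^{-10})$ at the cost of only one more logarithmic factor. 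A further subtlety is that the reweighting changes every iteration, so the solver's Krylov-space preprocessing is not amortizable across iterations: the conservative route, already reflected in the exponents above, is to rebuild it each iteration, while a sharper route would exploit that width reduction changes only a bounded part of the weights per step and feed this into a low-rank-update variant of the block-Krylov solver. Finally, the $1<p\le 2$ bound follows from the $p\ge 2$ argument essentially verbatim after the substitution $p\mapsto p/(p-1)$ in the width-reduction exponent, which is exactly the origin of the $p/(p-1)$ expressions and the transformed powers of $m$ in the second display.
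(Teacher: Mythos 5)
Your high-level scaffolding is right: the residual-problem framework of \cite{AdilKPS19,AdilS20} reduces the problem to solving $\tilde{O}_p\bigl(n^{(p-2)/(3p-2)}\bigr)$ weighted linear systems, the sparse block-Krylov solver of \cite{PengVempala21} is the tool to be plugged in, and for $1<p\le 2$ one substitutes $q=p/(p-1)$ (which does indeed turn $(p+2)/(3p-2)$ into $(3p-2)/(p+2)$, matching the second display). But the proposal has a fatal gap in the heart of the argument: you assert that the exponents in the statement come from "the conservative route" of rebuilding the Krylov-space solver every iteration, and that one merely multiplies the per-solve cost by the iteration count. That account does not match the stated bounds and, more importantly, it would \emph{not} beat matrix multiplication. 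Multiplying the per-solve cost $n\,\text{nnz}(A)\,m + n^2 m^3 + n^\omega m^{2-\omega}$ by $n^{(p-2)/(3p-2)}$ gives a term $n^{\omega+(p-2)/(3p-2)}m^{2-\omega}$; making this smaller than $n^\omega$ requires $m > n^{(p-2)/((3p-2)(\omega-2))}$, which for large $p$ and $\omega\approx 2.37$ forces $m\gtrsim n^{0.9}$ — far beyond the $m<n^{1/4}$ allowed by the solver. So the naive accounting fails precisely in the regime the theorem is about, and the $p$-dependent factor in the naive bound would attach to the power of $n$, not to the power of $m$ as in the statement.

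What actually carries the proof is \emph{inverse maintenance}: the sparse inverse operator $Y$ is recomputed from scratch only once every $(n/m)^{(p-2)/(3p-2)}$ iterations — hence $m^{(p-2)/(3p-2)}$ times total, not $n^{(p-2)/(3p-2)}$ times, which is exactly where the $m$-exponents in the theorem come from (e.g.\ $m^{(p-2)/(3p-2)}\cdot n^\omega m^{2-\omega}=n^\omega m^{2+(p-2)/(3p-2)-\omega}<n^\omega$). In between, one maintains a spectral preconditioner of the form $Y+Q$ where $Q$ is an accumulated dense correction updated via Sherman--Morrison--Woodbury; this needs the bound of Theorem~\ref{thm:res-k} on the total rank of weight changes (at most $\tilde{O}_p(n^{(p+2)/(3p-2)}2^{2\eta})$ of size $\approx 2^{-\eta}$), which in turn controls the aggregate cost of the low-rank updates. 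The choice to re-solve from scratch exactly every $(n/m)^{(p-2)/(3p-2)}$ iterations is no accident: it caps every Woodbury update at rank $n/m$, which is necessary because the sparse inverse is only available as a multiplication operator with $\tilde{O}(m)$ bits per entry, so higher-rank updates would push the cost back above $n^\omega$. This is exactly the "sharper route" you gesture at in your last sentence, but it is not an optional refinement — it is the proof, and the accounting you label as already "reflected in the exponents" is not.
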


For $\omega>\frac{7}{3}$ and $p>2$, $2+\frac{p-(10-4\omega)}{3p-2}<\omega$. Moreover,
by choosing $m$ to be a suitably small power of $n$, and noting that the exponent of $m$ in the last term with the $n^\omega$ factor is negative, we can ensure that the overall complexity is $n^\theta$ for some $\theta < \omega$. 
We note that for $1<p<2$, we can instead solve the dual problem for $\frac{p}{p-1}$ norm --- see Section 7.2 of \cite{AdilKPS19}; if $1<p<2$ and $\omega > \frac{7}{3}$, then 
\[
\frac{p/(p-1)-\left(10-4\omega\right)}{3p/\left(p-1\right)-2} <\omega-2.
\]
Surprisingly, the improvement of Theorem \ref{thm:p-norm} grows as $p$ deviates from $2$. 
Although our improvements for $p$ near $2$ (Theorems \ref{thm:linear-regression} and \ref{thm:p-close-to-two-regression}) are obtained by directly substituting the linear system solver in existing algorithms \cite{CohenLMMPS15,bubeck2018homotopy} with one tailored to sparse matrices, any improvement for general $p>1$ (Theorem \ref{thm:p-norm}) appears to require modifying the ``inverse maintenance" steps of \cite{AdilKPS19}. Inverse maintenance is a data structural based approach
for speeding up optimization algorithms.
It hinges upon the observation that the linear systems
arising from second-order optimization algorithms
are slowly changing.
It dates back to the early papers on interior point
methods~\cite{Karmarkar84,vaidya1989speeding,DBLP:conf/stoc/CohenLS19}, and is also at the core of
recent $d^{\omega}$-time optimization
algorithms for linear programming and other optimization problems~\cite{CLS21,B20,BrandLSS20,JSWZ21,JiangKLP020,BLNPSS0W20,BLLSS0W21}.
The main difference between the algorithm of \cite{AdilKPS19}
and the linear programming ones is that the total relative change
per step is bounded in $3$-norm instead of $2$-norm,

In the sparse setting, efficient inverse maintenance is not immediate because the output of the sparse linear system solver \cite{PengVempala21} is a representation of the inverse as a multiplication operator with polynomially large (e.g., $m=n^{0.01}$) number of bits. This introduces restrictions on the rank of the updates on the inverse that can be done using Sherman-Morrison-Woodbury identity. Our guarantee for $p>1$ requires opening up the
inverse maintenance steps, and directly associating the
size of the update maintained with the cost of solving. 

Finally, we turn to input sparsity algorithms for general $p$. An input sparsity time algorithm for linear regression was first presented by Clarkson and Woodruff \cite{CW13} by using sparse sketching tools. Later \cite{bubeck2018homotopy} presented a homotopy algorithm that runs in time $\Otil_p(\text{nnz}(A) + d^{\omega} + d^{0.5\max\{p,\frac{p}{p-1}\}+1})$, for all $1<p<\infty$. Note that when $p$ tends to one or infinity, the exponent of the third term tends to infinity. In contrast, our algorithm has a running time of $\Otil_p(\text{nnz}(A) + d^4)$ for all $p\in (1,2]$ for P(1), and for all $p\in [2,\infty)$ for (P2).

\begin{restatable}{theorem}{inputSparsityPNorm}
\label{thm:input-sparsity-p-norm}[Input sparsity time $p$-norm]
Let $A\in\Rbb^{n\times d}$, $C\in \Rbb^{d\times d}$, $b\in \Rbb^{n}$, and $v\in \Rbb^{d}$. Let $\kappa$ be an upper bound for the condition numbers of $A$ and $C$. Then, for $1< p \leq 2$, there is an algorithm that finds $\widetilde{x}$ such that $C\widetilde{x}=v$ and
\begin{align*}
\norm{A\widetilde{x} - b}_p^p \leq (1+\epsilon) \min_{Cx=v} \norm{Ax - b}_p^p,
\end{align*}
with high probability and in time $\Otil_p(\text{nnz}(A) + d^4)$. For $2\leq p<\infty$, there is an algorithm that finds $\widetilde{x}$ such that $A^\top\widetilde{x} = v$ and
\begin{align*}
\norm{x}_p^p \leq (1+\epsilon) \min_{A^\top x = v} \norm{x}_p^p,
\end{align*}
with high probability and in time $\Otil_p(\text{nnz}(A) + d^4)$.
\end{restatable}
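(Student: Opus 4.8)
The plan is to combine iterative refinement for $p$-norm objectives with sparse-sketch preconditioning, using a new leverage-score row-sampling guarantee for the \emph{smoothed} $p$-norm as the engine. I treat the $1<p\le 2$ case of (P1) first; the $2\le p<\infty$ case of (P2) follows by $\ell_p$--$\ell_q$ duality at the end. Following the framework behind Theorem~\ref{thm:p>2} \cite{AdilKPS19}, high-accuracy regression reduces to solving $O(p\log(\kappa/\epsilon))$ \emph{residual subproblems}, each of the form $\min_{C\delta=0}\big[g^\top\delta+\sum_{i=1}^n\gamma_p^{t_i}(a_i^\top\delta)\big]$ to within a constant factor, where $\gamma_p^{t}$ is the Huber-type smoothing of $|\cdot|^p$ (the ``$\gamma$ function''), $g$ and the thresholds $t_i$ depend on the current iterate, and $\gamma_p$ obeys the scaling $\gamma_p^{ct}(cz)=c^p\gamma_p^{t}(z)$. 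The homogeneous constraint $C\delta=0$ is carried through the analysis unchanged; since $C\in\Rbb^{d\times d}$ it affects only the inner linear solves, which become $O(d)\times O(d)$ saddle-point systems. Each residual subproblem will be solved by (i) sampling the $n$ rows of $A$ down to $\Otil(d)$ of them while approximately preserving the objective, (ii) solving the resulting small constrained smoothed-$p$-norm problem to a constant factor with a matrix-multiplication-time solver, and (iii) lifting back; the constant-factor loss is removed by the outer iterative refinement, which is what yields the $\log(\kappa/\epsilon)$ dependence.

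The crux is a row-sampling theorem for the aggregate objective $f(\delta)=\sum_{i=1}^n\gamma_p^{t_i}(a_i^\top\delta)$. Let $\tau_i$ be (overestimates of) the $\ell_2$ leverage scores of a diagonally reweighted copy $WA$ of $A$, where $W$ is built from the $t_i$ so that $\|WA\delta\|_2^2$ equals the quadratic surrogate of $f$ near $\delta=0$. The theorem asserts that the \emph{sensitivities} of $f$ with respect to the $\gamma_p$-pieces, $\sup_\delta \gamma_p^{t_i}(a_i^\top\delta)/f(\delta)$, are each $O(\tau_i)$; consequently, sampling $N=\Otil(\sum_i\tau_i)=\Otil(d)$ rows $i$ with probability $q_i\propto\tau_i$ and replacing each kept pair by $(a_i,t_i)\mapsto(q_i^{-1/p}a_i,\,q_i^{-1/p}t_i)$ yields $\widetilde f$ with $\E\widetilde f=f$ (by the scaling relation) and, w.h.p., $\widetilde f(\delta)=\Theta(1)\cdot f(\delta)$ for all $\delta$ simultaneously. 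The sensitivity bound should come from the pinching $\gamma_p^{t}(z)\asymp\min\{z^2t^{p-2},\,|z|^p\}$: the quadratic regime is controlled by $\tau_i$ directly, and the threshold $t_i$ caps the $p$-th-power regime so it contributes no extra — this is special to the \emph{smoothed} function (for $|\cdot|^p$ with $p<2$ one would instead need $\ell_p$ Lewis weights, which are costlier to compute). The uniform-over-$\delta$ statement is then obtained by a chaining (Dudley) argument over the ellipsoid induced by the quadratic surrogate, bounding $\widetilde f-f$ on a net via matrix Bernstein and extending to all $\delta$ by the scaling relation and the Lipschitz behaviour of $\gamma_p^{t}$. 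Using $\ell_2$ leverage scores is exactly what keeps preprocessing in input-sparsity time: a constant-factor $\ell_2$ subspace embedding $\Pi[A\mid b]$ and $\Pi(WA)$ with $\Otil(d)$ rows is computed by a sparse embedding (OSNAP/CountSketch) in $\Otil(\text{nnz}(A)+d^\omega)$ time, and the $\tau_i$ then follow by Johnson--Lindenstrauss.

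Assembling the pieces: for each of the $O(p\log(\kappa/\epsilon))$ residual subproblems, form $W$ from the current iterate, compute approximate leverage scores of $WA$ in $\Otil(\text{nnz}(A)+d^\omega)$ time, sample down to $\widetilde A\in\Rbb^{\Otil(d)\times d}$ (with rescaled thresholds) via the sampling theorem, and solve the $\Otil(d)$-sized constrained smoothed-$p$-norm problem to a constant factor using Theorem~\ref{thm:p>2} (or its constrained variant), at cost $\Otil_p(d^\omega+d^{7/3})$, comfortably within $\Otil_p(d^4)$; lift back and feed into the next refinement step. Summing over subproblems plus the one-time $\Otil(\text{nnz}(A))$ cost of forming sketches and residuals gives total time $\Otil_p(\text{nnz}(A)+d^4)$, with the stated failure probability from boosting each randomized step to $1-1/\mathrm{poly}(n)$ and a union bound. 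Finally, the $2\le p<\infty$ case of (P2) follows by Lagrangian duality as in Section~7.2 of \cite{AdilKPS19}: the dual of $\min_{A^\top x=v}\|x\|_p^p$ is the \emph{unconstrained} problem $\min_{\lambda\in\Rbb^d}\big[\tfrac1q\|A\lambda\|_q^q-\lambda^\top v\big]$ with $q=p/(p-1)\in(1,2]$ in the \emph{same} matrix $A$ (hence the same $\text{nnz}(A)$ and the same $d$) — this is already the linear-term form of the residual subproblems above, so it is solved by the first part — and the primal optimizer is recovered from the dual optimizer $\lambda^\star$ by $x^\star_i=|(A\lambda^\star)_i|^{q-1}\sgn((A\lambda^\star)_i)$ in $O(\text{nnz}(A))$ time.

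The main obstacle is the sampling theorem itself. Because $\gamma_p^{t}(\cdot)$ is, for fixed $t$, neither $2$-homogeneous nor $p$-homogeneous, neither the matrix-Chernoff argument for spectral sparsification nor the Lewis-weight argument for $\ell_p$ subspace embeddings applies as a black box; the quadratic and $p$-th-power regimes must be stitched together \emph{uniformly over all $\delta$}, the sample count must be kept at $\Otil(d)$ so that both the small-$a_i^\top\delta$ and large-$a_i^\top\delta$ contributions to $f$ survive, and one must certify that $\ell_2$ leverage scores of the reweighted matrix — not the more expensive $p$-norm Lewis weights — genuinely suffice. Getting this clean, uniform, constant-factor approximation is the delicate technical heart; the remaining ingredients (iterative refinement, sparse embeddings, fast leverage-score estimation, and the matrix-multiplication-time $p$-norm solver of Theorem~\ref{thm:p>2}) are assembled from known tools.
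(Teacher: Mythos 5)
The outer framework you sketch---iterative refinement through $O_p(\log(\kappa/\epsilon))$ residual subproblems for the smoothed objective, sample-and-solve for each, and duality for the $p>2$ case of (P2)---matches the paper. The critical step, and where your argument breaks, is the sampling theorem for $\sum_i\gamma_p(t_i,a_i^\top\delta)$.

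You claim the per-row sensitivities $\sup_\delta \gamma_p(t_i,a_i^\top\delta)/f(\delta)$ are $O(\tau_i)$ where $\tau_i$ are the $\ell_2$ leverage scores of a diagonally reweighted $WA$, hence $\Otil(\sum_i\tau_i)=\Otil(d)$ rows suffice. This is false. The leverage score of $T^{(p-2)/2}A$ controls only the ratio $(a_i^\top\delta)^2 t_i^{p-2}/\sum_j (a_j^\top\delta)^2 t_j^{p-2}$, i.e.\ the numerator over the \emph{quadratic upper envelope} of $f$. But for $p<2$ the $\gamma$ function satisfies $\gamma_p(t,z)\asymp\min\{z^2 t^{p-2},|z|^p\}$ with the $p$-th-power branch strictly smaller than the quadratic branch when $|z|>t$, so $f(\delta)$ can be far below $\|T^{(p-2)/2}A\delta\|_2^2$ when many coordinates sit in the $p$-th-power regime. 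In that regime the sensitivity of a single term is essentially an $\ell_p$ sensitivity, and for $1<p\leq 2$ the $\ell_p$ sensitivities are \emph{not} dominated by $\ell_2$ leverage scores---they sum to $\Theta(n^{1-p/2}d^{p/2})$ in general (for $p=1$, $\ell_2$-leverage sampling gives $\Otil(\sqrt{n}d)$ rows for an $\ell_1$ embedding, not $\Otil(d)$). The ``threshold caps the $p$-th-power regime'' intuition does not rescue this: the threshold caps the \emph{value} of the quadratic branch at the crossover, not the ratio that defines sensitivity. This is exactly the obstruction the paper flags in its technical overview (``the multiple conversions between $2$-norm and $p$-norm we make in our analyses precludes us from making gains''), and it is why the paper instead samples with probabilities proportional to $\sqrt{\tau_i}$, getting $\Otil(\sqrt{n}\,d^{1.5})$ rows per round, and iterates $O(\log\log n)$ times to reach $\Otil(d^3)$ rows (Theorems \ref{thm:main-sampling}, \ref{thm:one-point-prob}); solving the $\Otil(d^3)\times d$ instance then costs $\Otil(d^4)$, matching the theorem. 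Your claimed $\Otil(d)$-row sample would give the stronger runtime $\Otil(\text{nnz}(A)+d^\omega)$, which neither the paper nor its theorem statement supports.

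A secondary, related gap: you invoke chaining/Dudley plus matrix Bernstein and appeal to scaling to extend a net bound to all $\delta$. The paper uses $\epsilon$-nets and explicitly grapples with the fact that $\gamma_p(t,\cdot)$ is \emph{not} homogeneous (for fixed $t$, $\gamma_p(t,2z)$ is not $c\,\gamma_p(t,z)$), so there is no unit-sphere reduction; they restrict to $\|x\|_2$ in a polynomial window, bucket the $t_i$ by powers of $2$, invoke Lemmas \ref{lemma:gamma-close} and \ref{lemma:lower-bound-gamma} to convert additive net errors to relative ones, and pay a $\beta$- (condition-number-) dependent net size. Your scaling relation $\gamma_p(ct,cz)=c^p\gamma_p(t,z)$ only preserves the objective when both arguments rescale, i.e.\ it justifies the sampling reweighting (as the paper also uses), but it does not let you move along a single ray in $\delta$ while $t$ is held fixed. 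So the uniformity argument needs the additional machinery the paper builds; it is not a routine chaining step.

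The remaining ingredients you list---reduction to residual problems (cf.\ Theorems \ref{thm:res-problem-number} and \ref{thm:solve-res-with-gamma}), fast approximate leverage scores, solving the sampled instance with a matrix-multiplication-time solver, and the $\ell_p$--$\ell_{p/(p-1)}$ duality for the constrained problem (Section 7.2 of \cite{AdilKPS19})---are all consistent with the paper's route and are correct in outline.
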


Recently, \cite{adil2021almost} considered the case of $p\in [2,4]$ and presented an algorithm with a running time that matches with that of \cite{bubeck2018homotopy}.
The techniques we study also give an improvement on earlier work in the remaining range of $p$ ($p > 2$ for (P1) and $1 < p < 2$ for (P2)). 

\begin{restatable}{theorem}{almostInputSparsityPNorm}
\label{thm:almost-input-sparsity-p-norm}[Almost input sparsity time $p$-norm]
Let $A\in\Rbb^{n\times d}$, $C\in \Rbb^{d\times d}$, $b\in \Rbb^{n}$, and $v\in \Rbb^{d}$. Let $\kappa$ be an upper bound for the condition numbers of $A$ and $C$. Then, for $2 \leq p < \infty$, there is an algorithm that finds $\widetilde{x}$ such that $C\widetilde{x}=v$ and
\begin{align*}
\norm{A\widetilde{x} - b}_p^p \leq (1+\epsilon) \min_{Cx=v} \norm{Ax - b}_p^p,
\end{align*}
with high probability and in time $\Otil_p\left(\min_{q\in[2,p-1)} n^{1/q}\left(\text{nnz}(A) + d^{(q/2)+1} \right)\right)$. For $1<p\leq 2$, there is an algorithm that finds $\widetilde{x}$ such that $A^\top\widetilde{x} = v$ and
\begin{align*}
\norm{x}_p^p \leq (1+\epsilon) \min_{A^\top x = v} \norm{x}_p^p,
\end{align*}
with high probability and in time $\Otil_p\left(\min_{q\in[2,\frac{1}{p-1})} n^{1/q}\left(\text{nnz}(A) + d^{(q/2)+1} \right)\right)$.
\end{restatable}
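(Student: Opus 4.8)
The plan is as follows. By duality between the $\ell_p$ and $\ell_{p/(p-1)}$ versions of affinely constrained regression (Section~7.2 of~\cite{AdilKPS19}), the statement for (P2) with $1<p\le 2$ is equivalent to the statement for (P1) with $p\ge 2$: the dual of $\min_{A^\top x=v}\norm{x}_p^p$ is an affinely constrained minimization of $\norm{\cdot}_{p/(p-1)}^{p/(p-1)}$, and since $p/(p-1)\ge 2\iff p\le 2$ and $1/\!\left(p/(p-1)-1\right)=p-1$, the admissible range of the auxiliary parameter is the same in both formulations. So it suffices to solve $\min_{Cx=v}\norm{Ax-b}_p^p$ to relative error $1+\epsilon$ for a fixed $p\ge 2$; I fix a parameter $q\in[2,p-1)$ and optimize over it at the very end. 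Throughout, the $\Otil_p$ absorbs the polylogarithmic factors in $\kappa/\epsilon$ and in $\kappa\norm{b}_2/(\epsilon\cdot\mathrm{OPT})$ coming from the usual condition-number and accuracy bookkeeping.

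For the outer loop, I would work inside the affine subspace $\{x:Cx=v\}$ (after a change of variables that removes the constraint) and run an iterative-refinement / homotopy scheme in the style of~\cite{bubeck2018homotopy,AdilKPS19,AdilS20,adil2021almost}: the step toward a $(1+\epsilon)$-optimal point is obtained by approximately minimizing a \emph{residual objective} consisting of a quadratic form plus a scaled $\norm{\cdot}_q^q$ term --- a smoothed $\ell_q$ regression --- rather than the smoothed $\ell_p$ residual one would use in the plain algorithm. The gradient of $\norm{\cdot}_p^p$ is an $\ell_{p-1}$-type quantity, so this substitution is legitimate precisely when $q<p-1$; the standard width-reduction analysis then shows that $\Otil_p(n^{1/q})$ such steps suffice to drive the $\ell_p$ objective to within a $(1+\epsilon)$ factor of optimal, the exponent $1/q$ arising from the ``width'' of the $\ell_q$ oracle measured against the $\ell_p$ objective.

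Each residual subproblem is a smoothed $\ell_q$ regression with $q\ge 2$ (and, for the optimal choice, $q$ close to $2$). I would solve it with the row-sampling machinery developed for Theorem~\ref{thm:input-sparsity-p-norm}, now at parameter $q$: a single pass over $A$ computes $\ell_q$ leverage scores / Lewis weights in $\Otil(\text{nnz}(A)+\text{poly}(d))$ time, and the sampling-and-analysis for the $\gamma$ function lets us replace $A$ by a reweighted submatrix with $\Otil(d^{q/2})$ rows (the size of a $q$-norm subspace embedding) while preserving the smoothed-$q$ objective to a $(1+\epsilon)$ factor; the resulting $\Otil(d^{q/2})\times d$ instance is then solved in $\Otil_q\!\left(d^{q/2+1}\right)$ time, essentially the size of that instance. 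Combining with the $\Otil_p(n^{1/q})$ outer iterations gives total cost $\Otil_p\!\left(n^{1/q}\!\left(\text{nnz}(A)+d^{q/2+1}\right)\right)$, and minimizing over $q\in[2,p-1)$ (respectively $q\in[2,1/(p-1))$ for the dual formulation) yields the claimed bounds; the ``high probability'' guarantee follows from a union bound over the $\Otil_p(n^{1/q})$ sampling steps.

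The main obstacle is the interface between the two analyses: the row sampling certifies a $(1+\epsilon)$-approximation of a $q$-norm quantity, whereas the final accuracy guarantee is in the $p$-norm, so one must check that feeding only $\ell_q$-accurate residual solves into the $\ell_p$ iterative-refinement analysis still produces a $(1+\epsilon)$-optimal $\ell_p$ point --- the strict inequality $q<p-1$ is exactly what supplies the slack, and making the smoothing/regularization parameters line up across the two levels is the delicate part. A secondary difficulty is that, since the stated bound contains no $d^\omega$ term, the expensive primitives (the leverage-score / Lewis-weight computation and the sampled-subproblem solve) must themselves stay within budget rather than being done via a dense $d^\omega$ solve, which is where the sparse-solver perspective of~\cite{PengVempala21} re-enters; and both the $\gamma$-function sampling and the width-reduction reweighting must respect the affine subspace $\{x:Cx=v\}$ (respectively $\{x:A^\top x=v\}$), handled by the change of variables above together with a check that sampling commutes appropriately with the projection onto that subspace.
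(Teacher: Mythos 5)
Your high-level structure matches the paper: both reduce the $1<p\le 2$ case to $p\ge 2$ via duality (Section~7.2 of~\cite{AdilKPS19}), both use Theorem~\ref{thm:q-to-p} to reduce to $\Otil_p(n^{1/q})$ smoothed $\ell_q$ subproblems for $q<p-1$ (so that $\max\{1/q,1/(p-1)\}=1/q$), and both downsample each subproblem to $\Otil(d^{q/2})$ rows and solve it directly. Where the proposal has a genuine gap is in the sampling step for the residual subproblem.

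You invoke ``the row-sampling machinery developed for Theorem~\ref{thm:input-sparsity-p-norm}, now at parameter $q$,'' but that machinery (Theorem~\ref{thm:main-sampling}: iterative square-root leverage-score sampling plus the $\epsilon$-net argument) is built specifically for $1< q\le 2$, where $\gamma_q$ does \emph{not} decompose, and it produces $\Otil(d^3)$ rows, not $\Otil(d^{q/2})$. The paper does something simpler and structurally different for $q\ge 2$: Lemma~\ref{lemma:gamma-bound} (a direct consequence of Lemma~\ref{lemma:gamma-bound-adil} from~\cite{AdilKPS19}) gives the clean two-sided bound
\[
t^{q-2} y^2 + |y|^q \;\le\; 2\,\gamma_q(t,y) \;\le\; q\bigl(t^{q-2} y^2 + |y|^q\bigr),
\]
so that $\gamma_q(t,Ax)$ is, up to $O_q(1)$, the sum of a weighted $2$-norm $\norm{TAx}_2^2$ (with $T_{ii}=t_i^{(q-2)/2}$) and a $q$-norm $\norm{Ax}_q^q$. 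One then samples by $p_i=\max\{\tau_i(TA), w_i\}$ where $\tau_i$ are leverage scores of $TA$ and $w_i$ are $q$-norm Lewis weights of $A$; Lemmas~\ref{lemma:sampling-leverage} and~\ref{lemma:sampling-lewis} simultaneously preserve both terms with $\Otil(d^{q/2})$ rows. Without stating this decomposition, your claim that ``the sampling-and-analysis for the $\gamma$ function lets us replace $A$ by a reweighted submatrix with $\Otil(d^{q/2})$ rows'' is unsupported --- the decomposition is exactly why one gets $d^{q/2}$ here rather than $d^3$, and why no $\epsilon$-net is needed at all. (As a minor point, the concerns you raise at the end about $d^\omega$ costs and about sampling ``commuting with the projection'' are not live issues in the paper's argument: the sampled subproblem is over $d$ variables with $\Otil(d^{q/2})$ rows, so a dense solve is already within the stated budget, and the affine constraint is handled at the level of the residual subproblem, not by the row sampler.)
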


When $\text{nnz}(A) = O(n)$, compared to $\min\{nd+\text{poly}(d), \text{nnz}(A)+d^\omega+d^{0.5\max\{p,\frac{p}{p-1}\}}\}$, Theorem \ref{thm:almost-input-sparsity-p-norm} gives improvements when $d^2<n<d^p$. Very recently, \cite{jambulapati2021improved}, presented an algorithm for all $p\in [2,\infty)$ for (P1) that runs in time $\Otil_p\left(d^{(p-2)/(3p-2)}\left( \text{nnz}(A) + d^\omega\right)\right)$.

\paragraph{Discussion of results.}
Our new results are Theorem~\ref{thm:linear-regression} (sparse linear regression), Theorem~\ref{thm:p-close-to-two-regression} (sparse regression for $p$ close to $2$), Theorem~\ref{thm:p-norm} ($p$-norm regression faster than matrix multiplication), Theorem~\ref{thm:input-sparsity-p-norm} (input sparsity $p$-norm regression) and Theorem~\ref{thm:almost-input-sparsity-p-norm} (almost input-sparsity $p$-norm regression). Of these, the first two, for $p=2$ and $p$ close to $2$ are relatively straightforward extensions of the sparse linear system solver of~\cite{PengVempala21} to first produce a fast spectral sparsifier and then adapt existing regression algorithms. The next result, Theorem~\ref{thm:p-norm} for general $p$-norm, is less immediate: beating the current bound of $n^\omega$ needs a combination of the sparse solver together with appropriate inverse maintenance so that the solver is effectively called only $\tilde{O}(1)$ times. The next result, Theorem~\ref{thm:input-sparsity-p-norm} about input sparsity $p$-norm regression, is perhaps the most surprising, as existing algorithms scale with $d^{\Omega(p/(p-1))}$; moreover, it was unclear if a row-sampling method could work, as the quantity that needs to be preserved by sampling is not the $p$-norm of $Ax$ for arbitrary $x$, but rather a smoothed version of it (called the $\gamma$-norm). We believe this sampling result is of independent interest.

\section{Technical Overview}

In this section, we give an overview of our algorithms and analysis. Our improvements rely on novel application of several tools for sparse matrices combined in a careful manner. For the first part, to go below matrix multiplication time, we use sampling techniques, inverse maintenance, and the newly introduced sparse linear system solvers. In the second part, we give an overview of our novel framework to do sampling for the quadratically smoothed $p$-norm functions. For this part, we use leverage scores and iterative sampling for our algorithm and we analyze the algorithm using $\epsilon$-nets.

\subsection{Faster than $n^\omega$}
Our starting point is the improvement in the complexity of solving linear systems for sufficiently sparse matrices. We emphasize that here (and throughout the paper), our complexities refer to the total {\em bit} complexity. In many cases, for numerical algorithms to return reliable results, the size of the bit representations might have to get larger along the way.

\begin{restatable}[\cite{PengVempala21}]{theorem}{SparseInverse}
\label{thm:sparse_inverse}
Given a sparse $d\times d$ matrix $A$ with max entry-wise magnitude at most $1$,
a diagonal $d\times d$ matrix $W$ with entry-wise magnitude at most $1$
and $m \leq d^{1/4}$, along with $\kappa$ that upper bounds the condition numbers of $A$ and $W$,
we can obtain in time 
\[
\Otil\left(\left(d\cdot \text{nnz}\left(A\right)\cdot m + d^2\cdot m^3+ \left(\frac{d}{m}\right)^\omega m^2\right)\log\left(\kappa\right)\right)
\]
a linear operator $Z_{AWA^\top}$ such that
\[
\norm{Z_{AWA^\top}
- \left(A W A^\top\right)^{-1}
}_F
\leq
\kappa^{-10} n^{-10}.
\]
Moreover, for a $d\times r$ matrix $B$, $Z_{AWA^\top} B$ can be computed in time $\tilde{O}((r \cdot \text{nnz}(A)\cdot m + d^{2} r^{\omega-2})\log(\kappa))$.
\end{restatable}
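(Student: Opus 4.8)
The plan is to realize $(AWA^\top)^{-1}$ through a random \emph{block Krylov space}, following the block-Krylov paradigm for sparse linear systems. Write $M := AWA^\top$ (symmetric, and invertible since $A$ and $W$ have condition number at most $\kappa$), set the block size $s := \lceil d/m\rceil$, and let $R\in\Rbb^{d\times s}$ be a Gaussian matrix rounded to $O(m\log\kappa)$ bits per entry. Form the block Krylov matrix $\mathcal{K} := [\, R \mid MR \mid M^2R \mid \cdots \mid M^{m-1}R \,]\in\Rbb^{d\times d}$. The first step --- and, as explained below, the hardest --- is a \emph{structural lemma}: with high probability over $R$, $\mathcal{K}$ is invertible and well-conditioned, $\norm{\mathcal{K}}_2\cdot\norm{\mathcal{K}^{-1}}_2\leq\poly{d}\cdot\kappa^{O(1)}$. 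Granting this, we have the exact identity $M^{-1} = \mathcal{K}\,(\mathcal{K}^\top M\mathcal{K})^{-1}\,\mathcal{K}^\top$, so it suffices to build $\mathcal{K}$ together with the much smaller matrix $H := \mathcal{K}^\top M\mathcal{K}$, invert $H$, and return the factored operator $Z_{AWA^\top} := \mathcal{K}\,H^{-1}\,\mathcal{K}^\top$ --- never materializing its $d\times d$ entries. The key point is that $H$ is \emph{block Hankel}: its $(i,j)$ block equals $R^\top M^{\,i+j+1}R$, so $H$ is described by only $2m-1$ blocks of size $s\times s$, which makes it amenable to a superfast displacement-structured solver.

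The running time then splits into two buckets matching the stated bound. (i) \emph{Building the Krylov blocks.} Computing $M^0R,\dots,M^{2m-1}R$ is $O(m)$ applications of $M$ to a $d\times s$ block; each is two sparse products (by $A^\top$, then by $A$) plus a diagonal scaling, i.e.\ $O(\nnz{A}\cdot s)$ arithmetic operations, so $O(\nnz{A}\cdot d)$ operations in total. Since iterating $M$ inflates magnitudes and denominators, after each round I truncate every entry to $O(m\log\kappa)$ bits and carry the incurred error forward; each arithmetic operation then costs $\Otil(m\log\kappa)$ bit operations, giving $\Otil(\nnz{A}\cdot d\cdot m\cdot\log\kappa)$ --- the first term. (ii) \emph{Assembling and inverting $H$.} Each Hankel block $R^\top M^{k}R = (M^iR)^\top(M^jR)$ (for any $i+j=k$) is a dense $(s\times d)\times(d\times s)$ product, and solving with the block-Hankel $H$ reduces to fast arithmetic on $s\times s$ matrix polynomials of degree $O(m)$ together with the surrounding resultant/Sylvester-type computations. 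Carried out with fast (rectangular) matrix multiplication and the $O(m\log\kappa)$-bit arithmetic, this costs $\Otil\big((d^2 m^3 + (d/m)^\omega m^2)\log\kappa\big)$ --- the second and third terms. Finally, choosing the working precision proportional to $m\log\kappa$ with a large enough constant and propagating forward-error bounds through all intermediate inverses (each well-conditioned by the structural lemma) yields $\norm{Z_{AWA^\top}-(AWA^\top)^{-1}}_F\leq\kappa^{-10}n^{-10}$.

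For the ``moreover'' claim, $Z_{AWA^\top}$ is kept in the factored form $\mathcal{K}\,H^{-1}\,\mathcal{K}^\top$ (with $H^{-1}$ stored in its compact displacement representation). To apply it to a $d\times r$ matrix $B$: compute $\mathcal{K}^\top B$ by forming $B, MB,\dots,M^{m-1}B$ --- $O(m)$ sparse products on $r$ columns, $\Otil(\nnz{A}\cdot r\cdot m)$ bit operations --- and then multiplying the relevant $d\times d$ pieces into a $d\times r$ matrix, which by blocking into $r\times r$ tiles costs $\Otil(d^2 r^{\omega-2})$; applying $H^{-1}$ and then $\mathcal{K}$ are analogous. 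The total is $\Otil\big((r\cdot\nnz{A}\cdot m + d^2 r^{\omega-2})\log\kappa\big)$.

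The main obstacle is the structural lemma: moving from ``$\mathcal{K}$ is invertible'' (which holds for all $R$ outside a measure-zero set) to ``$\mathcal{K}$ has $\poly{d}$-bounded condition number with high probability'' requires quantitative anti-concentration --- smoothed-analysis-style lower bounds on the least singular value of Krylov / confluent-Vandermonde matrices --- and is genuinely delicate when the spectrum of $M$ is clustered, since nearby eigenvalues make the Krylov basis nearly dependent. This conditioning bound is also exactly what forces carrying $\Theta(m\log\kappa)$ bits throughout, and is therefore the source of the extra factors of $m$ in the running time over the naive arithmetic-operation count. A secondary technical point is that block-Hankel inversion is ill-conditioned in the worst case, so the forward-error analysis of the superfast solver must be done carefully at the chosen precision.
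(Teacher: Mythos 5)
Your overall framework — block Krylov space, exploit the block-Hankel structure of $\mathcal{K}^\top M\mathcal{K}$, track $\Theta(m\log\kappa)$-bit arithmetic, and keep the operator in factored form — is the right one and matches the paper, which follows \cite{PengVempala21}. But there is a genuine gap at the step you yourself flag as ``the hardest'': the structural lemma that a random Gaussian starting block $R$ makes $\mathcal{K}=[R\mid MR\mid\cdots\mid M^{m-1}R]$ invertible and $\poly{d}\cdot\kappa^{O(1)}$-conditioned is simply \emph{false} for the unperturbed $M=AWA^\top$. If $M$ has fewer than $m$ distinct eigenvalues (e.g.\ $M=cI$, so $M^jR=c^jR$), then $\mathcal{K}$ has rank at most $s<d$ for \emph{every} choice of $R$ — this is not a measure-zero failure event over $R$, and no amount of anti-concentration for $R$ can repair it. The issue is not clustered spectrum being ``delicate''; repeated or near-repeated eigenvalues make the block Krylov matrix inherently rank-deficient. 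What the paper (and \cite{PengVempala21}) actually does is perturb the matrix itself: it forms $\widehat{A}=AWA^\top+R$ with $R$ a small random Gaussian, builds the Krylov space from $\widehat{A}$, and relies on the perturbation to separate the spectrum; the conditioning bound on the Krylov matrix is proved for the perturbed matrix, and the perturbation magnitude is chosen small enough that $(\widehat{A})^{-1}$ is within the target accuracy of $(AWA^\top)^{-1}$. This matrix perturbation is the step that makes the conditioning argument possible and is absent from your plan.

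Two secondary discrepancies, less fundamental but worth noting: the paper uses a structured \emph{sparse} Gaussian starting block $G^S$ (whose sparsity is what yields the $d^2 m^3$ term from multiplications by $Q$), whereas you use a dense Gaussian $R$ and attribute the $d^2 m^3$ cost to Hankel assembly; and the paper pads the Krylov block $K$ with $n-ms$ extra dense Gaussian columns $G$, completing the inverse via a Schur-complement / low-rank correction ($Z_{HG},Z_{GG},Z_{GH}$), rather than taking $\mathcal{K}$ to be exactly square. Your exact identity $M^{-1}=\mathcal{K}(\mathcal{K}^\top M\mathcal{K})^{-1}\mathcal{K}^\top$ requires $\mathcal{K}$ square and invertible, which forces $ms=d$ exactly and is precisely the regime where the padding/Schur-complement device gives more flexibility and better numerical control. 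Adding the matrix perturbation step and adopting the padded-Krylov-plus-Schur-complement completion would bring the proposal in line with the paper's argument.
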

In the above theorem, $m$ denotes the number of blocks in the block Krylov space approach used by Peng and Vempala \cite{PengVempala21}. $m$ is also the number of bits in the sparse inverse representation. In the main theorem of \cite{PengVempala21}, $m$ is chosen as a value that optimizes the running time of Theorem \ref{thm:sparse_inverse}. However, we need to exploit the flexibility of $m$ in our running times because the cost of low rank updates on matrices with $\tilde{O}(m)$ bits also comes to play. In all of our results, one can see the improvement by setting $m$ to a small polynomial in $d$ (or $n$), e.g., $m=d^{0.01}$. However the best value of $m$ depends on multiple factors including the value of $\omega$ and the sparsity of the matrix. 

For brevity of notation, we sometime denote the running time of the sparse linear system solver by $d^\theta$. In these cases one can replace $d^\theta$ with $\Otil((d\cdot \text{nnz}_d(A)\cdot m + d^2\cdot m^3+ (\frac{d}{m})^\omega m^2)\log(\kappa))$.
Since the above statement is a bit more general than the main theorem of~\cite{PengVempala21}, we show how their methods easily extend to this version in Section~\ref{sec:accessInverse}.

For linear regression, the method of \cite{CohenLMMPS15}, uses a sequence of linear system solves to approximate leverage scores, samples the given matrix according to these scores, then applies the Richardson iteration to compute a high accuracy approximation. As we will see, the main ingredient we need is a spectral approximation with $\Otil(d)$ rows to the given matrix. 
The algorithm of \cite{bubeck2018homotopy} also needs spectral approximations of a set of $\tilde{O}(1)$ matrices. 
To handle both, we introduce the following efficient sparse spectral approximation --- see Definition \ref{def:spectral-approx}. 

\begin{theorem}
\label{thm:spectral_approx}
Let $A\in\mathbb{R}^{n\times d}$ such that the condition number of $A^\top A$ is $\kappa$. Let $m<d^{1/4}$ be the number of blocks in the block Krylov matrix used by the sparse linear system solver.
There exists an algorithm that finds a constant-factor spectral approximation $\widetilde{A}$ with $\tilde{O}(d)$ rows of $A$ in time 
\[
\tilde{O} \left(
\left( \text{nnz}\left(A\right) + d \cdot \text{nnz}_d(A) \cdot m + d^2 \cdot m^3 + \left( \frac{d}{m} \right)^\omega m^2 \right)
\cdot \log^{2} (\kappa) \right)
,
\]
with probability at least $1- \tilde{O}(d^{-10})$
\end{theorem}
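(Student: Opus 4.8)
\emph{The plan} is to instantiate the recursive leverage-score sampling framework of \cite{CohenLMMPS15} with Theorem~\ref{thm:sparse_inverse} as the linear-system solver. Recall that to spectrally approximate $A$ by $\Otil(d)$ of its own rows it suffices to produce, for each row, a constant-factor overestimate $\widetilde\tau_j$ of its leverage score $\tau_j(A) = a_j^\top (A^\top A)^{+} a_j$ with $\sum_j \widetilde\tau_j = \Otil(d)$, and then keep each row $j$ independently with probability $p_j = \min\{1,\Theta(\log d)\cdot\widetilde\tau_j\}$, rescaled by $1/\sqrt{p_j}$; a matrix Chernoff bound then gives a constant-factor spectral approximation with $\Otil(d)$ rows with probability $1-\Otil(d^{-10})$. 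The difficulty is that the natural way to obtain such overestimates inverts the $d\times d$ Gram matrix $A^\top A$ (generically dense, cost $d^\omega$), and Theorem~\ref{thm:sparse_inverse} does not apply to $A^\top A$ directly because $A$ is $n\times d$ with $n$ possibly far larger than $d$. Following \cite{CohenLMMPS15}, I would instead form a chain $A = A_0, A_1, \dots, A_L$ with $L = O(\log(n/d))$, where $A_{i+1}$ keeps a uniformly random half of the rows of $A_i$, so $A_L$ already has $\Otil(d)$ rows, and then build sparsifiers bottom-up, returning the sparsifier of $A_0$.

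\emph{A single level.} Suppose I have a constant-factor spectral approximation $\widetilde A_{i+1}$ of $A_{i+1}$ consisting of $\Otil(d)$ actual rows of $A$, so $\text{nnz}(\widetilde A_{i+1}) \le \Otil(1)\cdot\text{nnz}_d(A)$. Then $\widetilde A_{i+1}^\top\widetilde A_{i+1} = \widetilde A_{i+1}^\top I\, \widetilde A_{i+1}$ is of the form handled by Theorem~\ref{thm:sparse_inverse}, with the sparse input taken to be $\widetilde A_{i+1}^\top$ (of shape $d\times\Otil(d)$, essentially square up to logarithmic factors) and $W=I$. After rescaling $A$ so its entries have magnitude at most $1$ (folding the scaling into $\kappa$), Theorem~\ref{thm:sparse_inverse} produces, in time $\Otil\big((d\cdot\text{nnz}_d(A)\cdot m + d^2 m^3 + (d/m)^\omega m^2)\log\kappa\big)$, an operator $Z$ with $\|Z - (\widetilde A_{i+1}^\top\widetilde A_{i+1})^{-1}\|_F \le \kappa^{-10}n^{-10}$. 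Next, as in the Johnson--Lindenstrauss / Spielman--Srivastava leverage-score estimator, I would draw a sketch $\Pi$ with $O(\log n)$ rows, compute $B := \Pi\widetilde A_{i+1}Z$ by applying $Z$ to the $O(\log n)$ columns of $(\Pi\widetilde A_{i+1})^\top$ --- costing $\Otil\big((\text{nnz}_d(A)\cdot m + d^2)\log\kappa\big)$ via the ``moreover'' clause of Theorem~\ref{thm:sparse_inverse} --- and set $\widetilde\tau_j := \|B a_j\|_2^2$ for every row $a_j$ of $A_i$, at cost $\Otil(\text{nnz}(A_i))$. Since $a_j^\top(\widetilde A_{i+1}^\top\widetilde A_{i+1})^{-1}a_j = \|\widetilde A_{i+1}(\widetilde A_{i+1}^\top\widetilde A_{i+1})^{-1}a_j\|_2^2$, the sketched value $\widetilde\tau_j$ is a $(1\pm\tfrac12)$-approximation of it, which by monotonicity of leverage scores under row removal together with the expectation bound of \cite{CohenLMMPS15} is a valid overestimate of $\tau_j(A_i)$ whose sum is $\Otil(d)$; sampling as above yields $\widetilde A_i$. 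For the base case I take $\widetilde A_L = A_L$, which already has $\Otil(d)$ rows.

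\emph{Assembling the running time.} Iterating $i = L-1, \dots, 0$ and returning $\widetilde A_0$ gives the statement. The $\Otil(\text{nnz}(A_i))$ terms sum to $\Otil(\text{nnz}(A))$ since $A_{i+1}$ keeps half the rows, hence in expectation half the nonzeros, of $A_i$; the solver terms each recur $L = O(\log n)$ times, and absorbing this $\log n$ and one further $\log\kappa$ into $\Otil(\cdot)$ and $\log^2\kappa$ gives the claimed bound. The solver error $\kappa^{-10}n^{-10}$ is negligible against the $(1\pm\tfrac12)$ precision we need, and a union bound over the $O(\log n)$ levels, the $n$ rows in the JL step, and the $O(\log n)$ Chernoff bounds keeps the failure probability at $\Otil(d^{-10})$.

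\emph{Main obstacle.} I expect the delicate point to be the bookkeeping that makes $\text{nnz}_d(A)$ --- not $\text{nnz}(A)$ --- the sparsity parameter in \emph{every} call to Theorem~\ref{thm:sparse_inverse}: one must maintain the invariant that each intermediate sparsifier $\widetilde A_{i+1}$ consists of $\Otil(d)$ genuine rows of $A$ (so that $\text{nnz}(\widetilde A_{i+1}) \le \Otil(1)\cdot\text{nnz}_d(A)$), and at the same time control the rank and condition number of the uniformly subsampled $A_{i+1}$ and of the Gram matrices $\widetilde A_{i+1}^\top\widetilde A_{i+1}$, so that $Z$ is well defined and the $\log\kappa$ in Theorem~\ref{thm:sparse_inverse} stays $\Otil(\log\kappa)$. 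These are exactly the invariants of the recursion in \cite{CohenLMMPS15}, but --- unlike an exact solve --- the sparse solver charges for conditioning and for the bit-length parameter $m$, so these quantities have to be tracked explicitly through all $O(\log n)$ levels.
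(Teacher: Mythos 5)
Your proof is correct, but it follows a genuinely different recursion than the paper. The paper instantiates the \emph{iterative refinement / undersampling} variant of~\cite{CohenLMMPS15} (their Lemma~\ref{lemma:undersample}): it maintains a single vector of overestimates $u$ over all rows of the \emph{original} $A$, starting from $u^{(0)}=\mathbbm{1}$, and in each of $O(\log(n/d))$ rounds it (i) samples $\Otil(d)$ rows of $A$ with an undersampling factor $\alpha_i = \Theta(d/\norm{u^{(i-1)}}_1)$, (ii) builds the sparse block-Krylov inverse of the sampled Gram matrix, (iii) JL-sketches to recompute generalized leverage scores against that sample, and (iv) takes a coordinatewise minimum with the old $u$, which provably shrinks $\norm{u}_1$ by a constant factor each round. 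You instead invoke the \emph{repeated-halving} variant of~\cite{CohenLMMPS15}: a fixed chain $A = A_0 \supset A_1 \supset \cdots \supset A_L$ of uniformly half-subsampled matrices, with spectral sparsifiers $\widetilde A_i$ built bottom-up by estimating $\tau_j^{\widetilde A_{i+1}}(A_i)$ and sampling $A_i$. Both routes are legitimate Cohen et al.\ recursions, both reach $\Otil(d)$ rows after $O(\log(n/d))$ levels, and both hit the advertised per-level solver cost $\Otil\bigl((d\cdot\text{nnz}_d(A)\cdot m + d^2 m^3 + (d/m)^\omega m^2)\log\kappa\bigr)$ because the Gram matrix you invert has $\Otil(d)$ genuine (rescaled) rows of $A$, so its sparsity is $\Otil(\text{nnz}_d(A))$.

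What each buys: the paper's refinement never has to argue that \emph{uniform} subsampling preserves rank or that $\E\bigl[\sum_j \tau_j^{A_{i+1}}(A_i)\bigr]=O(d)$ — Lemma~\ref{lemma:undersample} bounds $\norm{u'}_1$ deterministically from the undersampling factor, and the entire burden is on the leverage-score-based \textsc{Sample}. Your route has to invoke the expectation bound for uniform halving and the accompanying high-probability rank-preservation of the chain, but in exchange the sparsity and ``rows-of-$A$'' invariants you flag in your final paragraph are transparent, since each $\widetilde A_{i+1}$ is by construction $\Otil(d)$ actual rows of $A$.

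Two small points to tighten if you wrote this out. First, a $(1\pm\tfrac12)$ JL estimate of $\norm{\widetilde A_{i+1}(\widetilde A_{i+1}^\top\widetilde A_{i+1})^{-1}a_j}_2^2$ may \emph{under}estimate; you need to multiply by a fixed constant (as the paper's Algorithm~\ref{alg:spectral_approx} does via the $(1+\tfrac{1}{n^9-1})\tfrac{1}{0.9r}$ factor) and then combine with the spectral-approximation slack of $\widetilde A_{i+1}$ to get an overestimate of $\tau_j(A_i)$. Second, you should also track the conditioning of the \emph{rescaled} sampled matrix when feeding it to Theorem~\ref{thm:sparse_inverse}: leverage-score rescaling by $1/\sqrt{p_j}$ can blow up entry magnitudes, which affects the $\log\kappa$ in the solver; the paper silently absorbs this into the $\log^2\kappa$ factor, and you would need to do the same.
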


We use sampling (as opposed to sketching) because sampling preserves the row sparsity. This results in a sparse matrix and enables us to use the sparse linear system solver.

For general $p$, the algorithm of \cite{AdilKPS19,AdilS20} is more complicated. When $p$ is large, it first reduces the $p$-norm regression problem to a small number of $q$-norm regression problems for $q=O(\sqrt{\log n})$. The latter problem is reduced to a sequence of weighted $2$-norm regression problems. Since it is too costly to solve each $2$-norm regression problem individually, they maintain an efficient preconditioner which is used in each iteration to solve the new problem by running Richardson's iterations to get a high accuracy solution. To maintain the preconditioner efficiently, they use the Sherman-Morrison-Woodbury identity, but the updates happen only when the changes are ``significant" as determined by the rank of the update, and a bucketing strategy. The amortized cost of their approach per iteration is $n^{\omega - (1/3)}$, with $n^{1/3}$ iterations overall. 

To improve on this, we want to use the sparse inverse. However, the theorem of \cite{PengVempala21} gives an inverse operator in the form $Z_{A^\top A}$ that involves matrices with $O(m\log(\kappa))$ bits. Thus, naively applying the Woodbury formula could be too expensive, resulting again in runtime that grows as $n^\omega$. To get around this, we set a threshold for the rank of the updates. We compute the sparse inverse entirely from scratch every $(n/m)^{1/3}$ iterations. This ensures that the rank of an update is at most $n/m$. This reduces the cost of the Woodbury update to below $n^\omega$ using fast rectangular matrix multiplication, in spite of $m$ bits per entry. These ideas are described precisely in Section~\ref{subsec:p-norm}.

\subsection{Input Sparsity Time}

The dual of $\min_{A^\top x=b} \norm{x}_p$ is $\max_{\norm{A y}_{p/(p-1)} \leq 1 } b^\top y$ which is equivalent to solving
\[
\min_{b^\top y = 1} \norm{A y}_{p/(p-1)}.
\]
Note that when $2\leq p < \infty$, $1<\frac{p}{p-1}\leq 2$. Therefore in this paper we consider the following general problem
\[
\min_{Cx = v} \norm{Ax}_p,
\]
where $C\in \mathbb{R}^{d\times d}$, $A \in \mathbb{R}^{n\times d}$, $v\in \mathbb{R}^d$, and $1<p\leq 2$.

Previous input sparsity time algorithms obtain their running times
by sketching/sampling this matrix,
and returning the solution $x$ computed on the
smaller, $\text{poly}(d)$-sized instance.
This approach, known as sketch-to-solve,
leads to runtimes of the form of $\text{nnz} + \text{poly}(d, \epsilon^{-1})$.
Errors are directly transferred between the
sketch and the original matrix,
and typically $\text{poly}(\epsilon^{-1})$ samples
are needed to obtain $(1 \pm \epsilon)$ relative error.

Our algorithm obtains an $\left(\text{nnz} + \text{poly}(d) \right) \log(1 / \epsilon)$
runtime via the sketch-and-precondition approach.
This randomized numerical linear algebra approach
has only been rigorously analyzed for problems
closely related to quadratic minimization problem.
Specifically, we use leverage score sampling to
produce approximations suitable for
$p$-norm preconditioning algorithms~\cite{AdilKPS19,AdilS20}.
It is shown by~\cite{AdilKPS19}
that the $p$-norm problem can be solved to high accuracy by approximately solving 
a sequence of ``residual" problems, 
defined via the following smoothed $p$-norm function (introduced by \cite{bubeck2018homotopy}), which combines the $p$-norm with a quadratic function.

\begin{definition}\label{def:quadratic-smooth}
We define the following quadratically smoothed $p$-norm function for scalars $x \in \mathbb{R}, t \in \mathbb{R}_{\geq 0}$
\begin{align}
\gamma_p(t, x) = \begin{cases}
\frac{p}{2} t^{p-2} x^2 & \text{ if } |x|\leq t,\\
|x|^p + (\frac{p}{2} - 1) t^p & \text{ otherwise.}
\end{cases}
\end{align}
Overloading notation, for vectors $x\in\mathbb{R}^d$ and $t\in\mathbb{R}^{n}_{\geq 0}$ and matrix $A\in\mathbb{R}^{n\times d}$, we define
\begin{align}
[\gamma_p(t, Ax)]_i = \begin{cases}
\frac{p}{2} t_i^{p-2} (Ax)_i^2 & \text{ if } |(Ax)_i|\leq t_i,\\
|(Ax)_i|^p + (\frac{p}{2} - 1) t_i^p & \text{ otherwise.}
\end{cases}
\end{align}
\end{definition}

Generally iterative methods that work with derivatives (first order or higher order) need to have information about how far we are from the optimum in order to adjust the step size. If we are far from the optimum, we want to take large steps and when we are close to the optimum, we want to take small steps to converge. In a quadratic function, this information is in the gradient.
However, for $p$-norm functions,
using either a quadratic, or a $p$-norm, to find this information can be inaccurate.
This is why we use the $\gamma$ function defined above, which accounts for both the locally linear and quadratic behavior of $f(t) = |t|^{p}$, and the long term $p$-norm behavior.
Specifically, this function is designed to take into account both the $p$-th power behavior of $|t + \delta|^{p} - t^{p}$ when $|\delta| \gg |t|$, as well as the locally linear + quadratic behavior of it when $|\delta| \ll |t|$.
These conditions allow us to precondition with it everywhere, instead of only in certain regions.

\begin{figure}[t]
    \centering
    \includegraphics[width=0.4\textwidth]{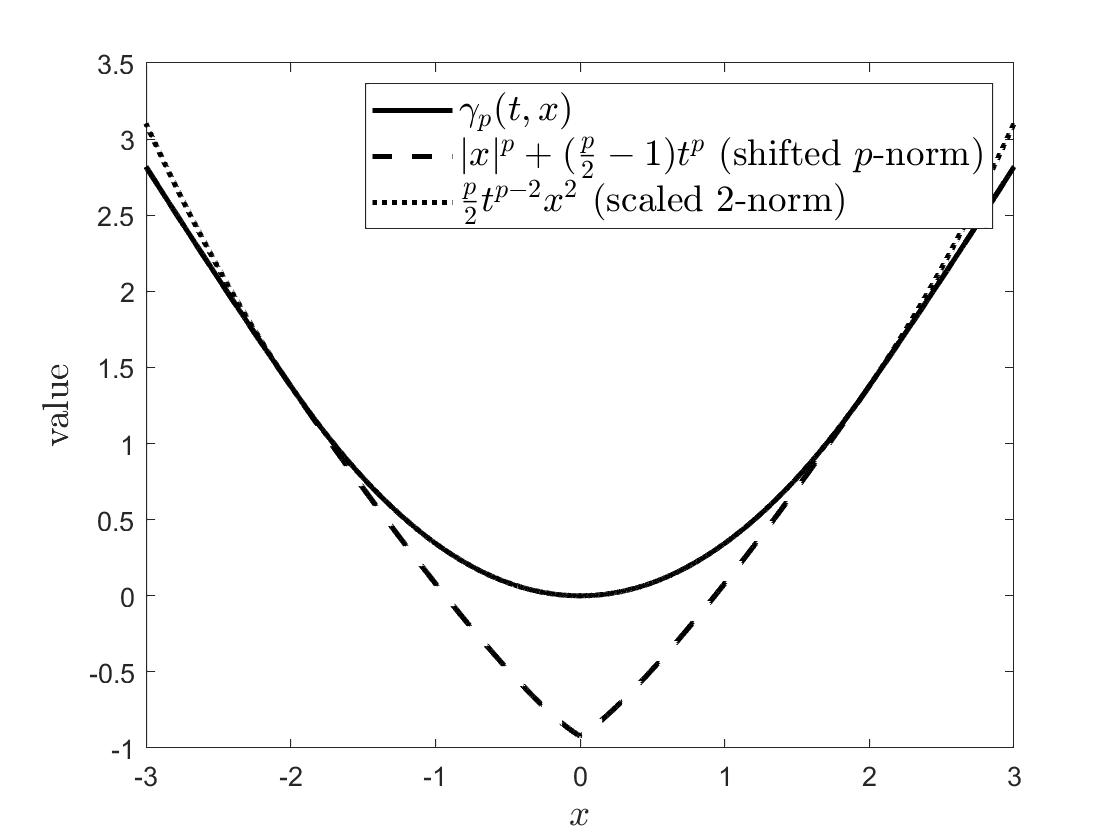}
    \caption{The $\gamma$ function for $p=1.2$ and $t=2$.}
    \label{fig:gamma}
\end{figure}

Although there are sampling algorithms that preserve $p$-norm ($1< p\leq 2$) and only need $\Otil(d)$ rows of the matrix \cite{bourgain1989approximation,CohenP15}, these cannot be used to get a high accuracy algorithm, and do not immediately translate to a sampling algorithm for the $\gamma$ function due to the introduction of threshold values.
To address this, we show in Theorem~\ref{thm:main-sampling}
that we can construct a matrix
$\widetilde{A}$ consisting of $\Otil(d^3)$ sampled and rescaled
rows of $A$ such that with high probability
the value of the $\gamma_p$ function is approximately preserved for all $x$ in a polynomial range.
This means solving the sampled problem allows us
to solve the residual problem to constant approximation.
That in turn, via $p$-norm preconditioning,
reduces the distance to optimum by a constant
factor, leading to convergence after $\Otil_p(\log(1 / \epsilon))$ iterations of an outer loop.

We analyze this sampling process using $\epsilon$-nets
in a manner similar to~\cite{DDHKM09}:
the $d$-dimensional space of all possible $x$
vectors (in a polynomial range that contains the optimal solution) is discretized into $n^{\Otil(d)}$ vectors,
and we show that the sample approximately
preserves the $\gamma_p$-function value
with probability at least 
$1 - \exp(-O(d \log{n}))$ for each vector.
However, a significant complication arise
because the $\gamma_p$ function is no longer
homogeneous: the function value at $x$
is not directly relatable to the value at $2x$.

This non-homongeniety prevents us from
directly working with the unit sphere:
it's also difficult to `decouple' the
$2$-norm and $p$-norm terms because of the
thresholding.
It is possible for $\gamma_{p}(t, y)$ to
be significantly less than both $\norm{y}_{p}^{p}$
and $\norm{y}_{2}^2$, so bounding
variance by either of those norms
is insufficient.
To address this, we explicitly consider the
contribution from the coordinates where the $\gamma$
function behave quadratically.
Specifically, we set up multiple cases based
on how this contribution compares to the ones
from entries where the $\gamma$-function
takes the $p$th power terms. Moreover, because of other problems that arise because of thresholding and the structure of the $\gamma$ function, the size of our $\epsilon$-net also depends on the condition number of the matrix and we have to do bucketing on the values of $t_i$'s and perform the sampling for each bucket separately.

This analysis leads
us to the choice of sampling by square-roots
of leverage scores.
While there are sampling probabilities better
tailored to preserving $p$-norm
functions~\cite{CohenP15,CDMMMW16},
the multiple conversions between $2$-norm
and $p$-norm we make in our analyses
precludes us from making gains using these more
general values.
Instead, we obtain a bound of sampling $\Otil(\sqrt{n} d^{1.5})$
rows, after which we iterate this process
in a manner similar to~\cite{LMP13} to
obtain a final $\text{poly}(d)$ bound.

\section{Preliminaries}

We denote the $i$'th row of a matrix $A$ with $a_i$ --- note that $a_i$ is a column vector. For a matrix $A\in \mathbb{R}^{n\times d}$ and a set $E
\subseteq [n]$, the matrix with rows of $A$ with indices in $E$ is denoted by $A_E$. We denote the psedoinverse of a matrix $A$ by $A^+$ while throughout the paper we assume that all input matrices are full-rank. Our results are generalizable to all matrices. We use $\Otil$ to hide polylogarithmic terms. We use $\Otil_p$ to hide polylogarithmic terms as well as terms that only depend on $p$, for example $p^2$ and $2^p$. For a vector $v$, we denote by $|v|$ a vector where $(|v|)_i = |v_i|$. Also for a vector $v$, we denote its corresponding diagonal matrix with the capital letter $V$.

\begin{definition}
\label{def:spectral-approx}
For $\lambda\geq 1$, $\widetilde{A}\in \mathbb{R}^{t\times d}$ is a $\lambda$-spectral approximation of $A\in\mathbb{R}^{n\times d}$ if,
\[
\frac{1}{\lambda} A^\top A \preceq \widetilde{A}^\top \widetilde{A} \preceq A^\top A,
\]
where $\preceq$ is the  Loewner ordering.
\end{definition}

A spectral approximation of a matrix is useful because it can be used as a preconditioner to solve linear regression problems.

\begin{lemma}[Richardson's iteration with preconditioning~\cite{Saad03}]
\label{lemma:richardson}
Given a matrix $M$ such that $A^\top A \preceq M \preceq \lambda \cdot A^\top A$ for some $\lambda\geq 0$. Let $x^{(k+1)}=x^{(k)} - M^{-1} (A^\top A x^{(k)}- A^\top b)$. Then we have
\[
\norm{x^{(k)}-x^*}_M
\leq
\left(1-\frac{1}{\lambda}\right)^k
\norm{x^{\left(0\right)} - x^*}_M,
\]
where $x^*=\argmin_x \norm{Ax-b}_2^2$.
\end{lemma}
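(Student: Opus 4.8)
The plan is to track the error vector $e^{(k)} \defeq x^{(k)} - x^*$ and show that it contracts by a factor $1 - 1/\lambda$ in the $M$-norm at every step; iterating the contraction $k$ times then gives the claimed bound. First I would invoke optimality of $x^*$: as the minimizer of $\norm{Ax-b}_2^2$, it solves the normal equations $A^\top A x^* = A^\top b$. Subtracting this identity from the update rule $x^{(k+1)} = x^{(k)} - M^{-1}(A^\top A x^{(k)} - A^\top b)$ yields
\[
e^{(k+1)} = \bigl(I - M^{-1} A^\top A\bigr) e^{(k)},
\]
so it suffices to prove $\norm{(I - M^{-1} A^\top A)v}_M \le (1 - 1/\lambda)\norm{v}_M$ for every $v$. (Here $M$ is symmetric positive definite, hence invertible, since $M \succeq A^\top A \succ 0$ under the standing full-rank assumption; we also implicitly use $\lambda \ge 1$, without which the two-sided bound on $M$ would be vacuous.)

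Second, I would strip away the $M$-geometry via the substitution $w = M^{1/2}v$. Writing $B = A^\top A$ and $N = M^{-1/2} B M^{-1/2}$, a one-line computation gives
\[
M^{1/2}\bigl(I - M^{-1}B\bigr)v = \bigl(I - M^{-1/2}BM^{-1/2}\bigr)M^{1/2}v = (I - N)w ,
\]
and since $\norm{u}_M = \norm{M^{1/2}u}_2$ this shows $\norm{(I - M^{-1}B)v}_M = \norm{(I - N)w}_2$ while $\norm{v}_M = \norm{w}_2$. Taking the supremum over $v \ne 0$ (equivalently over $w \ne 0$), the desired inequality reduces to $\norm{I - N}_2 \le 1 - 1/\lambda$, where $N$ is symmetric so this spectral norm is just the largest absolute eigenvalue of $I-N$.

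Third, I would convert the hypothesis $A^\top A \preceq M \preceq \lambda A^\top A$ into an eigenvalue window for $N$ by congruence with $M^{-1/2}$: from $B \preceq M$ we get $N = M^{-1/2} B M^{-1/2} \preceq I$, and from $\tfrac1\lambda M \preceq B$ we get $\tfrac1\lambda I \preceq N$. Hence every eigenvalue of $N$ lies in $[1/\lambda, 1]$, so every eigenvalue of $I - N$ lies in $[0, 1 - 1/\lambda]$, and $\norm{I - N}_2 \le 1 - 1/\lambda$. Combining the three steps gives $\norm{e^{(k+1)}}_M \le (1 - 1/\lambda)\norm{e^{(k)}}_M$, and an immediate induction on $k$ finishes the proof.

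This argument has no genuinely hard step — it is the textbook analysis of preconditioned Richardson iteration. The only points that deserve a moment of care are the symmetrization by $M^{1/2}$, which turns the $M$-weighted operator norm into an ordinary spectral norm, and the fact that Loewner inequalities are preserved under congruence by $M^{-1/2}$, which is exactly what lets the preconditioning hypothesis be read off as the eigenvalue window $[1/\lambda,1]$ for $N$.
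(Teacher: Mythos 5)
Your proof is correct and is the standard textbook argument; the paper does not supply its own proof but simply cites Saad, and the argument it is implicitly invoking is exactly the one you give (error recurrence $e^{(k+1)}=(I-M^{-1}A^\top A)e^{(k)}$, symmetrization by $M^{1/2}$, eigenvalue window $[1/\lambda,1]$ for $M^{-1/2}A^\top A M^{-1/2}$). Your side remark that $\lambda\ge 1$ is forced by the two-sided Loewner hypothesis, and that the stated ``$\lambda\ge 0$'' is a harmless slip, is also accurate.
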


A useful tool to find a small-sized spectral approximation of a matrix is the concept of statistical leverage scores.

\begin{definition}
The leverage score of the $i$'th row $a_i^\top$ of a matrix $A$ is $\tau_i(A)=a_i^\top (A^\top A)^+ a_i$. The \emph{generalized leverage score} of $i$'th row with respect to $B\in \mathbb{R}^{t\times d}$ is
\[
\tau_i^B(A)=\begin{cases}
a_i^\top (B^\top B)^+ a_i & ~~ \text{if} ~ a_i \perp \text{ker}(B), \\
\infty &  ~~ \text{otherwise.}
\end{cases}
\]
\end{definition}

The following lemma regarding the sum of leverage scores is key to bounding the number of samples in our sampling algorithms.

\begin{lemma}
[Foster's theorem \cite{foster1953stochastic}]
\label{lemma:sum-leverage-bound}
For a matrix $A\in\mathbb{R}^{n\times d}$,
$
\sum_{i=1}^n \tau_i(A) \leq d.
$
\end{lemma}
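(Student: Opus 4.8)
The "final statement" in this excerpt is Lemma~\ref{lemma:sum-leverage-bound} (Foster's theorem). I sketch a proof of that.

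\medskip

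\textbf{Proof proposal for Lemma~\ref{lemma:sum-leverage-bound}.}
The plan is to recognize the leverage scores as the diagonal entries of the orthogonal projection matrix onto the column space of $A$, and then use the fact that the trace of a projection equals its rank. First I would reduce to the full-rank case: since the statement and the definition $\tau_i(A)=a_i^\top(A^\top A)^+a_i$ only involve the row space of $A$, we may assume without loss of generality that $A$ has rank $r\le d$; it suffices to prove $\sum_i \tau_i(A)\le r$, which implies the claimed bound $\le d$. (Alternatively, invoke the paper's standing assumption that input matrices are full rank and set $r=d$.)

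\medskip

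The key computation is to define the projection matrix $P = A(A^\top A)^+A^\top \in \mathbb{R}^{n\times n}$. One checks the standard facts that $P$ is symmetric and idempotent ($P^2=P$), so it is the orthogonal projector onto the column space of $A$, and hence $\trace{P}=\operatorname{rank}(A)=r$. The crucial observation is that the $i$-th diagonal entry of $P$ is exactly the leverage score: $P_{ii} = e_i^\top A(A^\top A)^+A^\top e_i = a_i^\top(A^\top A)^+ a_i = \tau_i(A)$, using that $a_i^\top = e_i^\top A$. Summing the diagonal,
\[
\sum_{i=1}^n \tau_i(A) = \sum_{i=1}^n P_{ii} = \trace{P} = \operatorname{rank}(A) \le d,
\]
which is the desired bound.

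\medskip

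The only step that requires a small amount of care — and the one I would flag as the main (mild) obstacle — is justifying $\trace{P}=\operatorname{rank}(A)$ when $A$ is rank-deficient, i.e.\ verifying that $A(A^\top A)^+A^\top$ is genuinely the orthogonal projector onto $\operatorname{col}(A)$ even though $A^\top A$ is singular and $(A^\top A)^+$ is the Moore--Penrose pseudoinverse rather than a true inverse. This follows from the identity $A(A^\top A)^+A^\top A = A$ (a standard property of the pseudoinverse, provable via the SVD $A=U\Sigma V^\top$, under which $P=U_r U_r^\top$ with $U_r$ the first $r$ columns of $U$); given that identity, idempotency and the image/rank claims are immediate. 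Since the paper assumes full-rank inputs, this subtlety can in fact be bypassed entirely: then $A^\top A$ is invertible, $P=A(A^\top A)^{-1}A^\top$ is transparently idempotent and symmetric, and $\trace{P}=\trace{(A^\top A)^{-1}A^\top A}=\trace{I_d}=d$ by cyclicity of trace, giving $\sum_i\tau_i(A)=d$ directly.
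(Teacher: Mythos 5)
Your proof is correct and is the standard argument for Foster's theorem: identify $\tau_i(A)$ as the $i$-th diagonal entry of the orthogonal projector $P = A(A^\top A)^+A^\top$, then use $\sum_i \tau_i(A) = \trace{P} = \operatorname{rank}(A) \le d$. The paper does not actually supply a proof of this lemma — it is stated as a black-box citation to Foster \cite{foster1953stochastic} — so there is no paper argument to compare against, but your write-up is exactly the argument one would give, and your handling of the rank-deficient case via $A(A^\top A)^+A^\top A = A$ is the right way to make the pseudoinverse version rigorous.
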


\begin{lemma}[\cite{CohenLMMPS15}]
\label{lemma:sampling-leverage}
The leverage scores of a matrix $A\in\mathbb{R}^{n\times d}$ can be computed to high accuracy in time $\Otil(\text{nnz}(A) + d^\omega)$. Moreover sampling (and scaling) $\Otil(d/\epsilon^2)$ rows according to leverage scores gives a $(1+\epsilon)$ spectral approximation with high probability.
\end{lemma}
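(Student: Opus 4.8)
**Proof proposal for Theorem~\ref{lemma:sampling-leverage} (leverage score computation and spectral approximation via sampling).**

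The plan is to combine the input-sparsity sketching machinery of Clarkson--Woodruff/Nelson--Nguyen (Theorem~\ref{thm:p-2}) with a matrix Chernoff bound for the sampling step. For the first claim, I would proceed as follows. Leverage scores $\tau_i(A) = a_i^\top (A^\top A)^+ a_i$ require, naively, forming $(A^\top A)^+$ which costs $d^\omega$, but the harder issue is the $n$ quadratic forms. The standard trick is a Johnson--Lindenstrauss projection: write $\tau_i(A) = \norm{(A^\top A)^{+/2} a_i}_2^2 = \norm{e_i^\top A (A^\top A)^{+/2}}_2^2$, and observe that $A(A^\top A)^{+/2}$ is an $n\times d$ matrix with orthonormal columns (call it $U$), so $\tau_i = \norm{u_i}_2^2$. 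Multiplying $U$ on the right by a random Gaussian (or sparse-JL) matrix $G \in \Rbb^{d\times O(\log n/\epsilon^2)}$ preserves all row norms up to $(1\pm\epsilon)$ by the JL lemma with a union bound over the $n$ rows. The remaining obstacle is that we cannot afford to form $U$ exactly; instead we compute $AG'$ where $G' = (A^\top A)^{+/2} G$, and $(A^\top A)^{+/2}G$ is obtained by first computing $R$ such that $AR$ has well-conditioned columns (via an input-sparsity sketch $S$ with $SA$ of size $O(d)\times d$, then QR on $SA$), giving $A R$ that is a constant-factor spectral approximation to $U$; then a few steps of Richardson/Newton iteration on the $O(d)\times d$ sketched system refine the estimates of $(A^\top A)^{+/2}G$ to high accuracy. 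Each application of $A$ or $A^\top$ to a thin matrix costs $\tilde{O}(\text{nnz}(A))$, QR and inversions on $d\times d$ matrices cost $d^\omega$, and there are only $O(\log n)$ refinement steps, for a total of $\tilde{O}(\text{nnz}(A) + d^\omega)$.

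For the second claim, I would define the random sampling matrix: for each row $i$, set $p_i = \min\{1, c \log(d)/\epsilon^2 \cdot \tau_i\}$ for an appropriate constant $c$, include row $i$ independently with probability $p_i$, and rescale the kept rows by $1/\sqrt{p_i}$. By Foster's theorem (Lemma~\ref{lemma:sum-leverage-bound}) $\sum_i \tau_i \le d$, so $\sum_i p_i = \tilde{O}(d/\epsilon^2)$, which bounds the expected (and, by a Chernoff bound, the actual with high probability) number of sampled rows. To establish the spectral guarantee $\frac{1}{1+\epsilon} A^\top A \preceq \widetilde{A}^\top \widetilde{A} \preceq (1+\epsilon) A^\top A$, I would apply a matrix Chernoff/Bernstein inequality (e.g.\ Tropp's) to the sum of independent PSD matrices $X_i = \frac{1}{p_i}\mathbbm{1}[i \text{ sampled}] \cdot (A^\top A)^{-1/2} a_i a_i^\top (A^\top A)^{-1/2}$, whose expectation is the identity on the column space; the per-term spectral norm is bounded by $\tau_i/p_i \le \epsilon^2/(c\log d)$ precisely because of the leverage-score-proportional sampling, which is exactly the quantity the matrix Chernoff bound needs to be small. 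This yields concentration $\norm{\sum_i X_i - I} \le \epsilon$ with probability $1 - d^{-\Omega(c)}$, which translates back to the stated spectral approximation.

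The main obstacle I anticipate is the first part: obtaining high-accuracy leverage score estimates in $\tilde{O}(\text{nnz}(A)+d^\omega)$ rather than merely constant-factor ones. A single input-sparsity sketch $SA$ only gives a constant-factor spectral approximation, hence only constant-factor leverage scores, which would force $\Otil(d/\epsilon^2)$ to become $\Otil(d)$ with a worse constant but, more importantly, would not suffice for the downstream high-accuracy regression applications. The fix is to bootstrap: use the constant-factor approximation as a preconditioner and run $O(\log(1/\epsilon))$ (or even $O(\log\log)$ with Newton-type updates) iterations to drive the error down, all while only ever touching $A$ through $\text{nnz}(A)$-time matrix-vector products and never densifying. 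I would also need to be slightly careful that combining the JL error (from the $O(\log n/\epsilon^2)$-column Gaussian) with the iterative-refinement error keeps the final multiplicative distortion at $(1\pm\epsilon)$; this is routine but should be stated. Everything else---Foster's theorem, the matrix Chernoff bound, and the rescaling identity---is standard and assembles directly.
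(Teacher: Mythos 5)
This lemma is a citation to \cite{CohenLMMPS15}; the paper does not contain its own proof, so there is nothing internal to compare against. Your reconstruction correctly follows the standard approach behind it: for the first part, reduce leverage scores to row norms of $U = A(A^\top A)^{+/2}$, use an input-sparsity sketch to obtain an approximate orthogonalizer $R$ in $\tilde{O}(\text{nnz}(A)+d^\omega)$ time, and then apply a Johnson--Lindenstrauss projection to estimate the $n$ row norms of $AR$ with only $\tilde{O}(1)$ matrix-vector products against $A$; for the second part, Foster's theorem bounds $\sum_i \tau_i \le d$ and hence the expected sample size, and a matrix Chernoff bound applied to the rank-one terms $p_i^{-1}\mathbbm{1}[i\text{ sampled}]\,(A^\top A)^{-1/2}a_ia_i^\top(A^\top A)^{-1/2}$ gives the $(1\pm\epsilon)$ spectral guarantee. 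This is the right argument and matches the source.

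One remark, since you flag it as the "main obstacle": you worry that a constant-factor sketch yields only constant-factor leverage scores and that this "would not suffice for the downstream high-accuracy regression applications." That concern is misplaced. For the sampling guarantee, constant-factor \emph{overestimates} of leverage scores with total mass $O(d)$ are all that is needed; the $(1+\epsilon)$ quality of the resulting spectral approximation is controlled entirely by the oversampling factor $\tilde{O}(1/\epsilon^2)$ in the sampling probabilities, not by the accuracy of the leverage score estimates themselves. The matrix Chernoff argument only requires each per-term spectral norm $\tau_i/p_i$ to be $O(\epsilon^2/\log d)$, which holds whenever $p_i \gtrsim (\log d/\epsilon^2)\,\hat\tau_i$ for overestimates $\hat\tau_i \ge \tau_i$. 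The downstream high accuracy in the regression applications then comes from Richardson iteration preconditioned by this constant- or $(1+\epsilon)$-spectral approximation, which converges at a geometric rate, giving the $\log(1/\epsilon)$ dependence. So the bootstrapping you propose is a correct way to get $(1\pm\epsilon)$-accurate leverage score \emph{values}, but it is not needed for what the lemma is used for in this paper; a single constant-factor pass already suffices for the sampling claim. This also avoids the issue you would otherwise run into, namely that the JL dimension (and hence the $\text{nnz}(A)$ term) would scale with $1/\epsilon^2$ if you insisted on $(1\pm\epsilon)$-accurate row-norm estimates.
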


\noindent
The following generalizes the concept of leverage scores to general $p$-norms.

\begin{definition}
The $p$-norm Lewis weights of a matrix $A\in\mathbb{R}^{n\times d}$ are defined as the unique weights $w$ such that for all $i\in [n]$,
\[
w_i = \tau_i(W^{1/2-1/p} A)
\]
\end{definition}

The following result gives a low-accuracy algorithm for finding the $p$-norm Lewis weights which is enough for our purposes. However a high-accuracy algorithm is presented very recently \cite{fazel2021computing}.

\begin{lemma}[\cite{bourgain1989approximation,CohenP15}]
\label{lemma:sampling-lewis}
For $p>2$, the $p$-norm Lewis weights of a matrix $A\in\mathbb{R}^{n\times d}$ can be computed in $\Otil(\text{nnz}(A) + d^{p/2})$ time to constant approximation. Moreover sampling (and rescaling) $\Otil(d^{p/2})$ rows according to the Lewis weights preserves the $p$-norm up to constant factors with high probability.
\end{lemma}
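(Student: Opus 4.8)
\emph{Proof proposal.} The statement splits into two essentially independent pieces: (a) computing weights $w$ that approximate the $p$-norm Lewis weights to within a constant factor, and (b) showing that sampling $\Otil(d^{p/2})$ rows with probabilities proportional to $w$, with the sampled row $a_i$ rescaled by $(Nq_i)^{-1/p}$, produces $\widetilde{A}$ with $\tfrac12\norm{Ax}_p^p \le \norm{\widetilde{A}x}_p^p \le 2\norm{Ax}_p^p$ for all $x$ simultaneously. I would do (a) first, since (b) assumes access to approximate weights.

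\textbf{Part (a): computing the weights.} Consider $T\colon\mathbb{R}^n_{>0}\to\mathbb{R}^n_{>0}$ defined by $T(w)_i = \tau_i(W^{1/2-1/p}A)$, whose fixed points are exactly the Lewis weights. Using the standard monotonicity and reweighting-scaling behavior of leverage scores, I would show that $T$ is a contraction in the metric $\rho(w,w') = \max_i|\log w_i - \log w_i'|$ with factor $|1-2/p|<1$ for $p\in(2,4)$; Banach's fixed point theorem then gives existence and uniqueness, and for $p\ge 4$ existence follows instead from the Lewis/John-type (maximum-volume) ellipsoid characterization of~\cite{bourgain1989approximation}. For the algorithm, iterate $w^{(k+1)}=T(w^{(k)})$ from $w^{(0)}=(d/n)\one$; after $O\!\left(\tfrac{\log(n\kappa)}{1-2/p}\right)=\Otil_p(1)$ steps the relative error is polynomially small. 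Each step needs only constant-accuracy leverage scores of $W^{1/2-1/p}A$, costing $\Otil(\text{nnz}(A)+d^\omega)$ by Lemma~\ref{lemma:sampling-leverage}; to reach $\Otil(\text{nnz}(A)+d^{p/2})$ I would fold in the recursive row-sampling of~\cite{LMP13}, so that the full matrix $A$ is touched only $\Otil(1)$ times and all subsequent leverage-score computations run on an already-sparse sample with $\Otil(d^{p/2})$ rows.

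\textbf{Part (b): the sampling guarantee.} The key structural fact is that, once $w$ are the Lewis weights, every coordinate of $Ax$ is controlled by $w_i$: since $w_i$ is the leverage score of row $i$ of $W^{1/2-1/p}A$, we get $w_i^{1-2/p}|(Ax)_i|^2\le w_i\norm{W^{1/2-1/p}Ax}_2^2$, i.e.\ $|(Ax)_i|\le w_i^{1/p}\norm{W^{1/2-1/p}Ax}_2$. Combining this with H\"older's inequality (exponents $\tfrac{p}{p-2},\tfrac{p}{2}$) and Foster's theorem $\sum_i w_i\le d$ (Lemma~\ref{lemma:sum-leverage-bound}) gives $\norm{W^{1/2-1/p}Ax}_2\le d^{1/2-1/p}\norm{Ax}_p$, hence $|(Ax)_i|^p\le d^{p/2-1}w_i\norm{Ax}_p^p$. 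Sampling $N=\Otil(d^{p/2})$ rows i.i.d.\ with $q_i=w_i/\sum_j w_j\ge w_i/d$ and rescaling by $(Nq_i)^{-1/p}$ then makes each sampled row's contribution to $\norm{\widetilde{A}x}_p^p$ at most $d^{p/2}\norm{Ax}_p^p/N$, while the total is unbiased. Fixing $x$ with $\norm{Ax}_p=1$, a Bernstein bound gives $\Pr\big[\,\big|\norm{\widetilde{A}x}_p^p-1\big|>\tfrac12\,\big]\le\exp(-\Omega(N/d^{p/2}))$. An $\epsilon$-net over $\{x:\norm{Ax}_p=1\}$ (of size $\exp(O(d\log(n\kappa)))$ in a naive implementation, once the polynomial range of coordinates is accounted for) plus a union bound upgrades this to all net points, and a short continuity argument — using that $\norm{\widetilde{A}\,\cdot}_p^p$ cannot jump between nearby net points because it is dominated by $O(1)\norm{A\,\cdot}_p^p$ there — extends it to all $x$. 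Matching the claimed $\Otil(d^{p/2})$ rows (rather than the $\Otil(d^{p/2+1})$ a crude union bound would force) uses the Dudley-type chaining argument of~\cite{CohenP15}.

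\textbf{Main obstacle.} The inequalities in Part (b) and the contraction estimate in Part (a) are routine. The genuinely delicate points are (i) shaving the extra factor of $d$ in the row count via chaining rather than a naive net union bound — this is where the bulk of the technical work in~\cite{CohenP15} lies — and (ii) implementing the weight computation within $\Otil(\text{nnz}(A)+d^{p/2})$ rather than $\Otil(\text{nnz}(A)+d^\omega)$, which requires interleaving the fixed-point iteration with recursive sparsification so that the dense $d^\omega$ step is replaced by work on a matrix already reduced to $\Otil(d^{p/2})$ rows.
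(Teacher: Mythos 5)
This lemma is not proved in the paper; it is cited as a known result from \cite{bourgain1989approximation,CohenP15}, so there is no paper proof to compare against. Your sketch reconstructs the broad outline of the cited argument correctly, but it has one concrete gap in Part~(a).

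You assert that the map $T(w)_i=\tau_i(W^{1/2-1/p}A)$ is a contraction in $\rho(w,w')=\max_i|\log w_i-\log w'_i|$ with factor $|1-2/p|$ for $p\in(2,4)$. The factor $|1-2/p|$ cannot be right: it is strictly less than $1$ for every $p>2$, which would make the iteration a contraction for all $p>2$ and contradict the well-known fact that the simple fixed-point iteration for Lewis weights only provably converges for $p<4$. Working through the scaling, the standard Cohen--Peng iteration $w_i^{(k+1)}=\bigl(a_i^\top(A^\top (W^{(k)})^{1-2/p}A)^{-1}a_i\bigr)^{p/2}$ has contraction factor $(p-2)/2$, which is $<1$ precisely when $p<4$. (Your version of $T$, which folds the $w_i^{1-2/p}$ prefactor into the map, is invariant under uniform rescaling of $w$, so it even has a direction of eigenvalue $1$; it needs to be analyzed on the quotient by scaling, and the resulting rate is still governed by the same obstruction at $p=4$.) You correctly flag that for $p\ge4$ existence is handled differently via the maximum-volume ellipsoid characterization, but you then give an algorithm that rests solely on the contraction, so the algorithmic claim for $p\ge4$ is unsupported. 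Cohen and Peng treat $p\ge4$ with a separate convex-optimization (determinant-maximization) routine, not by iterating $T$. Since in this paper the lemma is invoked with $q\in[2,p-1)$ and $q$ can be arbitrarily large, the $p\ge 4$ regime is actually needed and the gap is not cosmetic. The inequalities and the net/chaining sketch in Part~(b) are fine as far as they go, and you rightly defer the $\Otil(d^{p/2})$ row count (versus $\Otil(d^{p/2+1})$) to the chaining argument in \cite{CohenP15}.
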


The following lemma, which is due to Johnson and Lindenstrauss, is useful for computing the leverage scores of a matrix fast.

\begin{lemma}[Random Projection \cite{johnson1984extensions,arriaga2006algorithmic}]
\label{lemma:random-projection}
Let $x\in\mathbb{R}^d$. Assume the entries in $G\in \mathbb{R}^{r\times d}$ are sampled independently from $N(0,1)$. Then,
\[
Pr \left((1-\epsilon)\norm{x}_2^2 \leq \norm{\frac{1}{\sqrt{r}}Gx}_2^2 \leq (1+\epsilon) \norm{x}_2^2\right)
\geq
1- 2e^{-\left(\epsilon^2-\epsilon^3\right) r /4}
\]
\end{lemma}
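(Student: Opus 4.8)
The plan is the standard Chernoff-type argument for the concentration of a chi-squared random variable. If $x=0$ both bounds hold trivially, so assume $x\neq 0$; since the event is unchanged under $x\mapsto cx$ for $c\neq 0$, I may rescale so that $\norm{x}_2=1$. With this normalization, $Gx\in\Rbb^{r}$ is a standard Gaussian vector: for each fixed $j$, $(Gx)_j=\sum_i G_{ji}x_i$ is a linear combination of independent $N(0,1)$ variables with variance $\sum_i x_i^2=1$, and distinct coordinates of $Gx$ use disjoint entries of $G$ and are therefore independent. Hence $\norm{\tfrac1{\sqrt r}Gx}_2^2=\tfrac1r\sum_{j=1}^r Z_j^2$ with $Z_1,\dots,Z_r$ i.i.d.\ $N(0,1)$, i.e.\ $\tfrac1r$ times a $\chi^2_r$ variable of mean $1$. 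It remains to prove $\Pr\!\big[\,\big|\tfrac1r\sum_{j}Z_j^2-1\big|>\epsilon\,\big]\le 2e^{-(\epsilon^2-\epsilon^3)r/4}$ for $0<\epsilon<1$ (for $\epsilon\ge 1$ the lower bound is nonpositive and only the upper tail is relevant).

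For the upper tail I would apply Markov's inequality to $\exp(\theta\sum_j Z_j^2)$ with $\theta\in(0,\tfrac12)$, using $\E[e^{\theta Z_j^2}]=(1-2\theta)^{-1/2}$, to get $\Pr[\sum_j Z_j^2\ge(1+\epsilon)r]\le\big(e^{-\theta(1+\epsilon)}(1-2\theta)^{-1/2}\big)^r$; the minimizing choice is $\theta=\tfrac{\epsilon}{2(1+\epsilon)}\in(0,\tfrac12)$, giving the bound $\exp\!\big(\tfrac r2(\ln(1+\epsilon)-\epsilon)\big)$. Symmetrically, for the lower tail I would use $\exp(-\theta\sum_j Z_j^2)$ with $\E[e^{-\theta Z_j^2}]=(1+2\theta)^{-1/2}$ and $\theta>0$, optimized at $\theta=\tfrac{\epsilon}{2(1-\epsilon)}$, to obtain $\Pr[\sum_j Z_j^2\le(1-\epsilon)r]\le\exp\!\big(\tfrac r2(\epsilon+\ln(1-\epsilon))\big)$.

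The proof then finishes with the two elementary logarithm estimates $\ln(1+\epsilon)-\epsilon\le-\tfrac{\epsilon^2}{2}+\tfrac{\epsilon^3}{3}$ and $\epsilon+\ln(1-\epsilon)\le-\tfrac{\epsilon^2}{2}$, both obtained by pairing consecutive terms of the Mercator series for $\ln(1\pm\epsilon)$ (valid for $0\le\epsilon<1$): these make the upper-tail bound at most $e^{-\epsilon^2r/4+\epsilon^3r/6}\le e^{-(\epsilon^2-\epsilon^3)r/4}$ and the lower-tail bound at most $e^{-\epsilon^2r/4}\le e^{-(\epsilon^2-\epsilon^3)r/4}$, and a union bound over the two tails yields the claimed $2e^{-(\epsilon^2-\epsilon^3)r/4}$. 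There is no substantive obstacle; the only point needing care is keeping enough slack in the Taylor estimates so the exponent emerges as the specific quantity $(\epsilon^2-\epsilon^3)/4$ in the statement (in particular, noting $\tfrac{\epsilon^3}{6}\le\tfrac{\epsilon^3}{4}$ absorbs the cubic correction from the upper tail without any extra restriction on $\epsilon$). One could alternatively quote the Laurent--Massart $\chi^2$ tail bounds and re-parametrize, but the moment-generating-function computation above is self-contained.
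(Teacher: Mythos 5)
Your proof is correct. The paper does not prove Lemma~\ref{lemma:random-projection} itself --- it cites it from the Johnson--Lindenstrauss and Arriaga--Vempala references as a standard fact --- and the argument you give is the textbook moment-generating-function derivation behind those citations: reduce to a $\chi^2_r$ variable via the rotational invariance of the Gaussian, bound each tail by optimizing $\E[e^{\pm\theta\sum_j Z_j^2}]=(1\mp2\theta)^{-r/2}$, and then absorb the resulting $\ln(1\pm\epsilon)$ expressions into the quoted exponent $(\epsilon^2-\epsilon^3)r/4$ with the alternating-series estimates. The computations of the optimal $\theta$, the exponents $\tfrac{r}{2}(\ln(1+\epsilon)-\epsilon)$ and $\tfrac{r}{2}(\epsilon+\ln(1-\epsilon))$, and the step $\tfrac{\epsilon^3}{6}\le\tfrac{\epsilon^3}{4}$ all check out, and the remark that for $\epsilon\ge 1$ the claimed bound is vacuous (since $1-2e^{-(\epsilon^2-\epsilon^3)r/4}\le -1$) correctly disposes of that range, so there is no gap.
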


We use the following notation for the running time of fast matrix multiplication and fast rectangular matrix multiplication.

\begin{definition}
We denote the running time of multiplying an $r\times s$ matrix by an $s \times t$ matrix with $\textsc{MM}(r,s,t)$. The exponent of matrix multiplication is denoted by $\omega$. In other words, $d^\omega = \textsc{MM}(d,d,d)$. 
\end{definition}

\noindent
We will use the following concentration inequality.

\begin{lemma}[Chernoff bound \cite{chung2006complex}]
\label{lemma:chernoff-bound}
Let $X_1,\ldots,X_n$ be independent random variables with $\prob{}{X_i=1}=p_i$ and $\prob{}{X_i=0}=1-p_i$. For $X=\sum_{i=1}^n a_i X_i$, $a_i > 0$, we have $\expec{}{X} = \sum_{i=1}^n a_i p_i$ and we define $\nu = \sum_{i=1}^n p_i a_i^2$ and $a=\max\{a_1,\ldots,a_n\}$. Then we have
\begin{align*}
\prob{}{X \leq \expec{}{X} - \lambda} \leq \exp(-\lambda^2/2\nu)
\end{align*}
and
\begin{align*}
\prob{}{X \geq \expec{}{X} + \lambda} \leq \exp(-\lambda^2/(2\nu+2a\lambda/3)).
\end{align*}
\end{lemma}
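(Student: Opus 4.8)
The plan is to run the standard exponential-moment (Bernstein-type) argument, handling the two tails separately. For the upper tail, fix a parameter $t>0$ and apply Markov's inequality to $e^{tX}$, giving $\prob{}{X \geq \expec{}{X}+\lambda} \leq e^{-t(\expec{}{X}+\lambda)}\,\expec{}{e^{tX}}$. Independence of the $X_i$ makes the moment generating function factorize, and since each $X_i$ is $\{0,1\}$-valued we get $\expec{}{e^{tX}} = \prod_{i=1}^n \expec{}{e^{t a_i X_i}} = \prod_{i=1}^n \bigl(1 + p_i(e^{t a_i}-1)\bigr) \leq \exp\bigl(\sum_{i=1}^n p_i(e^{t a_i}-1)\bigr)$, using $1+y \leq e^y$.

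The one calculation worth isolating is a second-order bound on $e^{t a_i}-1$. Assuming $t a < 3$, the series estimate $e^y - 1 - y \leq \tfrac{y^2}{2}\cdot\tfrac{1}{1-y/3}$ (which follows from $k! \geq 2\cdot 3^{k-2}$ for $k\geq 2$), applied with $y = t a_i \leq t a$, yields $\sum_i p_i(e^{t a_i}-1) \leq t\expec{}{X} + \tfrac{t^2\nu}{2(1-t a/3)}$. Plugging this in gives $\prob{}{X \geq \expec{}{X}+\lambda} \leq \exp\bigl(-t\lambda + \tfrac{t^2\nu}{2(1-t a/3)}\bigr)$. Choosing $t = \tfrac{\lambda}{\nu + a\lambda/3}$ makes $t a/3 = \tfrac{a\lambda/3}{\nu+a\lambda/3} < 1$ automatically (so the series bound is legal) and $1 - t a/3 = \tfrac{\nu}{\nu + a\lambda/3}$, after which the exponent simplifies to $-\tfrac{\lambda^2}{2(\nu + a\lambda/3)} = -\tfrac{\lambda^2}{2\nu + 2a\lambda/3}$, which is exactly the claimed upper-tail bound.

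For the lower tail, apply Markov's inequality to $e^{-tX}$ with $t>0$, so $\prob{}{X \leq \expec{}{X}-\lambda} \leq e^{t(\expec{}{X}-\lambda)}\,\expec{}{e^{-tX}} \leq \exp\bigl(t(\expec{}{X}-\lambda) + \sum_i p_i(e^{-t a_i}-1)\bigr)$. Here one uses the cleaner inequality $e^{-y} \leq 1 - y + \tfrac{y^2}{2}$, valid for all $y \geq 0$ (verified by checking that $g(y) = 1 - y + \tfrac{y^2}{2} - e^{-y}$ has $g(0)=g'(0)=0$ and $g''(y) = 1 - e^{-y} \geq 0$), to get $\sum_i p_i(e^{-t a_i}-1) \leq -t\expec{}{X} + \tfrac{t^2\nu}{2}$ and hence $\prob{}{X \leq \expec{}{X}-\lambda} \leq \exp\bigl(-t\lambda + \tfrac{t^2\nu}{2}\bigr)$. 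Optimizing over $t$ with $t = \lambda/\nu$ gives $\exp(-\lambda^2/(2\nu))$, the claimed lower-tail bound.

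I do not expect a genuine obstacle here: this is the textbook Bernstein/Chernoff derivation, and the only mild subtlety is that the optimizing choice $t = \lambda/(\nu + a\lambda/3)$ in the upper-tail step must keep $t a < 3$ for the truncated-series estimate to apply — which that very formula guarantees with no case analysis. The bulk of the work is just the two moment-generating-function estimates and their elementary optimizations, all of which are routine.
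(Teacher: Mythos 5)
Your derivation is correct: the moment-generating-function argument, the truncated-series estimate $e^y - 1 - y \le \frac{y^2}{2(1-y/3)}$ for the upper tail, and the cleaner $e^{-y}\le 1-y+y^2/2$ for the lower tail, together with the optimizing choices $t=\lambda/(\nu+a\lambda/3)$ and $t=\lambda/\nu$, reproduce exactly the two stated bounds. Note that the paper does not prove this lemma at all; it is cited as a known result from Chung and Lu, so there is no in-paper argument to compare against, but what you have written is the standard Bernstein-style Chernoff derivation that underlies the cited source.
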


\section{Tall $p$-Norm Regression in Input Sparsity Time}

In this section, we present our input sparsity results. We first present an iterative sampling algorithm that uses leverage scores. We show that the output of this algorithm preserves the value of $\gamma$ function up to constant factors in a polynomial range. This polynomial range is essentially a set that contains the optimal solution to the residual problem needed to be solved to solve the $p$-norm regression problem. We then discuss how the $p$-norm regression problem can be solved by approximately solving a small number of residual problems.

\subsection{Iterative Sampling for the Gamma Function}
\label{sec:iterative-sampling}

In this section, we present our iterative sampling algorithm to decrease the size of the residual problem while preserving the value of the function up to constant factors.

Without loss of generality, throughout this section, by rescaling, we can assume the first argument $t$ of $\gamma_q$ function is at least one. In other words, by definition of $\gamma_q$, for any $r>0$,
\begin{align*}
r\gamma_q(t, y) = \gamma_q(r^{1/q} t, r^{1/q} y).
\end{align*}

Given a matrix $A\in\mathbb{R}^{n\times d}$, a vector $t\in\mathbb{R}^n$, $1<q\leq 2$, and an oversampling parameter $h=\tilde{O}(d)$, our sampling algorithm performs the following iterative process,

\begin{enumerate}
    \item For $k = 1,\ldots, O(\log\log(n))$:
    \begin{enumerate}
        \item For $j\in\mathbb{Z}$, let $A^{(j)}$ be the matrix obtained by taking rows of $A$ that have $2^{j-1} \leq t_i < 2^j$.
        \item Sample each row $i$ of $A^{(j)}$ independently with probability $p_i = \min \{1, h \cdot \sqrt{\tau_i(A^{(j)})}\}$.
        \item Scale the selected rows and the corresponding $t_i$'s by $(1/p_i)^{1/q}$.
        \item Set $A$ to be the matrix comprised of the sampled and scaled rows.
    \end{enumerate}
\end{enumerate}

This algorithm gives the following result. We aggregate all the scaling factors in a weight vector $w$, where if row $i$ is in the final sample, $w_i$ is equal to the product of all $(1/p_i)$ for this row over the iterations, and $w_i$ is zero, otherwise.

\begin{theorem}
\label{thm:main-sampling}
Let
$t\in\mathbb{R}^n$, $A\in\mathbb{R}^{n\times d}$ such that $t\geq 1$ and $\beta$ be an upper bound for the condition number of $A$ and $\max_{i} t_i$. There exists an oversampling parameter $h=\Otil(d)$ and a vector of weights $w$ such that 
$|\text{supp}(w)|=\Otil(d^3)$ and for all $y=Ax$ such that $\norm{x}_2\in\left(\frac{1}{\beta n^{10}}, \beta n^{10}\right)$,
\[
\left\vert \sum_{i \in [n]} w_i \gamma_q(t_i,y_i) - \sum_{i\in [n]} \gamma_q(t_i,y_i)\right\vert \leq \frac{3}{4}\sum_{i\in [n]} \gamma_q(t_i,y_i)
\]
with probability at least $1 - O(\exp(-d))$.
\end{theorem}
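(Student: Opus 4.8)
I would first isolate a single iteration of the outer loop and prove a per-round guarantee, then chain $K=O(\log\log n)$ of them. The per-round claim: conditioned on the current matrix $A$, thresholds $t$, and accumulated weights, there is an $h=\Otil(d)$ so that one round (bucket by dyadic $t_i$, sample each row $i$ of $A^{(j)}$ with $p_i=\min\{1,h\sqrt{\tau_i(A^{(j)})}\}$, rescale row and threshold by $p_i^{-1/q}$) produces a reweighting with $\sum_i w'_i\gamma_q(t_i,y_i)\in(1\pm\epsilon_0)\sum_i\gamma_q(t_i,y_i)$ \emph{simultaneously} for all $y=Ax$ with $\norm{x}_2\in(\tfrac{1}{\beta n^{10}},\beta n^{10})$, with probability $\ge 1-\exp(-\Omega(d\log(\beta n)))$, where $\epsilon_0=\Theta(1/K)$, while the number of rows drops from $N$ to $\Otil(\sqrt{N}\,d^{3/2})$. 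The row bound is immediate from Cauchy--Schwarz and Foster's theorem (Lemma~\ref{lemma:sum-leverage-bound}): within bucket $j$, $\sum_i\min\{1,h\sqrt{\tau_i(A^{(j)})}\}\le h\sqrt{|A^{(j)}|\sum_i\tau_i(A^{(j)})}\le h\sqrt{|A^{(j)}|\,d}$, and there are only $\Otil(1)$ nonempty dyadic buckets since all $t_i$ stay inside a $\poly{\beta n}$ window (condition-number control across rounds follows by specializing the guarantee to small $\norm{x}_2$, where $\gamma_q$ reduces to a weighted $2$-norm); summing over buckets and applying a Chernoff bound gives $\Otil(\sqrt{N}\,d^{3/2})$ rows w.h.p. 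Iterating the recursion $N_k=\Otil(\sqrt{N_{k-1}}\,d^{3/2})$ --- equivalently $N_k/d^3=\Otil((N_{k-1}/d^3)^{1/2})$ --- reaches $\Otil(d^3)$ after $K=O(\log\log n)$ rounds, exactly the number of iterations the algorithm runs; the $K$ per-round multiplicative errors compose to a factor in $((1-\epsilon_0)^K,(1+\epsilon_0)^K)\subseteq(\tfrac14,\tfrac74)$ for $\epsilon_0=\Theta(1/K)$; the $K$ failure probabilities union-bound to $\exp(-\Omega(d))$; and the final $w$ is the product of the per-round $p_i^{-1}$, as stated.

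\textbf{The per-round claim via an $\epsilon$-net.} Following \cite{DDHKM09}, discretize the annulus $\{x:\norm{x}_2\in(\tfrac{1}{\beta n^{10}},\beta n^{10})\}$ by a $(\beta n)^{-O(1)}$-net $\mathcal{N}$ of size $(\beta n)^{O(d)}$; on this set each map $x\mapsto\gamma_q(t_i,(Ax)_i)$ is $\poly{\beta n}$-Lipschitz and $\sum_i\gamma_q(t_i,(Ax)_i)$ is bounded below by $(\beta n)^{-O(1)}$ (using $\norm{Ax}_2\ge\sigma_{\min}(A)\norm{x}_2$ and $\poly{\beta n}$-boundedness of the $t_i$), so a per-point $(1\pm\tfrac{\epsilon_0}{2})$ guarantee transfers to all $x$ in the range, and it suffices to prove the guarantee for a fixed $y=Ax$, $x\in\mathcal{N}$, then union-bound over $\mathcal{N}$. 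Fix such a $y$ and a bucket $B_j$ (on which $t_i\approx t_j:=2^{j-1}$ to within a factor $2$). By the joint homogeneity $\gamma_q(r^{1/q}t,r^{1/q}y)=r\,\gamma_q(t,y)$, the new contribution of $B_j$ is $\sum_{i\in B_j}X_i$ with $X_i=\mathbbm{1}[i\text{ sampled}]\cdot p_i^{-1}\gamma_q(t_i,y_i)$ independent and $\E X_i=\gamma_q(t_i,y_i)$; write $S_j:=\sum_{i\in B_j}\gamma_q(t_i,y_i)$. To apply the Chernoff/Bernstein bound of Lemma~\ref{lemma:chernoff-bound} with deviation $\epsilon_0 S_j$ I need $\max_{i\in B_j}p_i^{-1}\gamma_q(t_i,y_i)=O(S_j)/h$ (which, via $\sum_i p_i^{-1}\gamma_q(t_i,y_i)^2\le(\max_i p_i^{-1}\gamma_q(t_i,y_i))\,S_j$, also gives the variance bound $\nu=O(S_j^2)/h$); when $p_i=h\sqrt{\tau_i(A^{(j)})}$ this reduces to the deterministic inequality $\gamma_q(t_i,y_i)\le O(S_j)\sqrt{\tau_i(A^{(j)})}$. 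I would derive this from the leverage-score inequality $y_i^2\le\tau_i(A^{(j)})\norm{A^{(j)}x}_2^2$ together with $\norm{A^{(j)}x}_2^2=\sum_{i\in B_j}y_i^2\le O(t_j^{2-q}S_{Q,j}+S_{P,j}^{2/q})$, where $Q_j=\{i\in B_j:|y_i|\le t_i\}$, $P_j=\{i\in B_j:|y_i|>t_i\}$, $S_{Q,j}=\sum_{Q_j}\gamma_q$, $S_{P,j}=\sum_{P_j}\gamma_q$, using $y_i^2=\Theta(t_j^{2-q}\gamma_q(t_i,y_i))$ on $Q_j$ and $y_i^2\le O(\gamma_q(t_i,y_i)^{2/q})$ with $\ell_{2/q}\le\ell_1$ on $P_j$. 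Finishing requires a case analysis on whether $S_{Q,j}$ or $S_{P,j}$ carries the bulk of $S_j$, and on whether $i\in Q_j$ or $i\in P_j$, absorbing stray powers of $t_j$ with the $\poly{\beta n}$ scale window. The Chernoff bound then gives failure probability $\exp(-\Omega(\epsilon_0^2 h))$ per bucket per net point, i.e.\ $\exp(-\Omega(d\log(\beta n)))$ for $h=\Theta(d\log(\beta n)/\epsilon_0^2)=\Otil(d)$; summing over the $\Otil(1)$ buckets and the $(\beta n)^{O(d)}$ net points leaves the claimed probability.

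\textbf{The main obstacle.} The genuinely delicate step is the per-coordinate bound $\gamma_q(t_i,y_i)\le O(S_j)\sqrt{\tau_i(A^{(j)})}$ and the attendant variance bound. Because $\gamma_q$ is not homogeneous in $y$ alone, I cannot normalize to the unit sphere (hence the annulus and the condition-number-dependent net), and because $\gamma_q(t,y)$ can be far smaller than both $\norm{y}_q^q$ and $\norm{y}_2^2$, I cannot bound the variance by either of those norms --- the contribution of the quadratically-behaving coordinates has to be tracked explicitly, which is precisely where the $S_{Q,j}$-versus-$S_{P,j}$ casework enters, and why the thresholds are bucketed dyadically before computing leverage scores. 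This casework, rather than the concentration estimates or the net, is the heart of the argument; it is also what forces sampling by $\sqrt{\tau_i}$ instead of $\tau_i$ (the $2$-norm/$q$-norm conversions above are lossy), which is why the final sample size is $\Otil(d^3)$ rather than $\Otil(d)$, and why $\beta$ enters the statement at all.
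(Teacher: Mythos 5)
Your architecture matches the paper's: iterative leverage-score sampling with dyadic bucketing on $t$ (the paper's Algorithm~1 and Theorem~\ref{thm:one-point-prob}), a Chernoff bound for a fixed $y$ driven by a max/variance estimate (Theorems~\ref{thm:main-variance-bound} and~\ref{thm:concentation}), an $\epsilon$-net on the annulus together with a Lipschitz estimate and a crude lower bound on $\sum_i\gamma_q$ (Lemmas~\ref{lemma:gamma-close}, \ref{lemma:lower-bound-gamma}, Theorem~\ref{thm:main-eps-net}), and a Cauchy--Schwarz/Foster recursion for the sample size. Your observation that the magnitude bound $\max_i p_i^{-1}\gamma_q(t_i,y_i)\le O(S_j)/h$ automatically yields the second-moment bound $\sum_i p_i^{-1}\gamma_q(t_i,y_i)^2\le O(S_j^2)/h$ is a genuine (if minor) streamlining: the paper proves both separately, with essentially the same casework.

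There is, however, a gap in the step you yourself flag as the heart of the argument. You propose to derive $\gamma_q(t_i,y_i)\le O(S_j)\sqrt{\tau_i}$ by substituting the squared Cauchy--Schwarz $y_i^2\le\tau_i\norm{y}_2^2$ into $\norm{y}_2^2\le O\bigl(t_j^{2-q}S_{Q,j}+S_{P,j}^{2/q}\bigr)$. After rescaling the bucket so $t_i\approx 1$, this gives $\gamma_q(t_i,y_i)\lesssim\tau_i\bigl(S_{Q,j}+S_{P,j}^{2/q}\bigr)$, and for $i\in Q_j$ this only closes if $S_{P,j}^{2/q}\le O(S_j)$. That fails whenever $P_j$ dominates and $S_{P,j}\gg 1$: the discrepancy $S_{P,j}^{2/q}/S_{P,j}=S_{P,j}^{2/q-1}$ is polynomial in $\norm{y}_q$ and does not go away by ``absorbing stray powers of $t_j$''---after normalization it has nothing to do with $t_j$, and it is a $\poly{\beta n}$ factor, not a polylog one, so pushing it into $h$ would blow up the sample size. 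The fix (see the paper's Theorem~\ref{thm:concentation}, Case 1, $i\in S=Q_j$) is to avoid squaring in this regime: use $y_i^2\le|y_i|$ (valid since $|y_i|\le t_i\lesssim1$ on $Q_j$ after normalization), then the un-squared $|y_i|\le\sqrt{\tau_i}\norm{y}_2\lesssim\sqrt{\tau_i}\norm{y_{P_j}}_q$, and finally $\norm{y_{P_j}}_q\lesssim\norm{y_{P_j}}_q^q\approx S_{P,j}$, which uses the crucial lower bound $\norm{y_{P_j}}_q\ge\tfrac12$ (because each $i\in P_j$ has $|y_i|>t_i\gtrsim1$ after normalization). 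The squared route you sketched is exactly right in the complementary case $\norm{y_{Q_j}}_2\ge\norm{y_{P_j}}_q$; the two-way case split on which of those dominates, with the un-squared chain in the $P_j$-dominated case, is what closes the bound.
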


\begin{proof}
The probability bound for the error of the $\gamma$ function follows from Theorem \ref{thm:main-eps-net} and the size of $\text{supp}(w)$ follows from Theorem \ref{thm:one-point-prob}.
\end{proof}

\subsubsection{Bounding the second moment}

We start by bounding the second moment of
sampling by square roots of leverage scores for a single $x$. The following shows that reweighting according to leverage scores makes the entries of $\ell_2$ norm uniform.

\begin{lemma}
\label{lemma:lev-score-uniform}
Let $A$ be a matrix and
$\tau_i$ be the leverage score of row $i$ of $A$.
Then for any $x$, all entries of the vector $y = Ax$
satisfy
\[
y_i^2 \leq \tau_i \norm{y}_2^2.
\]
\end{lemma}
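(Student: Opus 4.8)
The plan is to reduce the claim to a single application of the Cauchy--Schwarz inequality in the inner product induced by $M := A^\top A$. First I would record the two elementary identities $\norm{y}_2^2 = x^\top A^\top A x = x^\top M x$ and $y_i = a_i^\top x$, so that the statement to be proved becomes exactly $(a_i^\top x)^2 \leq \tau_i \cdot x^\top M x$, where $\tau_i = a_i^\top M^+ a_i$ by the definition of the leverage score.

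Next, observe that since $a_i$ is a row of $A$, it lies in the column space of $A^\top$, which equals the column space of $M = A^\top A$; under the paper's standing full-rank assumption $M$ is positive definite and invertible, so $\tau_i = a_i^\top M^{-1} a_i$, and we may rewrite $a_i^\top x = (M^{-1} a_i)^\top M x = \langle M^{-1} a_i,\, x \rangle_M$, where $\langle u, v \rangle_M := u^\top M v$ is a genuine inner product. Applying the Cauchy--Schwarz inequality for $\langle \cdot, \cdot \rangle_M$ then yields
\[
(a_i^\top x)^2 = \langle M^{-1} a_i,\, x \rangle_M^2 \leq \langle M^{-1} a_i,\, M^{-1} a_i \rangle_M \cdot \langle x, x \rangle_M = \left( a_i^\top M^{-1} a_i \right) \left( x^\top M x \right) = \tau_i \norm{y}_2^2,
\]
which is precisely the desired bound.

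The argument is short and there is essentially no obstacle; the only point deserving a word of care is the rank-deficient case (not needed here, given the full-rank assumption), where one replaces $M^{-1}$ by the pseudoinverse $M^+$ and uses $a_i \in \mathrm{col}(M)$ to justify the rewriting $a_i^\top x = (M^+ a_i)^\top M x$ — equivalently, that $M^{1/2}(M^{1/2})^+ a_i = a_i$ since $M^{1/2}(M^{1/2})^+$ is the orthogonal projection onto $\mathrm{col}(M)$ — after which the same Cauchy--Schwarz step goes through verbatim, now with $\langle \cdot, \cdot \rangle_M$ a semi-inner product on the range of $M$.
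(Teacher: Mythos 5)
Your proof is correct and is essentially the same argument as the paper's: the paper inserts $(A^\top A)^{-1/2}(A^\top A)^{1/2}$ into $a_i^\top x$ and applies Euclidean Cauchy--Schwarz, which is precisely the Cauchy--Schwarz inequality for the $\langle\cdot,\cdot\rangle_{A^\top A}$ inner product that you invoke directly. Your added remark about the pseudoinverse in the rank-deficient case is a sensible, if unneeded, extra.
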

\begin{proof}
Let $a_i$ be the $i$'th row of matrix $A$. Note that $a_i$ is a column vector.
We have 
\[
y_i = a_i^T x = a_i^T (A^\top A)^{-1/2} (A^\top A)^{1/2}  x,
\]
so by Cauchy-Schwarz inequality:
\[
y_i^2 \leq a_i^\top (A^\top A)^{-1} a_i (x^\top A^\top A x) = \tau_i \norm{Ax}_2^2 = \tau_i \norm{y}_2^2.
\]
\end{proof}

This shows that the mass of $y$ cannot
be concentrated in a few entries.
Below we will give two lemmas that bound
the variance of sampling $\gamma_q(t, y)$.
They are at the core of our proof
of the concentration of our row sampling.

\begin{lemma}
\label{lemma:variance-bound}
Let $y,\tau\in\mathbb{R}^n$ such that $1 \geq \tau> 0$ and for all $i\in[n]$, $y_i^2 \leq \tau_i \norm{y}^2$. Let $h\geq 1$ and $p_i = h\sqrt{\tau_i}$. Moreover let $1\leq q\leq 2$, $S\subseteq [n]$ and $\overline{S} = [n]\setminus S$ such that $\norm{y_{\overline{S}}}_2\geq 0.5$, and for all $i\in S$, $|y_i| \leq 1$. Then for some absolute constant $\bar{c}$,
\[
\sum_{i\in \overline{S}} \frac{1}{p_i} |y_i|^{2q} + \sum_{i \in S} \frac{1}{p_i} y_i^4 \leq
\frac{\bar{c}}{h} \left(\norm{y_{\overline{S}}}_q^q + \norm{y_S}_2^2 \right)^2.
\]
\end{lemma}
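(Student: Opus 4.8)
The plan is to first use the leverage-score uniformity hypothesis to eliminate the sampling probabilities $p_i$, and then reduce the whole inequality to elementary $\ell_r$-norm estimates among $\norm{y}$, the quantity $P:=\norm{y_{\overline{S}}}_q^q$, and $Q:=\norm{y_S}_2^2$. The hypothesis $\norm{y_{\overline{S}}}_2\ge 1/2$ will be used precisely to compensate for the non-homogeneity introduced by the threshold at $1$ (which puts $P$ and $Q$ on different scales).

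First I would remove the $p_i$'s. Since $p_i=h\sqrt{\tau_i}$ and $y_i^2\le\tau_i\norm{y}^2$, we have $\sqrt{\tau_i}\ge |y_i|/\norm{y}$, hence $1/p_i\le \norm{y}/(h|y_i|)$ for every $i$ with $y_i\neq 0$ (coordinates with $y_i=0$ contribute nothing and may be dropped; note $\norm{y}\ge\norm{y_{\overline{S}}}_2\ge 1/2>0$, so the division is legitimate). Substituting and writing $|y_i|^{2q}/|y_i|=|y_i|^{2q-1}$, $y_i^4/|y_i|=|y_i|^3$,
\[
\sum_{i\in\overline{S}}\frac{1}{p_i}|y_i|^{2q}+\sum_{i\in S}\frac{1}{p_i}y_i^4
\;\le\;
\frac{\norm{y}}{h}\Bigl(\sum_{i\in\overline{S}}|y_i|^{2q-1}+\sum_{i\in S}|y_i|^3\Bigr).
\]
Next I would bound the two sums. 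On $S$, $|y_i|\le 1$ gives $|y_i|^3=|y_i|\,y_i^2\le y_i^2$, so $\sum_{i\in S}|y_i|^3\le Q$. For the $\overline{S}$ sum, $\tau_i\le 1$ forces $|y_i|\le\norm{y}$ for all $i$, and since $q-1\ge 0$ this gives $|y_i|^{2q-1}=|y_i|^q|y_i|^{q-1}\le\norm{y}^{q-1}|y_i|^q$, whence $\sum_{i\in\overline{S}}|y_i|^{2q-1}\le\norm{y}^{q-1}P$. So the right-hand side above is at most $\tfrac1h(\norm{y}^qP+\norm{y}Q)$, and it remains to show $\norm{y}^qP+\norm{y}Q\le\bar{c}(P+Q)^2$.

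The last step is to control $\norm{y}$ by $P$ and $Q$. Monotonicity of $\ell_r$ norms (valid since $q\le 2$) gives $\norm{y_{\overline{S}}}_2\le\norm{y_{\overline{S}}}_q=P^{1/q}$, so $\norm{y}^2=\norm{y_{\overline{S}}}_2^2+Q\le P^{2/q}+Q$; superadditivity of $t\mapsto t^{2/q}$ (here $2/q\ge 1$) turns this into $\norm{y}^q\le P+Q^{q/2}$. Now $\norm{y_{\overline{S}}}_2\ge 1/2$ together with $q\le 2$ yields $P=\norm{y_{\overline{S}}}_q^q\ge\norm{y_{\overline{S}}}_2^q\ge(1/2)^q\ge 1/4$, so $P+Q\ge 1/4$; then the elementary estimates $x^{q/2}\le 1+x$ (using $1\le 4P$) and $z^s\le 4z$ for $s\in[0,1]$ whenever $z\ge 1/4$ give $\norm{y}^q\le P+Q^{q/2}\le 5(P+Q)$ and $\norm{y}\le(5(P+Q))^{1/q}\le 20(P+Q)$. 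Plugging in, $\norm{y}^qP+\norm{y}Q\le(P+Q)(5P+20Q)\le 20(P+Q)^2$, so $\bar{c}=20$ suffices.

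I expect no serious obstacle here: the only delicate point is the scale mismatch between the $p$-power contribution $P$ and the quadratic contribution $Q$ caused by the thresholds, and that is exactly what the hypothesis $\norm{y_{\overline{S}}}_2\ge 1/2$ neutralizes — it pins $P$ below by a constant, so the scale-mismatched cross terms can be absorbed into $(P+Q)^2$. Everything else is routine manipulation of power-mean and $\ell_r$-norm inequalities, combined with the input bound $y_i^2\le\tau_i\norm{y}^2$ from Lemma~\ref{lemma:lev-score-uniform}.
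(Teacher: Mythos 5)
Your proof is correct. You begin the same way the paper does, using $y_i^2 \leq \tau_i \norm{y}^2$ to turn $1/p_i$ into $\norm{y}/(h|y_i|)$ and hence reduce the claim to $\norm{y}^q P + \norm{y} Q \leq \bar{c}(P+Q)^2$ with $P = \norm{y_{\overline{S}}}_q^q$, $Q = \norm{y_S}_2^2$. From there, though, you diverge in a genuinely different and cleaner direction. The paper runs a three-way case analysis on the relative sizes of $\norm{y_{\overline{S}}}_q$, $\norm{y_S}_q$, $\norm{y_S}_2$ and $\norm{y_{\overline{S}}}_2$, deriving in each case a bound $\norm{y}_2 \leq 2\max\{\norm{y_{\overline{S}}}_q, \norm{y_S}_2\}$ and then estimating the two sums separately; it invokes the hypothesis $\norm{y_{\overline{S}}}_2 \geq 0.5$ only inside one of the cases (to trade a $q$-th power for a square). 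You instead prove a single uniform bound $\norm{y}^q \leq P + Q^{q/2}$ via subadditivity of $t \mapsto t^{q/2}$, and then neutralize the scale mismatch in $Q^{q/2}$ using $Q^{q/2} \leq 1 + Q$ together with the lower bound $P \geq 1/4$ — which is where you spend the hypothesis $\norm{y_{\overline{S}}}_2 \geq 1/2$. This eliminates the case analysis entirely and makes the role of that hypothesis more transparent (it pins $P$ away from zero), at the cost of a slightly larger but still explicit constant ($\bar{c} = 20$, and in fact your own estimates give $\norm{y} \leq 5(P+Q)$, so $\bar{c} = 5$ would already do).
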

\begin{proof}
We have three cases.

\textbf{Case 1.}
$\norm{y_{\overline{S}}}_q \geq \norm{y_S}_q$.
In this case
\[
\label{eq:case1-2norm-bound}
\norm{y}_2 \leq \norm{y}_q \leq \norm{y_{\overline{S}}}_q + \norm{y_S}_q \leq 2 \norm{y_{\overline{S}}}_q,
\]
where the last inequality follows from the assumption and the first inequality holds since $1\leq q\leq 2$. 
On the other hand, the uniformity assumption,
plus the
conditions of $0\leq\tau_i\leq 1$
and $1\leq q \leq 2$ gives:
\[
\sum_{i\in \overline{S}} \frac{1}{p_i} |y_i|^{2q} = \sum_{i\in \overline{S}} \frac{1}{h \sqrt{\tau_i}} |y_i|^{2q} \leq \sum_{i\in \overline{S}} \frac{1}{h \sqrt{\tau_i}} \tau_i^{q/2} \norm{y}_2^q |y_i|^q \leq 
\frac{1}{h} \norm{y}_2^q \norm{y_{\overline{S}}}_q^q
\]
\noindent
So incorporating the bound above gives
\[
\leq 2^q \cdot \frac{1}{h} \norm{y_{\overline{S}}}_q^{2q} \leq 4 \cdot \frac{1}{h}
\left(\norm{y_{\overline{S}}}_q^q + \norm{y_S}_2^2\right)^2.
\]

\noindent
For the second term,
combining 
$|y_i|\leq 1$ for all $i \in S$
and $1\leq q \leq 2$
with the uniformity assumption gives:
\[
\sum_{i \in S} \frac{1}{p_i} y_i^4 
=
\sum_{i\in S} \frac{1}{h \sqrt{\tau_i}} y_i^4
\leq
\sum_{i\in S} \frac{1}{h \sqrt{\tau_i}}
\abs{y_i}^{2q}
\leq \sum_{i\in S} \frac{1}{h \sqrt{\tau_i}} \tau_i^{q/2} \norm{y}_2^q \abs{y_i}^q
\]
Again invoking $0< \tau_i \leq 1$ and $1\leq q \leq 2$ gives:
\[
\leq
\frac{1}{h} \norm{y}_2^q \sum_{i\in S}\abs{y_i}^q
=
\frac{1}{h} \norm{y}_2^q \norm{y_S}_q^q.
\]
Then substituting in the case assumption
($\norm{y_{\overline{S}}}_q \leq \norm{y_{S}}_q$)
and the initial bound
($\norm{y}_2 \leq 2 \norm{y_{\overline{S}}}_q$)
gives:
\[
\leq
2^q \cdot \frac{1}{h} \norm{y_{\overline{S}}}_q^{2q}
\leq
4 \cdot \frac{1}{h}
\left(\norm{y_{\overline{S}}}_q^q + \norm{y_S}_2^2\right)^2
\]

\textbf{Case 2.} $ \norm{y_S}_2 \geq \norm{y_{\overline{S}}}_2$.
In this case we have
\[
\norm{y}_2 \leq \norm{y_{S}}_2 + \norm{y_{\overline{S}}}_{2} \leq 2 \norm{y_S}_2,
\]
\noindent
Combining this with the uniformity assumption
and $0< \tau_i \leq 1$ and $1\leq q \leq 2$ gives:
\[
\sum_{i\in \overline{S}} \frac{1}{p_i} \abs{y_i}^{2q}
=
\sum_{i\in \overline{S}} \frac{1}{h \sqrt{\tau_i}} \abs{y_i}^{2q}
\leq
\sum_{i\in \overline{S}} \frac{1}{h \sqrt{\tau_i}} \tau_i^{q/2} \norm{y}_2^q \abs{y_i}^q
\leq
\frac{1}{h} \norm{y}_2^q \norm{y_{\overline{S}}}_q^q.
\]
Incorporating $1 \leq q \leq 2$ and
the assumption that $\norm{y_S}_2 \geq \norm{y_{\overline{S}}}_2\geq 0.5$ then gives
the bound on the first term:
\[
\leq 2^q \cdot \frac{1}{h} \norm{y_S}_2^q \norm{y_{\overline{S}}}_q^q \leq 4\cdot \frac{1}{h} \norm{y_S}_2^2 \norm{y_{\overline{S}}}_q^q
\leq 2 \cdot \frac{1}{h}
\left(\norm{y_{\overline{S}}}_q^q + \norm{y_S}_2^2\right)^2.
\]
\noindent
For the second term, the uniformity assumption
and the fact that $\tau_i \leq 1$ gives
\[
\sum_{i \in S} \frac{1}{p_i} y_i^4
=
\sum_{i\in S} \frac{1}{h \sqrt{\tau_i}} y_i^4
\leq \sum_{i\in S} \frac{1}{h \sqrt{\tau_i}} \tau_i \norm{y}_2^2 y_i^2
\leq \frac{1}{h} \norm{y}_2^2 \sum_{i\in S} y_i^2
= \frac{1}{h} \norm{y}_2^2 \norm{y_S}_2^2
\]
Substituting in the bound on $\norm{y}_2$ at
the start of this case then gives
\[
\leq 4 \cdot \frac{1}{h} \norm{y_S}_2^4 \leq 4 \cdot \frac{1}{h}\left(\norm{y_{\overline{S}}}_q^q + \norm{y_S}_2^2\right)^2,
\]

\textbf{Case 3.} $\norm{y_{\overline{S}}}_q \leq \norm{y_S}_q$ and $ \norm{y_S}_2 \leq \norm{y_{\overline{S}}}_2$.
In this case,
\[
\label{eq:case3-2norm-bound}
\norm{y}_2 \leq \norm{y_{S}}_2 + \norm{y_{\overline{S}}}_{2} \leq 2 \norm{y_{\overline{S}}}_q,
\]
where the second inequality comes from 
$\norm{y_{\overline{S}}}_2 \leq \norm{y_{\overline{S}}}_q$ 
which implies
\[
\norm{y_S}_2
\leq
\norm{y_{\overline{S}}}_2
\leq
\norm{y_{\overline{S}}}_q.
\]
\noindent
For the first term, the uniformity assumption gives
\[
\sum_{i\in \overline{S}} \frac{1}{p_i} \abs{y_i}^{2q}
=
\sum_{i\in \overline{S}} \frac{1}{h \sqrt{\tau_i}} \abs{y_i}^{2q}
\leq
\sum_{i\in \overline{S}} \frac{1}{h \sqrt{\tau_i}} \tau_i^{q/2} \norm{y}_2^q \abs{y_i}^q,
\]
which combined with
and $0\leq \tau_i \leq 1$ and $1\leq q \leq 2$ 
and the bound on $\norm{y}_2$ in this case gives:
\[
\leq
\frac{1}{h} \norm{y}_2^q \norm{y_{\overline{S}}}_q^q
\leq 2^q \cdot \frac{1}{h} \norm{y_{\overline{S}}}_q^{2q} \leq 4 \cdot \frac{1}{h}\left(\norm{y_{\overline{S}}}_q^q + \norm{y_S}_2^2\right)^2.
\]
\noindent
For the second term,
because for all $i \in S$,
$|y_i| \leq 1$,
we get via the uniformity assumption
\[
\sum_{i \in S} \frac{1}{p_i} y_i^4
= \sum_{i\in S} \frac{1}{h \sqrt{\tau_i}} y_i^4
\leq \sum_{i\in S}
\frac{1}{h \sqrt{\tau_i}} \abs{y_i}^{2+q}
\leq  
\sum_{i\in S} \frac{1}{h \sqrt{\tau_i}} \tau_i^{q/2} \norm{y}_2^q y_i^2
\]
after which 
incorporating $0< \tau_i \leq 1$
and $1\leq q \leq 2$,
and the $\ell_2$ bound for this case gives:
\[
\leq
\frac{1}{h}\norm{y}_2^q \norm{y_S}_2^2 \leq 2^q \cdot \frac{1}{h} \norm{y_{\overline{S}}}_q^q \norm{y_S}_2^2 \leq 2 \cdot \frac{1}{h}
\left(\norm{y_{\overline{S}}}_q^q + \norm{y_S}_2^2\right)^2,
\]
\end{proof}

The following gives a bound on the second moment of our random variable.

\begin{theorem}
\label{thm:main-variance-bound}
Let $t\in\mathbb{R}^n_{\geq 0}$, $A\in\mathbb{R}^{n\times d}$ be a full column rank matrix, $x\in\mathbb{R}^d$, and $y= Ax$. Let $j\in\mathbb{Z}$ such that for all $i\in [n]$, $2^{j-1}\leq t_i\leq 2^j$. Let $h\geq 1$ be an oversampling parameter. For all $i\in [n]$,
let $p_i=h \sqrt{\tau_i}$, where $\tau_i$ is the leverage score of row $i$ of $A$, i.e., $\tau_i=a_i^\top (A^\top A)^{-1} a_i$. Suppose we put index $i\in [n]$ in set $R$ independently with probability $p_i$. Then for $1\leq q \leq 2$, there is an absolute constant $c$ such that
\[
\sum_{i \in \left[ n \right]}
\frac{1}{p_i}
\left(\gamma_q\left(t_i,y_i\right)\right)^2
\leq \frac{c}{h}
\left(\sum_{i \in \left[ n \right]}
\gamma_q\left(t_i,y_i\right)\right)^2
\]
\end{theorem}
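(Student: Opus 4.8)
\textbf{Proof proposal for Theorem~\ref{thm:main-variance-bound}.}
The plan is to reduce the claimed second-moment bound to Lemma~\ref{lemma:variance-bound} by partitioning the coordinates $[n]$ according to whether $\gamma_q$ behaves quadratically or like a $p$-th power, and by rescaling so that the hypotheses of that lemma are met. First I would observe that since all $t_i$ lie in $[2^{j-1},2^j]$, we may use the homogeneity relation $r\gamma_q(t,y)=\gamma_q(r^{1/q}t,r^{1/q}y)$ to rescale $y$ (and $t$) so that, say, $\max_i t_i \leq 1$; this costs only an absolute constant in the final inequality because both sides scale the same way (each $\gamma_q$ term scales by $r$, and the squared terms by $r^2$, on both sides). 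After rescaling, define $S=\{i: |y_i|\leq t_i\}$ (the ``quadratic'' coordinates) and $\overline S=\{i:|y_i|>t_i\}$ (the ``$p$-th power'' coordinates). On $S$ we have $\gamma_q(t_i,y_i)=\tfrac{q}{2}t_i^{q-2}y_i^2$, and since $t_i\leq 1$ and $q-2\leq 0$ this is $\geq \tfrac{q}{2}y_i^2 \geq \tfrac{1}{2}y_i^2$; also $\gamma_q(t_i,y_i)\leq \tfrac{q}{2}t_i^{q-2}y_i^2 \leq y_i^2 \cdot t_i^{q-2}$, so $(\gamma_q(t_i,y_i))^2$ is comparable to $t_i^{2q-4}y_i^4$. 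On $\overline S$, $\gamma_q(t_i,y_i)=|y_i|^q+(\tfrac q2-1)t_i^q$, which for $1\le q\le2$ is $\Theta(|y_i|^q)$ since $|y_i|>t_i$ and $\tfrac q2-1\in(-\tfrac12,0]$; hence $(\gamma_q(t_i,y_i))^2=\Theta(|y_i|^{2q})$.

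With these pointwise comparisons in hand, the left side $\sum_i \tfrac1{p_i}(\gamma_q(t_i,y_i))^2$ is, up to absolute constants, at most $\sum_{i\in\overline S}\tfrac1{p_i}|y_i|^{2q} + \sum_{i\in S}\tfrac1{p_i} t_i^{2q-4} y_i^4$. The factor $t_i^{2q-4}\geq 1$ is the only obstruction to applying Lemma~\ref{lemma:variance-bound} directly on the $S$-sum, and also the $\|y_{\overline S}\|_2\geq 0.5$ and $|y_i|\leq 1$ hypotheses of that lemma need to be arranged. To handle this cleanly I would do a further rescaling within the argument: since $t_i\in[2^{j-1},2^j]$, all $t_i$ are within a factor of $2$ of each other, so $t_i^{2q-4}$ is within an absolute constant of $(2^j)^{2q-4}$; pulling this common factor out and then rescaling $y\mapsto 2^{j}\cdot(\text{stuff})$ so that the threshold becomes exactly $1$ reduces the $S$-sum to $\sum_{i\in S}\tfrac1{p_i}y_i^4$ with $|y_i|\le t_i=1$, matching Lemma~\ref{lemma:variance-bound}. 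I would apply Lemma~\ref{lemma:lev-score-uniform} to get the uniformity hypothesis $y_i^2\le\tau_i\|y\|_2^2$ (valid since $A$ has full column rank, so $\tau_i=a_i^\top(A^\top A)^{-1}a_i$ is well-defined and in $(0,1]$). The remaining hypothesis $\|y_{\overline S}\|_2\geq 0.5$ is handled by a separate trivial case: if $\|y_{\overline S}\|_2<0.5$ then $\overline S$ contributes negligibly and one argues the $S$-sum alone against $(\sum_i\gamma_q)^2\geq(\tfrac12\|y_S\|_2^2)^2$ using the uniformity bound $y_i^4\le\tau_i\|y\|_2^2 y_i^2$ directly, as in Case 2 of Lemma~\ref{lemma:variance-bound}.

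Finally, Lemma~\ref{lemma:variance-bound} gives $\sum_{i\in\overline S}\tfrac1{p_i}|y_i|^{2q}+\sum_{i\in S}\tfrac1{p_i}y_i^4 \leq \tfrac{\bar c}{h}(\|y_{\overline S}\|_q^q+\|y_S\|_2^2)^2$, and I would close the argument by lower-bounding $\sum_i\gamma_q(t_i,y_i)$ in terms of $\|y_{\overline S}\|_q^q+\|y_S\|_2^2$: on $\overline S$ each $\gamma_q(t_i,y_i)\geq |y_i|^q - t_i^q \cdot(1-\tfrac q2)$; this is not quite $\geq c|y_i|^q$ termwise but, because $|y_i|>t_i$ on $\overline S$, one gets $\gamma_q(t_i,y_i)\geq \tfrac q2 \cdot \min(|y_i|^q, \cdot)$ — more carefully, $|y_i|^q+(\tfrac q2-1)t_i^q \geq \tfrac q2 |y_i|^q$ when $|y_i|\ge t_i$, so $\sum_{\overline S}\gamma_q \geq \tfrac q2\|y_{\overline S}\|_q^q \geq \tfrac12\|y_{\overline S}\|_q^q$; and on $S$, $\gamma_q(t_i,y_i)=\tfrac q2 t_i^{q-2}y_i^2\geq \tfrac q2 y_i^2$ after the normalization $t_i\le 1$ (or absorbing the $(2^j)^{q-2}$ factor consistently with the $S$-sum rescaling above), so $\sum_S\gamma_q\geq\tfrac12\|y_S\|_2^2$. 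Adding and squaring, $(\sum_i\gamma_q(t_i,y_i))^2 \geq \tfrac14(\|y_{\overline S}\|_q^q+\|y_S\|_2^2)^2$, which combined with the Lemma~\ref{lemma:variance-bound} bound yields the claim with $c=4\bar c$ (times the accumulated absolute constants from the comparisons and rescalings).

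\textbf{Main obstacle.} The delicate point is the bookkeeping of the several rescalings and the $t_i^{q-2}$ / $t_i^{2q-4}$ factors: one must make sure the \emph{same} power of $2^j$ that is pulled out of the $S$-terms of the left-hand side is also pulled out of the $\|y_S\|_2^2$ terms in the $\gamma_q$ lower bound, so that it cancels; the fact that $q-2<0$ means these factors are $\geq1$ and go the ``wrong way,'' so a naive termwise bound $\gamma_q(t_i,y_i)\geq\tfrac q2 y_i^2$ would lose a $(2^j)^{q-2}$ factor that is not recovered unless the left side is normalized identically. The bucketing hypothesis $2^{j-1}\le t_i\le 2^j$ is precisely what makes this cancellation work up to an absolute constant, and this is why the sampling algorithm bucket{s} the $t_i$ values before sampling.
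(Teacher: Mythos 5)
Your proof is correct and follows the same strategy as the paper: rescale by $(2^{j})^{-1}$ so that $t_i\in[1/2,1]$, split the indices into $S$ and $\overline{S}$, dominate $(\gamma_q(t_i,y_i))^2$ termwise by $|y_i|^{2q}$ on $\overline{S}$ and (up to a constant) $y_i^4$ on $S$, apply Lemma~\ref{lemma:variance-bound} (treating $\overline{S}=\emptyset$ separately via the uniformity bound), and lower-bound $\sum_i\gamma_q(t_i,y_i)$ by $\tfrac12\bigl(\|y_{\overline{S}}\|_q^q+\|y_S\|_2^2\bigr)$. The ``further rescaling'' you describe to handle $t_i^{2q-4}$ is redundant once the first rescaling is done, since $t_i\in[1/2,1]$ already makes $t_i^{q-2}\le 2$ an absolute constant, so it can simply be absorbed; but this does not affect the correctness of your argument.
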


\begin{proof} 

First note that by definition of $\gamma_q$, we have
\[
\gamma_q\left(t,y\right)
=
\left(2^j\right)^q \gamma_q\left(t/2^j,y/2^j\right).
\]
Therefore, to prove the lemma,
it is enough to prove
\[
\sum_{i \in \left[ n \right]} \frac{1}{p_i}
\left( \gamma_q\left(t_i/2^j,y_i/2^j\right)\right)^2
\leq
\frac{c}{h} \left(\sum_{1 \leq i \leq n} \gamma_q\left(t_i/2^j,y_i/2^j\right)\right)^2.
\]

Due to the assumption of
$2^{j-1}\leq t_i\leq 2^j$
for all $i$,
without loss of generality,
for the rest of the proof,
we assume $\frac{1}{2}\leq t_i\leq 1$,
for all $i \in [n]$.
By definition of $\overline{S}$, we have $t_i^q \leq |y_i|^q$. 
Moreover because $1\leq q\leq 2$,
$-1\leq \frac{q}{2} - 1 \leq 0$.
Hence
\[
0 \leq \left(
\abs{y_i}^q + \left(\frac{q}{2} - 1\right)
t_i^q\right)^2
\leq
\abs{y_i}^{2q}.
\]
Moreover since $1\leq q \leq 2$ and $\frac{1}{2}\leq t_i \leq 1$,
\[
\frac{q}{2} t_i^{q-2} y_i^2 \leq 2 y_i^2.
\]
Therefore
\[
\sum_{i \in \left[n\right]}
\frac{1}{p_i} \left( \gamma_q(t_i,y_i)\right)^2
=
\sum_{i\in \overline{S}} \frac{1}{p_i}
\left( \abs{y_i}^q +
\left(\frac{q}{2} - 1\right)t_i^q\right)^2
+
\sum_{i \in S} \frac{1}{p_i}
\left( \frac{q}{2} t_i^{q-2} y_i^2 \right)^2
\leq
4\left(\sum_{i\in \overline{S}}
\frac{1}{p_i} \abs{y_i}^{2q}
+
\sum_{i \in S} \frac{1}{p_i} y_i^4\right).
\]
There are two cases remaining depending on
whether $\overline{S}$ is empty:

\paragraph{If $\overline{S} \neq \emptyset$:}

Since $t_i\geq 0.5$, for all $i\in[n]$,
and $|y_i|\geq t_i$, for all $i\in\overline{S}$,
we get
\[
\norm{y_{\overline{S}}}_2 \geq 0.5
\]
So since $|y_i| \leq t_i \leq 1$ for all $i\in S$,  by Lemma \ref{lemma:variance-bound} we get
\[
\sum_{i\in \left[n\right]}
\frac{1}{p_i}
\left( \gamma_q\left(t_i,y_i\right)\right)^2
\leq
\frac{4\bar{c}}{h}
\left(\norm{y_{\overline{S}}}_q^q + \norm{y_S}_2^2 \right)^2.
\]

\paragraph{If $\overline{S}=\emptyset$:}

In this case, the uniformity assumption gives
\[
\sum_{i\in \left[n\right]}
\frac{1}{p_i} \left( \gamma_q(t_i,y_i)\right)^2
\leq
4\sum_{i \in \left[n\right]} \frac{1}{p_i} y_i^4
=
4\sum_{i \in \left[n\right]}
\frac{1}{h\sqrt{\tau_i}} y_i^4 
\leq
4\sum_{i \in \left[n\right]}
\frac{1}{h\sqrt{\tau_i}} \tau_i \norm{y}_2^2 y_i^2
\]
which upon incorporating
$0\leq \tau_i \leq 1$ and $1\leq q \leq 2$ gives:
\[
\leq
\frac{4}{h} \norm{y}_2^4
=
\frac{4}{h} \norm{y_S}_2^4
\leq
\frac{4\bar{c}}{h}
\left(\norm{y_{\overline{S}}}_q^q + \norm{y_S}_2^2\right)^2.
\]
Taking square roots of both sides gives
\[
\sqrt{\sum_{i\in \left[n\right]}
\frac{1}{p_i}
\left( \gamma_q(t_i,y_i)\right)^2}
\leq 2\sqrt{\frac{\bar{c}}{h}}
\left( \sum_{i\in \overline{S}}
\abs{y_i}^q + \sum_{i \in S} y_i^2 \right)
\]
where upon incorporating
$|y_i|>t_i$ for all $i\in\overline{S}$,
$0.5\leq t_i\leq 1$, for all $i\in [n]$,
and $1 \leq q \leq 2$ gives
\[
\leq 4 \sqrt{\frac{\bar{c}}{h}}
\left( \sum_{i\in \overline{S}}
\left(\abs{y_i}^q + \left(\frac{q}{2}-1\right) t_i^q \right)
+
\sum_{i \in S} \frac{q}{2} t_i^{q-2} y_i^2 \right)
= 4 \sqrt{\frac{\bar{c}}{h}}
\sum_{i\in \left[n\right]} \gamma_q\left(t_i,y_i\right),
\]
The result follows by setting $c = 16\bar{c}$.
\end{proof}

\subsubsection{Concentration 
for a single $x$}

In addition to an upper bound on the second moment, the upper tail of the Chernoff bound (Lemma \ref{lemma:chernoff-bound}) needs an upper bound on the maximum of the random variable.
These variance and magnitude bounds in turn
allow us to show concentration for a particular vector $x$.

\begin{theorem}
\label{thm:concentation}
Let $t\in\mathbb{R}^n_{\geq 0}$, $A\in\mathbb{R}^{n\times d}$, $x\in\mathbb{R}^d$, and $y= Ax$. Let $j\in\mathbb{Z}$ such that for all $i\in [n]$, $2^{j-1}\leq t_i\leq 2^j$. Let $h\geq 1$ be an oversampling parameter. For all $i\in [n]$,
let $p_i=\min\{1, h \sqrt{\tau_i}\}$, where $\tau_i$ is the leverage score of row $i$ of $A$, i.e., $\tau_i=a_i^\top (A^\top A)^+ a_i$. Suppose we put index $i\in [n]$ in set $R$ independently with probability $p_i$. Then for $1\leq q \leq 2$, there is an absolute constant $\bar{c}$ such that for any $0<\epsilon<1$
\[
\prob{}{\left\vert \sum_{i\in R} \frac{1}{p_i} \gamma_q\left(t_i,y_i\right) - \sum_{i\in \left[n\right]} \gamma_q\left(t_i,y_i\right) \right\vert 
\geq \epsilon \sum_{i\in \left[n\right]} \gamma_q\left(t_i,y_i\right)} \leq 2\exp\left(-\frac{h\epsilon^2}{\bar{c}}\right).
\]
\end{theorem}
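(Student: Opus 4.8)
The plan is to apply the Chernoff bound of Lemma~\ref{lemma:chernoff-bound} to $X=\sum_i a_i\xi_i$, where $\xi_i$ is the indicator of $i\in R$ and $a_i=\gamma_q(t_i,y_i)/p_i$. Write $\Gamma:=\sum_{i\in[n]}\gamma_q(t_i,y_i)$; if $y=0$ then $\Gamma=0$ and the claim is trivial, so assume $\Gamma>0$. First I would strip off the rows with $p_i=1$: they always lie in $R$ and contribute the fixed quantity $\sum_{i:p_i=1}\gamma_q(t_i,y_i)$ to both $\sum_{i\in R}\tfrac1{p_i}\gamma_q(t_i,y_i)$ and $\Gamma$, so it suffices to prove concentration of $X$ over the indices with $p_i=h\sqrt{\tau_i}<1$, where $\mathbb{E}[X]=\sum_{i:p_i<1}\gamma_q(t_i,y_i)\le\Gamma$. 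As in the proof of Theorem~\ref{thm:main-variance-bound}, using $r\gamma_q(t,y)=\gamma_q(r^{1/q}t,r^{1/q}y)$ (which rescales both sides of the target inequality identically and leaves $\tau_i,p_i,R$ unchanged) I may assume $\tfrac12\le t_i\le1$ for all $i$.

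To run Lemma~\ref{lemma:chernoff-bound} with $\lambda=\epsilon\Gamma$ I need bounds on $\nu=\sum_i p_ia_i^2=\sum_i\tfrac1{p_i}\gamma_q(t_i,y_i)^2$ and on $a=\max_i a_i=\max_i\tfrac1{p_i}\gamma_q(t_i,y_i)$, both with the sum/max over indices with $p_i<1$ (restricting the range only decreases them). The variance proxy is handled directly by Theorem~\ref{thm:main-variance-bound} (whose sampling probabilities $h\sqrt{\tau_i}$ coincide with ours on these indices), giving $\nu\le\tfrac{c}{h}\Gamma^2$, after which the lower tail is immediate: $\Pr[X\le\mathbb{E}[X]-\epsilon\Gamma]\le\exp(-\epsilon^2\Gamma^2/2\nu)\le\exp(-h\epsilon^2/2c)$. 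The upper tail $\exp(-\lambda^2/(2\nu+\tfrac23 a\lambda))$ needs $a\le\tfrac{C}{h}\Gamma$ for an absolute constant $C$: then $\tfrac23 a\lambda\le\tfrac{2C}{3h}\Gamma^2$ (since $\epsilon<1$), the denominator is at most $\tfrac{2c+2C/3}{h}\Gamma^2$, and the upper tail is $\le\exp(-h\epsilon^2/\bar c)$ with $\bar c=2c+\tfrac{2C}{3}$; a union bound over the two tails yields the claimed $2\exp(-h\epsilon^2/\bar c)$.

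The heart of the argument is therefore the bound $a=O(\Gamma/h)$, which is delicate because $\gamma_q$ is not homogeneous. Using leverage-score uniformity (Lemma~\ref{lemma:lev-score-uniform}) in the form $|y_i|\le\sqrt{\tau_i}\norm{y}_2$, together with $\tfrac12\le t_i\le1$ and $1\le q\le2$, I would first establish the pointwise estimates $\gamma_q(t_i,y_i)\le2|y_i|^q$ for all $i$, and $\gamma_q(t_i,y_i)\le2y_i^2$ for $i$ in the quadratic region $S=\{i:|y_i|\le t_i\}$; since $p_i=h\sqrt{\tau_i}$ and $\tau_i\le1$, these give $a_i\le\tfrac2h\norm{y}_2^q$ for every $i$, and $a_i\le\tfrac2h\norm{y}_2^2$ for $i\in S$. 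It then remains to dominate $\norm{y}_2^q$ (or $\norm{y}_2^2$) by $\Gamma$, and this forces a split on whether the $p$-power region $\bar S=[n]\setminus S$ is empty. If $\bar S\ne\emptyset$, then some coordinate has $|y_i|\ge t_i\ge\tfrac12$, hence $\norm{y_{\bar S}}_2\ge\tfrac12$, and since $\gamma_q(t_i,y_i)\ge\tfrac12|y_i|^q$ on $\bar S$ we get $\Gamma\ge\tfrac18$; combining this with $\norm{y_S}_2^2\le2\Gamma$ (from $\gamma_q(t_i,y_i)\ge\tfrac12 y_i^2$ on $S$), with $\norm{y_{\bar S}}_2^q\le\norm{y_{\bar S}}_q^q\le2\Gamma$, and with the subadditivity of $x\mapsto x^{q/2}$, gives $\norm{y}_2^q\le\norm{y_S}_2^q+\norm{y_{\bar S}}_2^q=O(\Gamma)$, so $a=O(\Gamma/h)$. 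If instead $\bar S=\emptyset$, every coordinate is quadratic, so $\Gamma\ge\tfrac12\norm{y}_2^2$ and the bound valid on $S$ gives $a\le\tfrac2h\norm{y}_2^2\le\tfrac4h\Gamma$. In either case $a=O(\Gamma/h)$, which closes the upper-tail estimate.

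I expect the last step to be the only real obstacle: $\norm{y}_2^q$ is genuinely not bounded by $\Gamma$ when $\norm{y}_2\ll1$, and the fix is the observation that the mere existence of one $p$-power coordinate forces $\Gamma=\Omega(1)$, whereas if there are none then $\Gamma\asymp\norm{y}_2^2$ outright --- so the two regimes dovetail. The remaining pieces (the pointwise $\gamma_q$ inequalities, the Chernoff arithmetic, and the two reductions) are routine.
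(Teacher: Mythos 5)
Your proposal is correct and follows the same skeleton as the paper's proof: strip out the $p_i=1$ indices, invoke Theorem~\ref{thm:main-variance-bound} for the variance term $\nu$, establish the magnitude bound $a = O(\Gamma/h)$, and close via the two tails of Lemma~\ref{lemma:chernoff-bound}. The only place you depart is the derivation of the magnitude bound, and there your route is genuinely different and in my view slightly cleaner. The paper first proves $\|y\|_2\le\|y_S\|_2+\|y_{\overline S}\|_q$, then runs three cases keyed on which of $\|y_{\overline S}\|_q$ and $\|y_S\|_2$ dominates (plus a separate $\overline S=\emptyset$ case), and within each case handles $i\in S$ and $i\in\overline S$ by ad hoc chains of inequalities. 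You instead establish two clean pointwise estimates --- $\gamma_q(t_i,y_i)\le 2|y_i|^q$ everywhere and $\gamma_q(t_i,y_i)\le 2y_i^2$ on $S$ --- which via leverage-score uniformity give the uniform bounds $a_i\le\frac{2}{h}\|y\|_2^q$ (all $i$) and $a_i\le\frac{2}{h}\|y\|_2^2$ ($i\in S$). The non-homogeneity is then absorbed in a single observation: $\overline S\ne\emptyset$ forces $\Gamma\ge 1/8$, so $(2\Gamma)^{q/2}=O(\Gamma)$ and the subadditivity of $x\mapsto x^{q/2}$ gives $\|y\|_2^q\le\|y_S\|_2^q+\|y_{\overline S}\|_2^q=O(\Gamma)$, while $\overline S=\emptyset$ gives $\Gamma\ge\frac12\|y\|_2^2$ directly. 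This collapses the paper's three-way case split into two cases and makes the role of the $\Gamma=\Omega(1)$ fact more transparent (the paper uses it too, but only implicitly when dismissing $\|y_{\overline S}\|_q<0.5$). I checked the constants in your argument; they all work out to an absolute $C$, so the proof is sound.
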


\begin{proof}
First without loss of generality, we assume that for all $i$, $p_i=h\sqrt{\tau_i}$ because if $T=\{i: h\sqrt{\tau_i}>1\}$, then
\[
\left\vert \sum_{i\in R} \frac{1}{p_i} \gamma_q\left(t_i,y_i\right)
-
\sum_{i\in \left[n\right]} \gamma_q\left(t_i,y_i\right) \right\vert
=
\left\vert \sum_{i\in R\setminus T} \frac{1}{p_i} \gamma_q\left(t_i,y_i\right)
-
\sum_{i\in \left[n\right]\setminus T} \gamma_q\left(t_i,y_i\right) \right\vert.
\]
Therefore 
\[
\left\vert \sum_{i\in R\setminus T} \frac{1}{p_i} \gamma_q\left(t_i,y_i\right) - \sum_{i\in \left[n\right]\setminus T} \gamma_q\left(t_i,y_i\right) \right\vert
\leq
\epsilon \sum_{i\in \left[n\right]\setminus T} \gamma_q\left(t_i,y_i\right)
\]
implies
\[
\left\vert \sum_{i\in R} \frac{1}{p_i} \gamma_q\left(t_i,y_i\right) - \sum_{i\in \left[n\right]} \gamma_q\left(t_i,y_i\right) \right\vert \leq \epsilon \sum_{i\in \left[n\right]} \gamma_q\left(t_i,y_i\right).
\]
First note that
\[
\expec{}{\sum_{i\in R} \frac{1}{p_i} \gamma_q(t_i,y_i)} = \sum_{i\in \left[n\right]} \gamma_q(t_i,y_i),
\]
therefore by Chernoff bound (Lemma \ref{lemma:chernoff-bound}),
for the variance value $\sigma$ chosen so that
\[
\sigma^2 = \sum_{i\in [n]} \frac{1}{p_i} \left(\gamma_q(t_i,y_i)\right)^2.
\]
we have
\[
\prob{}{\sum_{i\in R} \frac{1}{p_i} \gamma_q\left(t_i,y_i\right)
\leq \left(1-\epsilon\right)
\sum_{i\in \left[n\right]} \gamma_q\left(t_i,y_i\right)}
\leq
\exp\left(
-\frac{\left(\epsilon \sum_{i\in \left[n\right]}
\gamma_q\left(t_i,y_i\right)\right)^2}
{2\sigma^2}
\right).
\]
Hence by the variance bound from Theorem \ref{thm:main-variance-bound},
\begin{align}
\label{eq:lower-tail-chernoff}
\prob{}{\sum_{i\in R} \frac{1}{p_i} \gamma_q(t_i,y_i) \leq \frac{1}{2} \sum_{i\in [n]} \gamma_q(t_i,y_i)} \leq \exp(-\frac{h\epsilon^2}{2c})
\end{align}
Moreover, by Chernoff bound (Lemma \ref{lemma:chernoff-bound}) we have
\[
\prob{}{\sum_{i\in R} \frac{1}{p_i} \gamma_q(t_i,y_i)
  \geq \left(1+\epsilon\right)
  \sum_{i\in \left[n\right]} \gamma_q(t_i,y_i)}
\leq
\exp\left(-\frac{\left(\epsilon \sum_{i\in \left[n\right]} \gamma_q\left(t_i,y_i\right)\right)^2}
{2\left(\sigma^2 +  \frac{\alpha}{3}\cdot\epsilon
\sum_{i\in \left[n\right]} \gamma_q\left(t_i,y_i\right)\right)}\right),
\]
where
\[
\alpha
=
\max_{i\in \left[n\right]}
\frac{1}{p_i} \gamma_q\left(t_i,y_i\right).
\]
We show that
$\alpha \leq  \frac{16}{h}\sum_{i\in [n]} \gamma_q(t_i,y_i)$
by showing that
\[
\max_{i\in \left[n\right]}
\frac{1}{p_i} \gamma_q\left(t_i/2^j,y_i/2^j\right)
\leq 16 \frac{1}{h}\sum_{i\in \left[n\right]}
\gamma_q\left(t_i/2^j,y_i/2^j\right)
\]
This is sufficient because we have $\gamma_q(t_i,y_i)=(2^{j})^q\gamma_q(t_i/2^j,y_i/2^j)$. 

Therefore, from here on out in the proof we assume $\frac{1}{2}\leq t_i \leq 1$, for all $i\in [n]$.
Once again, we split the entries based on small and
large values, let:
\begin{align*}
S
& \defeq
\{i\in \left[n\right]: \abs{y_i}\leq t_i\}\\
\overline{S}
& \defeq
\left[n\right]\setminus S,
\end{align*}
this thresholding allows us to bound the norm of $y$ via:
\begin{equation}
\label{eq:concentration-thm-norm-bound}
\norm{y}_2 \leq \norm{y_S}_2 + \norm{y_{\overline{S}}}_2 \leq \norm{y_S}_2 + \norm{y_{\overline{S}}}_q,
\end{equation}
where the second inequality follows from $1\leq q\leq 2$. Moreover note that we only consider the entries where $p_i=h\sqrt{\tau_i}$ because the indices where $p_i=1$ do not contribute to the second moment and only increase the mean which means adding them only improves the probability of concentration. We have two cases.

\textbf{Case 1.} $\norm{y_{\overline{S}}}_q \geq \norm{y_S}_2$ and $\norm{y_{\overline{S}}}_q\geq 0.5$. First note that because $|y_i|\geq t_i \geq \frac{1}{2}$, for all $i\in\overline{S}$, if $\norm{y_{\overline{S}}}_q< 0.5$, then $\overline{S}=\emptyset$. We deal with the $\overline{S}=\emptyset$ later in the proof.

In this case $\norm{y}_2 \leq 2\norm{y_{\overline{S}}}_q$. Moreover, for all $i\in \overline{S}$, we have
\begin{align*}
\frac{1}{p_i} \gamma_q(t_i, y_i) & = \frac{1}{p_i} \left(|y_i|^q + \left( \frac{q}{2} - 1 \right) t_i^q\right) \leq \frac{1}{p_i} |y_i|^q =
\frac{1}{h\sqrt{\tau_i}} |y_i|^q \leq \frac{1}{h\sqrt{\tau_i}} \tau_i^{q/2} \norm{y}_2^q \\ & \leq 2^q\frac{1}{h}\norm{y_{\overline{S}}}_q^q \leq 4 \frac{1}{h} \left(\sum_{i\in \overline{S}}|y_i|^q\right) \leq 8 \frac{1}{h} \left(\sum_{i\in \overline{S}}|y_i|^q + \left(\frac{q}{2} - 1\right) t_i^q\right) \leq 8 \frac{1}{h}\sum_{i\in [n]} \gamma_q(t_i,y_i),
\end{align*}
where the first inequality follows from $1\leq q\leq 2$ and $t_i\geq 0$. The second inequality follows from Lemma \ref{lemma:lev-score-uniform}. The third inequality follows from Equation~\eqref{eq:concentration-thm-norm-bound} and the case assumption. The fourth inequality holds because $1\leq q \leq 2$. The fifth inequality holds because $1\leq q \leq 2$ and $t_{i}\leq |y_i|$ for all $i\in \overline{S}$.

For $i\in S$, we have
\begin{align*}
\frac{1}{p_i} \gamma_q(t_i, y_i) & = \frac{1}{p_i} \left(\frac{q}{2} t_i^{q-2}y_i^2 \right) \leq 2\frac{1}{p_i} y_i^2 = 2\frac{1}{h\sqrt{\tau_i}} y_i^2 \leq 2\frac{1}{h\sqrt{\tau_i}} |y_i| \leq 2\frac{1}{h\sqrt{\tau_i}} \sqrt{\tau_i}\norm{y}_2 \\ & \leq 4\frac{1}{h}\norm{y_{\overline{S}}}_q \leq 2^{1+q} \frac{1}{h}\norm{y_{\overline{S}}}_q^q \leq 8 \frac{1}{h} \left(\sum_{i\in \overline{S}}|y_i|^q\right) \leq 16 \frac{1}{h} \left(\sum_{i\in \overline{S}}|y_i|^q + \left(\frac{q}{2} - 1\right) t_i^q\right) \\ &\leq 16 \frac{1}{h}\sum_{i\in [n]} \gamma_q(t_i,y_i),
\end{align*}
where the first inequality holds because $1\leq q\leq 2$, and $\frac{1}{2} \leq t_i\leq 1$ for all $i\in S$. The second inequality holds because $|y_i|\leq t_i \leq 1$ for all $i\in S$. The third inequality follows from Lemma \ref{lemma:lev-score-uniform}. The fourth inequality follows from \eqref{eq:concentration-thm-norm-bound} and the case assumption. The fifth inequality follows from $\norm{y_{\overline{S}}}_q \geq 0.5$ and $q\geq 1$. The sixth inequality follows from $q\leq 2$. The seventh inequality holds because $1\leq q \leq 2$ and $t_{i}\leq |y_i|$ for all $i\in \overline{S}$.

\textbf{Case 2.} $\norm{y_S}_2 \geq \norm{y_{\overline{S}}}_q$. In this case $\norm{y}_2 \leq 2 \norm{y_S}_2$. For $i\in\overline{S}$, we have
\begin{align*}
\frac{1}{p_i} \gamma_q(t_i, y_i) & = \frac{1}{p_i} \left(|y_i|^q + \left( \frac{q}{2} - 1 \right) t_i^q\right) \leq \frac{1}{p_i} |y_i|^q \leq 2^{2-q} \frac{1}{p_i} y_i^2  \leq 
2 \frac{1}{h\sqrt{\tau_i}} y_i^2 \leq 2\frac{1}{h\sqrt{\tau_i}} \tau_i \norm{y}_2^2 \\ &
\leq 2\frac{1}{h}\norm{y}_2^2 \leq 8\frac{1}{h}\norm{y_S}_2^2 = 8 \frac{1}{h} \left( \sum_{i\in S} y_i^2 \right) \leq 16 \frac{1}{h} \left( \sum_{i\in S} \frac{q}{2} t_i^{q-2} y_i^2 \right) \\ & \leq 16 \frac{1}{h}\sum_{i\in [n]} \gamma_q(t_i,y_i),
\end{align*}
where the first inequality follows from $1\leq q\leq 2$ and $t_i\geq 0$. The second inequality holds because $|y_i|\geq t_i \geq \frac{1}{2}$ for all $i\in \overline{S}$. The third inequality holds because $1\leq q \leq 2$. The fourth inequality follows from Lemma \ref{lemma:lev-score-uniform}. The fifth inequality holds because $0\leq \tau_i\leq 1$. The sixth inequality follows from Equation~\eqref{eq:concentration-thm-norm-bound} and the case assumption. The seventh inequality follows from $0.5 \leq t_i \leq 1$ and $1\leq q \leq 2$.

For $i\in S$, we have
\begin{align*}
\frac{1}{p_i} \gamma_q(t_i, y_i) & = \frac{1}{p_i} \left(\frac{q}{2} t^{q-2}y_i^2 \right) \leq 2\frac{1}{p_i} y_i^2 = 2 \frac{1}{h \sqrt{\tau_i}} y_i^2 \leq 2 \frac{1}{h \sqrt{\tau_i}} \tau_i \norm{y}_2^2 \\ &
\leq 2\frac{1}{h}\norm{y}_2^2 \leq 8\frac{1}{h}\norm{y_S}_2^2 = 8 \frac{1}{h} \left( \sum_{i\in S} y_i^2 \right) \leq 16 \frac{1}{h} \left( \sum_{i\in S} \frac{q}{2} t_i^{q-2} y_i^2 \right) \\ & \leq 16 \frac{1}{h}\sum_{i\in [n]} \gamma_q(t_i,y_i),
\end{align*}
where the first inequality holds because $1\leq q\leq 2$, and $\frac{1}{2} \leq t_i\leq 1$ for all $i\in S$. The second inequality follows from Lemma \ref{lemma:lev-score-uniform}. The third inequality holds because $0\leq \tau_i \leq 1$. The fourth inequality follows from \eqref{eq:concentration-thm-norm-bound} and the case assumption. The fifth inequality follows from $0.5 \leq t_i \leq 1$ and $1\leq q \leq 2$.

Now we consider the case where $\overline{S}=\emptyset$. In this case we only need to address $i\in S = [n]$, for which we have
\begin{align*}
\frac{1}{p_i} \gamma_q(t_i, y_i) & = \frac{1}{p_i} \left(\frac{q}{2} t^{q-2}y_i^2 \right) \leq 2\frac{1}{p_i} y_i^2 = 2 \frac{1}{h \sqrt{\tau_i}} y_i^2 \leq 2 \frac{1}{h \sqrt{\tau_i}} \tau_i \norm{y}_2^2 \\ &
\leq 2\frac{1}{h}\norm{y}_2^2 = 2 \frac{1}{h}\norm{y_S}_2^2 = 2 \frac{1}{h} \left( \sum_{i\in S} y_i^2 \right) \leq 4 \frac{1}{h} \left( \sum_{i\in S} \frac{q}{2} t_i^{q-2} y_i^2 \right) \\ & \leq 4 \frac{1}{h}\sum_{i\in [n]} \gamma_q(t_i,y_i),
\end{align*}
where the first inequality holds because $1\leq q\leq 2$, and $\frac{1}{2} \leq t_i\leq 1$ for all $i\in S$. The second inequality follows from Lemma \ref{lemma:lev-score-uniform}. The third inequality holds because $0\leq \tau_i \leq 1$. The fourth inequality follows from \eqref{eq:concentration-thm-norm-bound} and the case assumption. The fifth inequality follows from $0.5 \leq t_i \leq 1$ and $1\leq q \leq 2$.

Therefore by the above argument and case analysis, we have
\[
\alpha \leq 16 \frac{1}{h}\sum_{i\in [n]} \gamma_q(t_i/2^j,y_i/2^j).
\]
Hence by Chernoff bound (Lemma \ref{lemma:chernoff-bound}) and Theorem~\ref{thm:main-variance-bound}, we have
\begin{equation}
\label{eq:upper-tail-chernoff-2}
\prob{}{\sum_{i\in R} \frac{1}{p_i} \gamma_q(t_i,y_i) \geq (1+\epsilon) \sum_{i\in [n]} \gamma_q(t_i,y_i)} \leq \exp\left(-\frac{h\epsilon^2}{2c+16\epsilon/3}\right) \leq \exp\left(-\frac{h\epsilon^2}{2c+16/3}\right),
\end{equation}
where the last inequality follows from $0\leq \epsilon \leq 1$.
Combining the lower bound from
Equation~\eqref{eq:lower-tail-chernoff} and
the upper bound from
Equation~\eqref{eq:upper-tail-chernoff-2}
gives that the result follows
from a constant choice of $\bar{c} \leftarrow 2c + 16/3$.
\end{proof}

\subsubsection{Sampling Algorithm and $\epsilon$-Net}
In this section, we present our algorithm for sampling the $\gamma_q$ function for $1<q\leq 2$. Our sampling is based on leverage scores. However, we use the square root of the leverage scores as opposed to the classical application. This results in $\Otil(\sqrt{n d})$ samples instead of $\Otil(d)$ {\em but with smaller variance}.
Therefore we utilize an iterative approach to sample that decrease the number of samples in each iteration. We perform this iterative algorithm for $\log\log(n)$ iterations. Our analysis to show the concentration of our sampling scheme is based on $\epsilon$-nets.

\begin{algbox}
\textbf{Algorithm 1 - Iterative Sampling for $\gamma_q$ function with Square Root of Leverage Scores}

\textbf{Input: matrix $A\in \mathbb{R}^{n\times d}$, vector $t\in \mathbb{R}^{n}$, oversampling parameter $h$}
\begin{enumerate}
    \item Set $t^{(1)} = t$.
    \item Set $T^{(1)} = [n]$.
    \item Set $A^{(1)} = A$
    \item Set $z = \lceil\log\log(n) \rceil$.
    \item For $k=1,\ldots,z$ do
    \begin{enumerate}
        \item Set $\beta = \min_{i\in T^{(k)}} t^{(k)}_i$ and $\eta^{(k)} = \left\lceil\log\left(\frac{\max_{i\in T^{(k)}} t^{(k)}_i}{ \beta}\right)\right\rceil + 1$.
        
        \item For $j\in[\eta^{(k)}]$, set $T^{(k)}_j = \{i \in T^{(k)}: 2^{j-1} \beta \leq t^{(k)}_i < 2^j \beta\}$
        
        \item For $j\in[\eta^{(k)}]$, let $A^{(k,j)}$ be the matrix consisting of rows of $A^{(k)}$ in $T^{(k)}_j$.
        
        \item For each $i\in T^{(k)}$, set $\tau_i^{(k)} = a_i^\top \left(\left( A^{(k,j)} \right)^\top A^{(k,j)} \right)^+ a_i$, where $j$ is the index such that $i\in T_j^{(k)}$, i.e., the leverage score of row $i$ in matrix $A^{(k,j)}$.
        
        \item For $j\in [\eta^{(k)}]$, form set $S_j^{(k)}$ by sampling each member $i\in T_j^{(k)}$ independently with probability $p_i^{(k)}:=\min\left\{1, h \sqrt{\tau_i^{(k)}}\right\}$.
        
        \item Set $T^{(k+1)} = \bigcup_{j\in[\eta^{(k)}]} S_j^{(k)}$.
        
        \item For $i \in T^{(k+1)}$, set $t_i^{(k+1)} = \left(\frac{1}{p_i^{(k)}}\right)^{1/q} t_i^{(k)}$.
        
        \item Set $A^{(k+1)}$ to a matrix with rows in $T^{(k+1)}$ such that for $i \in T^{(k+1)}$, $A^{(k+1)}_{i:} = \left(\frac{1}{p_i^{(k)}}\right)^{1/q} A^{(k)}_{i:}$
        
    \end{enumerate}
    
    \item For $i \in T^{(z+1)}$, set $w_i = \prod_{k=1}^z\frac{1}{p_i^{(k)}}$, and for $i\in[n]\setminus T^{(z+1)}$, set $w_i = 0$.
    
    \item Return $T^{(z+1)}$ and $w$.
    \end{enumerate}
    \end{algbox}

\begin{theorem}
\label{thm:one-point-prob}
Let $t\in\mathbb{R}^n_{\geq 0}$, $A\in\mathbb{R}^{n\times d}$, $x\in\mathbb{R}^d$, $y= Ax$, and $1\leq q\leq 2$. Let $h \in [1, n]$ be an oversampling parameter. Let $T^{(z+1)}$ and $w$ be the outputs of Algorithm 1 for $A$, $t$, and $h$. Moreover, suppose the condition number of $A$ and $\max_{i} t_i / \min_{i} t_i$ are bounded by a polynomial in $n$.
Then for some absolute constant $c$ we have
\[
\prob{}{\left\vert \sum_{i \in \left[n\right]}
w_i \gamma_q\left(t_i,y_i\right)
-
\sum_{i\in \left[n\right]} \gamma_q\left(t_i,y_i\right) \right\vert
\leq
\frac{1}{2} \sum_{i\in \left[n\right]} \gamma_q\left(t_i,y_i\right)
}
\geq
1 - O\left(\log^5{n} \exp\left(-\frac{h}{c\log^2(n)}\right)\right).
\]
Moreover,
$|T^{(z+1)}| \leq O( h^2 d \log^8{n})$
with high probability.
\end{theorem}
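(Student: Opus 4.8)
The plan is to prove the two claims of Theorem~\ref{thm:one-point-prob} separately: the concentration of the weighted $\gamma_q$ sum, and the bound on the final sample size $|T^{(z+1)}|$. For the concentration, the idea is to apply Theorem~\ref{thm:concentation} once for each bucket $T^{(k)}_j$ in each iteration $k$, and then chain the resulting constant-factor estimates together across all $O(\log^2 n)$ bucket-iteration pairs. First, I would argue that the number of buckets $\eta^{(k)}$ at each iteration is $O(\log n)$: this uses the hypothesis that $\max_i t_i/\min_i t_i$ is polynomially bounded, together with the fact that each iteration rescales $t_i$ by $(1/p_i)^{1/q} \le n^{O(1)}$, so the ratio of the largest to smallest $t$-value stays polynomial throughout all $z = \lceil\log\log n\rceil$ iterations (here I also need that the condition number stays polynomially bounded so the leverage scores, hence the sampling probabilities, stay bounded below by $1/\text{poly}(n)$). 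Since $z = O(\log\log n)$, the total number of bucket-iteration pairs is $O(\log n \cdot \log\log n) = O(\log^2 n)$.

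Next, within iteration $k$ and bucket $j$, all rows have $t_i$ in a dyadic range $[2^{j-1}\beta, 2^j\beta)$, which is exactly the hypothesis needed to invoke Theorem~\ref{thm:concentation} with $\epsilon$ set to a small constant like $\epsilon_0 = 1/(C\log^2 n)$ so that, after multiplying the $(1\pm\epsilon_0)$ errors across all $O(\log^2 n)$ rounds, the total multiplicative error is bounded by $3/2$ (i.e., the error is at most $\tfrac12 \sum_i \gamma_q(t_i,y_i)$). Crucially, the $\gamma_q$ function scales as $\gamma_q(r^{1/q}t, r^{1/q}y) = r\,\gamma_q(t,y)$, so the rescaling of rows and of $t_i$ in steps (g)–(h) of Algorithm~1 exactly transports the quantity $\sum_i w_i^{(k)} \gamma_q(t_i,y_i)$ from one iteration to the next: the weighted $\gamma_q$ sum on $A^{(k+1)}, t^{(k+1)}$ equals $\sum_{i \in T^{(k+1)}} \frac{1}{p_i^{(k)}} \gamma_q(t^{(k)}_i, y^{(k)}_i)$ where $y^{(k)} = A^{(k)}x$, which is precisely the random variable Theorem~\ref{thm:concentation} controls. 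So a union bound over the $O(\log^2 n)$ bucket-iteration pairs, each failing with probability $2\exp(-h\epsilon_0^2/\bar c) = 2\exp(-h/(c\log^4 n))$ — wait, here I need to be careful about whether the exponent should be $\log^2$ or $\log^4$; the statement has $\log^2$, so I would set $\epsilon_0$ to be a constant divided by $\log n$ per round and be more careful, using that only $O(\log n)$ of the rounds multiply nontrivially, or absorb a factor — in any case, telescoping the per-round errors and union-bounding gives the claimed $1 - O(\log^5 n \exp(-h/(c\log^2 n)))$ failure probability.

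For the size bound $|T^{(z+1)}| = O(h^2 d \log^8 n)$, the plan is to track how the number of surviving rows shrinks. In a single bucket $T^{(k)}_j$ with matrix $A^{(k,j)}$, the expected number of sampled rows is $\sum_i p_i^{(k)} = \sum_i \min\{1, h\sqrt{\tau_i}\} \le h \sum_i \sqrt{\tau_i} \le h\sqrt{|T^{(k)}_j| \cdot \sum_i \tau_i} \le h\sqrt{|T^{(k)}_j| \cdot d}$ by Cauchy–Schwarz and Foster's theorem (Lemma~\ref{lemma:sum-leverage-bound}). Summing over the $O(\log n)$ buckets and applying Cauchy–Schwarz again (or just bounding each bucket by the total), the expected size of $T^{(k+1)}$ is $\tilde O(h\sqrt{|T^{(k)}| d})$; with a Chernoff/Markov bound this holds up to log factors with high probability. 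Iterating the recursion $N_{k+1} = \tilde O(h \sqrt{N_k d})$ starting from $N_1 = n$: one checks that the fixed point is $N_\infty = \tilde O(h^2 d)$, and that after $z = \lceil \log\log n\rceil$ steps the bound $N_{z+1} = \tilde O(h^2 d)$ is reached (the "extra" factor coming from $n$ gets square-rooted away in $\log\log n$ steps, contributing only poly-log). Collecting the $\log$ factors from the per-step Chernoff slack, the $\eta^{(k)} = O(\log n)$ bucket count, and the $z = O(\log\log n)$ iterations yields the stated $O(h^2 d \log^8 n)$.

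The main obstacle I expect is bookkeeping the logarithmic factors consistently: making sure the per-round error $\epsilon_0$ is small enough that the product over all rounds stays below $3/2$ while the per-round failure probability, raised to the union bound over $O(\log^2 n)$ events, still matches the $\exp(-h/(c\log^2 n))$ in the statement — this forces a particular split of where the $\log$ factors go (into $\epsilon_0$ vs. into the count of events), and getting the exponent to be exactly $\log^2$ rather than $\log^4$ requires noticing that although there are $O(\log^2 n)$ bucket-iteration pairs, within a fixed iteration the buckets partition the rows, so the relevant accumulated error per iteration is additive-in-value but the number of iterations that matter for the \emph{multiplicative} telescoping is only $z = O(\log\log n)$, not $O(\log^2 n)$. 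A secondary obstacle is verifying the size recursion is a genuine contraction and terminates at the right place after only $\log\log n$ rounds rather than requiring $\log n$ rounds; this is where the square-root in $N_{k+1} = \tilde O(h\sqrt{N_k d})$ is essential.
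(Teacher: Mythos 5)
Your proposal follows the same overall structure as the paper's proof: per-bucket application of Theorem~\ref{thm:concentation}, the scaling identity $\gamma_q(r^{1/q}t, r^{1/q}y) = r\gamma_q(t,y)$ transporting the weighted sum between iterations, telescoping of $(1\pm\epsilon)$ factors over the $z = O(\log\log n)$ rounds, and the Cauchy--Schwarz/Foster recursion $N_{k+1} = \tilde{O}(h\sqrt{N_k d})$ for the final sample-size bound. Your observation that only $z = O(\log\log n)$ iterations participate in the \emph{multiplicative} telescoping --- so that one can set $\epsilon = \Theta(1/\log n)$ and simultaneously get $(1\pm\epsilon)^z \in [1/2, 3/2]$ and the $\exp(-h/(c\log^2 n))$ per-event failure probability --- is exactly the right resolution and matches what the paper (implicitly) does.

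There is, however, a gap in your justification that the ratio $\max_i t_i^{(k)}/\min_i t_i^{(k)}$ and the condition number of $A^{(k)}$ stay polynomially bounded, which is where your $\eta^{(k)} = O(\log n)$ bucket count comes from. The rescaling factor $(1/p_i^{(k)})^{1/q}$ applied in round $k$ is controlled by $1/\tau_i^{(k)}$, and the only available bound on $1/\tau_i^{(k)}$ comes from the condition number of the \emph{current} matrix $A^{(k)}$, not the original $A$. This compounds: if the condition number in round $k$ is $\kappa_k$, the round-$k$ rescaling can push it to $\kappa_k^{1+\Omega(1)}$, so after $\log\log n$ rounds the condition number and the $t$-ratio can reach $n^{O(\log n)}$ --- quasi-polynomial, not polynomial. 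Your parenthetical ``here I also need that the condition number stays polynomially bounded so the sampling probabilities stay bounded below by $1/\text{poly}(n)$'' is exactly the circularity: that lower bound on $p_i^{(k)}$ is what you would need in order to prove the condition-number bound in the first place. The paper sidesteps this by simply working with the quasi-polynomial bound, concluding $\eta^{(k)} = O(\log^2 n)$ buckets per iteration (stated more crudely as $O(\log^5 n)$ after folding in the $h^{O(\log\log n)}$ contribution), which only inflates the polylogarithmic prefactor in the failure probability, not the exponent. Once you replace your $O(\log n)$ claim with this weaker but still polylogarithmic estimate, the rest of your argument goes through and tracks the paper's proof.
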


\begin{proof}
Let $y^{(k)}=A^{(k)}x$.

We first obtain crude condition number bounds of
$A^{(k)}$'s produced in the algorithm.
The condition number of $A$ is bounded by $n^{O(1)}$,
the algorithm iterates for $\log\log(n)$ iterations,
and in each iteration we scale rows of the previous matrix
by $\frac{1}{h\sqrt{\tau^{(k)}_i}}$.
Furthermore, $1/\tau_i^{(k)}$ is bounded by the
condition number of the previous matrix.
Therefore, the condition number of $A^{(k)}$,
as well as $\max_{i} t_i^{(k)} / \min_{i} t_i^{(k)}$ are
both at most 
\[
n^{O(\log n)}\cdot h^{O(\log\log(n))}.
\]
This in turn implies that the number of buckets, i.e. $\eta^{(k)}$,
in each iteration is $O(\log^2(n)\log\log(n)\log(h)) \leq O(\log^{5}n)$.

By Theorem \ref{thm:concentation},
for each such bucket $j$ and $k\in [z]$,
we have
\[
\prob{}{ \left(1-\epsilon\right) \sum_{i\in T_j^{\left(k\right)}} \gamma_q\left(t_i^{(k)},y_i^{\left(k\right)}\right)
\leq
\sum_{i\in S_j^{\left(k\right)}} \frac{1}{p_i^{\left(k\right)}}
\gamma_q(t_i^{\left(k\right)},y_i^{\left(k\right)})
\leq
\left(1+\epsilon\right) \sum_{i\in T_j^{\left(k\right)}}
\gamma_q\left(t_i^{\left(k\right)},y_i^{\left(k\right)}\right)}
\geq
1
-
2\exp\left(-\frac{h\epsilon^2}{c}\right)
\]
Union bounding over the $O(\log^{5}n)$ buckets
of different values of $t_i$ gives
that with probability at least $1 - O(\log^5{n}) \exp( -\frac{h\epsilon^2}{c})$ we have:
\[
\left(1-\epsilon\right)
  \sum_{i\in T^{\left(k\right)}}
    \gamma_q\left(t_i^{\left(k\right)},y_i^{\left(k\right)}\right)
  \leq
  \sum_{i\in T^{\left(k+1\right)}}
  \frac{1}{p_i^{\left(k\right)}}
  \gamma_q\left(t_i^{\left(k\right)},y_i^{\left(k\right)}\right)
  \leq
  \left(1+\epsilon\right) \sum_{i\in T^{\left(k\right)}} \gamma_q\left(t_i^{\left(k\right)},y_i^{\left(k\right)}\right)
\]
Note that by definition of the $\gamma_q$ function
and the setting of the algorithm
\[
\frac{1}{p_i^{\left(k\right)}}
\gamma_q\left(t_i^{\left(k\right)},y_i^{\left(k\right)}\right)
=
\gamma_q\left(
  \left(\frac{1}{p_i^{(k)}}\right)^{1/q} t_i^{\left(k\right)},
  \left(\frac{1}{p_i^{(k)}}\right)^{1/q} y_i^{\left(k\right)}\right)
=
\gamma_q\left(t_i^{\left(k+1\right)},y_i^{\left(k+1\right)}\right)
\]
Therefore by union bounding over all
$O(\log\log{n}) \leq O(\log{n})$ iterations,
we have that with probability at least
$1 - O(\log^6{n}) \exp( -\frac{h\epsilon^2}{c})$
that the following holds for all $k \in [z]$:
\[
\left(1-\epsilon\right)^{\log\log{n}}
\sum_{i\in T^{\left(k\right)}}
\gamma_q\left(t_i^{\left(k\right)},y_i^{\left(k\right)}\right)
\leq
\sum_{i\in T^{\left(k+1\right)}}
\gamma_q\left(t_i^{\left(k+1\right)},y_i^{\left(k+1\right)}\right)
\leq
\left(1+\epsilon\right)^{\log\log{n}}
\sum_{i\in T^{\left(k\right)}}
\gamma_q\left(t_i^{\left(k\right)},y_i^{\left(k\right)}\right)
\]
Now note that we can pick $\epsilon=\theta(\log(n))$ such that $(1+\epsilon)^{\log\log(n)}\leq \frac{3}{2}$ and $(1-\epsilon)^{\log\log(n)}\geq \frac{1}{2}$.
Then the first part of the result follows by noting that $T^{(1)}=[n]$, $t^{(1)}=t$, $y^{(1)}=y$ and $\gamma_q(t_i^{(z+1)},y_i^{(z+1)})=w_i \gamma_q(t_i,y_i)$.

Now we bound the size of $T^{(z+1)}$.
Applying Cauchy-Schwarz to its expected size gives:
\[
\expec{}{\abs{T^{\left(k+1\right)}}}
=
\sum_{j\in \left[\eta^{\left(k\right)}\right]}
\sum_{i\in T_j^{\left(k\right)}} p_i^{\left(k\right)}
\leq
\sum_{j\in \left[\eta^{\left(k\right)}\right]}
\sum_{i\in T_j^{\left(k\right)}} h \sqrt{\tau_i^{\left(k\right)}}
\leq
\sum_{j\in \left[\eta^{\left(k\right)}\right]}
h \sqrt{\abs{T_j^{\left(k\right)}}}
\sqrt{\sum_{i\in T_j^k} \tau_i^{\left(k\right)}}.
\]
Incorporating in
$T_j^{(k)}\subseteq T^{(k)}$
and
Lemma \ref{lemma:sum-leverage-bound} then gives:
\[
\leq h \eta^{\left(k\right)} \sqrt{\abs{T^{\left(k\right)}}}\sqrt{d}
\leq
O \left( \log^{4}n \sqrt{d}\sqrt{\abs{T^{\left(k\right)}}} \right)
\]
where the last inequality follows from the argument on
the number of buckets in the previous part of the proof. 

Therefore by Chernoff bound (Lemma \ref{lemma:chernoff-bound}),
with probability at least $1 - \exp(h \sqrt{d})$, we have
\[
\abs{T^{\left(k+1\right)}}
\leq O \left( \log^4{n} h \sqrt{d}\sqrt{\abs{T^{\left(k\right)}}} \right).
\]
Doing union bound on this over all iterations gives that
with a probability of at least
$1-\log\log(n)\exp(h\sqrt{d})$, we have for all $k\in[z]$:
\[
\abs{T^{\left(k+1\right)}}
\leq
O \left( \log^4{n} h \sqrt{d} \sqrt{\abs{T^{\left(k\right)}}}
\right).
\]

We aggregate this via induction to show that
$|T^{(z+1)}| \leq O (\log^{8}n h^2 d)$.
Specifically, our induction hypothesis is:
\[
\abs{T^{\left(k+1\right)}}
\leq
O \left(\left( \log^4{n} h \sqrt{d}\right)^{2-1/2^k} n^{1/2^k} \right).
\]
The base case for $k=0$ is trivially true as $\abs{T^{\left(1\right)}}=n$.
For the inductive case, we have:
\[
\abs{T^{\left(k+2\right)}}
\leq
O \left( \log^{4}n h \sqrt{d}
\sqrt{\left( \log^4{n} h \sqrt{d}\right)^{2-1/2^k} n^{1/2^k}} \right) 
\leq O \left(\left( \log^{4}{n} h \sqrt{d}\right)^{2-1/2^{k+1}} n^{1/2^{k+1}} \right),
\]
which means the hypothesis holds for $k + 1$ as well.
Applying this with $k = z$ then gives:
\[
\abs{T^{(z+1)}}
\leq
O
\left(
\left( \log^4{n} h \sqrt{d}\right)^{2} n^{1/2^{\log\log{n}+1}} \right)
\leq
O \left(\left( \log^{4}n h \sqrt{d}\right)^{2}
n^{\frac{1}{\log{n}}} \right)
\leq
O \left(\left( \log^{4}n h \sqrt{d}\right)^{2} \right).
\]
\end{proof}

We can use Theorem \ref{thm:one-point-prob} to prove concentration for the points in an $\epsilon$-net. However to give such a bound for all points of interest, we need the following two lemmas to bound the error for all points in a polynomial range.

\begin{lemma}
\label{lemma:gamma-close}
Let $y,\tilde{y},t\in\mathbb{R}^{n}_{\geq 0}$ such that $\norm{y-\tilde{y}}_{1}\leq \alpha$, $\theta \geq\norm{y}_1$,
and $t\geq 1$. Let $1\leq q \leq 2$.
Then
\[
\left\vert\sum_{i\in\left[n\right]}
\gamma_q\left(t_i,\tilde{y}_i\right)
-
\sum_{i\in\left[n\right]}
\gamma_q\left(t_i,y_i\right)\right\vert
\leq
\sum_{i\in\left[n\right]}
\left\vert \gamma_q\left(t_i,\tilde{y}_i\right)
-
\gamma_q\left(t_i,y_i\right)
\right\vert
\leq
4n\alpha\left(\alpha+\theta\right)
\]
\end{lemma}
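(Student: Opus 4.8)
The first inequality is just the triangle inequality, so the plan is to prove the second inequality by establishing a bound \emph{per coordinate} and then summing over the $n$ coordinates. Recall that all of $y,\tilde y,t$ are nonnegative and $t_i\geq 1$ for every $i$.

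The key step is a uniform Lipschitz-type estimate on the second argument of $\gamma_q$. I claim that for every fixed $t\geq 1$ and every $1\leq q\leq 2$, the one–variable function $x\mapsto\gamma_q(t,x)$ is continuously differentiable on $\Rbb$ and satisfies $\abs{\partial_x\gamma_q(t,x)}\leq 2\abs{x}$. To see this, differentiate the two branches of Definition~\ref{def:quadratic-smooth}: on $\abs{x}\leq t$ the derivative is $q\,t^{q-2}x$, and since $t\geq 1$ and $q-2\leq 0$ we have $t^{q-2}\leq 1$, so its magnitude is at most $q\abs{x}\leq 2\abs{x}$; on $\abs{x}\geq t$ the derivative is $q\abs{x}^{q-1}\sgn{x}$, and since $\abs{x}\geq t\geq 1$ and $q-2\leq 0$ we have $\abs{x}^{q-2}\leq 1$, so its magnitude is $q\abs{x}^{q-1}\leq 2\abs{x}$. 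The two branches agree in value ($\tfrac q2 t^q$) and in derivative ($q\,t^{q-1}$) at $\abs{x}=t$, which is exactly what the quadratic smoothing is designed to achieve, so $\gamma_q(t,\cdot)$ is genuinely $C^1$ and the bound $\abs{\partial_x\gamma_q(t,x)}\leq 2\abs{x}$ holds everywhere.

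Given this, the plan is: for each coordinate $i$, apply the mean value theorem to $x\mapsto\gamma_q(t_i,x)$ on the segment between $y_i$ and $\tilde y_i$, yielding $\abs{\gamma_q(t_i,\tilde y_i)-\gamma_q(t_i,y_i)}\leq 2\max\{y_i,\tilde y_i\}\cdot\abs{y_i-\tilde y_i}$. Then bound the three factors using the hypotheses: $\abs{y_i-\tilde y_i}\leq\norm{y-\tilde y}_1\leq\alpha$, $y_i\leq\norm{y}_1\leq\theta$, and $\tilde y_i\leq y_i+\abs{y_i-\tilde y_i}\leq\theta+\alpha$. Hence $\abs{\gamma_q(t_i,\tilde y_i)-\gamma_q(t_i,y_i)}\leq 2(\theta+\alpha)\alpha\leq 4\alpha(\alpha+\theta)$ for each $i$. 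Summing over $i\in[n]$ and applying the triangle inequality gives $\bigl\lvert\sum_i\gamma_q(t_i,\tilde y_i)-\sum_i\gamma_q(t_i,y_i)\bigr\rvert\leq\sum_i\abs{\gamma_q(t_i,\tilde y_i)-\gamma_q(t_i,y_i)}\leq 4n\alpha(\alpha+\theta)$, which is the claim.

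There is no substantial obstacle here; the only points requiring care are (i) verifying that the smoothing makes $\gamma_q(t,\cdot)$ $C^1$ across the threshold $\abs{x}=t$, so that the mean value theorem applies without case analysis at the kink, and (ii) using \emph{both} hypotheses $t\geq 1$ and $q\leq 2$ to convert the a priori $t$- and $\abs{x}$-dependent derivative formulas into the clean bound $2\abs{x}$ (without $t\geq 1$ the derivative on $\abs{x}\leq t$ could be as large as $q\,t^{q-1}$, which is unbounded). I note in passing that the factor $n$ in the statement is not needed --- summing more carefully via $\sum_i y_i\abs{y_i-\tilde y_i}\leq\norm{y}_\infty\norm{y-\tilde y}_1\leq\theta\alpha$ and $\sum_i\abs{y_i-\tilde y_i}^2\leq\norm{y-\tilde y}_1^2\leq\alpha^2$ gives the sharper bound $4\alpha(\alpha+\theta)$ --- but the crude per-coordinate estimate already suffices for the stated inequality.
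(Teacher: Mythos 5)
Your proof is correct and takes a genuinely different route from the paper's. The paper proves the per-coordinate bound by splitting $[n]$ into four sets $S_1,\dots,S_4$ according to which of $y_i,\tilde{y}_i$ lie above or below the threshold $t_i$, then in each case writes out the two branches of $\gamma_q$ explicitly, reduces the $q$-th powers to squares via $t_i\geq 1$ and $q\leq 2$, and factors differences of squares. You instead observe that the quadratic smoothing was designed precisely to make $\gamma_q(t,\cdot)$ a $C^1$ function, verify the matching of value and derivative at $|x|=t$, derive the uniform bound $\abs{\partial_x\gamma_q(t,x)}\leq 2\abs{x}$ (again from $t\geq 1$ and $q\leq 2$), and invoke the mean value theorem; this collapses the four cases of the paper into a single estimate. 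Your argument is cleaner and arguably more illuminating, since it exploits the $C^1$ property that the $\gamma$ function is built to have, while the paper's case analysis reconstitutes that smoothness by hand at the threshold. Your closing observation that the $n$ factor can be dropped (via $\sum_i y_i\abs{y_i-\tilde y_i}\leq\norm{y}_\infty\norm{y-\tilde y}_1$ and $\sum_i\abs{y_i-\tilde y_i}^2\leq\norm{y-\tilde y}_1^2$) is also correct, though inconsequential downstream, where the lemma is applied with $\alpha$ polynomially small so the paper's slack in $n$ costs nothing.
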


\begin{proof}
The first inequality of the result
will be via the triangle inequality.
Define the sets:
\begin{align*}
& S_1 = \{i\in [n]: y_i < t_i \text{ and } \tilde{y}_i < t_i \} \\ &
S_2 = \{i\in [n]: y_i \geq t_i \text{ and } \tilde{y}_i < t_i \} \\ &
S_3 = \{i\in [n]: y_i < t_i \text{ and } \tilde{y}_i \geq t_i \} \\ &
S_4 = \{i\in [n]: y_i \geq t_i \text{ and } \tilde{y}_i \geq t_i \}
\end{align*}
Then the difference we wish to bound can be decomposed into
\[
\sum_{i\in S_1} \frac{q}{2} t_i^{(q-2)} \left(\tilde{y}_i^2-y_i^2\right) + \sum_{i\in S_2} \frac{q}{2}t_i^{q-2}\tilde{y}_i^2 - y_i^q - (\frac{q}{2}-1) t_i^q
+ \sum_{i\in S_3} \tilde{y}_i^q + (\frac{q}{2}-1) t_i^q -\frac{q}{2}t_i^{q-2}y_i^2
+ \sum_{i\in S_4} \tilde{y}_i^q - y_i^q
\]

We bound these cases separately.

For $i\in S_1$,
applying $1\leq q\leq 2$, $t_i \geq 1$, and the identity
$a^2 - b^2 = (a - b)(a + b)$ gives
\[
\left\vert \frac{q}{2} t_i^{(q-2)} \left(\tilde{y}_i^2-y_i^2\right) \right\vert
\leq \left\vert \tilde{y}_i - y_i \right\vert \cdot \left\vert
\tilde{y}_i + y_i \right\vert
\]
by which incorporating the assumptions
$\norm{y-\tilde{y}}_{1}\leq \alpha$ and $\theta \geq\norm{y}_1$ leads to
\[
\leq \alpha\left(\alpha + 2\theta\right).
\]

For $i\in S_2$, we have $\tilde{y}_i \leq t_i \leq y_i$
and therefore by assumption $|\tilde{y}_i-t_i|,|t_i-y_i|,|\tilde{y}_i-y_i| \leq \alpha$.
Substituting this in via triangle inequality, and
incorporating $1\leq q\leq 2$, $y_i \geq t_i \geq 1$ gives
\[
\left\vert \frac{q}{2}t_i^{q-2}
\tilde{y}_i^2 - y_i^q - \left(\frac{q}{2}-1\right) t_i^q \right\vert
\leq
\left\vert \frac{q}{2}t_i^{q-2}\left(\tilde{y}_i^2 -t_i^2\right) \right\vert
+ \left\vert y_i^q - t_i^q \right\vert
\]
The identity
$\left(\tilde{y}_i^2-y_i^2\right)=(\tilde{y}_i-y_i)(\tilde{y}_i+y_i)$
and the assumptions
$\norm{y-\tilde{y}}_{1}\leq \alpha$ and $\theta \geq\norm{y}_1$
then give
\[
\leq \left\vert \tilde{y}_i^2 - t_i^2 \right\vert
+ \left\vert y_i^2 - t_i^2 \right\vert
\leq
\left\vert \tilde{y}_i - t_i \right\vert \cdot \left\vert \tilde{y}_i + t_i \right\vert + \left\vert y_i - t_i \right\vert \cdot \left\vert y_i + t_i \right\vert \leq 4\alpha \theta.
\]

For $i\in S_3$,
triangle inequality
and $1\leq q\leq 2$, $\tilde{y}_i \geq t_i \geq 1$ give
\[
\left\vert \tilde{y}_i^q + (\frac{q}{2}-1) t_i^q -\frac{q}{2}t_i^{q-2}y_i^2 \right\vert
\leq
\left\vert \frac{q}{2}t_i^{q-2}(y_i^2 -t_i^2) \right\vert + \left\vert \tilde{y}_i^q - t_i^q \right\vert
\leq
\left\vert \tilde{y}_i^2 - t_i^2 \right\vert + \left\vert y_i^2 - t_i^2 \right\vert
\]
Also, because $y_i \leq t_i \leq \tilde{y}_i$,
the assumption also implies
$|\tilde{y}_i-t_i|,|t_i-y_i|,|\tilde{y}_i-y_i| \leq \alpha$.
Combining this with
$(\tilde{y}_i^2-y_i^2)=(\tilde{y}_i-y_i)(\tilde{y}_i+y_i)$
then gives
\[
\leq
\left\vert \tilde{y}_i - t_i \right\vert \cdot \left\vert \tilde{y}_i+ t_i \right\vert
+
\left\vert y_i - t_i \right\vert \cdot \left\vert y_i + t_i \right\vert
\leq
4\alpha\left(\alpha + \theta\right).
\]

For $i\in S_4$, $y_i,\tilde{y}_i \geq t_i \geq 1$ gives
\[
\left\vert \tilde{y}_i^q - y_i^q \right\vert
\leq
\left\vert \tilde{y}_i^2 - y_i^2 \right\vert
\]
after which factorizing the squares and incorporating
the assumptions
$\norm{y-\tilde{y}}_{1}\leq \alpha$ and $\theta \geq\norm{y}_1$
gives
\[
=
\left\vert \tilde{y}_i - y_i \right\vert
\cdot
\left\vert \tilde{y}_i + y_i \right\vert
\leq
\alpha \left(\alpha + 2\theta\right).
\]

Combining all of the above gives
\[
\left\vert\sum_{i\in\left[n\right]}
\gamma_q\left(t_i,\tilde{y}_i\right)
-
\sum_{i\in\left[n\right]}
\gamma_q\left(t_i,y_i\right)\right\vert
\leq
4n\alpha\left(\alpha+\theta\right).
\]
\end{proof}

Rounding to an $\epsilon$-net incurs
additive differences.
In order to convert this additive difference into
a multiplicative error, we need the following crude
lower bound on the value of the gamma function.

\begin{lemma}
\label{lemma:lower-bound-gamma}
Let $y,t\in\mathbb{R}^n_{\geq 0}$
such that 
$1\leq t\leq \beta$, then
\[
\sum_{i\in\left[n\right]}
\gamma_q\left(t_i,y_i\right)
\geq
\min\left\{\frac{1}{8\beta} \norm{y}_2^2,
\frac{1}{8n} \norm{y}_q^q\right\}.
\]
\end{lemma}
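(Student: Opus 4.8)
The plan is to split the index set according to the definition of $\gamma_q$ itself: let $S=\{i:y_i\le t_i\}$ be the ``quadratic'' coordinates and $\overline{S}=[n]\setminus S$ the ``$q$-power'' coordinates, and lower bound each summand by a clean one-norm-type quantity. On $S$ we have $\gamma_q(t_i,y_i)=\frac q2 t_i^{q-2}y_i^2$; since $q\ge 1$ gives $\frac q2\ge\frac12$, and $1\le t_i\le\beta$ together with $-1\le q-2\le 0$ gives $t_i^{q-2}\ge t_i^{-1}\ge\beta^{-1}$, so $\gamma_q(t_i,y_i)\ge\frac{1}{2\beta}y_i^2$. On $\overline{S}$ we have $\gamma_q(t_i,y_i)=|y_i|^q+(\frac q2-1)t_i^q$ with $-\frac12\le\frac q2-1\le 0$, and since $t_i\le|y_i|$ the negative term is at least $-\frac12|y_i|^q$, so $\gamma_q(t_i,y_i)\ge\frac12|y_i|^q$. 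Summing yields $\sum_i\gamma_q(t_i,y_i)\ge\frac{1}{2\beta}\norm{y_S}_2^2+\frac12\norm{y_{\overline{S}}}_q^q$.

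Next I would case-split on where the $\ell_2$ mass of $y$ concentrates (the statement is trivial if $y=0$, so assume $y\ne 0$; note $\overline{S}=\emptyset$ forces the first case below). If $\norm{y_S}_2^2\ge\frac12\norm{y}_2^2$, the first term alone gives $\sum_i\gamma_q(t_i,y_i)\ge\frac{1}{4\beta}\norm{y}_2^2\ge\frac{1}{8\beta}\norm{y}_2^2$, handling the first branch of the min. Otherwise $\norm{y_{\overline{S}}}_2^2\ge\frac12\norm{y}_2^2$, and I would chain the two directions of $\ell_p$-norm comparison for $1\le q\le 2$: monotonicity gives $\norm{y_{\overline{S}}}_q^q\ge\norm{y_{\overline{S}}}_2^q\ge(\tfrac12\norm{y}_2^2)^{q/2}=2^{-q/2}\norm{y}_2^q$, while Hölder gives $\norm{y}_q^q\le n^{1-q/2}\norm{y}_2^q$, i.e. $\norm{y}_2^q\ge n^{q/2-1}\norm{y}_q^q$. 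Combining these with $2^{-q/2}\ge\frac12$ and $n^{q/2-1}\ge n^{-1}$ (valid since $1\le q\le 2$) yields $\sum_i\gamma_q(t_i,y_i)\ge\frac12\norm{y_{\overline{S}}}_q^q\ge\frac{1}{4n}\norm{y}_q^q\ge\frac{1}{8n}\norm{y}_q^q$, handling the second branch.

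I do not expect a genuine obstacle: the argument is a short case analysis once the two per-coordinate bounds are established. The only place to be careful is bookkeeping the direction of each norm comparison (for $q\le 2$, $\norm{\cdot}_q\ge\norm{\cdot}_2$ but $\norm{\cdot}_q^q\le n^{1-q/2}\norm{\cdot}_2^q$) and checking the degenerate cases $y=0$ and $\overline{S}=\emptyset$; everything else is routine.
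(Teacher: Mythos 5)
Your proof is correct, and it uses the same decomposition of $[n]$ into the quadratic set $S=\{i:y_i\le t_i\}$ and its complement, but the case analysis is organized differently than in the paper. You first establish per-coordinate lower bounds $\gamma_q(t_i,y_i)\ge\frac{1}{2\beta}y_i^2$ on $S$ and $\gamma_q(t_i,y_i)\ge\frac12|y_i|^q$ on $\overline S$, sum to get $\sum_i\gamma_q\ge\frac{1}{2\beta}\norm{y_S}_2^2+\frac12\norm{y_{\overline S}}_q^q$, and then case-split on which of $\norm{y_S}_2^2$ and $\norm{y_{\overline S}}_2^2$ carries at least half of $\norm{y}_2^2$, reconstructing the target quantity via Hölder in the second branch. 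The paper instead case-splits on whether $\norm{y_S}_2$ or $\norm{y_{\overline S}}_q$ is larger, drops the non-dominant summand entirely in each case, and chains a sequence of norm comparisons to restore the full-vector quantity. Your version is arguably a bit cleaner — the $\ell_2$-mass split avoids the slightly awkward mixed-norm comparison between the two pieces, and you only need one application of Hölder — and it yields the slightly better intermediate constants $\frac{1}{4\beta}$ and $\frac{1}{4n}$ before weakening to the stated $\frac18$. Your handling of the degenerate cases ($y=0$ trivial; $\overline S=\emptyset$ forces the first branch) is also sound.
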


\begin{proof}
Let $S=\{i\in [n]: |y_i|\leq t_i\}$
and
$\overline{S} = [n]\setminus S$ its complement.
We have two cases.

\textbf{Case 1.}
$\norm{y_S}_2 \geq \norm{y_{\overline{S}}}_q$.
In this case it suffices to lower bound
\[
\sum_{i\in\left[n\right]} \gamma_q\left(t_i,y_i\right)
\geq
\sum_{i\in S} \gamma_q\left(t_i,y_i\right)
=
\sum_{i\in S} \frac{q}{2} t_i^{q-2} y_i^2.
\]
Incorporating $1\leq q \leq 2$ and $t_i \leq \beta$,
the case assumption, and $1\leq q \leq 2$ then gives
\begin{align*}
\geq
\frac{1}{2\beta}\sum_{i\in S} y_i^2
& =
\frac{1}{8\beta} (2\norm{y_S}_2)^2
\geq
\frac{1}{8\beta}
\left(\norm{y_S}_2 + \norm{y_{\overline{S}}}_q\right)^2
\\ &
\geq
\frac{1}{8\beta}
\left(\norm{y_S}_2 + \norm{y_{\overline{S}}}_2\right)^2
\geq \frac{1}{8\beta}
\left(\norm{y_S}_2^2 + \norm{y_{\overline{S}}}_2^2\right)
= \frac{1}{8\beta} \norm{y}_2^2.
\end{align*}

\textbf{Case 2.} $\norm{y_{\overline{S}}}_q \geq \norm{y_S}_2$.
In this case, we lower bound
\[
\sum_{i\in\left[n\right]}
\gamma_q\left(t_i,y_i\right)
\geq
\sum_{i\in \overline{S}} \gamma_q\left(t_i,y_i\right)
=
\sum_{i\in \overline{S}} \abs{y_i}^q
+
\left(\frac{q}{2} - 1\right) t_i^q
\]
Because $1\leq q \leq 2$ and $t_i\leq |y_i|$
for all $i\in \overline{S}$,
this simplifies to lower bounding
\[
\geq \frac{1}{2}\sum_{i\in \overline{S}} y_i^q
\geq \frac{1}{8} \left(2\norm{y_{\overline{S}}}_q\right)^q.
\]
Substituting in the case assumption then gives
\begin{align*}
\geq \frac{1}{8} \left(\norm{y_S}_2 + \norm{y_{\overline{S}}}_q\right)^q
&
\geq \frac{1}{8} \left(\frac{1}{\sqrt{n}}\norm{y_S}_q + \norm{y_{\overline{S}}}_q\right)^q
\geq \frac{1}{8n} \left(\norm{y_S}_q + \norm{y_{\overline{S}}}_q\right)^q
\\ &
\geq  \frac{1}{8n} \left(\norm{y_S}_q^q + \norm{y_{\overline{S}}}_q^q\right)
=
\frac{1}{8n} \norm{y}_q^q.
\end{align*}

The result follows by combining the cases.
\end{proof}

Now, we are equipped to bound the concentration for all points in a polynomial range. 

\begin{theorem} 
\label{thm:main-eps-net}
Let
$t\in\mathbb{R}^n$, $A\in\mathbb{R}^{n\times d}$ such that $t\geq 1$ and $\beta$ be an upper bound for the condition number of $A$ and $\max_{i} t_i$.
For constants $\alpha$, and $c$ as given in
Theorem \ref{thm:one-point-prob}, and a sampling overhead
\[
h
\leftarrow
\alpha \log \left(2\beta n\right) \log^{3}{n} \cdot d,
\]
let $T^{(z+1)}$ and $w$ be the outputs of Algorithm 1
for $A$, $t$, and $h$, then with probability at least $1 - O(\exp(-d))$
we have for all $y=Ax$ such that
$\norm{x}_2\in (\frac{1}{\beta n^{10}}, \beta n^{10})$:
\[
\left\vert \sum_{i \in \left[n\right]}
w_i \gamma_q\left(t_i,y_i\right)
-
\sum_{i\in \left[n\right]} \gamma_q\left(t_i,y_i\right)\right\vert
\leq 
\frac{3}{4}\sum_{i\in \left[n\right]} \gamma_q\left(t_i,y_i\right).
\]
\end{theorem}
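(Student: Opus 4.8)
The plan is to run a standard $\epsilon$-net argument over the $d$-dimensional set of admissible $x$, using Theorem~\ref{thm:one-point-prob} for a single point, Lemma~\ref{lemma:gamma-close} to move between a point and its net representative, and Lemma~\ref{lemma:lower-bound-gamma} to turn the resulting additive errors into multiplicative ones. First I would fix a resolution $\rho$ --- a small inverse polynomial (in fact inverse-quasi-polynomial will already suffice) in $n$ and $\beta$, to be pinned down at the end --- and take $\mathcal{N}$ to be a $\rho$-net in $\ell_2$ of the ball $\{\norm{x}_2 \le \beta n^{10}\}$, so that $\abs{\mathcal{N}} \le (3\beta n^{10}/\rho)^d = \exp(O(d\log(\beta n)))$. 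For each $\xhat \in \mathcal{N}$, Theorem~\ref{thm:one-point-prob} (whose hypotheses hold because $\beta$ bounds the condition number of $A$ and the spread $\max_i t_i/\min_i t_i$) guarantees $\bigl\lvert\sum_i w_i\gamma_q(t_i,(A\xhat)_i) - \sum_i\gamma_q(t_i,(A\xhat)_i)\bigr\rvert \le \tfrac12\sum_i\gamma_q(t_i,(A\xhat)_i)$ except with probability $O(\log^5 n\cdot\exp(-h/(c\log^2 n)))$. With the choice $h = \alpha\log(2\beta n)\log^3 n\cdot d$ this per-point failure probability is $\exp(-\Omega(\alpha d\log(\beta n)\log n))$, which beats $\abs{\mathcal{N}}^{-1}\exp(-d)$ once $\alpha$ is a large enough absolute constant, so a union bound over $\mathcal{N}$ gives the bound simultaneously at every net point, except with probability $O(\exp(-d))$. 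I would then condition on this event.

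Next I would transfer the guarantee to an arbitrary $x$ with $\norm{x}_2\in(\tfrac{1}{\beta n^{10}},\beta n^{10})$: pick $\xhat\in\mathcal{N}$ with $\norm{x-\xhat}_2\le\rho$, set $y=Ax$ and $\yhat=A\xhat$, and note that since $\norm{A}\le\beta$ we have $\norm{y-\yhat}_1\le\sqrt n\,\beta\rho$ and $\norm{y}_1\le\sqrt n\,\beta\,\norm{x}_2$, both polynomial in $n,\beta$ with the former as small as we like. Since the scaled thresholds satisfy $t\ge 1$, Lemma~\ref{lemma:gamma-close} bounds $\bigl\lvert\sum_i\gamma_q(t_i,y_i)-\sum_i\gamma_q(t_i,\yhat_i)\bigr\rvert$ by $4n\alpha'(\alpha'+\theta)$ for suitable $\alpha'\ge\norm{y-\yhat}_1$ and $\theta\ge\norm{y}_1$. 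The weighted sum $\sum_i w_i\gamma_q(t_i,\cdot)$ is itself a $\gamma_q$-sum for the rescaled matrix $A^{(z+1)}$ and rescaled thresholds $t^{(z+1)}\ge t\ge 1$ produced by Algorithm~1; since the proof of Theorem~\ref{thm:one-point-prob} already bounds the condition number and threshold spread of all intermediate matrices by $n^{O(\log n)}h^{O(\log\log n)}$, I can apply Lemma~\ref{lemma:gamma-close} to $A^{(z+1)}$ as well and bound $\bigl\lvert\sum_i w_i\gamma_q(t_i,y_i)-\sum_i w_i\gamma_q(t_i,\yhat_i)\bigr\rvert$ by a comparable (quasi-polynomially weighted) quantity that is again driven to zero by shrinking $\rho$.

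To make these additive errors harmless I would invoke Lemma~\ref{lemma:lower-bound-gamma}, which together with $\norm{y}_2\ge\sigma_{\min}(A)\norm{x}_2$ and $\norm{x}_2>\tfrac{1}{\beta n^{10}}$ lower-bounds $\sum_i\gamma_q(t_i,y_i)$ by a quantity that is at least an inverse polynomial in $n,\beta$ (and, applied to $A^{(z+1)}$, lower-bounds $\sum_i w_i\gamma_q(t_i,y_i)$ by an inverse quasi-polynomial; alternatively this can be read off by combining the net-point bound from the first step with the continuity bound just established). Choosing $\rho$ small enough --- inverse-quasi-polynomial in $n,\beta$ still suffices --- that both additive errors are at most $\tfrac18\sum_i\gamma_q(t_i,y_i)$, I would expand
\begin{align*}
\Bigl\lvert \sum_i w_i\gamma_q(t_i,y_i)-\sum_i\gamma_q(t_i,y_i)\Bigr\rvert
&\le \Bigl\lvert \sum_i w_i\gamma_q(t_i,y_i)-\sum_i w_i\gamma_q(t_i,\yhat_i)\Bigr\rvert
+ \Bigl\lvert \sum_i w_i\gamma_q(t_i,\yhat_i)-\sum_i\gamma_q(t_i,\yhat_i)\Bigr\rvert \\
&\quad + \Bigl\lvert \sum_i\gamma_q(t_i,\yhat_i)-\sum_i\gamma_q(t_i,y_i)\Bigr\rvert,
\end{align*}
bounding the two endpoint terms as above and the middle term by $\tfrac12\sum_i\gamma_q(t_i,\yhat_i)\le\tfrac12(1+o(1))\sum_i\gamma_q(t_i,y_i)$ (using the continuity bound once more), which yields the claimed $\tfrac34\sum_i\gamma_q(t_i,y_i)$.

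The main obstacle is the interplay between the net resolution and the union bound. Because $\sum_i\gamma_q(t_i,y_i)$ can be polynomially --- indeed quasi-polynomially, on the rescaled matrix --- small, $\rho$ must be taken correspondingly small, which inflates $\log\abs{\mathcal{N}}$; one must verify that this inflation is only by a $\log(\beta n)\cdot\mathrm{polylog}(n)$ factor, so that the choice $h=\Otil(d)\cdot\log(\beta n)$ still makes the per-point failure probability small enough for the union bound over $\mathcal{N}$ to close. Tracking exactly where the quasi-polynomial condition-number blowup of the intermediate matrices $A^{(k)}$ enters --- and confirming it only costs the extra $\log^3 n$ factor already budgeted into $h$ --- is the delicate part of the argument.
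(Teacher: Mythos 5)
Your plan matches the paper's proof almost exactly: both form an $\epsilon$-net on the admissible $x$, invoke Theorem~\ref{thm:one-point-prob} at each net point, union-bound, then transfer to an arbitrary $x$ via Lemma~\ref{lemma:gamma-close} for the rounding error and Lemma~\ref{lemma:lower-bound-gamma} to turn additive error into multiplicative, assembled through the same three-term triangle inequality. The one step you leave implicit that the paper handles explicitly is control of $\max_i w_i$: to bound $\bigl\lvert\sum_i w_i\gamma_q(t_i,y_i)-\sum_i w_i\gamma_q(t_i,\widehat{y}_i)\bigr\rvert$ (equivalently, to apply Lemma~\ref{lemma:gamma-close} on $A^{(z+1)}$, whose rows are scaled by $w_i^{1/q}$), one must cap $w_i$, and the condition-number/threshold-spread bound from Theorem~\ref{thm:one-point-prob}'s proof does not by itself supply that cap. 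The paper resolves this by a small modification to Algorithm 1, replacing $p_i^{(k)}$ with $\min\{1,\max\{1/n, h\sqrt{\tau_i^{(k)}}\}\}$, which forces $w_i \le n^{\log\log n}$ at the cost of $O(\log n)$ extra sampled rows; it then pulls $\max_i w_i$ out of the weighted difference rather than re-applying Lemma~\ref{lemma:gamma-close} on the rescaled matrix. You flag the condition-number bookkeeping as the ``delicate part,'' which is the right instinct, but you should make the $w_i$ cap (or some equivalent) an explicit step rather than relying on the intermediate-matrix condition-number estimate to carry it.
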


\begin{proof}
We form an $\epsilon$-net $N$ in $\mathbb{R}^d$
with granularity of $\frac{1}{\beta^{9} n^{36\log\log n}}$
on the coordinates where for any $x\in N$, $|x_i| \leq \beta n^{10}$.
It has size at most
\[
\left(2\beta^{10} n^{36\log\log{n}+10}\right)^{d}
\]

For each point $x\in N$, by Theorem \ref{thm:one-point-prob}
and our choice of $h$, we have that
\[
\frac{1}{\log{n}}
\sum_{i\in \left[n\right]} \gamma_q\left(t_i,y_i\right)
\leq
\sum_{i\in T^{\left(z + 1\right)}} w_i \gamma_q\left(t_i,y_i\right)
\leq
\log{n} \sum_{i\in \left[n\right]} \gamma_q\left(t_i,y_i\right).
\]
with probability at least
\[
1 - 
O\left( \log^{5}n \exp\left( - \frac{h}{c \log^2{n}} \right) \right)
\geq
1
-
O\left(\log^5{n} \left(\frac{1}{2\beta n} \right)^{\alpha d\log\log(n)}\right).
\]

Therefore by union bounding over all the points of $N$, we have
\[
\left\vert \sum_{i \in [n]} w_i \gamma_q\left(t_i,y_i\right)
- \sum_{i\in \left[n\right]} \gamma_q\left(t_i,y_i\right)
\right\vert
\leq
\frac{1}{2}\sum_{i\in \left[n\right]} \gamma_q\left(t_i,y_i\right)
\]
for all $y=Ax$ such that $x\in N$ with probability at least
\[
1 - \abs{N}
O\left(\log^{5}n \right) \left(\frac{1}{2\beta n} \right)^{\alpha d\log\log{n}}
\geq
1 - O\left(\log^5{n} \exp\left(
-\left(\alpha - 37\right) d\log\log{n}
\log\left(2\beta n\right)\right)\right)
\]
Picking $\alpha$ to be a large enough constant,
the above probability is at least $1-O(\exp(-d))$.

It remains to bound the additional additive errors from
rounding points onto $N$.
Consider a point $x\in\mathbb{R}^d$ not in $N$ such that $\norm{x}_2\in(\frac{1}{\beta n^{10}}, \beta n^{10})$.
This bound gives $\|Ax\|_2 \leq \beta^2 n^{10}$,
and also there exits $\tilde{x}\in N$ such that
\[
\norm{A\left(x-\tilde{x}\right)}_1
\leq
\sqrt{n} \norm{A\left(x-\tilde{x}\right)}_2
\leq
\frac{1}{\beta^8 n^{36\log\log{n}-2}}.
\]

Let $y=Ax$ and $\tilde{y}= A\tilde{x}$.
Therefore by Lemma \ref{lemma:gamma-close},
\begin{equation}
\label{eq:close-points}
\left\vert\sum_{i\in \left[n\right]}
\gamma_q\left(t_i,\tilde{y}_i\right)
-
\sum_{i\in\left[n\right]} \gamma_q\left(t_i,y_i\right)\right\vert
\leq
\frac{8}{\beta^6 n^{36\log\log(n)-13}}
\end{equation}
Moreover by the assumption on the condition number of $A$ and $\norm{x}_2$,
$\norm{y}_q\geq\norm{y}_2 \geq \frac{1}{\beta^2 n^{10}}$.
So we have
$\frac{1}{8\beta}\norm{y}_2^2 \geq \frac{1}{8\beta^5 n^{20}}$
and
$\frac{1}{8n}\norm{y}_q^q \geq \frac{1}{8\beta^4 n^{21}}$,
and in turn by Lemma~\ref{lemma:lower-bound-gamma},
\begin{equation}
\label{eq:gamma-lower-bound}
\sum_{i\in[n]} \gamma_q(t_i,y_i) \geq \frac{1}{8\beta^5 n^{21}}.
\end{equation}
Now, we assume over the course of the Algorithm 1, for any $k\in[z]$, and for any $i\in T^{(k)}$, $p_i^{(k)}\geq \frac{1}{n}$. We can achieve this by slightly modifying the algorithm and taking $p_i^{(k)}=\min\left\{1,\max\left\{\frac{1}{n}, h\sqrt{\tau_i^{(k)}}\right\}\right\}$. Note that this only increases the expected value of the number of picked rows by one and because the number of iterations is $\log\log(n)$, it only increases the number of picked rows by at most about $\log(n)$ rows. Now note that by this assumption we have
\[
w_i \leq n^{\log\log(n)}.
\]

Decomposing the difference via triangle inequality gives:
\begin{align*}
&\left\vert \sum_{i \in [n]} w_i \gamma_q(t_i,y_i) - \sum_{i\in [n]} \gamma_q(t_i,y_i)\right\vert \\ & =\left\vert \sum_{i \in [n]} w_i \gamma_q(t_i,\tilde{y}_i) + \left( \sum_{i \in [n]} w_i \gamma_q(t_i,y_i) - \sum_{i \in [n]} w_i \gamma_q(t_i,\tilde{y}_i) \right)  - \sum_{i\in [n]} \gamma_q(t_i,\tilde{y}_i) - \left( \sum_{i\in [n]} \gamma_q(t_i,y_i) - \sum_{i\in [n]} \gamma_q(t_i,\tilde{y}_i) \right)\right\vert \\ &
\leq
\left\vert \sum_{i \in [n]} w_i \gamma_q(t_i,\tilde{y}_i) - \sum_{i\in \left[n\right]} \gamma_q(t_i,\tilde{y}_i)\right\vert
+
\left\vert \sum_{i \in [n]} w_i \gamma_q(t_i,y_i) - \sum_{i \in [n]} w_i \gamma_q(t_i,\tilde{y}_i) \right\vert
+
\left\vert \sum_{i\in \left[n\right]}
\gamma_q\left(t_i,y_i\right) -
\sum_{i\in \left[n\right]} \gamma_q\left(t_i,\tilde{y}_i\right) \right\vert.
\end{align*}

Now by upper bounds on $w_i$ obtained from lower
bounds on $p_i$, and the triangle inequality, we get
\begin{multline*}
\left\vert \sum_{i \in \left[n\right]}
w_i \gamma_q\left(t_i,y_i\right)
-
\sum_{i \in \left[n\right]} w_i \gamma_q\left(t_i,\tilde{y}_i\right)
\right\vert
\leq \sum_{i \in [n]} w_i \left\vert \gamma_q\left(t_i,y_i\right)
- \gamma_q\left(t_i,\tilde{y}_i\right) \right\vert
\\\leq 
n^{\log\log{n}} \sum_{i \in \left[n\right]}
\left\vert \gamma_q\left(t_i,y_i\right)
-
\gamma_q\left(t_i,\tilde{y}_i\right) \right\vert
\leq
n^{\log\log{n}}
\sum_{i\in \left[n\right]} \left\vert \gamma_q\left(t_i,y_i\right)
- \gamma_q\left(t_i,\tilde{y}_i\right) \right\vert.
\end{multline*}
So we can bound the overall error by at most
\[
\leq \left\vert \sum_{i \in [n]} w_i \gamma_q(t_i,\tilde{y}_i) - \sum_{i\in [n]} \gamma_q(t_i,\tilde{y}_i)\right\vert + \left(n^{\log\log(n)} + 1 \right) \sum_{i\in [n]} \left\vert \gamma_q(t_i,y_i) - \gamma_q(t_i,\tilde{y}_i) \right\vert
\]
under the assumption of success over the entire $\epsilon$-net,
and the associated point-wise error bound.
Incorporating the distances between a point and its
closest point on the $\epsilon$-net from
Equation~\eqref{eq:close-points} then gives:
\[
\leq \frac{1}{2}\sum_{i\in [n]} \gamma_q(t_i,\tilde{y}_i) + \frac{16}{\beta^6 n^{35\log\log{n}-13}}
\leq
\frac{1}{2}\sum_{i\in [n]} \gamma_q\left(t_i,y_i\right) + \frac{20}{\beta^6 n^{35\log\log{n}-13}}
\]
after which the final bound follows from the
lower bound on objective value from
Equation~\eqref{eq:gamma-lower-bound}:
\[
\leq
\frac{1}{2}\sum_{i\in \left[n\right]} \gamma_q\left(t_i,y_i\right)
+ \frac{1}{n} \sum_{i\in \left[n\right]} \gamma_q\left(t_i,y_i\right)
\leq \frac{3}{4}\sum_{i\in \left[n\right]} \gamma_q\left(t_i,y_i\right).
\]
\end{proof}

\subsection{Solving the $p$-norm problem via the residual problem}

In this section, we show how the $p$-norm regression problem can be solved by approximately solving instances of the following residual problem. This section is adapted from~\cite{AdilKPS19}. However we consider a more general problem as the following.

\label{sec:p-norm-via-residual}
\begin{definition}
Let $A\in\mathbb{R}^{n\times d}$, $C \in \mathbb{R}^{d\times d}$. For a $p$-norm regression problem of the form
\begin{align}
\label{eq:tall-p-norm-w-constraint}
    \min_{Cx = v} \norm{Ax - b}_p^p,
\end{align}
we define the residual problem at point $x$ as
\begin{align}
    \label{eq:residual-problem}
    \max_{C \Delta = 0} g^T A \Delta - \frac{p-1}{p 2^p} \gamma_p(|Ax - b|, A\Delta),
\end{align}
where $g=p \left(\left\vert A x - b\right\vert^{p-2} \odot (Ax - b) \right)$, where raising to the power of $p-2$ is element-wise and $\odot$ is the Hadamard (element-wise) product. Let $\Delta^*$ be the optimum solution of \eqref{eq:residual-problem}. We say $\widetilde{\Delta}$ is an $\alpha$-approximate solution if $C\widetilde{\Delta} =0$ and 
\begin{align*}
g^T A \widetilde{\Delta} - \frac{p-1}{p 2^p} \gamma_p(|Ax-b|, A\widetilde{\Delta}) \geq \frac{1}{\alpha} \left( g^T A \Delta^* - \frac{p-1}{p 2^p} \gamma_p(|Ax - b|, A\Delta^*) \right)
\end{align*}
\end{definition}

The following two results regarding the $\gamma_p$ function (Definition \ref{def:quadratic-smooth}), proved by \cite{AdilKPS19}, are useful for bounding the difference of the $p$-norm of two close points.

\begin{lemma}[\cite{AdilKPS19}]
\label{lemma:bregman-divergence-bound}
Let $1< p < \infty$. Then for any $t,y\in\mathbb{R}$,
\begin{align*}
|t|^p + ly + \frac{p-1}{p 2^p} \gamma_p(|t|, y) \leq |t+y|^{p} \leq |t|^p + ly + 2^p \gamma_p(|t|, y),
\end{align*}
where $l = p |t|^{p-2} t$ is the derivative of $|t|^p$.
\end{lemma}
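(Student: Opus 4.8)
The plan is to reduce to the scalar normalization $t=1$ and then verify the two inequalities by elementary one-variable calculus on a handful of sub-intervals of $y$. Both the Bregman-type quantity $D(t,y):=|t+y|^p-|t|^p-ly$ (with $l=p|t|^{p-2}t$) and $\gamma_p(|t|,y)$ are positively homogeneous of degree $p$ under $(t,y)\mapsto(\lambda t,\lambda y)$, and both are invariant under the reflection $(t,y)\mapsto(-t,-y)$: indeed $\gamma_p(|t|,y)$ depends only on $|t|$ and $|y|$, while $l(-t)\cdot(-y)=l(t)\cdot y$. The case $t=0$ is immediate, since then $l=0$, $\gamma_p(0,y)=|y|^p$, and the claim reduces to $\tfrac{p-1}{p2^p}|y|^p\le|y|^p\le 2^p|y|^p$, which holds because $p-1<p\le p2^p$ and $1\le 2^p$. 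So we may assume $t=1$, hence $l=p$; writing $D(y)=|1+y|^p-1-py\ge 0$ (nonnegativity is convexity of $x\mapsto|x|^p$), it remains to show $\tfrac{p-1}{p2^p}\gamma_p(1,y)\le D(y)\le 2^p\gamma_p(1,y)$ for all $y\in\mathbb{R}$.

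I would then split on the sign of $y$ and on whether $|y|\le 1$, so as to line up with the two branches of $\gamma_p(1,\cdot)$: for $|y|\le 1$ we have $\gamma_p(1,y)=\tfrac p2 y^2$ and $1+y\ge 0$, while for $|y|>1$ we have $\gamma_p(1,y)=|y|^p+(\tfrac p2-1)$. In the quadratic regime $|y|\le 1$ the lower bound follows from the Taylor remainder $D(y)=\int_0^y(y-s)\,p(p-1)|1+s|^{p-2}\,ds$, restricted to an $s$-interval inside $[-\tfrac12,\tfrac12]$ on which $|1+s|$ lies between absolute constants; this gives $D(y)\ge c\,p(p-1)y^2$ for an absolute constant $c$, which comfortably exceeds $\tfrac{p-1}{2^{p+1}}y^2$. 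For the upper bound I would use the auxiliary function $\psi(y):=1+py+2^{p-1}p\,y^2-(1+y)^p$ (and its $y<0$ counterpart), observe $\psi(0)=\psi'(0)=0$, and note that $\psi''(y)=p\bigl(2^p-(p-1)(1+y)^{p-2}\bigr)$ changes sign at most once on the relevant interval; combined with $\psi'(1)=p(1+2^{p-1})>0$ this forces $\psi'$ to keep the correct sign and $\psi$ to be monotone away from $0$, hence $\psi\ge 0$. In the $p$-th-power regime $|y|>1$, I would use $|y|-1\le|1+y|\le 2|y|$ to obtain $|1+y|^p\le 2^p|y|^p$ and $|1+y|^p\ge(|y|-1)^p$, and a weighted AM--GM estimate $p|y|\le(p-1)+|y|^p$ to absorb the linear term $-1-py$ into $2^p|y|^p$ and $2^p(\tfrac p2-1)$; the sub-case $y>1$ is easiest since then $|1+y|>|y|$ directly, while for $y<-1$ the term $p|y|-1$ carries the weight.

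The step I expect to be the main obstacle is the upper bound in the quadratic regime for large $p$: the naive estimate $D(y)\le\tfrac12\bigl(\max_{[0,y]}|1+s|^{p-2}\bigr)p(p-1)y^2$ already loses a factor of order $p$ against $2^p\gamma_p(1,y)=2^{p-1}p\,y^2$ once $|y|$ is comparable to $1$ (equivalently $p>5$), so simply bounding the second derivative does not suffice. The remedy is the monotonicity argument for $\psi$ above, which only uses that $\psi''$ has a single sign change (from positive to negative as $|y|$ grows) and that $\psi'$ is nonnegative, respectively nonpositive, on the appropriate side of $0$; on any residual concave sub-interval near $|y|=1$ one instead compares $\psi$ directly to its (nonnegative) endpoint values. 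Once all the regime-wise inequalities are assembled, undoing the normalization $t=1$ and the reflection $(t,y)\mapsto(-t,-y)$ yields the stated bound for all $t,y\in\mathbb{R}$.
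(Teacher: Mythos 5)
This lemma is imported verbatim from~\cite{AdilKPS19}; the paper gives no proof of its own, so the relevant comparison is with the argument in that reference, which likewise proceeds by degree-$p$ homogeneity (and reflection) to normalize $t=1$ and then splits on the two branches of $\gamma_p$. Your skeleton matches that structure and is sound. Two of the steps, however, are described more optimistically than they actually are.

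In the quadratic regime $|y|\le 1$, the Taylor-remainder lower bound does not give $D(y)\ge c\,p(p-1)\,y^2$ for an \emph{absolute} constant $c$: restricting $\int_0^y(y-s)\,p(p-1)|1+s|^{p-2}\,ds$ to the half of $[0,y]$ nearer the origin puts $|1+s|\in[1/2,3/2]$, but for $p\ge 2$ the factor $\min_s|1+s|^{p-2}=2^{2-p}$ decays exponentially in $p$. The comparison still closes only because the target $\tfrac{p-1}{2^{p+1}}y^2$ carries its own $2^{-p}$; you must keep the $p$-dependence rather than declare it gone. More substantively, the lower bound in the regime $|y|>1$ is not the easy case you suggest. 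For $y>1$ the crude bound $|1+y|^p\ge|y|^p$ yields $D(y)\ge y^p-1-py$, which is negative near $y=1$ and hence useless there, while the tangent bound at $y=1$ is linear in $y$ and loses to $\tfrac{p-1}{p2^p}y^p$ for large $y$. A workable fix is to show $\frac{d}{dy}\bigl(D(y)-\tfrac{p-1}{p2^p}y^p\bigr)\ge 0$ on $[1,\infty)$ (this reduces to $p\,2^p\ge 3p-1$, true for all $p\ge 1$) and to check the endpoint $D(1)=2^p-1-p\ge\tfrac{p-1}{2^{p+1}}$ separately; a symmetric derivative or convexity comparison is needed for $y<-1$, where $D(y)=(|y|-1)^p-1+p|y|$ and the additive $p|y|-1\ge p-1$ covers the region near $|y|=1$ but not the transition to $|y|^p$-dominated behavior. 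The remaining pieces — the $\psi$-based upper bound in the quadratic regime (one sign change of $\psi''$ plus the correct sign of $\psi'(\pm 1)$ gives monotonicity of $\psi$ away from $0$, including the $y<0$ branch for $1<p<2$ where $\psi''$ changes sign from negative to positive) and the $|y|>1$ upper bound via $|1+y|\le 2|y|$ and Young's inequality $p|y|\le(p-1)+|y|^p$ — are fine as sketched.
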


\begin{lemma}[\cite{AdilKPS19}]
\label{lemma:weird-gamma}
Let $1< p < \infty$. Then for any $t,y\in\mathbb{R}$ and $\lambda>0$,
\begin{align*}
\min\{\lambda^2, \lambda^p\} \gamma_p(|t|, y) \leq \gamma_p(|t|, \lambda y) \leq \max\{\lambda^2, \lambda^p\} \gamma_p(|t|, y)
\end{align*}
\end{lemma}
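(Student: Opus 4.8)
The plan is to reduce to the case $s:=|t|>0$, $y>0$, and then, with $y$ fixed, study the one‑variable function $h(\lambda):=\gamma_p(s,\lambda y)$ and show that its logarithmic derivative with respect to $\log\lambda$ — the ``elasticity'' $e(\lambda):=\lambda h'(\lambda)/h(\lambda)$ — is pinned inside the interval $[\min\{2,p\},\max\{2,p\}]$ for every $\lambda>0$. Integrating that bound over $\log\lambda$ from $1$ to $\lambda$ then produces exactly the two‑sided inequality claimed, since $\lambda^{\min\{2,p\}}$ and $\lambda^{\max\{2,p\}}$ equal $\min\{\lambda^2,\lambda^p\}$ and $\max\{\lambda^2,\lambda^p\}$ for $\lambda\ge 1$, and the two bounds swap (but the set $\{\lambda^2,\lambda^p\}$ is unchanged) for $\lambda\le 1$.

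First the reductions: $\gamma_p(s,\cdot)$ is even, so we may take $y\ge 0$; if $y=0$ both sides vanish, and if $s=0$ then $\gamma_p(0,\lambda y)=|\lambda y|^p=\lambda^p\gamma_p(0,y)$ with $\lambda^p\in[\min\{\lambda^2,\lambda^p\},\max\{\lambda^2,\lambda^p\}]$, so the claim is trivial. Assume then $s,y>0$, and set $\lambda_0=s/y$. For $\lambda\le\lambda_0$ we have $h(\lambda)=\tfrac p2 s^{p-2}y^2\lambda^2$, and for $\lambda\ge\lambda_0$ we have $h(\lambda)=y^p\lambda^p+(\tfrac p2-1)s^p$; a direct check shows both pieces agree in value and in first derivative at $\lambda_0$ (equal to $\tfrac p2 s^p$ and $p s^{p-1}y$ there), so $h\in C^1(0,\infty)$. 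It is strictly positive: on the quadratic piece this is clear, and on the power piece $(\lambda y)^p>s^p$ gives $h(\lambda)\ge(\lambda y)^p>0$ when $p\ge 2$, while the negative coefficient $\tfrac p2-1\in(-\tfrac12,0)$ gives $h(\lambda)>(\lambda y)^p+(\tfrac p2-1)(\lambda y)^p=\tfrac p2(\lambda y)^p>0$ when $1<p<2$. Hence $e(\lambda)$ is continuous, so integrable.

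The crux is the elasticity bound. On the quadratic piece $h=c\lambda^2$, so $e(\lambda)=2$ exactly. On the power piece put $u:=(\lambda y)^p$, with $u>s^p$; then $\lambda h'(\lambda)=pu$ and $h(\lambda)=u+(\tfrac p2-1)s^p$, so $e(\lambda)=\frac{pu}{u+(p/2-1)s^p}$. If $p\ge 2$ the denominator is increasing in $s^p\in(0,u)$ and so lies in $[u,\tfrac p2 u)$, giving $e(\lambda)\in(2,p]$; if $1<p<2$ it lies in $(\tfrac p2 u,u)$, giving $e(\lambda)\in(p,2)$. In all cases $e(\lambda)\in[\min\{2,p\},\max\{2,p\}]$. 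Finally $\log h(\lambda)-\log h(1)=\int_0^{\log\lambda}e(e^\tau)\,d\tau$, which for $\lambda\ge 1$ lies in $[\min\{2,p\}\log\lambda,\ \max\{2,p\}\log\lambda]$ and for $\lambda\le 1$ lies in $[\max\{2,p\}\log\lambda,\ \min\{2,p\}\log\lambda]$; exponentiating and using $h(1)=\gamma_p(s,y)$, $h(\lambda)=\gamma_p(s,\lambda y)$ yields the statement.

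I expect the only point needing care to be the power‑regime elasticity computation: the sign of $\tfrac p2-1$ flips at $p=2$, so one must track which side of the denominator estimate is active, and one must use the strict inequality $(\lambda y)^p>s^p$ both to keep $h>0$ and to land inside $[\min\{2,p\},\max\{2,p\}]$. An alternative that avoids calculus altogether is a direct four‑case split on whether each of $|y|,|\lambda y|$ is $\le s$ or $>s$: the ``both $\le s$'' case gives the exact factor $\lambda^2$, and the remaining three cases reduce, via monotonicity of $\gamma_p(s,\cdot)$ and chaining through the boundary value $\gamma_p(s,s)=\tfrac p2 s^p$, to the single polynomial fact $\mu^p-\tfrac p2\mu^2+(\tfrac p2-1)\ge 0$ for $\mu\ge 1$ (it vanishes at $\mu=1$ and has derivative $p\mu(\mu^{p-2}-1)$ of a single sign). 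I would present whichever is shorter in the final draft, but the elasticity version keeps the bookkeeping to a minimum.
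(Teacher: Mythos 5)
Your calculus proof is correct, and it is worth noting that the paper itself does not prove this lemma: it is stated with a citation to~\cite{AdilKPS19}, so there is no ``paper's proof'' to compare against. Your approach—showing the elasticity $e(\lambda)=\lambda h'(\lambda)/h(\lambda)$ sits in $[\min\{2,p\},\max\{2,p\}]$ on both the quadratic and power pieces and then integrating $\log h$—is clean and self‑contained. Every step checks out: the reductions to $s=|t|>0$, $y>0$ are sound; the two pieces of $h$ agree in value ($\tfrac p2 s^p$) and derivative ($ps^{p-1}y$) at $\lambda_0=s/y$, giving $h\in C^1$; positivity holds in both regimes (using $(\lambda y)^p\geq s^p$ and the sign of $\tfrac p2-1$); the elasticity computation $e=\frac{pu}{u+(\tfrac p2-1)s^p}$ with $u\geq s^p$ lands in $(2,p]$ or $[p,2)$ depending on the sign of $p-2$; and the exponentiation of the integral bound yields exactly $\min\{\lambda^2,\lambda^p\}$ and $\max\{\lambda^2,\lambda^p\}$ on both sides of $\lambda=1$.

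One small caveat: the parenthetical ``alternative'' sketch at the end asserts $\mu^p-\tfrac p2\mu^2+(\tfrac p2-1)\geq 0$ for all $\mu\geq 1$, but the sign of the derivative $p\mu(\mu^{p-2}-1)$ flips with $p-2$, so for $1<p<2$ this polynomial is in fact $\leq 0$ on $[1,\infty)$. The case split would need to track that sign change, mirroring what the elasticity argument handles automatically. Since you clearly flag the alternative as an aside and your main argument does not rely on it, this does not affect the correctness of the proof, but you should either drop the aside or state the polynomial inequality with the correct sign depending on whether $p\gtrless 2$.
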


The following result gives a bound on the difference of $p$-norm of two points that are close to each other based on the $\gamma_p$ function.

\begin{lemma}
\label{lemma:bound-res-value}
Let $1<p<\infty$, $A\in \mathbb{R}^{n}$, $b\in\mathbb{R}^n$, and $\lambda$ be such that $\lambda^{\min\{1,p-1\}} \leq \frac{p-1}{p 4^p}$. Then for any $x$ and $\Delta$,
\begin{align*}
\norm{Ax-b}_p^p - \lambda g^T A \Delta + \frac{p-1}{p 2^p} \gamma_p(|Ax - b|, \lambda A\Delta) & \leq \norm{Ax - b - \lambda A \Delta}_p^p \\ & \leq \norm{Ax-b}_p^p - \lambda \left( g^T A \Delta + \frac{p-1}{p 2^p} \gamma_p(|Ax - b|, A\Delta)\right).
\end{align*}
\end{lemma}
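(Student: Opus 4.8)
The plan is to reduce both inequalities to scalar statements via Lemma~\ref{lemma:bregman-divergence-bound} applied coordinatewise, and then sum. Write $r \defeq Ax-b$ and $z \defeq A\Delta$, so that the $i$-th coordinate of $Ax-b-\lambda A\Delta$ equals $r_i + (-\lambda z_i)$. Apply Lemma~\ref{lemma:bregman-divergence-bound} with $t \leftarrow r_i$ and $y \leftarrow -\lambda z_i$: the derivative term is $l = p|r_i|^{p-2}r_i = g_i$, so $ly = -\lambda g_i z_i$, and $t+y = r_i - \lambda z_i$. Because $\gamma_p(t,\cdot)$ depends on its second argument only through its absolute value (immediate from Definition~\ref{def:quadratic-smooth}), $\gamma_p(|r_i|,-\lambda z_i) = \gamma_p(|r_i|,\lambda z_i)$. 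Summing the scalar two-sided bound
\[
|r_i|^p - \lambda g_i z_i + \frac{p-1}{p2^p}\gamma_p(|r_i|,\lambda z_i) \;\le\; |r_i-\lambda z_i|^p \;\le\; |r_i|^p - \lambda g_i z_i + 2^p\gamma_p(|r_i|,\lambda z_i)
\]
over $i\in[n]$, using $\sum_i g_i z_i = g^\top A\Delta$ and interpreting $\sum_i\gamma_p(|r_i|,\cdot)$ as the summed vector version $\gamma_p(|Ax-b|,\cdot)$, the left-hand inequality is exactly the asserted lower bound $\norm{Ax-b}_p^p - \lambda g^\top A\Delta + \frac{p-1}{p2^p}\gamma_p(|Ax-b|,\lambda A\Delta) \le \norm{Ax-b-\lambda A\Delta}_p^p$; this half needs no hypothesis on $\lambda$.

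For the upper bound, the same summation gives $\norm{Ax-b-\lambda A\Delta}_p^p \le \norm{Ax-b}_p^p - \lambda g^\top A\Delta + 2^p\gamma_p(|Ax-b|,\lambda A\Delta)$, so it remains to bound the quadratically-smoothed error $2^p\gamma_p(|Ax-b|,\lambda A\Delta)$ by $\lambda$ times $\frac{p-1}{p2^p}\gamma_p(|Ax-b|,A\Delta)$. Since $\frac{p-1}{p4^p}\le 1$, the hypothesis $\lambda^{\min\{1,p-1\}}\le\frac{p-1}{p4^p}$ forces $\lambda<1$, so Lemma~\ref{lemma:weird-gamma} applied coordinatewise with scalar $\lambda\le 1$ gives $\gamma_p(|r_i|,\lambda z_i)\le\max\{\lambda^2,\lambda^p\}\gamma_p(|r_i|,z_i)=\lambda^{\min\{2,p\}}\gamma_p(|r_i|,z_i)$. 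The crucial elementary identity is $\min\{2,p\}-1=\min\{1,p-1\}$ for all $p>1$, which yields $\lambda^{\min\{2,p\}}=\lambda\cdot\lambda^{\min\{1,p-1\}}\le\lambda\cdot\frac{p-1}{p4^p}$ and hence $2^p\gamma_p(|r_i|,\lambda z_i)\le\lambda\cdot\frac{p-1}{p2^p}\gamma_p(|r_i|,z_i)$. Summing over $i$ and substituting back produces
\[
\norm{Ax-b-\lambda A\Delta}_p^p \;\le\; \norm{Ax-b}_p^p - \lambda g^\top A\Delta + \lambda\cdot\frac{p-1}{p2^p}\gamma_p(|Ax-b|,A\Delta) \;=\; \norm{Ax-b}_p^p - \lambda\Bigl(g^\top A\Delta - \frac{p-1}{p2^p}\gamma_p(|Ax-b|,A\Delta)\Bigr),
\]
that is, $\norm{Ax-b}_p^p$ minus $\lambda$ times the residual-problem value, which is the desired upper bound.

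The step requiring care is this last absorption: one must expand Lemma~\ref{lemma:bregman-divergence-bound} about $r_i$ rather than about $r_i-\lambda z_i$, so that the first-order coefficient is exactly the gradient $g$ of $\norm{\cdot}_p^p$ at $Ax-b$ (an expansion about $r_i-\lambda z_i$ would produce the gradient at the shifted point); one must recognize the exponent identity $\min\{2,p\}-1=\min\{1,p-1\}$, which is precisely what makes the hypothesis on $\lambda$ the right one (it is necessary as well as sufficient for the absorption, since for $\lambda\le 1$ the bound $\gamma_p(|t|,\lambda y)\le\lambda^{\min\{2,p\}}\gamma_p(|t|,y)$ is tight in the quadratic regime); and one must track the powers of $2$ so that the $2^p$ coming from the Bregman upper bound cancels against the $4^p$ in the hypothesis, leaving the constant $\frac{p-1}{p2^p}$. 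Everything else reduces to summing scalar inequalities and invoking the evenness of $\gamma_p$ in its second argument.
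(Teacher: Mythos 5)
Your proof is correct and follows essentially the same route as the paper's: apply Lemma \ref{lemma:bregman-divergence-bound} coordinatewise about $Ax-b$ with increment $-\lambda A\Delta$, sum over coordinates, and for the upper bound absorb the $2^p\gamma_p(|Ax-b|,\lambda A\Delta)$ term via Lemma \ref{lemma:weird-gamma} together with the exponent identity $\min\{2,p\}=1+\min\{1,p-1\}$. One remark: what you (correctly) derive is $\norm{Ax-b-\lambda A\Delta}_p^p \le \norm{Ax-b}_p^p - \lambda\bigl(g^\top A\Delta - \frac{p-1}{p2^p}\gamma_p(|Ax-b|,A\Delta)\bigr)$, with a minus sign inside the parentheses, whereas the lemma as stated carries a plus; the paper's own proof arrives at exactly the same minus-sign version, and that weaker form is all that is used in Theorem \ref{thm:res-problem-number} (since $\gamma_p\ge 0$), so this is a sign typo in the statement rather than a gap in your argument.
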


\begin{proof}
First by setting $t=(Ax-b)_i$, $y=(-\lambda A\Delta)_i$ in Lemma \ref{lemma:bregman-divergence-bound} and summing over $i\in [n]$,
we have
\begin{align*}
\norm{Ax-b}_p^p - \lambda g^T A \Delta + \frac{p-1}{p 2^p} \gamma_p(|Ax - b|, \lambda A\Delta) \leq \norm{Ax - b - \lambda A \Delta}_p^p,
\end{align*}
which is the first part of the result. For the other inequality note that by Lemma \ref{lemma:bregman-divergence-bound},
\begin{align*}
\norm{Ax - b - \lambda A \Delta}_p^p \leq \norm{Ax-b}_p^p - \lambda g^T A \Delta + 2^p \gamma_p(|Ax - b|, \lambda A\Delta).
\end{align*}
Moreover by Lemma \ref{lemma:weird-gamma}, since $0<\lambda<1$, we have
\begin{align*}
\gamma_p(|Ax - b|, \lambda A\Delta) \leq \lambda^{\min\{2,p\}} \gamma_p(|Ax - b|, A\Delta).
\end{align*}
Therefore 
\begin{align*}
\norm{Ax-b}_p^p - \lambda g^T A \Delta + 2^p \gamma_p(|Ax - b|, \lambda A\Delta) \leq \norm{Ax-b}_p^p - \lambda g^T A \Delta + \lambda^{\min\{2,p\}} 2^p \gamma_p(|Ax - b|, A\Delta).
\end{align*}
Hence
\begin{align*}
\gamma_p(|Ax - b|, \lambda A\Delta) \leq \norm{Ax-b}_p^p - \lambda \left( g^T A \Delta - \lambda^{\min\{1,p-1\}} 2^p \gamma_p(|Ax - b|, A\Delta) \right)
\end{align*}
The result follows by noting that $\lambda^{\min\{1,p-1\}} \leq \frac{p-1}{p 4^p}$.
\end{proof}

The following result implies that the solution to the linear regression problem can be used as a good initial point for solving the $p$-norm regression problem.

\begin{lemma}
\label{lemma:initial-point}
Let $A\in\mathbb{R}^{n\times d}$,
\begin{align*}
x^* = \argmin_{Cx = v} \norm{Ax - b}_p^p,
\end{align*}
and 
\begin{align*}
x^{(0)} = \argmin_{Cx = v} \norm{Ax - b}_2^2
\end{align*}
Then $\norm{Ax^{(0)} - b}_p^p \leq n^{(p-2)/2} \norm{Ax^* - b}_p^p$.
\end{lemma}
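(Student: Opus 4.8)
The plan is to sandwich $\norm{Ax^{(0)}-b}_p$ between two $2$-norm quantities using the standard equivalence of $\ell_p$ and $\ell_2$ norms on $\mathbb{R}^n$, and to exploit the fact that $x^{(0)}$ minimizes the $2$-norm objective over exactly the same feasible set $\{x : Cx=v\}$ that also contains $x^*$. Concretely, for $p\ge 2$ and any $v\in\mathbb{R}^n$ we have the elementary inequalities $\norm{v}_p\le\norm{v}_2$ and $\norm{v}_2\le n^{1/2-1/p}\norm{v}_p$, which follow from the power-mean (equivalently Hölder) inequality; this is the regime in which the lemma is applied, since for $1<p\le 2$ one instead works with the dual $\frac{p}{p-1}$-norm problem.

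First I would apply the inequality $\norm{v}_p\le\norm{v}_2$ with $v=Ax^{(0)}-b$, obtaining $\norm{Ax^{(0)}-b}_p\le\norm{Ax^{(0)}-b}_2$. Next, since $x^*$ also satisfies the constraint $Cx^*=v$, the optimality of $x^{(0)}$ for the least-squares problem $\min_{Cx=v}\norm{Ax-b}_2^2$ gives $\norm{Ax^{(0)}-b}_2\le\norm{Ax^*-b}_2$. Then I would apply the second norm inequality with $v=Ax^*-b$ to get $\norm{Ax^*-b}_2\le n^{1/2-1/p}\norm{Ax^*-b}_p$. Chaining these three bounds yields $\norm{Ax^{(0)}-b}_p\le n^{1/2-1/p}\norm{Ax^*-b}_p$, and raising to the $p$-th power gives $\norm{Ax^{(0)}-b}_p^p\le n^{p/2-1}\norm{Ax^*-b}_p^p=n^{(p-2)/2}\norm{Ax^*-b}_p^p$, as claimed.

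There is no genuine obstacle here: the statement is a one-line consequence of norm equivalence together with the variational characterization of the least-squares solution over an affine subspace. The only point that needs care is the direction of the norm inequalities, which forces the assumption $p\ge 2$ (for $p<2$ the analogous argument would instead produce the factor $n^{(2-p)/2}$), and noting that the feasible set is identical in both minimization problems so that $x^*$ is a valid competitor against $x^{(0)}$.
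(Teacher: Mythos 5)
Your proof is correct and follows essentially the same approach as the paper: chain $\norm{Ax^{(0)}-b}_p\le\norm{Ax^{(0)}-b}_2\le\norm{Ax^*-b}_2\le n^{1/2-1/p}\norm{Ax^*-b}_p$ using $\ell_p$--$\ell_2$ norm comparison plus the variational optimality of $x^{(0)}$ over the common feasible set, then raise to the $p$th power. (The paper's proof as printed actually swaps $x^*$ and $x^{(0)}$ in a couple of the three intermediate inequalities, so the inequalities as written do not literally chain; your version is the correctly ordered one.) You also rightly flag the implicit assumption $p\ge 2$ on the inequality directions, which the paper leaves tacit.
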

\begin{proof}
First, by definition $\norm{Ax^* - b}_2 \leq \norm{Ax^{(0)} - b}_2$. Moreover by Holder's inequality, 
\begin{align*}
\norm{Ax^{(0)} - b}_2^p \leq n^{(p-2)/2} \norm{Ax^{(0)} - b}_p^p,
\end{align*}
and
\begin{align*}
\norm{Ax^* - b}_p \leq \norm{Ax^* - b}_2.
\end{align*}
The result follows by combining these inequalities.
\end{proof}

Now, we are equipped to show that the $p$-norm regression problem can be solved by approximately solving the instances of the residual problem.

\begin{theorem}
\label{thm:res-problem-number}
The $p$-norm regression problem \eqref{eq:tall-p-norm-w-constraint} can be solved to $\epsilon$ accuracy, by solving $O_p(\alpha \log(n/\epsilon))$ many calls to an $\alpha$-approximate solver for the residual problem \eqref{eq:residual-problem}.
\end{theorem}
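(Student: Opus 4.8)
The plan is to run an iterative ``residual descent'': start from the $\ell_2$ solution as a warm start, and at each step call the $\alpha$-approximate residual solver, taking a damped step along the direction it returns. The key point to establish is that each step shrinks the optimality gap $\Phi^{(k)} := \norm{Ax^{(k)}-b}_p^p - OPT$ by a factor $(1-\Theta_p(1)/\alpha)$, so that $O_p(\alpha\log(n/\epsilon))$ steps suffice. Concretely: set $x^{(0)} = \argmin_{Cx=v}\norm{Ax-b}_2^2$, fix a step size $\lambda=\lambda(p)\in(0,1)$ with $\lambda^{\min\{1,p-1\}}\le \frac{p-1}{p4^p}$, and iterate $x^{(k+1)} = x^{(k)} - \lambda\widetilde{\Delta}^{(k)}$, where $\widetilde{\Delta}^{(k)}$ is an $\alpha$-approximate solution of the residual problem~\eqref{eq:residual-problem} at $x^{(k)}$. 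Feasibility is preserved since $Cx^{(k+1)} = v - \lambda\cdot 0 = v$, and by Lemma~\ref{lemma:initial-point} the initial gap satisfies $\Phi^{(0)} \le \text{poly}_p(n)\cdot OPT$, so $\log(\Phi^{(0)}/(\epsilon\cdot OPT)) = O_p(\log(n/\epsilon))$.

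For the per-iteration decrease, write $\mathrm{res}_k(\Delta) := g^{(k)\top}A\Delta - \frac{p-1}{p2^p}\gamma_p(|r^{(k)}|,A\Delta)$ for the residual objective, where $r^{(k)}=Ax^{(k)}-b$ and $g^{(k)} = p(|r^{(k)}|^{p-2}\odot r^{(k)})$. Applying the upper bound of Lemma~\ref{lemma:bound-res-value} with $\Delta=\widetilde{\Delta}^{(k)}$ gives
\[
\norm{Ax^{(k+1)}-b}_p^p \;\le\; \norm{r^{(k)}}_p^p - \lambda\cdot\mathrm{res}_k(\widetilde{\Delta}^{(k)}).
\]
Since $\widetilde{\Delta}^{(k)}$ is $\alpha$-approximate and $\mathrm{res}_k(0)=0$, we have $\mathrm{res}_k(\widetilde{\Delta}^{(k)}) \ge \frac1\alpha\mathrm{res}_k(\Delta^*) \ge 0$, so the objective is monotonically non-increasing.

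It remains to lower bound $\mathrm{res}_k(\Delta^*)$ by the current gap. The direction $x^{(k)}-x^*$ is feasible for the residual problem, since $C(x^{(k)}-x^*)=v-v=0$. Applying Lemma~\ref{lemma:bregman-divergence-bound} coordinatewise with $t=r^{(k)}_i$ and $y=(A(x^*-x^{(k)}))_i$ (so $t+y = (Ax^*-b)_i$ and $p|t|^{p-2}t = g^{(k)}_i$), using that $\gamma_p$ is even in its second argument, and summing over $i$, yields
\[
\norm{r^{(k)}}_p^p - g^{(k)\top}A(x^{(k)}-x^*) + \tfrac{p-1}{p2^p}\gamma_p(|r^{(k)}|,A(x^{(k)}-x^*)) \;\le\; \norm{Ax^*-b}_p^p,
\]
i.e. $\mathrm{res}_k(x^{(k)}-x^*) \ge \norm{r^{(k)}}_p^p - OPT = \Phi^{(k)}$, hence $\mathrm{res}_k(\Delta^*)\ge \mathrm{res}_k(x^{(k)}-x^*)\ge \Phi^{(k)}$. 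Combining with the decrease bound, $\mathrm{res}_k(\widetilde{\Delta}^{(k)}) \ge \Phi^{(k)}/\alpha$, so $\Phi^{(k+1)} \le (1-\tfrac{\lambda}{\alpha})\Phi^{(k)}$. Using $1-x\le e^{-x}$, after $k = O\!\big(\tfrac{\alpha}{\lambda}\log(\Phi^{(0)}/(\epsilon\cdot OPT))\big) = O_p(\alpha\log(n/\epsilon))$ iterations we get $\Phi^{(k)}\le\epsilon\cdot OPT$, an $\epsilon$-accurate solution, with one residual-solver call per iteration.

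I expect the only delicate point to be the non-homogeneity of $\gamma_p$: the step size $\lambda$ must be small enough for the decrease estimate (the constraint $\lambda^{\min\{1,p-1\}}\le \frac{p-1}{p4^p}$, which is where the iteration count acquires its $p$-dependence hidden in $O_p$), while the gap lower bound must use the \emph{unscaled} direction $x^{(k)}-x^*$ as a feasible residual direction so that $\mathrm{res}_k(\Delta^*)$ dominates $\Phi^{(k)}$ with no stray $\lambda$ factor. Everything else is bookkeeping with the constants in Lemmas~\ref{lemma:bregman-divergence-bound}, \ref{lemma:weird-gamma}, \ref{lemma:bound-res-value}, and \ref{lemma:initial-point}.
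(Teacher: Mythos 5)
Your proof is correct and follows essentially the same route as the paper's: warm-start from the $\ell_2$ solution (Lemma~\ref{lemma:initial-point}), show via Lemma~\ref{lemma:bregman-divergence-bound} (equivalently, the first inequality of Lemma~\ref{lemma:bound-res-value} with step $1$) that $x^{(k)}-x^*$ is a feasible direction whose residual value dominates the current gap $\Phi^{(k)}$, then use the second inequality of Lemma~\ref{lemma:bound-res-value} with a small $p$-dependent step size $\lambda$ to shrink the gap by $(1-\lambda/\alpha)$ per iteration. The one minor difference is cosmetic: you invoke Lemma~\ref{lemma:bregman-divergence-bound} directly (correctly handling the evenness of $\gamma_p$) rather than routing through the first part of Lemma~\ref{lemma:bound-res-value}, which lets you sidestep the fact that the lemma's statement imposes the small-$\lambda$ hypothesis even though only its second inequality needs it.
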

\begin{proof}
Let
\begin{align*}
x^* = \argmin_{Cx = v} \norm{Ax - b}_p^p,
\end{align*}
and $\text{OPT} = \norm{A x^* - b}_p^p$. Let $\widetilde{\Delta}$ be an $\alpha$-approximate solution to the residual problem at $x$, i.e.,
\begin{align*}
g^T A \widetilde{\Delta} - \frac{p-1}{p 2^p} \gamma_p(|Ax - b|, A\widetilde{\Delta}) \geq \frac{1}{\alpha} \max_{C \Delta = 0} g^T A \Delta - \frac{p-1}{p 2^p} \gamma_p(|Ax - b|, A\Delta).
\end{align*}
This implies
\begin{align*}
g^T A \widetilde{\Delta} - \frac{p-1}{p 2^p} \gamma_p(|Ax - b|, A\widetilde{\Delta}) \geq \frac{1}{\alpha}\left( g^T A \Delta - \frac{p-1}{p 2^p} \gamma_p(|Ax - b|, A\Delta) \right),
\end{align*}
for any $\Delta$ such that $C\Delta = 0$. Taking $\Delta = \widetilde{x} - x^*$, by the first inequality of Lemma \ref{lemma:bound-res-value}, we have
\begin{align*}
g^T A \Delta - \frac{p-1}{p 2^p} \gamma_p(|Ax - b|, A\Delta)
&
\geq 
\norm{A\widetilde{x} - b}_p^p - \norm{A\widetilde{x} - b - A\Delta}_p^p 
\\ &
= 
\norm{A\widetilde{x} - b}_p^p - \norm{Ax^* - b}_p^p 
= 
\norm{A\widetilde{x} - b}_p^p - \text{OPT}.
\end{align*}
Therefore 
\begin{align}
\label{eq:main-res-thm-bound}
g^T A \widetilde{\Delta} - \frac{p-1}{p 2^p} \gamma_p(|Ax - b|, A\widetilde{\Delta}) \geq \frac{1}{\alpha} \left( \norm{A\widetilde{x} - b}_p^p - \text{OPT} \right)
\end{align}
Now let 
\begin{align*}
    \lambda = \left( \frac{p-2}{p 2^p}\right)^{\frac{1}{\min\{1, p-1\}}}.
\end{align*}
Note that the value of $\lambda$ only depends on $p$ and $\lambda = \Omega_p(1)$. By Lemma \ref{lemma:bound-res-value}, we have
\begin{align*}
\norm{Ax - b - \lambda A \widetilde{\Delta}}_p^p \leq \norm{Ax-b}_p^p - \lambda \left( g^T A \widetilde{\Delta} + \frac{p-1}{p 2^p} \gamma_p(|Ax - b|, A\widetilde{\Delta})\right).
\end{align*}
Therefore
\begin{align}
\label{eq:main-convergence-bound}
\nonumber\norm{Ax - b - \lambda A \widetilde{\Delta}}_p^p - \text{OPT} & \leq - \lambda \left( g^T A \widetilde{\Delta} + \frac{p-1}{p 2^p} \gamma_p(|Ax - b|, A\widetilde{\Delta})\right) + \norm{Ax-b}_p^p - \text{OPT} \\ & \nonumber \leq
-\frac{\lambda}{\alpha} \left( \norm{Ax-b}_p^p - \text{OPT} \right) + \norm{Ax-b}_p^p - \text{OPT} \\ & = 
\left( 1- \frac{\lambda}{\alpha} \right) \left( \norm{Ax-b}_p^p - \text{OPT} \right)
\end{align}
where the second inequality follows from \eqref{eq:main-res-thm-bound}. Let 
\begin{align*}
x^{(0)} = \argmin_{Cx = v} \norm{Ax - b}_2^2
\end{align*}
and $x^{(t+1)} = x^{(t)} - \lambda \widetilde{\Delta}^{(t)}$, where $\widetilde{\Delta}^{(t)}$ is an $\alpha$-approximate solution to the residual problem at point $x^{(t)}$. Then by \eqref{eq:main-convergence-bound},
\begin{align*}
\norm{Ax^{(t)} - b}_p^p - \text{OPT} \leq  \left( 1- \frac{\lambda}{\alpha} \right)^t \left(\norm{Ax^{(0)} - b}_p^p - \text{OPT}\right)  \leq \left( 1- \frac{\lambda}{\alpha} \right)^t \left( n^{(p-2)/2} - 1 \right) \text{OPT},
\end{align*}
where the second inequality follows from Lemma \ref{lemma:initial-point}.
\end{proof}

\subsection{Input Sparsity Time Algorithm via Sampling}
\label{sec:input-sparsity-via-sampling}

We start by proving Theorem \ref{thm:almost-input-sparsity-p-norm}. \cite{AdilKPS19} showed the following.
\begin{lemma}[\cite{AdilKPS19}]
\label{lemma:gamma-bound-adil}
For any $q\geq 2$, $t\geq 0$, and $y\in\mathbb{R}$, we have $\gamma_q(t,y)\geq \abs{y}^q$ and $\gamma_q(t,y) \geq \frac{q}{2} t^{q-2} y^2$.
\end{lemma}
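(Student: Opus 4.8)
The plan is to prove both inequalities by a direct case analysis on the two branches in the definition of $\gamma_q$ (Definition \ref{def:quadratic-smooth}), according to whether $|y|\le t$ or $|y|>t$, and then within the second branch reduce the remaining bound to a one-variable inequality verified by differentiation. Throughout I would use that $q\ge 2$ forces both $\frac q2\ge 1$ and $\frac q2-1\ge 0$.

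First suppose $|y|\le t$, so $\gamma_q(t,y)=\frac q2 t^{q-2}y^2$. Then the second claimed inequality $\gamma_q(t,y)\ge \frac q2 t^{q-2}y^2$ holds with equality. For the first claim I would write $\frac q2 t^{q-2}y^2\ge t^{q-2}|y|^2$ and note that since $q-2\ge 0$ and $t\ge |y|\ge 0$ we have $t^{q-2}\ge |y|^{q-2}$, hence $t^{q-2}|y|^2\ge |y|^{q-2}|y|^2=|y|^q$; the degenerate cases $t=0$ (which forces $y=0$) and $q=2$ (where $t^{q-2}=1$) are absorbed into this same chain with the usual conventions.

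Next suppose $|y|>t\ge 0$, so $\gamma_q(t,y)=|y|^q+(\frac q2-1)t^q$. Since $\frac q2-1\ge 0$ and $t^q\ge 0$, this is at least $|y|^q$, which is the first claim. For the second claim I need $|y|^q+(\frac q2-1)t^q\ge \frac q2 t^{q-2}y^2$. If $t=0$ this is trivial ($|y|^q\ge 0$ when $q>2$ with $t^{q-2}=0$, and $|y|^2\ge|y|^2$ when $q=2$), so assume $t>0$ and put $s=|y|/t>1$. Dividing through by $t^q$ (valid only because $t>0$), the inequality becomes $\phi(s):=s^q+\tfrac q2-1-\tfrac q2 s^2\ge 0$ for all $s\ge 1$. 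I would then check $\phi(1)=1+\tfrac q2-1-\tfrac q2=0$ and $\phi'(s)=q s^{q-1}-qs=qs\left(s^{q-2}-1\right)\ge 0$ for $s\ge 1$ (again using $q-2\ge 0$), so $\phi$ is nondecreasing on $[1,\infty)$ and therefore nonnegative there.

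The only real content is the final scalar inequality $s^q+\frac q2-1\ge \frac q2 s^2$ for $s\ge 1$, and even that is routine once reduced to one variable. I do not anticipate a genuine obstacle; the main thing to be careful about is the bookkeeping on the boundary cases $t=0$ and $q=2$, and making sure the homogeneity reduction (dividing by $t^q$) is invoked only when $t>0$.
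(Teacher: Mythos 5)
Your proof is correct. Note that the paper itself does not supply a proof of this lemma—it is imported by citation from Adil–Kyng–Peng–Sachdeva—so there is no in-text argument to compare against. Your two-branch case analysis on the definition of $\gamma_q$ is the natural route, the handling of the boundary cases $t=0$ and $q=2$ is appropriately careful, and the reduction of the one nontrivial inequality $|y|^q+(\tfrac q2-1)t^q\ge\tfrac q2 t^{q-2}y^2$ (for $|y|>t>0$) to $\phi(s)=s^q+\tfrac q2-1-\tfrac q2 s^2\ge 0$ on $s\ge 1$, with $\phi(1)=0$ and $\phi'(s)=qs(s^{q-2}-1)\ge 0$, is exactly right.
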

This implies the following.
\begin{lemma}
\label{lemma:gamma-bound}
For any $q\geq 2$, $t\geq 0$, and $y\in\mathbb{R}$, we have
\begin{align*}
t^{q-2} y^2 + \abs{y}^q \leq 2 \gamma_q(t,y) \leq q \left( t^{q-2} y^2 + \abs{y}^q \right).
\end{align*}
\end{lemma}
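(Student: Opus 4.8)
The plan is to get the lower bound immediately from Lemma~\ref{lemma:gamma-bound-adil} and to get the upper bound by a short two-case analysis tracking the two branches of Definition~\ref{def:quadratic-smooth}.

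For the lower bound, I would invoke both inequalities of Lemma~\ref{lemma:gamma-bound-adil}: we have $\gamma_q(t,y) \geq \abs{y}^q$ and $\gamma_q(t,y) \geq \frac{q}{2} t^{q-2} y^2 \geq t^{q-2} y^2$, where the final step uses $q \geq 2$. Adding these two estimates gives $2\gamma_q(t,y) \geq t^{q-2} y^2 + \abs{y}^q$, which is the left inequality.

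For the upper bound, I would split on whether $\abs{y}\leq t$. If $\abs{y}\leq t$, then by definition $\gamma_q(t,y)=\frac{q}{2}t^{q-2}y^2$, so $2\gamma_q(t,y)=q\,t^{q-2}y^2 \leq q\left(t^{q-2}y^2 + \abs{y}^q\right)$. If $\abs{y}> t$, then $\gamma_q(t,y)=\abs{y}^q + \left(\frac{q}{2}-1\right)t^q$; since $0\leq t < \abs{y}$ and $q-2\geq 0$ we have $(q-2)t^q \leq (q-2)\abs{y}^q$, hence $2\gamma_q(t,y) = 2\abs{y}^q + (q-2)t^q \leq q\,\abs{y}^q \leq q\left(t^{q-2}y^2+\abs{y}^q\right)$. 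Combining the two cases yields the right inequality.

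There is no real obstacle here: the only points requiring any care are using $q\geq 2$ to absorb the factor $\frac{q}{2}$ into a clean constant and to sign-control $(q/2-1)$, and handling the degenerate $t=0$ case, where $t^{q-2}$ is read (consistently with Definition~\ref{def:quadratic-smooth}) as $0$ for $q>2$ and as $1$ for $q=2$; in either case both displayed inequalities are checked directly.
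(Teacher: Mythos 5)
Your proof is correct and follows essentially the same route as the paper: the lower bound comes from Lemma~\ref{lemma:gamma-bound-adil} together with $q\geq 2$, and the upper bound is the same two-case split on whether $\abs{y}\leq t$, using $(q/2-1)t^q\leq(q/2-1)\abs{y}^q$ in the large-$\abs{y}$ branch. The only (harmless) difference is that you also spell out the $t=0$ convention, which the paper leaves implicit.
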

\begin{proof}
First note that because $q\geq 2$, $\frac{q}{2} t^{q-2} y^2 \geq t^{q-2} y^2$ and therefore by Lemma \ref{lemma:gamma-bound-adil}, $\gamma_q(t,y)\geq t^{q-2} y^2$. Moreover by Lemma \ref{lemma:gamma-bound-adil}, $\gamma_q(t,y)\geq \abs{y}^q$. Therefore 
\[
t^{q-2} y^2 + \abs{y}^q \leq 2 \gamma_q(t,y).
\]
Now note that $\gamma_q(t,y)$ is either equal to $\abs{y}^p+(\frac{q}{2}-1)t^p$ or is equal to $\frac{q}{2}t^{q-2} y^2$. We have two cases.

\textbf{Case 1.} $t\leq |y|$. In this case 
\[
\gamma_q(t,y) = \abs{y}^p+(\frac{q}{2}-1)t^p \leq \frac{q}{2}\abs{y}^p \leq \frac{q}{2}(\abs{y}^p + t^{q-2} y^2),
\]
where the first inequality follows from the case assumption.

\textbf{Case 2.} $\abs{y} <t$. In this case
\[
\gamma_q(t,y) = \frac{q}{2}t^{q-2} y^2 \leq \frac{q}{2}(\abs{y}^p + t^{q-2} y^2),
\]
The result follows from the above case analysis.
\end{proof}

\begin{theorem}[\cite{AdilKPS19}]
\label{thm:solve-res-with-gamma}
The residual problem can be solved to $O_p(\alpha^{1/(\min\{2,p\}-1)})$ approximation by solving $O_p(\log(d))$ instances of the following problem to $\alpha$-approximation:
\begin{align}
    \min_{\Delta} ~~ & \gamma_p(\abs{Ax-b}, A\Delta) \\
    \text{s.t.} ~~ & C \Delta = 0, \\
    & g^T A \Delta = z.
\end{align}
\end{theorem}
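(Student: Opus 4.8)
The plan is to run an exhaustive geometric search over the value $z = g^{\top} A\Delta$ taken by the linear functional in the residual objective, calling the $\gamma_p$-minimization oracle once per candidate $z$ and then \emph{rescaling} the returned direction. Write $c = \frac{p-1}{p 2^p}$ and $s = \min\{2,p\}$, so $s \in (1,2]$ and $s-1 = \min\{2,p\}-1$, and for $z \ge 0$ let $\nu(z)$ denote the optimum of the subproblem $\min\{\gamma_p(\abs{Ax-b}, A\Delta): C\Delta=0,\ g^{\top}A\Delta=z\}$; the $\alpha$-approximate oracle returns some feasible $\widetilde\Delta$ with $\gamma_p(\abs{Ax-b},A\widetilde\Delta) \le \alpha\,\nu(z)$. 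Since $\gamma_p$ is even in its second argument we may take the residual optimum $\Delta^{*}$ with $z^{*} := g^{\top}A\Delta^{*} \ge 0$; if $z^{*}=0$ (in particular if $g^{\top}A$ vanishes on $\ker C$) the residual objective is $\le 0$ and $\Delta=0$ is optimal, so assume $z^{*}>0$ and $\text{OPT}_{\mathrm{res}} = z^{*} - c\,\nu(z^{*}) > 0$. Note $\nu$ is the partial minimization of a convex function over affine slices, hence convex, with $\nu(0)=0$ and therefore nondecreasing on $[0,\infty)$.

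The crux is the rescaling step. Given $\widetilde\Delta$ with $g^{\top}A\widetilde\Delta = z$ and $\gamma_p(\abs{Ax-b},A\widetilde\Delta) \le \alpha\nu(z)$, apply Lemma~\ref{lemma:weird-gamma} coordinatewise to $\lambda\widetilde\Delta$ with $\lambda \in (0,1]$: since $\max\{\lambda^{2},\lambda^{p}\}=\lambda^{s}$ for $\lambda\le 1$, summing over coordinates gives $\gamma_p(\abs{Ax-b},\lambda A\widetilde\Delta) \le \lambda^{s}\alpha\,\nu(z)$, so the residual value of $\lambda\widetilde\Delta$ is at least $\lambda z - c\lambda^{s}\alpha\nu(z)$. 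Maximizing this scalar function over $\lambda\in(0,1]$ (the maximizer is $\lambda_{*}=\min\{1,(z/(cs\alpha\nu(z)))^{1/(s-1)}\}$) shows the rescaled direction attains residual value at least $\Psi(z) := \tfrac{s-1}{s}\,z\,\min\{1,\ (z/(cs\alpha\nu(z)))^{1/(s-1)}\}$.

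For the search, place candidates on a geometric grid $z \in \{2^{k}z_{\mathrm{lo}}\}$ covering $[z_{\mathrm{lo}},z_{\mathrm{hi}}]$, where, by the standard a priori magnitude bounds on the optimizer of the residual problem in terms of $\norm{g}$, $\norm{A}$ and the condition number used throughout the paper, $z^{*}$ lies in a window of width polynomial in $d$; this makes the grid have $O_p(\log d)$ points, i.e.\ $O_p(\log d)$ oracle calls, and including the trivial candidate $\Delta=0$ keeps the output nonnegative. Pick a grid point $\widehat z$ with $z^{*}/2 \le \widehat z \le z^{*}$; monotonicity gives $\nu(\widehat z)\le\nu(z^{*})$, so with $u := c\nu(z^{*})/z^{*}\in(0,1)$ we obtain $\widehat z/(cs\alpha\nu(\widehat z)) \ge 1/(2s\alpha u)$. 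If $2s\alpha u \ge 1$, then $\Psi(\widehat z) \ge \tfrac{s-1}{2s}(2s\alpha u)^{-1/(s-1)}z^{*}$, which is $\Omega_p(\alpha^{-1/(s-1)})\cdot\text{OPT}_{\mathrm{res}}$ after using $u^{1/(s-1)}(1-u)\le 1$ together with $\text{OPT}_{\mathrm{res}}=(1-u)z^{*}$; if $2s\alpha u<1$, the inner $\min$ in $\Psi(\widehat z)$ equals $1$ and $\Psi(\widehat z)\ge\tfrac{s-1}{2s}z^{*}\ge\tfrac{s-1}{2s}\text{OPT}_{\mathrm{res}}$. Returning the best candidate over the grid therefore yields an $O_p(\alpha^{1/(\min\{2,p\}-1)})$-approximate solution to the residual problem, as claimed.

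I expect the rescaling calculation to be short and mechanical; the genuinely delicate point is justifying that $O_p(\log d)$ grid points suffice, i.e.\ controlling the polynomial-width window that must contain $z^{*}=g^{\top}A\Delta^{*}$. This needs an a priori bound on $\norm{\Delta^{*}}$ (equivalently, on $\nu$ and its growth rate) in terms of the bit-complexity / condition number of the instance, and it is also where one invokes the two-sided scaling bound $\nu(2z)=\Theta_p(\nu(z))$ — again a consequence of Lemma~\ref{lemma:weird-gamma} — to guarantee that snapping $z^{*}$ to the nearest grid point perturbs $\Psi$ by at most an $O_p(1)$ factor.
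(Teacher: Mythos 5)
The paper does not supply its own proof of this theorem — it is quoted verbatim from~\cite{AdilKPS19} — so there is no in-paper argument to compare against line by line. That said, your reconstruction is essentially the argument underlying~\cite{AdilKPS19}: binary search over the linear-term value $z = g^{\top}A\Delta$, with a rescaling step justified by the homogeneity inequality of Lemma~\ref{lemma:weird-gamma}, is the known route. Your rescaling computation is correct: maximizing $\lambda z - c\lambda^{s}\alpha\nu(z)$ over $\lambda\in(0,1]$, splitting on whether the unconstrained maximizer exceeds $1$, and using monotonicity of $\nu$ to pass from $\widehat z$ to $z^{*}$ all check out, and the final two cases do deliver $\Omega_p\left(\alpha^{-1/(s-1)}\right)\cdot\text{OPT}_{\mathrm{res}}$. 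A small point worth making explicit: the algorithm cannot compute $\lambda_{*}$ exactly since it does not know $\nu(z)$, but it can instead maximize $\lambda z - c\lambda^{s}\gamma_p\left(\abs{Ax-b},A\widetilde\Delta\right)$ (a quantity it can evaluate), which only improves on your lower bound, so the argument still goes through.

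The one genuine gap is exactly the one you flag: the claim that $O_p(\log d)$ grid points suffice. This requires an a priori two-sided polynomial bound on $z^{*}$, which is not free — it comes from the normalization imposed on the residual problems in the surrounding framework (in this paper, compare the hypotheses of Theorem~\ref{thm:reg-using-res}, which require the residual optimum to be at most $1$ and $t_j\in[n^{-1/p},1]$; \cite{AdilKPS19} imposes the analogous normalization for the (P1) residual). Without that normalization the window could scale with the condition number and bit complexity rather than with $\text{poly}(d)$, so a fully rigorous version of your proof would need to import those bounds explicitly. Your aside about $\nu(2z)=\Theta_p(\nu(z))$ is correct (both directions follow from Lemma~\ref{lemma:weird-gamma}), but is not actually needed given that you already bound $\nu(\widehat z)\le\nu(z^{*})$ by monotonicity, so it can be dropped. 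With the normalization imported, the proposal is a correct and faithful reconstruction of the cited result.
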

The above result implies that we can solve the residual problem only by having a function $\widetilde{\gamma}_p$ that is within a constant factor of $\gamma_p$. We can find such a function by sampling.

\begin{theorem}[\cite{AdilS20}]
\label{thm:q-to-p}
The $p$-norm regression problem can be solved to $\epsilon$ accuracy by $\Otil(p n^{\max\{1/q, 1/(p-1)\}} \log^2(1/\epsilon))$ calls to a smoothed $q$-norm solver.
\end{theorem}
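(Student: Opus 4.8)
The plan is to layer iterative refinement three times, of which only the innermost layer is genuinely new; the outer two are exactly Theorems~\ref{thm:res-problem-number} and~\ref{thm:solve-res-with-gamma}. By Theorem~\ref{thm:res-problem-number}, solving the $p$-norm regression problem to accuracy $\epsilon$ reduces to $O_p(\log(n/\epsilon))$ calls to a constant-approximate solver for the residual problem~\eqref{eq:residual-problem}; by Theorem~\ref{thm:solve-res-with-gamma}, each residual solve reduces to $O_p(\log d)$ calls to a constant-approximate solver for the constrained smoothed-norm problem $\min\{\sum_i \gamma_p(s_i,(A\Delta)_i): C\Delta=0,\ g^\top A\Delta = z\}$. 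So it suffices to prove the \textbf{inner claim}: this constrained $\gamma_p$-minimization problem can be solved to constant factor by $\Otil_p(n^{\max\{1/q,\,1/(p-1)\}})$ calls to a smoothed $q$-norm solver, which I take to solve the analogous constrained $\gamma_q$-minimization problem. Multiplying the three layer costs (and absorbing the $O_p(\cdot)$ and $\lambda=\Omega_p(1)$ factors into the linear $p$ and the hidden $\text{poly}(p)$) then gives the stated $\Otil_p(p\,n^{\max\{1/q,1/(p-1)\}}\log^2(1/\epsilon))$, the two factors of $\log(1/\epsilon)$ coming from the outer contraction and from having to solve each inner subproblem to below-constant accuracy so that errors do not accumulate over the $O_p(\log(n/\epsilon))$ outer steps.

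For the inner claim, the approach is once more iterative refinement, now on the $\gamma_p$-objective itself, using a $\gamma_q$-problem as the local model. First rescale so that every threshold satisfies $s_i\ge1$, which is legitimate by the homogeneity $r\gamma_p(s,y)=\gamma_p(r^{1/p}s,r^{1/p}y)$ and only inflates the already polynomially bounded condition number boundedly. In this regime one has entrywise $\gamma_q(s_i,y_i)=O_q(\gamma_p(s_i,y_i))$ --- compare the quadratic pieces using $s_i^{q-2}\le s_i^{p-2}$ and $\tfrac q2\le\tfrac p2$, and the large-$|y_i|$ pieces using $|y_i|\ge s_i\ge1$ together with Lemma~\ref{lemma:gamma-bound} --- so a $\gamma_q$ model never overshoots, and its $q$-th-power part is far better conditioned than the $p$-th-power part of $\gamma_p$. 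One then forms the Bregman-type residual problem for the $\gamma_p$-objective exactly as in Lemma~\ref{lemma:bound-res-value}, replaces its $\gamma_p$ term by the $\gamma_q$ term with matching thresholds, solves the resulting $\gamma_q$-problem with the oracle, and uses its (rescaled, via Lemma~\ref{lemma:weird-gamma}) approximate minimizer as a descent step. The crux is to show that one such step makes geometric progress on the $\gamma_p$-objective: although the raw $\gamma_p$- and $\gamma_q$-objective values at a point can differ by a large polynomial in $n$, the quantity that actually drives the contraction --- the optimal residual decrease --- is distorted by only a factor $n^{\max\{1/q,1/(p-1)\}}$, so the per-step contraction factor is $1-\Omega_p(n^{-\max\{1/q,1/(p-1)\}})$ and $\Otil_p(n^{\max\{1/q,1/(p-1)\}}\log(1/\epsilon))$ steps suffice. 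The $1/q$ term measures the residual distortion from modeling a $p$-th power by a $q$-th power, quantified through the norm interpolation $\norm{\cdot}_q\le n^{1/q-1/p}\norm{\cdot}_p$; the $1/(p-1)$ term dominates when $q$ is close to $p$, where no $\gamma_q$ with $q<p$ removes the conditioning of the $p$-th-power part and the rate reverts to that of the $q=p$ base iteration.

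The main obstacle is precisely this finer distortion bound: establishing that passing from the $\gamma_p$-residual to its $\gamma_q$-surrogate costs only $n^{\max\{1/q,1/(p-1)\}}$ in the \emph{residual value} rather than the much larger factor visible at the level of objective values. This forces one to set the $\gamma_q$ model's thresholds and weights carefully from the current iterate, and to split the coordinates into those in the quadratic regime and those in the power regime of $\gamma_p$ and bound each group's contribution separately; the non-homogeneity of $\gamma_p$ --- that $\gamma_p(s,y)$ does not scale cleanly with $\norm y$ --- is exactly what blocks a one-line argument and what makes a maximum of two different exponents appear rather than a single clean rate. The remaining steps are routine given the earlier results: the two outer layers contribute only the $O_p(\log(n/\epsilon))$ and $O_p(\log d)$ multiplicative overheads already noted; one checks that the linear side-constraints $C\Delta=0$ and $g^\top A\Delta=z$ carry unchanged through all three layers and that every matrix handed to an oracle is a rescaled row-subset of the input (so that nothing is lost if the oracle itself exploits sparsity); and multiplying the per-layer costs yields the claimed bound.
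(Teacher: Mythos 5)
The paper imports Theorem~\ref{thm:q-to-p} from~\cite{AdilS20} without proof, so there is no internal argument to compare against; what follows evaluates your sketch against the machinery~\cite{AdilS20} actually uses (which the paper describes around Theorem~\ref{thm:reg-using-res}). Your three-layer architecture --- outer refinement via Theorem~\ref{thm:res-problem-number}, binary search via Theorem~\ref{thm:solve-res-with-gamma}, and an inner $\gamma_p$-to-$\gamma_q$ reduction --- is plausible in outline, but the proposal leaves the theorem's entire technical content unproved. The ``inner claim,'' that the constrained $\gamma_p$-minimization is solvable to constant factor by $\Otil_p(n^{\max\{1/q,1/(p-1)\}})$ $\gamma_q$-oracle calls, \emph{is} the theorem once the two wrapper reductions are stripped away, and you name it as the ``main obstacle'' without closing it. Worse, the one concrete fact you offer in its support goes the wrong way: after rescaling so that $s_i\ge 1$, the entrywise bound $\gamma_q(s_i,y_i)=O_q(\gamma_p(s_i,y_i))$ only says the surrogate \emph{underestimates} the objective, so minimizing it need not control $\gamma_p$. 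What the iteration count actually needs is a reverse, \emph{summed} bound $\sum_i\gamma_p(s_i,y_i)\le C\cdot\sum_i\gamma_q(s_i,y_i)$ with $C$ of order $n^{\max\{1/q,1/(p-1)\}}$, and no such bound holds entrywise (for $|y_i|\gg 1$ the ratio $\gamma_p/\gamma_q$ grows like $|y_i|^{p-q}$). Establishing the summed bound and converting it into a rate is a potential/multiplicative-weights argument in~\cite{AdilS20}, not the uniform per-step contraction $1-\Omega_p(n^{-\max\{1/q,1/(p-1)\}})$ you posit.

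Two further accounting problems. First, the $p$-dependence is wrong. Theorem~\ref{thm:res-problem-number} converges with step size $\lambda=\bigl(\tfrac{p-2}{p\,2^p}\bigr)^{1/\min\{1,p-1\}}=\Theta(2^{-p})$, so the outer loop needs $\Theta(\alpha\,2^{p}\log(n/\epsilon))$ iterations, and your ``absorb the $O_p(\cdot)$ and $\lambda=\Omega_p(1)$ factors into the linear $p$'' step is not legitimate: $\lambda$ is not $\Omega_p(1)$ under the paper's own definition of $\Otil_p$ (which explicitly hides $2^p$), whereas the theorem promises a clean linear factor of $p$. Making the dependence on $p$ polynomial is precisely the improvement of~\cite{AdilS20} over~\cite{AdilKPS19}, achieved by a homotopy over exponents $2^{-k}p,2^{-k+1}p,\dots,p$ so that one never takes a single refinement step at a large exponent; your proposal omits this homotopy entirely. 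Second, the $\log^2(1/\epsilon)$ is unjustified: Theorems~\ref{thm:res-problem-number} and~\ref{thm:solve-res-with-gamma} are constructed so that \emph{constant}-accuracy inner solves already give geometric outer convergence (errors do not accumulate), so the second $\log$ does not come from over-solving inner subproblems. It comes from the extra level of iteration introduced by the homotopy over exponents, the same mechanism you are missing.
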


\almostInputSparsityPNorm*
\begin{proof}
Picking $q<p-1$, by Theorem \ref{thm:q-to-p}, we only need to solve $\Otil(pn^{1/q}\log^2(1/\epsilon))$ many $q$-norm problems. 
Let $t = \abs{Ax-b}$ and $y = A\Delta$. Let $T$ be a diagonal matrix such that its $T_{ii}$ is equal to $t_i^{(q-2)/2}$.
For the first part of the result, note that by taking Lemma \ref{lemma:gamma-bound} on all the entries and summing them together, we have
\begin{align*}
\norm{Ty}_2^2 + \norm{y}_q^q \leq2\gamma_q(t, y) \leq q(\norm{Ty}_2^2 + \norm{y}_q^q).
\end{align*}
Now let $\tau_i$ be the leverage scores of the matrix $TA$ and $w_i$ be the $q$-norm Lewis weights of $A$. Then sampling (and rescaling) $\Otil(d^{q/2})$ rows according to $p_i = \max\{\tau_i, w_i\}$ to obtain $\tilde{y}$ and $\tilde{T}$, by Lemmas \ref{lemma:sampling-leverage} and \ref{lemma:sampling-lewis}, with high probability, we have
\begin{align*}
    \frac{1}{2}(\norm{Ty}_2^2 + \norm{y}_q^q) \leq \norm{\tilde{T}\tilde{y}}_2^2 + \norm{\tilde{y}}_q^q \leq \frac{3}{2} (\norm{Ty}_2^2 + \norm{y}_q^q).
\end{align*}
therefore 
\[
\frac{2}{3}(\norm{\tilde{T}\tilde{y}}_2^2 + \norm{\tilde{y}}_q^q) \leq 2\gamma_q(t, y) \leq 2q\norm{\tilde{T}\tilde{y}}_2^2 + \norm{\tilde{y}}_q^q.
\]
Therefore $\gamma_q(t,y)$ is within an $O_q(1)$ factor of $\norm{\tilde{T}\tilde{y}}_2^2 + \norm{\tilde{y}}_q^q$.

Therefore the algorithm is to solve $\Otil_p(n^{1/q})$ many $q$-norm problems. To solve each such problem, first compute the leverage scores of $TA$ and Lewis weights of $A$ (by Theorems \ref{lemma:sampling-leverage} and \ref{lemma:sampling-lewis}, this can be done in $\Otil_p(\text{nnz}(A) + d^{q/2+C})$ time). Then sample (and rescale) and solve the sampled problem to within a constant approximation (this step can be done in $\Otil_q(d^{q/2+1})$ time.)

For the second part of the result note that the dual of $\min_{A^\top x=b} \norm{x}_p$ is $\max_{\norm{A y}_{p/(p-1)} \leq 1 } b^\top y$ which is equivalent to solving
\begin{align}
\label{eq:dualProb}
\min_{b^\top y = 1} \norm{A y}_{p/(p-1)}.
\end{align}
Solving this problem to polynomial accuracy (which we can do by the first part of the result) is equivalent to solving $\min_{A^\top x=b} \norm{x}_p$ to polynomial accuracy --- see Section 7.2 of \cite{AdilKPS19}. Therefore applying the first part of the result to \eqref{eq:dualProb}, the result follows.
\end{proof}

Now we prove our main input sparsity result.

\inputSparsityPNorm*
\begin{proof}
By Theorem \ref{thm:main-sampling}, there is an algorithm that with high probability returns a vector of weights $w$ with $\Otil(d^3)$ nonzeros such that for any $x$ in a polynomial range
\begin{align*}
\frac{1}{4} \sum_{i\in [n]} \gamma_q(t_i,y_i) \leq \sum_{i \in [n]} w_i \gamma_q(t_i,y_i) \leq \frac{5}{4} \sum_{i\in [n]} \gamma_q(t_i,y_i).
\end{align*}
Therefore to solve the residual problem to a constant approximation, we only need to solve the residual problem for an $\Otil(d^3)\times d$ matrix. This can be done in $\Otil(d^4)$ time. Moreover, note that by Theorems \ref{thm:res-problem-number} and \ref{thm:solve-res-with-gamma}, we only need to solve a constant number of such residual problems. The $\text{nnz}(A)$ term comes from the sampling algorithm which needs access to the leverage scores of the matrix --- see Algorithm 1 and Theorem \ref{thm:main-sampling}.
\end{proof}

\section{Regression faster than matrix multiplication}

In this section, we first show how to find a spectral approximation of a matrix $A$ with $\tilde{O}(d)$ rows using a fast sparse linear solver. Then we use this to show that linear regression can be solved faster than matrix multiplication. We also use this result to find spectral approximations for the $p$-norm regression problem for $p$ close to two. Finally, in Section~\ref{subsec:p-norm}, we show how to use inverse maintenance together with sparse linear solvers to go below the matrix multiplication runtime. We hope that these three applications illustrate the versatility and intricacies of using recently developed sparse linear solvers.

\subsection{Spectral Approximation}
\label{subsec:spectral_approx}

Our approach to finding a constant-factor spectral approximation of a matrix $A$ is to first find a ``good'' overestimate of leverage scores of rows of $A$. Lemma \ref{lemma:uniformSampling} clearly demonstrate that if we find a vector of overestimates $u$ such that $\|u\|_1=O(d)$, then with $\tilde{O}(d)$ samples from rows of $A$, we can recover a spectral approximation of $A$ with a high probability. Before discussing how to find such a vector of overestimates, we need the following definitions and results.

\begin{definition}
Let $u\in\mathbb{R}^n_{\geq 0}$. Let $\alpha$ and $c$ be positive constants. Let $p_i:=\min\{1,\alpha\cdot u_i c\log d\}$. We define the function $\textsc{Sample}(u,\alpha, c)$ that outputs a random diagonal matrix $S$ where each element $S_{ii}$ is $\frac{1}{\sqrt{p_i}}$ with probability $p_i$ and zero otherwise.
\end{definition}

In order to prove Theorem \ref{thm:spectral_approx}, we show that one can find a good overestimate of leverage scores in $O(\text{nnz}(A)+d^\theta)$ time using the following lemma from \cite{CohenLMMPS15}.

\begin{lemma}[\cite{CohenLMMPS15}]
\label{lemma:uniformSampling}
Let $0<\mu<1$, and $u$ be a vector of leverage score overestimates, i.e., $\tau_i(A)\leq u_i$. Let $\mu$ be a sampling rate parameter and let $c$ be a fixed positive constant. Let $S=\textsc{Sample}(u,\mu^{-2}, c)$.
Then $S$ has at most $2 \norm{u}_1 \mu^{-2} c \log d$ nonzero entries and $\frac{1}{\sqrt{1+\mu}}S A$ is a $\left(\frac{1+\mu}{1-\mu}\right)$-spectral approximation for $A$ with probability at least $1-d^{-c/3} - (3/4)^d$.
\end{lemma}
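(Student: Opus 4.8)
The plan is to prove the two assertions separately: the bound on $\text{nnz}(S)$ via a scalar multiplicative Chernoff bound, and the spectral-approximation guarantee via a matrix Chernoff (Rudelson-type) bound applied to the rank-one sampling process. Throughout, by the full-rank convention of the paper I treat $A^\top A$ as invertible on $\mathbb{R}^d$.

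First, by the definition of $\textsc{Sample}$, each $S_{ii}$ equals $1/\sqrt{p_i}$ with probability $p_i=\min\{1,\mu^{-2}u_i c\log d\}$ and is $0$ otherwise, independently over $i$; hence $\text{nnz}(S)$ is a sum of independent indicators with mean $\sum_i p_i \le \mu^{-2}c\log d\sum_i u_i = \mu^{-2}c\log d\,\norm{u}_1$. Since $u_i\ge\tau_i(A)$, $\tau_i(A)\le 1$, $\mu<1$ and $c\log d\ge 1$ (so $\mu^{-2}c\log d\ge 1$), and $\sum_i\tau_i(A)=\mathrm{rank}(A)=d$ (cf.\ Lemma \ref{lemma:sum-leverage-bound}), we also have $p_i\ge\tau_i(A)$ and therefore $\mathbb{E}[\text{nnz}(S)]\ge d$. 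A multiplicative Chernoff bound then gives $\Pr[\text{nnz}(S)\ge 2\,\mathbb{E}[\text{nnz}(S)]]\le e^{-\mathbb{E}[\text{nnz}(S)]/3}\le e^{-d/3}\le (3/4)^d$, which is the stated bound on $\text{nnz}(S)$ outside a $(3/4)^d$-probability event.

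For the spectral bound, write $\widetilde{A}=\tfrac{1}{\sqrt{1+\mu}}SA$, so $\widetilde{A}^\top\widetilde{A}=\tfrac{1}{1+\mu}A^\top S^\top SA=\tfrac{1}{1+\mu}\sum_i S_{ii}^2 a_ia_i^\top$. Put $X_i:=S_{ii}^2a_ia_i^\top$, equal to $\tfrac{1}{p_i}a_ia_i^\top$ with probability $p_i$ and $0$ otherwise, so that the $X_i$ are independent with $\mathbb{E}[\sum_i X_i]=\sum_i a_ia_i^\top=A^\top A$. Whitening, let $Y_i:=(A^\top A)^{-1/2}X_i(A^\top A)^{-1/2}$; then $\mathbb{E}[\sum_i Y_i]=I_d$, each $Y_i\succeq 0$, and on the event that $i$ is sampled $\norm{Y_i}=\tfrac{1}{p_i}a_i^\top(A^\top A)^{-1}a_i=\tfrac{\tau_i(A)}{p_i}$. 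Indices with $p_i=1$ give a deterministic $Y_i$; for the rest $p_i=\mu^{-2}u_ic\log d$ and, using $\tau_i(A)\le u_i$, we get $\norm{Y_i}\le \tfrac{u_i}{\mu^{-2}u_i c\log d}=\tfrac{\mu^2}{c\log d}=:R$. The two-sided matrix Chernoff bound (Tropp), with deviation $\mu$, mean eigenvalue $1$, and per-term bound $R$, then yields
\[
\Pr\!\left[\,(1-\mu)I_d\not\preceq\textstyle\sum_i Y_i\ \text{ or }\ \textstyle\sum_i Y_i\not\preceq(1+\mu)I_d\,\right]\ \le\ 2d\,\exp\!\left(-\tfrac{\mu^2}{3R}\right)=2d\cdot d^{-c/3},
\]
and the factor $d$ out front is absorbed by taking the logarithmic oversampling a sufficiently large multiple of $\log d$ (equivalently, relabeling the constant $c$), making the failure probability at most $d^{-c/3}$. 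On the complementary event, conjugating $(1-\mu)I_d\preceq\sum_i Y_i\preceq(1+\mu)I_d$ by $(A^\top A)^{1/2}$ gives $(1-\mu)A^\top A\preceq A^\top S^\top SA\preceq(1+\mu)A^\top A$, and dividing by $1+\mu$ gives $\tfrac{1-\mu}{1+\mu}A^\top A\preceq\widetilde{A}^\top\widetilde{A}\preceq A^\top A$, i.e.\ $\widetilde{A}$ is a $\tfrac{1+\mu}{1-\mu}$-spectral approximation of $A$ in the sense of Definition \ref{def:spectral-approx}. A union bound over the two failure events gives success probability $1-d^{-c/3}-(3/4)^d$.

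The step I expect to be the crux is pinning down the per-sample operator-norm bound $R=\mu^2/(c\log d)$: one must separate the saturated indices $p_i=1$ (deterministic, harmless to concentration) from the genuinely random ones, and it is exactly there that the hypothesis $\tau_i(A)\le u_i$ is invoked to cancel the $u_i$ appearing in $p_i$. The only other point needing care is to use a form of the matrix Chernoff inequality that is simultaneously two-sided and carries the correct ambient-dimension factor, so that the failure probability degrades only as $d\cdot e^{-\Theta(c\log d)}$; the reduction to the full-rank case and the Loewner-order manipulations are routine.
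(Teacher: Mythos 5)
The paper does not prove this lemma itself but cites it from \cite{CohenLMMPS15}, so I am comparing your argument to the standard one. Your outline is correct and is essentially the same route: bound $\text{nnz}(S)$ by a scalar Chernoff bound, and prove the spectral bound by whitening the rank-one sum $\sum_i S_{ii}^2 a_i a_i^\top$ and invoking a matrix concentration inequality. The $\text{nnz}$ bound, the observation that $p_i \ge \tau_i(A)$ so that $\mathbb{E}[\text{nnz}(S)] \ge d$, the whitening to $Y_i = (A^\top A)^{-1/2} X_i (A^\top A)^{-1/2}$, the computation $\|Y_i\| = \tau_i/p_i$, and the conjugation back to $A^\top A$ to obtain a $\frac{1+\mu}{1-\mu}$-spectral approximation in the sense of Definition~\ref{def:spectral-approx} are all correct.

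The one step that is not quite closed is exactly the one you flag as the crux. Tropp's multiplicative matrix Chernoff bound requires a \emph{uniform} almost-sure bound $\lambda_{\max}(Y_i) \le R$ over \emph{all} terms in the sum; but for a saturated index $i$ with $p_i = 1$ the (deterministic) term has $\|Y_i\| = \tau_i(A) \le u_i$, which can be as large as $1$ and in particular can exceed $R = \mu^2/(c\log d)$. So one cannot simply quote the two-sided Chernoff bound with ``mean eigenvalue $1$ and per-term bound $R$.'' Saying the deterministic terms are ``harmless to concentration'' is morally right but is precisely the content that needs to be proved, and the theorem as you state it does not license it. The clean repair is to use the matrix Bernstein inequality on the \emph{centered} sum over only the unsaturated indices, $Z := \sum_{i\,:\,p_i<1} (Y_i - \mathbb{E}[Y_i])$: there $\|Y_i - \mathbb{E}[Y_i]\| \le R$ (since for $p_i<1$ one has $u_i < R$, hence $\tau_i < R$ and also $\tau_i/p_i \le R$), and $\sum \mathrm{Var}(Y_i) \preceq R \sum \mathbb{E}[Y_i] \preceq R\, I$, giving $\Pr[\|Z\| \ge \mu] \le 2d\exp(-\Omega(\mu^2/R)) = 2d \cdot d^{-\Omega(c)}$. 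Since $\sum_i Y_i - I = Z$ identically (the saturated terms cancel against their expectations), this yields $\|\sum_i Y_i - I\| \le \mu$, and the rest of your argument goes through unchanged. Your absorption of the extra factor $2d$ into the constant $c$ is standard and fine.
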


The following results are useful.

\begin{lemma}
\label{lemma:sum-leverage-bound}
For a matrix $A\in\mathbb{R}^{n\times d}$,
$
\sum_{i=1}^n \tau_i(A) \leq d.
$
\end{lemma}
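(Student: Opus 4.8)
The plan is to convert the sum of leverage scores into the trace of an orthogonal projector, which manifestly has trace at most $d$. First I would recall that by definition $\tau_i(A) = a_i^\top (A^\top A)^+ a_i$, where $a_i \in \mathbb{R}^d$ is the $i$-th row of $A$ written as a column vector. Since $\tau_i(A)$ is a scalar, it equals its own trace, so by cyclicity of the trace $\tau_i(A) = \mathrm{Tr}\!\left((A^\top A)^+ a_i a_i^\top\right)$.

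Next I would sum this identity over $i \in [n]$, pull the trace and the fixed matrix $(A^\top A)^+$ outside the sum by linearity, and use the outer-product decomposition $\sum_{i=1}^n a_i a_i^\top = A^\top A$. This yields
\[
\sum_{i=1}^n \tau_i(A) = \mathrm{Tr}\!\left((A^\top A)^+\, A^\top A\right).
\]
Finally I would observe that $P := (A^\top A)^+ (A^\top A)$ is the orthogonal projector onto the row space of $A$ (equivalently, the column space of $A^\top A$): it is symmetric and idempotent, with range equal to $\mathrm{range}(A^\top A)$. Hence $P$ is diagonalizable with all eigenvalues in $\{0,1\}$, so $\mathrm{Tr}(P) = \mathrm{rank}(P) = \mathrm{rank}(A) \le d$, which is exactly the claimed bound. (Under the paper's standing full-rank assumption, $(A^\top A)^+ = (A^\top A)^{-1}$ and $P$ is simply the $d \times d$ identity, giving $\mathrm{Tr}(P) = d$ and the bound with equality.)

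There is no real obstacle here; the argument is a few lines of linear algebra. The only step that warrants an explicit sentence of justification is that an orthogonal projector has trace equal to its rank, which follows immediately from its eigenvalues being $0$ or $1$.
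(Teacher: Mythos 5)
Your proof is correct and is the standard argument. Note that the paper does not actually supply a proof of this lemma: it is stated as Foster's theorem with a citation to \cite{foster1953stochastic}, so there is no paper proof to compare against. Your derivation, $\sum_i \tau_i(A) = \mathrm{Tr}\bigl((A^\top A)^+ A^\top A\bigr) = \mathrm{rank}(A) \le d$, is exactly the textbook justification one would insert if the citation were to be replaced by a self-contained argument, and the parenthetical remark about the full-rank case giving equality is a nice touch consistent with the paper's standing assumption.
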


The proof of the following lemma is similar to that of Theorem 3 of \cite{CohenLMMPS15}.

\begin{lemma}
\label{lemma:undersample}
Let $u$ be a vector of leverage score overestimates. For some undersampling factor $\alpha\in(0,1]$, let $S=\sqrt{3\alpha/4}\cdot \textsc{Sample}(u,9\alpha,c)$, where $c$ is a constant. Let $\tau_i^{S A}(A)\leq v_i\leq (1+\beta) \tau_i^{S A}(A)$, for all $i\in[n]$, where $\beta \geq 0$. Let $u'_i=\min\{v_i, u_i\}$. Then with a probability of $1-d^{-c/3}-(3/4)^d$, $u'_i$ is a leverage score overestimate, for all $i\in[n]$, $\sum_{i=1}^n u'_i\leq \frac{3d(1+\beta)}{\alpha}$, and the number of nonzeros of $S$ is $O(\alpha \norm{u}_1 \log d)$.
\end{lemma}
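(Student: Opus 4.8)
The plan is to follow the template of the proof of Theorem~3 of~\cite{CohenLMMPS15}: deduce the entire lemma from a single \emph{two-sided} crude spectral bound for the undersampled matrix, namely that with the stated probability
\[
\frac{\alpha}{3}\, A^\top A \;\preceq\; (SA)^\top (SA) \;\preceq\; A^\top A .
\]
The scaling $\sqrt{3\alpha/4}$ and the oversampling factor $9\alpha$ inside $\textsc{Sample}$ are chosen precisely so that $\mathbb{E}[(SA)^\top(SA)] = \tfrac{3\alpha}{4} A^\top A$ and so that the constants $\alpha/3$ and $1$ sandwich this mean by fixed multiplicative factors ($\tfrac49$ and $\tfrac{4}{3\alpha}$).

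To establish the bound, I would first handle the \emph{upper} inequality, which is the robust half. Writing $(SA)^\top(SA) = \sum_i S_{ii}^2 a_i a_i^\top$, note that a sampled row with $p_i<1$ has $S_{ii}^2 = \tfrac{3\alpha/4}{p_i} = \tfrac{1}{12 u_i c\log d}$, so its $A^\top A$-normalized contribution has operator norm $S_{ii}^2\, a_i^\top (A^\top A)^{-1} a_i = \tfrac{\tau_i(A)}{12 u_i c\log d}\le \tfrac{1}{12 c\log d}$, using $u_i\ge\tau_i(A)$. A matrix Chernoff bound, exactly as in the proof of Lemma~\ref{lemma:uniformSampling}, then gives $(SA)^\top(SA)\preceq A^\top A$ except with probability $d^{-c/3}$. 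The \emph{lower} inequality --- that $(SA)^\top(SA)$ keeps its smallest eigenvalue at a constant fraction of its mean $\tfrac{3\alpha}{4}$, and in particular that $\mathrm{rank}(SA)=\mathrm{rank}(A)=d$ so that all generalized leverage scores below are finite --- is the delicate half, and is exactly where undersampling makes the matrix-Chernoff lower tail weak; this is the step that forces the crude $(3/4)^d$ term into the failure probability (rather than only $d^{-c/3}$), and I would obtain it by the same minimum-eigenvalue/rank-preservation argument used in~\cite{CohenLMMPS15}. Separately, $\mathbb{E}[\text{nnz}(S)]=\sum_i p_i\le 9c\log d\cdot\alpha\|u\|_1$, so a scalar Chernoff bound gives $\text{nnz}(S)=O(\alpha\|u\|_1\log d)$ within the same budget; a union bound over these (at most two) events yields the claimed probability. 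This lower spectral bound is the only genuinely non-routine ingredient.

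Given the spectral bound, the three conclusions are pure bookkeeping. Since $A$ and $SA$ both have trivial kernel, the Loewner inequalities invert: $(SA)^\top(SA)\preceq A^\top A$ gives $((SA)^\top(SA))^{-1}\succeq (A^\top A)^{-1}$, hence $\tau_i^{SA}(A)=a_i^\top((SA)^\top(SA))^{-1}a_i\ge a_i^\top(A^\top A)^{-1}a_i=\tau_i(A)$; combined with $v_i\ge\tau_i^{SA}(A)$ and the hypothesis $u_i\ge\tau_i(A)$ this gives $u_i'=\min\{v_i,u_i\}\ge\tau_i(A)$, so $u'$ is a vector of overestimates. For the sum, $(SA)^\top(SA)\succeq\tfrac{\alpha}{3}A^\top A$ gives $((SA)^\top(SA))^{-1}\preceq\tfrac{3}{\alpha}(A^\top A)^{-1}$, so using the trace identity and $\sum_i a_i a_i^\top=A^\top A$,
\[
\sum_{i=1}^n \tau_i^{SA}(A)=\mathrm{tr}\!\left(((SA)^\top(SA))^{-1}A^\top A\right)\le \frac{3}{\alpha}\,\mathrm{tr}(I_d)=\frac{3d}{\alpha},
\]
the generalized-leverage-score analogue of Foster's theorem (Lemma~\ref{lemma:sum-leverage-bound}); hence $\sum_i u_i'\le\sum_i v_i\le(1+\beta)\sum_i\tau_i^{SA}(A)\le\tfrac{3d(1+\beta)}{\alpha}$. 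The bound on $\text{nnz}(S)$ was shown above, completing the proof.
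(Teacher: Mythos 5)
Your proof follows the same route as the paper's: establish a two-sided crude spectral sandwich for $(SA)^\top SA$, read off the overestimate property from the upper inequality, deduce the sum bound from the lower inequality together with Foster's theorem (your trace computation and the paper's pointwise bound $\tau_i^{SA}(A) \leq \tfrac{2}{\alpha}\tau_i(A)$ are equivalent one-liners), and get the nnz bound from the expected sample size; the constants you obtain ($\alpha/3$ vs.\ the paper's $\alpha/2$) both land inside the claimed $3d(1+\beta)/\alpha$. The one step you explicitly leave as a black box --- the lower spectral inequality --- is exactly where the paper has a short trick you did not spot: it observes that $\textsc{Sample}(u,9\alpha,c)$ and $\textsc{Sample}(\alpha u,9,c)$ are identical in distribution, so Lemma~\ref{lemma:uniformSampling} can simply be reinvoked at the \emph{fixed} rate $\mu=1/3$ to give the two-sided bound $\tfrac{2}{3}A^\top A\preceq (S'A)^\top S'A\preceq \tfrac{4}{3}A^\top A$ directly, with no separate minimum-eigenvalue or rank-preservation argument (note that rank preservation alone would not supply the quantitative lower constant you need). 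A small correction to your commentary: the $(3/4)^d$ term is already present in Lemma~\ref{lemma:uniformSampling} for ordinary, non-undersampled leverage-score sampling, so it should not be read as a price exacted specifically by the weak lower Chernoff tail at small $\alpha$.
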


\begin{proof}
Let $S'=\frac{1}{\sqrt{3\alpha/4}}S$. Note that the number of nonzeros of $S$ and $S'$ are equal. By Lemma \ref{lemma:uniformSampling}, $\frac{1}{\sqrt{1+1/(3\sqrt{\alpha})}} S' A$ is a $\left( \frac{\sqrt{1+1/(3\sqrt{\alpha})}}{\sqrt{1-1/(3\sqrt{\alpha})}} \right)$-spectral approximation of $A$ and $S'$ has at most $18\alpha c \norm{u}_1 \log d = O(\alpha \norm{u}_1 \log d)$ with a probability of at least $1-d^{-c/3} - (3/4)^d$. Therefore
with a probability of $1-d^{-c/3} - (3/4)^d$,
\[
\frac{1}{(1+1/(3\sqrt{\alpha}))(3\alpha/4)} A^\top S^2 A=\frac{1}{1+1/(3\sqrt{\alpha})}A^\top (S')^2 A \preceq A^\top A.
\]
Now note that $(1+1/(3\sqrt{\alpha}))(3\alpha/4)\leq 1$ for $\alpha\in(0,1]$. Therefore $A^\top S^2 A \preceq A^\top A$. Hence, for all $i\in [n]$, $\tau_i(A) \leq \tau_i^{SA}(A)$. Therefore with a probability of $1-d^{-c/3}-(3/4)^d$, for all $i\in [n]$, $u'_i$ is a leverage score overestimate. 

Now we bound $\sum_{i=1}^n u'_i$. Note that $\textsc{Sample}(u, 9\alpha, c)$ and $\textsc{Sample}(\alpha u, 9, c)$ are equal in distribution. Therefore by Lemma \ref{lemma:uniformSampling}, $S' A$ is a $\frac{1}{\sqrt{4/3}}$-spectral approximation of $A$ with probability of $1-d^{-c/3}-(3/4)^d$ --- note that this does not add to the probability of failure because one of $1/\sqrt{4/3}$ and $\left( \frac{\sqrt{1+1/(3\sqrt{\alpha})}}{\sqrt{1-1/(3\sqrt{\alpha})}} \right)$ is smaller than the other one and we can use the probability of success of the tighter bound which would imply the other one. Therefore 
\[
\frac{1}{2} A^\top A \preceq \frac{3}{4} A^\top (S')^2 A = \frac{1}{\alpha} A^\top S^2 A
\]
Hence, for all $i\in [n]$ such that $a_i \perp \text{ker}(SA)$,
\[
\tau_i^{SA}(A) \leq \frac{2}{\alpha} \tau_i(A),
\]
Now we have
\begin{align*}
\sum_{i=1}^n u'_i =\sum_{i=1}^n \min\{u_i,v_i\} = \sum_{i=1}^n v_i \leq \sum_{i=1}^n (1+\beta) \tau_i^{SA}(A) \leq \sum_{i=1}^n (1+\beta) \frac{2}{\alpha} \tau_i(A) \leq \frac{2d(1+\beta)}{\alpha},
\end{align*}
where the last inequality follows from Lemma \ref{lemma:sum-leverage-bound}.
\end{proof}

Now, we are equipped to give a high-level view of our algorithm and prove Theorem \ref{thm:spectral_approx}. The high-level description of the algorithm is as the following.
\begin{enumerate}
    \item Start from the vector of overestimates $u=\vec{1}$.
    \item Repeat the following process for $\log(n/d)$ iterations.
    \begin{enumerate}
        \item Sample $\tilde{O}(d)$ rows from $A$ based on the vector of overestimates $u$ to form $\overline{A}$.
        \item Update the vector of overestimates of leverage scores using $\overline{A}$, i.e.,
        \[
        u_i \leftarrow a_i^\top (\overline{A}^\top \overline{A})^+ a_i.
        \]
    \end{enumerate}
    \item Return $\tilde{O}(d)$ rows of $A$ sampled based on $u$.
\end{enumerate}

We use Lemma \ref{lemma:undersample} and choose our parameters so that in each iteration of this algorithm, we cut the $\ell_1$ norm of $u$ by a half. So after $\log(n/d)$ iterations, the $\ell_1$ norm of $u$ is about $d$ which means $u$ is a good vector of overestimates of leverage scores. If we perform step (b) of the algorithm naively, then the cost of each update is $d^2$ and total cost of each iteration is $n d^2$. However one can use random projection to do such updates more efficiently.

\begin{proof}[Proof of Theorem \ref{thm:spectral_approx}]
We show Algorithm \ref{alg:spectral_approx} finds a spectral approximation in time $O(\text{nnz}(A) + d^\theta)$. The technique follows that of \cite{CohenLMMPS15} that finds the leverage scores of a matrix in a recursive fashion by updating the overestimates. We first prove the correctness of the algorithm assuming that all the randomized steps have succeeded. We then analyze the running time. Finally we bound the failure probability.

\textbf{Correctness.} Algorithm \ref{alg:spectral_approx} starts with a vector of leverage score overestimates $u^{(0)} = \mathbbm{1}_{[n]}$. Therefore $\norm{u^{(0)}}_1 = n$. To compute the generalized leverage scores we need to compute the following
\begin{align*}
a_i^\top ((S^{(i)}A)^\top (S^{(i)}A))^+ a_i & = a_i^\top ((S^{(i)}A)^\top (S^{(i)}A))^+ ((S^{(i)}A)^\top (S^{(i)}A)) ((S^{(i)}A)^\top (S^{(i)}A))^+ a_i \\ & = \norm{(S^{(i)}A) ((S^{(i)}A)^\top (S^{(i)}A))^+ a_i}_2^2
\end{align*}
We compute the pseudo-inverse using Theorem \ref{thm:sparse_inverse}.
By Lemma \ref{lemma:random-projection} and Theorem \ref{thm:sparse_inverse}, we have
\begin{align*}
\norm{(S^{(i)}A) ((S^{(i)}A)^\top (S^{(i)}A))^+ a_i}_2^2 & \leq \left(1+\frac{1}{n^9-1}\right) \frac{1}{0.9 r}\norm{ G (S^{(i)} A) Y^{(i)} (Z^{(i)})^\top a_i }_2 \\ & \leq \left(1+\frac{1}{n^{8}}\right)\left(\frac{11}{9}\right) \norm{(S^{(i)}A) ((S^{(i)}A)^\top (S^{(i)}A))^+ a_i}_2^2 \\ & \leq 
2 \norm{(S^{(i)}A) ((S^{(i)}A)^\top (S^{(i)}A))^+ a_i}_2^2
\end{align*}

Therefore by Lemma \ref{lemma:undersample}, $u_j^{(i)}$'s are leverage score overestimates and 
\[\norm{u^{(i)}}_1 \leq \frac{6d}{12d/\norm{u^{(i-1)}}_1} = \frac{\norm{u^{(i-1)}}_1}{2}\]
as long as $\norm{u^{(i-1)}}_1 \geq 12d$. Hence $\norm{u^{(z)}}_1 = O(d)$ and $u^{(z)}$ is a vector of leverage score overestimates.
Thus by Lemma \ref{lemma:uniformSampling}, $\frac{1}{\sqrt{3/2}} SA$ is a $3$-spectral approximation of $A$ with high probability.

\textbf{Running time.}
By Lemma \ref{lemma:undersample}, $S^{(i)}$ has
\[
O\left(\alpha_i \norm{u_{i-1}}_1 \log d \right) = O\left(\frac{12d}{\norm{u_{i-1}}_1} \norm{u_{i-1}}_1 \log d\right) = O(d\log d)
\]
nonzeros. Therefore $S^{(i)} A$ has $O(d\log d)$ nonzero rows. Hence $S^{(i)} A$ has $\tilde{O} (\text{nnz}_d(A))$ nonzero entries. Although we want to find an inverse for $(S^{(i)} A)^\top (S^{(i)} A)$, we do not perform this matrix multiplication because it is too costly. Note that by Theorem \ref{thm:sparse_inverse}, we only need to be able to do matrix-vector multiplication to find the inverse operator. Note that because the algorithm only has a logarithmic number of iterations, we only need to bound the cost of each iteration. Sampling $S^{(i)}$ given the vector $u^{(i-1)}$ can be done in $O(n)$ time. By Theorem \ref{thm:sparse_inverse}, finding the sparse inverse operator $Z_{(S^{(i)} A)^\top (S^{(i)} A)}$ takes $\tilde{O}((d\cdot \text{nnz}_d(A)\cdot m + d^2\cdot m^3+ (\frac{d}{m})^\omega m^2)\log(\kappa))$. By Theorem \ref{thm:sparse_inverse}, $M^{(i)}$ can be computed in time $O((\text{nnz}_d(A) \cdot m + d^2) (\log d + \log(n/d)))$ because $G$ has $O(\log d + \log(n/d))$ number of rows. Note that the entries of $M^{(i)}$ only need $\tilde{O}(1)$ bits because the number of bits required for the entries of $Z_{(S^{(i)} A)^\top (S^{(i)} A)}$ after the multiplication is $\tilde{O}(1)$. Therefore for each $j\in[n]$, the norm $\norm{M^{(i)} a_j}$ can be computed in $\tilde{O}(\text{nnz}(a_j))$, where $a_j$ is the $j$'th row of $A$. So the leverage score overestimate can be updated in time $\tilde{O}(\text{nnz}(A))$ in each iteration. Hence the total running time of the algorithm is 
\[
\Otil\left(\text{nnz}\left(A\right) + \left(d\cdot \text{nnz}_d\left(A\right)\cdot m + d^2\cdot m^3+ \left(\frac{d}{m}\right)^\omega m^2\right)\log\left(\kappa\right)\right).
\]

\textbf{Failure probability.} In each iteration of the for loop in Algorithm \ref{alg:spectral_approx}, there are three sources of randomness: 1) sampling the matrix $S^{(i)}$; 2) the sparse linear system solver to find the inverse of $(S^{(i)} A)^\top (S^{(i)} A)$; 3) and the random JL projection to update the leverage score estimates. We bound the failure probability of each of these steps. Finally, the algorithm samples $S$ and returns $SA$ as the spectral approximation. We bound the failure probability of this step as well.

In each iteration, the probability that $S^{(i)} A$ is not a spectral approximation of $A$ or it does not have $O(d\log d)$ rows is less than $d^{-c/3}+(3/4)^d$. The probability that the sparse inverse method cannot does not find an inverse with the desired property is less than $d^{-10}$. By Lemma \ref{lemma:random-projection} and the union bound, the probability that the projected vectors (with the Guassian matrix) do not have a norm in the right interval is less than
\[
n \cdot 2e^{-(\epsilon^2-\epsilon^3)r/4} = n \cdot 2e^{-(0.01-0.001)(4000/9)\cdot (\log d + \log (n/d))/4} = 2 n \cdot \frac{d}{n} \cdot \frac{1}{d^{11}} = 2 d^{-10}
\]
Therefore, with a logarithmic number of iterations, the total probability failure is $\tilde{O}(d^{-10} + (3/4){d})$.
\end{proof}

\RestyleAlgo{algoruled}
\IncMargin{0.15cm}
\setcounter{algocf}{1}
\begin{algorithm}[h]
\footnotesize
\textbf{Input:} $A\in \mathbb{R}^{n\times d}$, $b\in \mathbb{R}^n$, $\epsilon>0$, m\\
$u^{(0)} \leftarrow \mathbbm{1}_{[n]}$, $c \leftarrow 30$, $z \leftarrow \log(n/d)$, $r\leftarrow (4000/9)\cdot (11\log d + \log(n/d))$\\
\For {$i=1,\ldots,z$}{
$\alpha_i \leftarrow \frac{12d}{\norm{u^{(i-1)}}_1}$\\
$S^{(i)} \leftarrow \sqrt{3\alpha_i / 4} \cdot \textsc{Sample}(u^{(i-1)}, 9\alpha_i, c)$\\
Find a sparse inverse operator $Z_{(S^{(i)} A)^\top (S^{(i)} A)}$ with $m$ blocks such that $\norm{Z_{(S^{(i)} A)^\top (S^{(i)} A)} - ((S^{(i)} A)^\top (S^{(i)} A))^{-1}}_F \leq \kappa^{-10} d^{-10}$ with high probability via Theorem \ref{thm:sparse_inverse}.\\
$G\leftarrow$ random $r\times d$ Gaussian matrix.\\
$M^{(i)} \leftarrow G (S^{(i)} A) Z_{(S^{(i)} A)^\top (S^{(i)} A)}$\\
\ForAll{$j\in [n]$}{
$u^{(i)}_j \leftarrow \min\{(1+\frac{1}{n^9-1})\frac{1}{0.9r}\norm{M^{(i)} a_j}_2^2, u^{(i-1)}_j\}$ 
}}
$S \leftarrow \frac{1}{\sqrt{3/2}}\textsc{Sample}(u^{(z)}, 4, c)$\\
\Return $SA$
\caption{Spectral Approximation}
\label{alg:spectral_approx}
\end{algorithm}

\subsection{Tall Linear Regression (\texorpdfstring{$p=2$}{p=2})}
\label{subsec:lin-reg}

In the case of linear regression, the idea is to use Algorithm \ref{alg:spectral_approx} to find a spectral approximation of the matrix and then we can find an inverse of the spectral approximation using Theorem \ref{thm:sparse_inverse}. Then we use Richardson's algorithm (Lemma \ref{lemma:richardson}) to solve the regression problem. The high-level view of the algorithm is as the following.

\begin{enumerate}
    \item Find a $\lambda$-spectral approximation $\widetilde{A}$ of the matrix $A$.
    \item Set $x\leftarrow \vec{0}$.
    \item Repeat the following for $O(\lambda \log \left( \frac{\kappa \norm{b}_2}{\epsilon OPT} \right))$ iterations.
    \[
    x \leftarrow x - (\lambda\widetilde{A}^\top \widetilde{A})^{-1} (A^\top A x - A^\top b)
    \]
\end{enumerate}

For Step 1 of this algorithm, we use Algorithm \ref{alg:spectral_approx} to find the spectral approximation. We show that this algorithm finds the desired solution.

\begin{proof}[Proof of Theorem \ref{thm:linear-regression}]
We assume that $\widetilde{A}$ is found using Algorithm \ref{alg:spectral_approx} and therefore it has $\tilde{O}(d)$ rows. Moreover $M=\lambda\widetilde{A}^\top \widetilde{A}$ for the Richardson's iterations. Hence $A^\top A \preceq M = \lambda \widetilde{A}^\top \widetilde{A} \preceq \lambda A^\top A$. Therefore by Lemma \ref{lemma:richardson}, after $k$ steps, we have
\[
\norm{x^{(k)}-x^*}_M \leq \left(1-\frac{1}{\lambda}\right)^k\norm{x^*}_M
\]
Now we need to show that for the right choice of $k$, we have 
\[
\norm{Ax^{(k)}-b}_2^2 \leq (1+\epsilon) \norm{Ax^* - b}_2^2.
\]
To do so, we show that it is enough to pick $k$ such that
\[
\norm{x^{(k)}-x^*}^2_{M} \leq \epsilon \norm{Ax^*-b}_2^2
\]
Note that $A^\top b= A^\top A x^*$. Therefore
\begin{align}
\label{eq:optimal-obj-lin-reg}
\nonumber 
\norm{Ax^* - b}_2^2 
& 
= 
(Ax^* - b)^\top (Ax^* - b) 
= 
(x^*)^\top A^\top A x^* + b^\top b - 2(x^*)^\top A^\top b 
\\ &
= 
(x^*)^\top A^\top A x^* + b^\top b - 2(x^*)^\top A^\top A x^* 
= 
b^\top b - (x^*)^\top A^\top A x^*
\end{align}
Hence
\[
(x^*)^\top A^\top b = (x^*)^\top A^\top A x^* = \norm{Ax^*}_2^2 \leq \norm{b}_2^2.
\]
Moreover because $M \preceq \lambda A^\top A$,
\begin{align*}
\norm{ x^{(0)} - x^{*}}_M^2 & = b^\top A M A^\top b \leq \lambda b^\top A A^\top A A^\top b = \lambda \norm{A A^\top b}_2^2
\end{align*}
Therefore by Lemma \ref{lemma:richardson},
\begin{align}
\norm{x^{(k)}-x^*}_M \leq (1-\frac{1}{\lambda})^k \norm{x^{(0)} - x^*}_M \leq (1-\frac{1}{\lambda})^k \lambda \norm{A A^\top b}_2^2
\end{align}
Moreover
\begin{align*}
\norm{A x^{(k)} - b}_2^2 & = (A x^{(k)} - b)^\top (A x^{(k)} - b) = (x^{(k)})^\top A^\top A x^{(k)} + b^\top b - 2(x^{(k)})^\top A^\top b \\ & = (x^{(k)})^\top A^\top A x^{(k)} + b^\top b - 2(x^{(k)})^\top A^\top A x^*.
\end{align*}
Therefore if $\norm{x^{(k)}-x^*}^2_{M} \leq \epsilon \norm{Ax^*-b}_2^2$, because $A^\top A \preceq M$, then
\begin{align*}
(x^{(k)}-x^*)^\top A^\top A (x^{(k)}-x^*) \leq \norm{x^{(k)}-x^*}_M^2 \leq \epsilon\norm{Ax^* - b}_2^2   
\end{align*}
Hence by \eqref{eq:optimal-obj-lin-reg},
\begin{align*}
(x^{(k)})^\top A^\top A x^{(k)} - 2(x^{(k)})^\top A^\top A x^* + (x^*)^\top A^\top A x^* \leq \epsilon (b^\top b - (x^*)^\top A^\top A x^*)
\end{align*}
Thus
\begin{align*}
(x^{(k)})^\top A^\top A x^{(k)} - 2(x^{(k)})^\top A^\top A x^* + b^\top b \leq (1+\epsilon) (b^\top b - (x^*)^\top A^\top A x^*)
\end{align*}
Therefore by $A^\top b = A^\top A x^*$,
\begin{align*}
\norm{Ax^{(k)}-b}_2^2 \leq (1+\epsilon) \norm{Ax^* - b}_2^2
\end{align*}
Thus it is enough to set the number of iterations to $k \geq \lambda \log \left( \frac{\lambda\norm{AA^\top b}_2^2}{\epsilon \norm{Ax^*-b}_2^2} \right)$. Moreover note that each iteration of the algorithm takes $\tilde{O}(\text{nnz}(A) + \text{nnz}_d(A) \cdot m + d^2)$ time. The other terms of the running time come from Theorem \ref{thm:spectral_approx} to find a spectral approximation. Therefore the total running time of the algorithm is 
\[
\tilde{O} \left(
\left( \text{nnz}\left(A\right) + d \cdot \text{nnz}_d(A) \cdot m + d^2 \cdot m^3 + \left( \frac{d}{m} \right)^\omega m^2 \right)
\cdot \log^{2} (\kappa) \log \left( \frac{\lambda\norm{AA^\top b}_2^2}{\epsilon \norm{Ax^*-b}_2^2} \right) \right).
\]
The result follows by picking $m=\min\{ d \cdot \text{nnz}_d(A)^{\frac{1}{\omega-1}}, d^{\frac{\omega-2}{\omega+1}}\}$.
\end{proof}

\subsection{\texorpdfstring{$p$}{p}-Norm Regression for $p$ close to $2$}
\label{subsec:p-close-to-two}

\cite{bubeck2018homotopy} showed that the following for the $\gamma_p$ function (Definition \ref{def:quadratic-smooth}).

\begin{lemma}[\cite{bubeck2018homotopy}]
The function $\gamma$ has the following properties.
\begin{enumerate}
    \item $\gamma_p(0,x)=|x|^p$. 
    \item $\gamma_p(t,\cdot)$ is quadratic on $[-t,t]$; 
    \item  $\gamma_p$ is in $C^1$.
\end{enumerate}
\end{lemma}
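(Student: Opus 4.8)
The plan is to read all three properties directly off the piecewise formula in Definition~\ref{def:quadratic-smooth}, handling the two branches separately and then checking how they glue along the seam $|x| = t$.

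For the first property I would set $t = 0$. If $x \neq 0$ then $|x| > 0 = t$, so the second branch applies and gives $|x|^p + (\tfrac p2 - 1)\cdot 0^p = |x|^p$; if $x = 0$ then $|x| \le t$ and the first branch gives $\tfrac p2 \cdot 0^{p-2}\cdot 0^2 = 0 = |0|^p$, interpreting this product as $0$ for every $p > 1$ (it is controlled by $\tfrac p2 t^{p-2} x^2 \le \tfrac p2 t^p$ under $|x|\le t$). The second property is immediate: for $x \in [-t,t]$ one has $\gamma_p(t,x) = \tfrac p2 t^{p-2} x^2$, which is a quadratic in $x$ with leading coefficient $\tfrac p2 t^{p-2}$ and no linear or constant term.

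The third property is the only one needing a computation, and I would organize it as follows. First, $\gamma_p(t,\cdot)$ is even in $x$, so it suffices to verify smoothness for $x \ge 0$; on the open regions $\{0 \le x < t\}$ and $\{x > t > 0\}$ the function is a composition of smooth maps, so the only places to check are the seam $x = t > 0$ and the degenerate line $t = 0$. At $x = t > 0$ both branches evaluate to $\tfrac p2 t^p$ (namely $\tfrac p2 t^{p-2} t^2$ and $t^p + (\tfrac p2 - 1)t^p$), so $\gamma_p$ is continuous there. Differentiating, $\partial_x$ of the first branch is $p t^{p-2} x \to p t^{p-1}$ and of the second is $p x^{p-1} \to p t^{p-1}$ as $x \downarrow t$; likewise $\partial_t$ of the first branch is $\tfrac{p(p-2)}{2} t^{p-3} x^2 \to \tfrac{p(p-2)}{2} t^{p-1}$ and of the second is $(\tfrac p2 - 1)p\, t^{p-1} = \tfrac{p(p-2)}{2}t^{p-1}$, so both first partials extend continuously across the seam and $\gamma_p \in C^1$ on $\{t > 0\}$. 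For the line $t = 0$, the first property gives $\gamma_p(0,x) = |x|^p$, whose gradient $(0,\, p\,\sgn{x}\,|x|^{p-1})$ is continuous since $p > 1$; and using $|x| \le t$ on the first branch, the quadratic piece and its derivatives $\tfrac p2 t^{p-2}x^2$, $p t^{p-2} x$, $\tfrac{p(p-2)}{2}t^{p-3}x^2$ are all bounded by a constant times $t^{p-1}$ or $t^p$, hence vanish as $t \downarrow 0$ and agree with the limits of the second-branch derivatives. This gives joint $C^1$ regularity, and in particular $C^1$-ness in $x$ for each fixed $t \ge 0$.

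The only subtle point — and it is minor — is this behavior at $t = 0$ when $1 < p < 2$, where $t^{p-2}$ and $t^{p-3}$ blow up in isolation: the resolution is exactly that on the first branch one always has $|x| \le t$, which dominates these negative powers and forces the glued function, together with its first derivatives, to extend continuously down to $t = 0$. Everything else is routine substitution.
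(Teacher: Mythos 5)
The paper states this lemma as a citation to \cite{bubeck2018homotopy} and does not provide its own proof, so there is nothing internal to compare against; your direct verification from the piecewise definition is correct and is essentially the only reasonable way to prove it. All the seam computations check out: at $|x|=t>0$ both branches give the value $\tfrac p2 t^p$, the $x$-partials $pt^{p-2}x$ and $p\,\sgn(x)|x|^{p-1}$ agree, and the $t$-partials $\tfrac{p(p-2)}{2}t^{p-3}x^2$ and $(\tfrac p2-1)p\,t^{p-1}$ agree; and you correctly handle the potentially singular negative powers of $t$ near $t=0$ by using $|x|\le t$ on the quadratic branch to dominate them, so the partials extend continuously (to $0$) along $t=0$ since $p>1$. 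One could nitpick that ``\emph{$\gamma_p$ is in $C^1$}'' in the source reference most likely means $C^1$ in the second argument for each fixed $t$, which is a weaker claim than the joint regularity you establish, but proving the stronger joint statement costs nothing extra and is arguably the more useful form.
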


Using this function, \cite{bubeck2018homotopy} developed a homotopy based algorithm for solving the $p$-norm regression problem --- see Algorithm \ref{alg:tall-p-norm}. The algorithm starts with a large $t$ and decreases $t$ over a logarithmic number of phases. The reason that this algorithm works is that the following ``quadratic extension'' is well-conditioned on a box ($l\leq x\leq u$) around the optimal solution of $t_k$ which includes the optimal solution for $t_{k+1}$.

\begin{definition}
\label{def:quad-ext}
\[
f_{t,\ell,u}(s)=\begin{cases}
\gamma_p(t,s) & \text{ if } \ell \leq s \leq u \\
\gamma_p(t,u)+\frac{d}{ds}\gamma_p(t,u)\cdot(s-u)+\frac{1}{2}\frac{d^2}{ds^2}\gamma_p(t,u)\cdot(s-u)^2 & \text{ if } u \leq s \\
\gamma_p(t,\ell)+\frac{d}{ds}\gamma_p(t,\ell)\cdot(s-\ell)+\frac{1}{2}\frac{d^2}{ds^2}\gamma_p(t,\ell)\cdot(s-\ell)^2 & \text{ if } s \leq \ell
\end{cases}
\]
\end{definition}

Now we are equipped to give a high-level description of the algorithm for tall $p$-norms.

\begin{enumerate}
    \item Set $t_0 = 2 \norm{b}_2$ and $x(t_0) = \argmin_x \norm{Ax-b}_2^2$.
    \item Repeat the following for $k=1,\ldots, O(1)\cdot \log (n p t_0^p/\epsilon)$
    \begin{enumerate}
        \item Set $t_k = \left(1-\frac{1}{2p}\right) t_{k-1}$
        \item Set $x(t_k)=\argmin_x \gamma_p(t_k, Ax-b)$
    \end{enumerate}
\end{enumerate}

First of all if $t_k$ is small enough, then $x(t_k)$ is close to the optimal solution of $\min \norm{Ax-b}_p^p$ (see Lemma 5 of \cite{bubeck2018homotopy}).
Second, the crux of the above algorithm is to implement Step (b). In general the condition number of $\gamma$ function can be large. Therefore instead of minimizing $\gamma_p(t_k,Ax-b)$ itself, we minimize the quadratic extension $f_{t_k,l,u}(Ax-b)$ for the appropriate bounds $l,u$ (see Definition \ref{def:quad-ext}). The functions $\gamma$ and $f_{t,l,u}$ have unique minimizers because of their strict convexity property. Therefore if we pick $l$ and $u$ such that $l \leq A(\argmin_x \gamma_p(t_k,Ax-b))-b \leq u$, then finding the minimum of $f_{t_k,l,u}(Ax-b)$ is equivalent to finding the minimum of $\gamma_p(t_k,Ax-b)$. Moreover the condition number of $f_{t_k,l,u}(Ax-b)$ is equal to the condition number of $\gamma_p(t_k,Ax-b)$ restricted to the set $\{x:l\leq Ax-b\leq u\}$. We pick $l$ and $u$ that determine a neighborhood around $Ax(t_{k-1})-b$ that contains $Ax(t_k)-b$. The algorithm works because $Ax(t_{k-1})-b$ and $Ax(t_{k})-b$ are close to each other. Therefore $\{x:l\leq Ax(t_{k-1})-b\leq u\}$, that contains $Ax(t_{k})-b$. is small enough so that $\gamma_p(t_k, Ax-b)$ on this set has a small condition number --- see Section 2.2 of \cite{bubeck2018homotopy}. 

Finally, note that $f_{t_k,l,u}$ is well-conditioned with respect to $Ax-b$ and not necessarily with respect to $x$. Therefore we need to use a preconditioner $A P^{(k)}$ such that $f_{t_k,l,u}(A P^{(k)} y -b)$ is well-conditioned with respect to $y$. For this we pick $P^{(k)} = (\widetilde{A}^\top \widetilde{A})^+ \widetilde{A}^\top$ where $\widetilde{A}$ is a constant-factor spectral approximation (with $\tilde{O}(d)$ rows) of $\sqrt{D^{(k)}}A$ and $D^{(k)}$ is a diagonal matrix such that $D^{(k)}_{ii} = \frac{p-1}{2} \max \{t_k^{p/2}, |(Ax^{(k)}-b)_i|^{p/2}-\text{sign}(p-2)\gamma \}^{2-4/p}$ --- see Sections 2.2 and 3 of \cite{bubeck2018homotopy} for details. To find the spectral approximation, we use Theorem \ref{thm:spectral_approx} and to find the inverse of $\widetilde{A}^\top \widetilde{A}$, we use the sparse linear system solver (Theorem \ref{thm:sparse_inverse}).

In summary, \cite{bubeck2018homotopy} proves the following result.

\begin{theorem}[\cite{bubeck2018homotopy}]
Algorithm \ref{alg:tall-p-norm} returns $\overline{x}$ such that
\[
\norm{A\overline{x} - b}_p^p \leq \epsilon + \min_{x\in\mathbb{R}^d}  \norm{Ax-b}_p^p,
\]
\end{theorem}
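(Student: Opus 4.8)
The plan is to follow the homotopy analysis of \cite{bubeck2018homotopy}, verifying the three ingredients that make Algorithm~\ref{alg:tall-p-norm} correct: a consistent initialization, a correct and efficient ``one-phase'' update, and a termination criterion under which a near-minimizer of the $\gamma_p$ objective is a near-minimizer of $\norm{A\cdot-b}_p^p$. Throughout I write $\gamma_p(t,Ax-b)$ for $\sum_i \gamma_p\bigl(t,(Ax-b)_i\bigr)$ and set $x^* = \argmin_x \norm{Ax-b}_p^p$, $\mathrm{OPT} = \norm{Ax^*-b}_p^p$.

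\emph{Initialization.} Since $x(t_0) = \argmin_x \norm{Ax-b}_2^2$ satisfies $\norm{Ax(t_0)-b}_\infty \le \norm{Ax(t_0)-b}_2 \le \norm{b}_2 < t_0 = 2\norm{b}_2$, on the relevant range the scalar map $\gamma_p(t_0,\cdot)$ coincides with the pure quadratic $s \mapsto \tfrac{p}{2} t_0^{p-2} s^2$; hence $x(t_0) = \argmin_x \gamma_p(t_0, Ax-b)$, so the algorithm starts exactly at the central-path point for $t_0$.

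\emph{One phase.} Given the path point $x(t_{k-1})$, I want $x(t_k)$ for $t_k = (1-\tfrac{1}{2p})t_{k-1}$. The structural fact I would import from \cite{bubeck2018homotopy} is that, because $t_k/t_{k-1} \in [1-\tfrac{1}{2p},1]$, the residual $Ax(t_k)-b$ lies in an explicit box $\{\ell \le s \le u\}$ around $Ax(t_{k-1})-b$ on which $\gamma_p(t_k,\cdot)$, after the diagonal rescaling by $D^{(k)}$ described in Section~\ref{subsec:p-close-to-two}, is strictly convex with $\Otil_p(1)$ condition number. This is exactly what the quadratic extension $f_{t_k,\ell,u}$ of Definition~\ref{def:quad-ext} is for: it agrees with $\gamma_p(t_k,\cdot)$ on the box, is globally well-conditioned, and --- since both functions are strictly convex and the box contains their (unique, common) minimizer --- $\argmin_x f_{t_k,\ell,u}(Ax-b) = \argmin_x \gamma_p(t_k,Ax-b) = x(t_k)$. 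I would then precondition: with $\widetilde A$ a constant-factor spectral approximation of $\sqrt{D^{(k)}}A$ from Theorem~\ref{thm:spectral_approx} and $P^{(k)} = (\widetilde A^\top \widetilde A)^+ \widetilde A^\top$, the map $y \mapsto f_{t_k,\ell,u}(AP^{(k)}y-b)$ is well-conditioned in $y$, so $\Otil_p(1)$ preconditioned gradient steps --- each a multiply by $(\widetilde A^\top \widetilde A)^+$, supplied by Theorem~\ref{thm:sparse_inverse} --- take us from $x(t_{k-1})$ to within any prescribed inverse-polynomial distance of $x(t_k)$, which is accurate enough to begin phase $k+1$; the propagation of these inner errors along the path is accounted for exactly as in \cite{bubeck2018homotopy}.

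\emph{Termination and conclusion.} A two-case check ($|s| \le t$ and $|s| > t$) gives $\bigl|\gamma_p(t,s) - |s|^p\bigr| \le p\,t^p$ for all $s$, hence $\bigl|\gamma_p(t,Ax-b) - \norm{Ax-b}_p^p\bigr| \le n p\, t^p$ for every $x$. Running $K = O\bigl(\log(np\, t_0^p/\epsilon)\bigr)$ phases drives $t_K^p \le \epsilon/(8np)$. Letting $\overline{x} = x(t_K)$ be an additive-$\tfrac{\epsilon}{8}$ minimizer of $\gamma_p(t_K,A\cdot-b)$ (guaranteed by $O(\log(1/\epsilon))$ extra inner iterations, starting from the crude gap bound that the path moves slowly),
\[
\norm{A\overline{x}-b}_p^p \;\le\; \gamma_p(t_K,A\overline{x}-b) + \tfrac{\epsilon}{8} \;\le\; \gamma_p(t_K,Ax^*-b) + \tfrac{\epsilon}{4} \;\le\; \norm{Ax^*-b}_p^p + \tfrac{3\epsilon}{8} \;\le\; \mathrm{OPT} + \epsilon,
\]
where the first and third inequalities use $\bigl|\gamma_p(t_K,\cdot) - \norm{\cdot}_p^p\bigr| \le np\,t_K^p \le \tfrac{\epsilon}{8}$ and the second is the (approximate) optimality of $\overline{x}$. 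This is the claimed bound.

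\emph{Main obstacle.} The crux is the ``one-phase'' claim: that the box around $Ax(t_{k-1})-b$ simultaneously (i) contains $Ax(t_k)-b$ and (ii) is small enough for the rescaled $\gamma_p(t_k,\cdot)$ to have bounded condition number on it. This is the technical heart of \cite{bubeck2018homotopy} and requires quantitative control of how $\argmin_x \gamma_p(t,Ax-b)$ moves when $t$ changes by a $(1\pm\tfrac{1}{2p})$ factor, via strong-convexity and smoothness estimates for the Hessian of $\gamma_p$; it is also where the step size $1-\tfrac{1}{2p}$ (rather than a constant factor) is essential. I would cite this estimate verbatim. The only change relative to \cite{bubeck2018homotopy} is that the spectral approximation and inverse application are now realized by the sparse routines of Theorems~\ref{thm:spectral_approx} and~\ref{thm:sparse_inverse}, which changes the running time but not correctness.
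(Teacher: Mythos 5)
Your sketch correctly reconstructs the homotopy argument of \cite{bubeck2018homotopy}, which is exactly what the paper does here: it states this theorem as a citation to that reference without reproving it. The initialization argument, the one-phase well-conditioning via the quadratic extension $f_{t,\ell,u}$, the termination bound $\bigl|\gamma_p(t,s)-|s|^p\bigr|\le p\,t^p$, and the resulting chain of additive-error inequalities all match the cited analysis, so you are taking essentially the same approach.
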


\RestyleAlgo{algoruled}
\IncMargin{0.15cm}
\begin{algorithm}[h]
\textbf{Input:} $A\in \mathbb{R}^{n\times d}$, $b\in \mathbb{R}^n$, $p\in (1,\infty)$, $\epsilon>0$\\
$t_0 \leftarrow 2 \norm{b}_2$, $\gamma  \leftarrow \left( 1 + \frac{p^2}{2(p-1)} \sqrt{n} \right) t^{p/2}$\\
Find a $2$-spectral approximation $\widetilde{A}$ of $A$ and set $x^{(0)} = \argmin_x \norm {Ax-b}_2^2$ \tcp*{The solution to this regression problem is found using Theorem \ref{thm:linear-regression}, i.e., Richardson's iterations that use $2 \widetilde{A}^\top \widetilde{A}$ as the preconditioner.}
    $z\leftarrow O(1)\cdot \log (np t_0^p /\epsilon)$\\
    \ForAll{$k=0,1,\ldots z$}{
    $t_{k+1} \leftarrow (1-\frac{1}{2p}) t_k$\\
    Set $D^{(k)}$ to a diagonal matrix where $D^{(k)}_{ii} = \frac{p-1}{2} \max \{t_k^{p/2}, |(Ax^{(k)}-b)_i|^{p/2}-\text{sign}(p-2)\gamma \}^{2-4/p}$\\
Find a constant-factor spectral approximation $\widetilde{A}^{(k)}$, with $\tilde{O}(d)$ rows of $\sqrt{D^{(k)}} A$.\\
$P^{(k)} \leftarrow ((\widetilde{A}^{(k)})^\top \widetilde{A}^{(k)})^{+} (\widetilde{A}^{(k)})^\top$ \tcp*{Use the sparse linear system solver (Theorem \ref{thm:sparse_inverse}) to find the inverse.}
    Given $x^{(k)},P^{(k)}$, find $y^{(k+1)}$ by minimizing the following function using mini-batch Katyusha \cite{allen2017katyusha} on
        \begin{align}
        g^{(k)}(y) := \sum_{i=1}^n f_{t_{k+1}, (|(Ax^{(k)}-b)_i|^{p/2}-\gamma)^{2/p}, (|(Ax^{(k)}-b)_i|^{p/2}+\gamma)^{2/p}}((AP^{(k)} y-b)_i),
        \end{align}
        For each iteration of mini-batch Katyusha, we compute the following, for a set $S\subseteq [n]$ of size $\eta$,
        \[
        \sum_{i\in S} \nabla F_i(x) = P^{(k)} \sum_{i\in S} f'_{(1-h)t_k, (|s_i(t)|^{p/2}-\gamma)^{2/p}, (|s_i(t)|^{p/2}+\gamma)^{2/p}}(a_i\cdot P^{(k)} y - b_i) a_i
        \]
     Compute $x^{(k+1)}$ by the formula $x^{(k+1)} \leftarrow P^{(k)}y^{(k+1)}$.\\
    }
    \Return $x^{(z)}$
\caption{Tall $p$-norm regression}
\label{alg:tall-p-norm}
\end{algorithm}

We show that, using the sparse inverse solver to find the initial solution and preconditioners in this algorithm improves the running time of the algorithm to better than $d^\omega$.

\pCloseToTwo*
\begin{proof}
In this proof, for brevity, we show the running time of the sparse linear system solver with $d^\theta$. 
We show that Algorithm \ref{alg:tall-p-norm} runs in the mentioned time complexity.
First note that, by Lemma \ref{lemma:richardson} and Theorem \ref{thm:spectral_approx}, we can compute $x^{(0)}$ in time $\tilde{O}(\text{nnz}(A) + d^\theta)$. Moreover, $D^{(k)}$ can be computed in $O_p(n)$ because it is a diagonal matrix. 

We can find this spectral approximation by Theorem \ref{thm:spectral_approx} in time $\tilde{O}(\text{nnz}(A)+d^{\theta})$. 
Note that to multiply a vector $y$ with $P^{(k)}$ we need to first multiply by $(\widetilde{A}^{(k)})^\top$, which takes $\tilde{O}(\text{nnz}_d(A))$ because $\widetilde{A}^{(k)}$ contains $\tilde{O}(d)$ (scaled) rows of $\sqrt{D^{(k)}}A$. We then have to multiply $(\widetilde{A}^{(k)})^\top y$ with $((\widetilde{A}^{(k)})^\top \widetilde{A}^{(k)})^+$. By Theorem \ref{thm:sparse_inverse}, 
this process takes 
$\tilde{O}(\text{nnz}_d(A) \cdot m + d^2)$ time. Then for a set $S\subseteq [n]$ of size $\eta$, $\sum_{i\in S} \nabla F_i(x)$ can be computed in time $O(\text{nnz}(A) \frac{\eta}{n})$ after computing $P^{(k)} y$. Therefore each iteration of mini-batch Katyusha takes $O(\text{nnz}(A) \frac{\eta}{n}+ \text{nnz}_d(A) \cdot m + d^2)$ in expectation. Note that $\text{nnz}_d(A) \cdot m \leq d^2$ by assumption. Moreover in each phase of the algorithm, we pay a preprocessing time of $O(\text{nnz}(A)+d^{\theta})$ to find the linear operator for $P^{(k)}$, i.e, the inverse operator of $(\widetilde{A}^{(k)})^\top \widetilde{A}^{(k)}$. Moreover as discussed in \cite{bubeck2018homotopy} the smoothness and strong convexity parameters of the function are equal to $L=O_p(n^{|1-2/p|})$ and $\sigma=\Omega(1)$, respectively. Also the sum of smoothness parameters of $F_i$ functions is equal to $\sum_{i\in [n]} L_i=O_p(n^{|1-2/p|}d)$. Let $\kappa=L/\sigma$ be the condition number of the Hessian. Then mini-batch Katyusha takes $\tilde{O}_p(\frac{n}{\eta} +\sqrt{\kappa} +\frac{1}{\eta}\sqrt{n\kappa d})$ iterations. Let $Z=\text{nnz}(A)$. Then the total running time of the algorithm is
\begin{align*}
& \tilde{O}_p\left[\left(\frac{n}{\eta}+\sqrt{\kappa} +\frac{1}{\eta} \sqrt{n\kappa d}\right)\left(Z \frac{\eta}{n}+ d^2\right)+Z+d^{\theta} \right]
\\ & = \tilde{O}_p\left[Z\left(1+\sqrt{\frac{\kappa d}{n}}\right) + d^{\theta} + d^2 \sqrt{\kappa} + \frac{d^2\sqrt{n}}{\eta}\left(\sqrt{\kappa d} +\sqrt{n}\right)+ Z\sqrt{\kappa}\frac{\eta}{n} \right]
\end{align*}
We now optimize $\eta$ over $\frac{     d^2\sqrt{n}}{\eta}(\sqrt{\kappa d} +\sqrt{n})+ Z\sqrt{\kappa}\frac{\eta}{n}$. If $\kappa d \geq n$, then we choose $\eta=\lceil \sqrt{\frac{n^{3/2} d^{5/2}}{Z}} \rceil$. Then 
\[
\frac{d^2\sqrt{n}}{\eta}\left(\sqrt{\kappa d} +\sqrt{n}\right)+ Z\sqrt{\kappa}\frac{\eta}{n} = O\left(\sqrt{Z} d^{5/4} n^{-1/4} \sqrt{\kappa}\right) \leq O\left(Z \sqrt{\frac{\kappa d}{n}} + d^2 \sqrt{\kappa}\right),
\]
where the inequality follows from the AM-GM inequality. Therefore the total cost is
\[
\tilde{O}_p\left[Z\left(1+\sqrt{\frac{\kappa d}{n}}\right) + d^{\theta} + d^2 \sqrt{\kappa} \right]
\]
Because $\kappa\leq O(n^{|1-2/p|})$, $\kappa d \geq n$ implies $n\leq O_p(d^{p/2})$ if $p\geq 2$, and $n\leq O_p(d^{\frac{1}{2-2/p}})$ if $p\leq 2$. Therefore for $p\geq 2$, we have
\begin{align*}
\tilde{O}_p\left[Z\left(1+\sqrt{\frac{\kappa d}{n}}\right) + d^{\theta} + d^2 \sqrt{\kappa} \right] & = \tilde{O}_p\left[Z+nd\sqrt{\frac{\kappa d}{n}} + d^{\theta} + n^{1/2-1/p} d^2  \right] \\ & \leq
\tilde{O}_p\left[Z+d^{p/2+1} + d^{\theta} + n + (d^2)^{\frac{1}{1/2+1/p}}  \right] \\ & =
\tilde{O}_p\left[Z+d^{p/2+1} + d^{\theta} + n +  d^{\frac{4}{1+2/p}}  \right] \\ & = \tilde{O}_p\left[Z+ n + d^{p/2+1} + d^{\theta}  \right],
\end{align*}
where the first inequality follows from the weighted AM-GM. The second inequality follows from $\frac{4}{1+2/p}\leq \frac{p}{2}+1$.
For $p\leq 2$, similarly, we can show that
\begin{align*}
    \tilde{O}_p\left[Z\left(1+\sqrt{\frac{\kappa d}{n}}\right) + d^{\theta} + d^2 \sqrt{\kappa} \right] & = \tilde{O}_p\left[Z + n + d^{\theta} + d^{\frac{p}{2(p-1)}+1} \right]
\end{align*}
\noindent
If $\kappa d \leq n$, then we choose $\eta=\lceil \sqrt{\frac{ n^2 d^2}{Z \sqrt{\kappa}}} \rceil$ and then
\[
\frac{ d^2\sqrt{n}}{\eta}\left(\sqrt{\kappa d} +\sqrt{n}\right)+ Z\sqrt{\kappa}\frac{\eta}{n} =O\left(\sqrt{Z} d \kappa^{1/4}\right) \leq O\left(Z + d^2 \sqrt{\kappa}\right).
\]
Therefore the cost is 
\[
\tilde{O}_p\left[Z\left(1+\sqrt{\frac{\kappa d}{n}}\right) + d^{\theta} + d^2 \sqrt{\kappa} \right] = \tilde{O}_p\left(Z+d^{\theta} + n +  d^{\frac{4}{1+2/p}}\right) \leq \tilde{O}_p\left(Z+d^{\theta} + n +  d^{\frac{p}{2}+1}\right),
\]
where the last inequality follows from $\frac{4}{1+2/p}\leq \frac{p}{2}+1$.

Therefore the total running time of the algorithm is
\[
\tilde{O} \left(
\left(
\text{nnz}\left(A\right) + d^{0.5\max\left\{p,\frac{p}{p-1}\right\}+1}
+ d \cdot \text{nnz}_d(A) \cdot m + d^2 \cdot m^3 + \left( \frac{d}{m} \right)^\omega m^2 \right)
\log^{2} \left(\kappa/\epsilon\right) \log \left( \frac{\kappa\norm{b}_2}{\epsilon OPT}
\right)
\right),
\]
The result follows by picking $m=\min\{ d \cdot \text{nnz}_d(A)^{\frac{1}{\omega-1}}, d^{\frac{\omega-2}{\omega+1}}\}$.
\end{proof}

\subsection{\texorpdfstring{$p$}{p}-Norm Regression for any $p>1$ faster than matrix multiplication}
\label{subsec:p-norm}

In this section, we consider (P2) which is of the form
\[
\min_{A^\top x = b} \norm{x}_p^p.
\]

\cite{AdilKPS19} showed that for numbers $x,\Delta\in\mathbb{R}$,
\[
|x|^p + \Delta \frac{d}{d x}|x|^p + \frac{p-1}{p^{2^p}} \gamma_p(|x|,\Delta) \leq |x+\Delta|^p \leq |x|^p + \Delta \frac{d}{d x}|x|^p + 2^p \gamma_p(|x|,\Delta),
\]
where $\gamma$ is the quadratically smoothed $p$-norm function --- see Definition \ref{def:quadratic-smooth}. Note that both $2^p$ and $\frac{p-1}{p^{2^p}}$ are $O_p(1)$.
This inequality suggests the following iterative scheme for $p$-norm regression problem.
\begin{enumerate}
    \item Start from an initial point $x \in \mathbb{R}^n$ such that $A^\top x=b$.
    \item Repeat the following
    \begin{enumerate}
        \item Find $\Delta\in \mathbb{R}^n$ such that $A^\top \Delta = 0$ and minimizes
        \begin{align}
        \label{eq:exact-res-problem}
         \sum_{j=1}^d \Delta_j \frac{d}{d x_j} |x_j|^p + O_p(1) \gamma_p(|x_j|,\Delta_j)
        \end{align}
        \item Update $x$ to $x+\Delta$
    \end{enumerate}
\end{enumerate}

It is shown that by a logarithmic number of iterations of the above algorithm, one can solve the $p$-norm regression problem to $\epsilon$ accuracy.
Moreover, instead of iteration (a) in the above algorithm, we can guess the value of $z=\Delta^\top \nabla \norm{x}_p^p$ (in a binary search fashion) and solve a logarithmic number of problems of the following form (see Theorems \ref{thm:res-problem-number} and \ref{thm:solve-res-with-gamma})
\begin{align}
\label{eq:res-problem}
    \min_{\Delta} \gamma_p(x, \Delta) \\
    A^\top \Delta = 0 \nonumber \\
    g^\top \Delta = z \nonumber,
\end{align}
where $g$ is the gradient vector, $\nabla \norm{x}_p^p$,
and we overload the notation for $\gamma$ to denote $\sum_{j=1}^d \gamma_p(x_j, \Delta_j)$ with $\gamma_p(x, \Delta)$. Note that by doing line search on the value of $\Delta^\top \nabla \norm{x}_p^p$, we can also remove the $O_p(1)$ term completely. \cite{AdilKPS19} has shown that it is enough to solve $O_p(\alpha \log(\frac{n}{\epsilon}))$ many problems of form \eqref{eq:res-problem} to $\alpha$ approximation to solve the $p$-norm regression problem. Theorem 5.8 of \cite{AdilKPS19} states that \eqref{eq:res-problem} can be solved by solving $\tilde{O}_p(n^{\frac{p-2}{3p-2}})$ many problems of the form
\begin{align}
\label{eq:weighted-lin-reg}
    \min_{\Delta} \frac{1}{2} \Delta^\top R \Delta \\
    A^\top \Delta = 0 \nonumber\\
    g^\top \Delta = z, \nonumber
\end{align}
where $R$ is a diagonal matrix. Note that \eqref{eq:weighted-lin-reg} is a weighted linear regression problem. To solve \eqref{eq:res-problem} using instances of \eqref{eq:weighted-lin-reg}, one starts from an initial $R$ and then repeats the following.
\begin{enumerate}
    \item Solve \eqref{eq:weighted-lin-reg} with $R$ to find $\Delta^*$.
    \item Update $R$ based on $\Delta^*$ via a multiplicative weights update algorithm.
\end{enumerate}
Therefore by the above discussion, one can solve the $p$-norm regression problem by solving $\tilde{O}_p (\alpha n^{\frac{p-2}{3p-2}} \log(\frac{n}{\epsilon}))$ many instances of $\eqref{eq:weighted-lin-reg}$ --- see \cite{AdilKPS19}.
One caveat of this result is that the number of problems needed to be solved is exponential in $p$. This was improved by Adil and Sachdeva \cite{AdilS20} to $O(p \alpha n^{\frac{p-2}{3p-2}} \log^2 (n/\epsilon))$ solves of instances of \eqref{eq:weighted-lin-reg}. This is achieved by showing that a smoothed $p$-norm problem can be solved by solving $p n^{\max\{\frac{1}{q},\frac{1}{p-1}\}} \log^2 (n/\epsilon)$ instances of the smoothed $q$-norm problem and using a homotopy approach to solve the problem for the following norms $2^{-k}p, 2^{-k+1}p,\ldots, \frac{p}{2}, p$. The $q$ is then picked to be $ \sqrt{\log(n)}$ which adds a factor of $n^{o(1)}$ to the running time.
The overall result can be summarized as the following.

\begin{theorem}[\cite{AdilKPS19,AdilS20}]
\label{thm:reg-using-res}
The problem of
\[
\min_{A^\top x=b} \norm{x}_p^p
\]
can be solved by solving $O(p \alpha \log^2 (\frac{n}{\epsilon}))$ instances of the following residual problem each to an $\alpha$ approximation, where the objective value of the optimal solution is less than or equal to one and $n^{-1/p}\leq t_j\leq 1, \forall j$.
\begin{align*}
    \min_{\Delta} \gamma_p(t, \Delta) \\
    A^\top \Delta = 0 \nonumber \\
    g^\top \Delta = z \nonumber
\end{align*}
\end{theorem}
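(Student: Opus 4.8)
The plan is to obtain Theorem~\ref{thm:reg-using-res} by assembling the residual-problem machinery of \cite{AdilKPS19} developed in Section~\ref{sec:p-norm-via-residual} (Theorems~\ref{thm:res-problem-number} and~\ref{thm:solve-res-with-gamma}, stated there for (P1) but carrying over to (P2) essentially verbatim) with the improved iterative-refinement and norm-homotopy analysis of \cite{AdilKPS19,AdilS20}. I would proceed in three layers. First, the (P2) analogue of Theorem~\ref{thm:res-problem-number}: start from a feasible $x$ with $A^\top x = b$, e.g. $x^{(0)} = \argmin_{A^\top x = b}\|x\|_2^2$, which by the H\"older argument of Lemma~\ref{lemma:initial-point} is an $n^{|p-2|/2}$-approximation. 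The coordinate-wise Bregman sandwich quoted just before the theorem, summed over coordinates (a vector statement mirroring Lemma~\ref{lemma:bound-res-value}), shows that a step $x \mapsto x + \lambda\widetilde\Delta$ with $\widetilde\Delta$ an $\alpha$-approximate maximizer of $\Delta^\top\nabla\|x\|_p^p - \tfrac{p-1}{p2^p}\gamma_p(|x|,\Delta)$ over $\{A^\top\Delta = 0\}$ decreases the optimality gap by a factor $1 - \Omega(\lambda/\alpha)$; choosing $\lambda = \Omega(1/p)$ and iterating gives $O(p\alpha\log(n/\epsilon))$ residual solves once the $\mathrm{poly}(n)$ initial gap is accounted for.

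Second, I would convert the residual objective (the $\Delta^\top g$-linear-plus-$\gamma_p$ form) into the pure two-constraint form $\min_\Delta \gamma_p(t,\Delta)$ subject to $A^\top\Delta = 0$ and $g^\top\Delta = z$: this is Theorem~\ref{thm:solve-res-with-gamma} — binary search over the value $z$ of the linear term, $O(\log(\cdot))$ guesses, with the stated approximation amplification, which is $O(1)$ for $p$ bounded away from $1$ and $\infty$. Then I would normalize the instance: rescale so $\|x^*\|_p^p \le 1$; since every iterate $x$ along the (phase-wise) iteration stays within a constant factor of optimal, $|x_j| \le O(1)$, and any coordinate with $|x_j| < n^{-1/p}$ can be floored to $n^{-1/p}$, changing the objective by at most an additive $n\cdot n^{-1} = O(1)$, i.e. only a constant factor. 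This yields both $n^{-1/p}\le t_j\le 1$ and an $O(1)$-normalized residual optimum $\le 1$ for every instance, as claimed.

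Third — and this is where the real work lies — I would improve the $p$-dependence from the $2^{O(p)}$ implicit in $\lambda = (\tfrac{p-2}{p2^p})^{1/\min\{1,p-1\}}$ to a clean $O(p)$ by following \cite{AdilS20}: run the scheme as a homotopy over norms $q_k = 2, 4, \ldots, 2^{\lceil\log_2 p\rceil}, p$, using the $q_k$-optimal solution as the warm start for the $q_{k+1}$-problem. H\"older bounds the warm-start gap by $n^{O(1/q_k)}$, so phase $k$ costs only $O(q_{k+1}\log(n/\epsilon))$ residual solves, and summing the geometric series over the $O(\log p)$ phases gives the total $O(p\alpha\log^2(n/\epsilon))$, the extra logarithmic factor being the error-tolerance bookkeeping across nested phases, exactly as in Theorem~\ref{thm:q-to-p} and \cite{AdilS20}.

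The hard part will be this third step: reproducing \cite{AdilS20}'s telescoping analysis carefully enough to land exactly $O(p\alpha\log^2(n/\epsilon))$ — in particular, tracking how an $\alpha$-approximate residual solve at norm $q_{k+1}$ composes with the warm start from norm $q_k$ without the accuracy degrading, and verifying that the normalization of the previous paragraph can be re-established in every phase. The (P2)-specific points — that $A^\top\Delta = 0$ (hence feasibility $A^\top x = b$) is preserved by every update, and that Lemma~\ref{lemma:initial-point}'s bound holds with exponent $|p-2|/2$ here — are routine but should be stated explicitly.
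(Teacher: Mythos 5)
Your proposal matches the paper's treatment, which imports this statement from \cite{AdilKPS19,AdilS20} and sketches it via exactly the three pieces you identify: the residual-problem iteration (the (P2) analogue of Theorem~\ref{thm:res-problem-number}), the binary search that reduces the linear-plus-$\gamma_p$ residual to the two-constraint $\gamma_p$-minimization (Theorem~\ref{thm:solve-res-with-gamma}), and the \cite{AdilS20} $q$-norm homotopy summarized in Theorem~\ref{thm:q-to-p}, together with the normalization that forces $n^{-1/p}\le t_j\le 1$ and a unit residual optimum. One presentation point: the $\lambda=\Omega(1/p)$ contraction you invoke in your first layer is only produced by the homotopy of your third layer --- within a single phase, $\lambda=\left(\tfrac{p-2}{p\,2^p}\right)^{1/\min\{1,p-1\}}$ is $2^{-\Theta(p)}$ --- so layers one and three should be merged into a single phase-wise iteration count rather than stated as if each independently contributes $O(p\alpha\log(n/\epsilon))$ solves.
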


Therefore, we can focus on solving the residual problems of the form \eqref{eq:res-problem} by solving instances of \eqref{eq:weighted-lin-reg}. First, it is easy to analytically find the solution of the weighted regression problem by the method of Lagrange multipliers.

\begin{theorem}[\cite{AdilKPS19}]
\label{thm:sol-weighted-lin-reg}
The solution to the problem
\begin{align*}
    \min_{\Delta} \frac{1}{2} \Delta^\top R \Delta \\
    A^\top \Delta = 0 \\
    g^\top \Delta = z.
\end{align*}
is
\begin{align}
\label{eq:res-solution}
\Delta = R^{-1} \left( A + \frac{z-g^\top R^{-1} A}{g^\top R^{-1} g} \right) v,
\end{align}
where
\begin{align*}
v = \frac{z (A^\top R^{-1} A)^{-1} A^\top R^{-1} g}{g^\top R^{-1} g - g^{\top} R^{-1} A (A^\top R^{-1} A)^{-1} A^\top R^{-1} g},
\end{align*}
\end{theorem}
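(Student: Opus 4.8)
The plan is to obtain the closed form by the method of Lagrange multipliers, exploiting that the objective $\frac{1}{2}\Delta^\top R \Delta$ is strictly convex (since $R$ is a positive diagonal matrix) and both constraints $A^\top \Delta = 0$ and $g^\top \Delta = z$ are affine. Under these conditions the feasible set satisfies a trivial constraint qualification, so the KKT conditions are necessary and sufficient and identify the unique minimizer. Concretely, I would introduce a dual vector $\mu \in \mathbb{R}^d$ for the constraint $A^\top \Delta = 0$ and a dual scalar $\nu \in \mathbb{R}$ for $g^\top \Delta = z$, and write the Lagrangian $L(\Delta,\mu,\nu) = \frac{1}{2}\Delta^\top R \Delta - \mu^\top A^\top \Delta - \nu(g^\top \Delta - z)$.

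Setting $\nabla_\Delta L = 0$ gives the stationarity relation $R\Delta = A\mu + \nu g$, hence $\Delta = R^{-1}(A\mu + \nu g)$ as $R$ is invertible; it remains to pin down $\mu$ and $\nu$ from the two primal constraints. Substituting $\Delta$ into $A^\top \Delta = 0$ yields $A^\top R^{-1} A\,\mu = -\nu\, A^\top R^{-1} g$, and since $A$ has full column rank and $R^{-1}\succ 0$ the Gram matrix $A^\top R^{-1} A$ is invertible, so $\mu = -\nu (A^\top R^{-1} A)^{-1} A^\top R^{-1} g$. Substituting both $\Delta$ and this expression for $\mu$ into $g^\top \Delta = z$ produces the scalar equation $\nu\big(g^\top R^{-1} g - g^\top R^{-1} A (A^\top R^{-1} A)^{-1} A^\top R^{-1} g\big) = z$. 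The coefficient of $\nu$ equals $\|(I - P) R^{-1/2} g\|_2^2$, where $P$ is the orthogonal projector onto $\mathrm{range}(R^{-1/2}A)$, so it is strictly positive whenever $g \notin \mathrm{range}(A)$; one assumes this mild non-degeneracy (otherwise $g^\top \Delta \equiv 0$ on the feasible set and the second constraint is either infeasible, for $z\neq 0$, or vacuous). Solving for $\nu$, back-substituting to get $\mu$, and then plugging into $\Delta = R^{-1}(A\mu + \nu g)$ and collecting terms yields the stated formula, with the quantity $v$ in the statement being (up to the sign convention chosen for the dual variables) the multiplier $\mu$ scaled by $\nu$.

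I do not anticipate a genuine obstacle: the argument is a standard two-constraint quadratic program solved exactly, and the only places requiring a sentence of justification are (i) invertibility of $A^\top R^{-1} A$, which follows from the full-rank assumption on $A$ maintained throughout the paper, and (ii) non-vanishing of the scalar denominator, which follows from $g \notin \mathrm{range}(A)$ and simultaneously guarantees feasibility of the constraint $g^\top\Delta = z$. The remaining work is purely the algebraic bookkeeping of matching the rearranged expression for $\Delta$ to the form written in the statement and fixing the sign conventions in the Lagrangian; this follows the derivation in \cite{AdilKPS19}.
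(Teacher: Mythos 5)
Your method is exactly what the paper invokes but does not spell out: the paper says only ``it is easy to analytically find the solution of the weighted regression problem by the method of Lagrange multipliers'' and cites \cite{AdilKPS19}, so there is no in-paper proof to compare against, and your Lagrangian derivation is the right one. Your stationarity relation $\Delta = R^{-1}(A\mu + \nu g)$, the elimination $\mu = -\nu (A^\top R^{-1}A)^{-1} A^\top R^{-1} g$, the scalar equation $\nu\bigl(g^\top R^{-1}g - g^\top R^{-1}A(A^\top R^{-1}A)^{-1}A^\top R^{-1}g\bigr) = z$, and the discussion of invertibility of the Gram matrix and of the positivity of the denominator are all correct.

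The one place where you are too quick is the final sentence, where you assert that ``collecting terms yields the stated formula.'' It does not, because the displayed equation as printed in the theorem is not dimensionally consistent: the factor $v\in\mathbb{R}^{d}$ sits outside a bracket containing the term $\frac{z - g^\top R^{-1} A}{g^\top R^{-1} g}$, yet $g^\top R^{-1}A$ is a $1\times d$ row vector while $z$ is a scalar, so the bracket does not even parse, and the expression vanishes in the sanity check $A = 0$ even though the true minimizer $z R^{-1}g/(g^\top R^{-1}g)$ does not. What your back-substitution actually produces is $\Delta = \nu R^{-1}\bigl(g - A(A^\top R^{-1}A)^{-1}A^\top R^{-1}g\bigr)$, equivalently $\Delta = R^{-1}\bigl(-Av + \frac{z + g^\top R^{-1}Av}{g^\top R^{-1}g}\,g\bigr)$ with $v$ as in the statement; that is evidently the intended formula after a typo (a misplaced $v$ and a dropped $g$). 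You should say this explicitly rather than claim agreement with the printed expression. A second small slip: you describe $v$ as ``$\mu$ scaled by $\nu$,'' but $\mu$ is already proportional to $\nu$; the correct relation is simply $v = -\mu = \nu (A^\top R^{-1}A)^{-1}A^\top R^{-1}g$.
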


Note that $\Delta$ can be find in time needed to compute $v$ plus $O(n^2)$ to do a matrix-vector multiplication. Moreover to find the vector $(A^\top R^{-1} A)^{-1} A^\top R^{-1} g$, we can use a $\tilde{O}(1)$ spectral approximation of $(A^\top R^{-1} A)^{-1}$ and use Richardson's iteration (Lemma \ref{lemma:richardson}) to find $(A^\top R^{-1} A)^{-1} A^\top R^{-1} g$ with high accuracy in $\tilde{O}(1)$ iterations. Moreover if $\widetilde{R}$ is within an $\tilde{O}(1)$ factor of $R$, then $(A^\top \widetilde{R}^{-1} A)$ is a $\tilde{O}(1)$ spectral approximation of $(A^\top R^{-1} A)$. Therefore it is enough to maintain the inverse $(A^\top \widetilde{R}^{-1} A)^{-1}$ such that $\widetilde{R}$ is within an $\tilde{O}(1)$ factor of $R$ and apply this inverse in the Richardson's iteration to the vector $A^\top R^{-1} g$. The reason that this gives improvements is that the entries of $R$ change slowly. Therefore, we can use the following identity to perform the low-rank updates.

\begin{lemma}[Sherman-Morrison-Woodbury identity]
For an invertible $n \times n$ matrix $M$ and matrices $U\in \Rbb^{n \times r},C \in \Rbb^{r \times r},V \in \Rbb^{r \times n}$, we have
\[
(M+UCV)^{-1} = M^{-1} - M^{-1} U (C^{-1} + V M^{-1} U)^{-1} V M^{-1}.
\]
\end{lemma}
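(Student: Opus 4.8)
The plan is to verify the identity by a direct computation. Write $W \defeq C^{-1} + V M^{-1} U$ and let
\[
N \defeq M^{-1} - M^{-1} U W^{-1} V M^{-1}
\]
denote the claimed right-hand side; here we use (as the statement implicitly does, by writing the relevant inverses) that $M$, $C$, and $W$ are all invertible. I would show $(M+UCV)N = I$; since everything in sight is a square $n\times n$ matrix, a right inverse is a two-sided inverse, so this suffices.

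First I would expand $(M+UCV)N = MN + UCVN$. The term $MN$ collapses at once to $I - U W^{-1} V M^{-1}$. For the term $UCVN = UCVM^{-1} - UC(VM^{-1}U)W^{-1}VM^{-1}$, the key algebraic step is to recognize $VM^{-1}U = W - C^{-1}$, so that $C(VM^{-1}U)W^{-1} = (CW - I)W^{-1} = C - W^{-1}$. Substituting this back gives $UC(VM^{-1}U)W^{-1}VM^{-1} = UCVM^{-1} - UW^{-1}VM^{-1}$, hence $UCVN = UW^{-1}VM^{-1}$. Adding the two pieces, the correction term $-UW^{-1}VM^{-1}$ from $MN$ exactly cancels $UCVN$, leaving $(M+UCV)N = I$, as desired.

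There is no real obstacle here beyond bookkeeping; the only substantive points to flag are the invertibility hypotheses on $C$ and on $W = C^{-1}+VM^{-1}U$, which in every application in this paper are guaranteed and which I would simply list as assumptions. As an alternative derivation one could instead form an appropriate $2\times 2$ block matrix built from $M$, $U$, $V$, and $C^{-1}$, and compute the top-left block of its inverse in two ways—once using the Schur complement of the $M$-block and once using the Schur complement of the $C^{-1}$-block—then equate the results; but for a short, self-contained lemma the direct verification above is the cleanest, so that is the route I would take.
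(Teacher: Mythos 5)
Your direct verification is correct: with $W = C^{-1} + V M^{-1} U$ and $N = M^{-1} - M^{-1} U W^{-1} V M^{-1}$, the expansion $(M+UCV)N = (I - UW^{-1}VM^{-1}) + UW^{-1}VM^{-1} = I$ goes through exactly as you write it, and the key substitution $C(VM^{-1}U)W^{-1} = C - W^{-1}$ is the right simplification. You also correctly flag the implicit hypotheses that $C$ and $W$ be invertible. The paper itself states the Sherman--Morrison--Woodbury identity without proof, treating it as a standard fact, so there is no in-paper argument to compare against; your one-paragraph direct verification is the textbook proof and is the natural thing to supply if a proof were wanted.
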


After finding the vector $(A^\top R^{-1} A)^{-1} A^\top R^{-1} g$, we can multiply it by $z$ or $g^\top R^{-1} A$ to find the terms we need for \eqref{eq:res-solution}.

Now we are equipped to state the algorithm for solving the residual problem of the form \eqref{eq:res-problem}. Note that as we mentioned, we only need to solve about $n^{(p-2)/(3p-2)}$ instances of weighted linear regression \eqref{eq:weighted-lin-reg} to solve \eqref{eq:res-problem}. If we naively find the inverse of $A^\top R^{-1} A$ for each instance separately, the cost becomes about $n^{\omega+(p-2)/(3p-2)}$ which is too high. The entries of the diagonal matrix $R$ change slowly. So, we can use the inverse maintenance technique based on the Sherman-Morrison-Woodbury identity to perform low-rank updates to the inverse in order to maintain a spectral approximation of the true inverse. This spectral approximation can then be used as a preconditioner in the Richardson's iteration (Lemma \ref{lemma:richardson}) to compute \eqref{eq:res-solution}.

There are two important differences between our approach to inverse maintenance and the previous one used by \cite{AdilKPS19}.

\begin{enumerate}
    \item We cannot update the inverse matrix directly because we only have access to it via a linear operator --- see Section \ref{sec:accessInverse}. Therefore we keep a dense matrix $Q$ in which the result of low-rank updates is accumulated. Hence our spectral approximation of the inverse is of the form $Y+Q$ where $Y$ is a linear operator for the inverse of $A^\top (\hat{R}^{(0)})^{-1} A$ computed by the sparse inverse solver of Peng and Vempala \cite{PengVempala21}. Note that because $Q$ is the result of multiplication of different parts of $Y$, each of its entries have $\tilde{O}(1)$ bits.
    
    \item Because of the cost of access to the inverse operator (see Theorem \ref{thm:sparse_inverse}, we cannot allow updates of rank more than about $(n/m)$ where $m$ is the number of blocks of the block Krylov space used for the sparse inverse. Therefore, once every $(n/m)^{(p-2)/(3p-2)}$ iterations, we compute the sparse inverse from scratch.
\end{enumerate}

Algorithm \ref{alg:res-problem} illustrates the pseudocode of our process. The red lines show the main differences between our algorithm and \cite{AdilKPS19}. The next theorem states that Algorithm \ref{alg:res-problem} solves the residual problem and gives a bound on the size of low-rank updates which we use to bound the running time of the algorithm.

\begin{theorem}[\cite{AdilKPS19}]
\label{thm:res-k}
Suppose for problem \eqref{eq:res-problem}, the objective of the optimal solution is less than one and $n^{-1/p}\leq t_j\leq 1, \forall j$, then
Algorithm \ref{alg:res-problem} returns a solution $\overline{\Delta}$ with high probability such that $A^\top\overline{\Delta} = 0$, $g^\top \Delta = z$, and $\gamma_p(t, \overline{\Delta})$ is within an $O_p(1)$ factor of the optimal objective value.

Moreover, let $k_{i,\eta}$ be the number of indices $j$ that are added to $E$ at iteration $i$ due to changes between $2^{-\eta}$ and $2^{-\eta+1}$. Let $t=\tilde{\Theta}_p(n^{\frac{p-2}{3p-2}})$ be the number of iterations. Then
\begin{align}
\sum_{i=1}^t k_{i,\eta} = \begin{cases}
0 & ~  ~ \text{ if } 2^\eta > t \\
\tilde{O}_p \left(n^{\frac{p+2}{3p-2}} 2^{2\eta}\right) & ~ ~ \text{ otherwise.}
\end{cases}
\end{align}
\end{theorem}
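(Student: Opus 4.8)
The plan is to separate the two claims. The first claim — that Algorithm \ref{alg:res-problem} returns an $O_p(1)$-approximate solution to the residual problem \eqref{eq:res-problem} — I would essentially inherit from \cite{AdilKPS19}. The algorithm solves a sequence of $\tilde\Theta_p(n^{(p-2)/(3p-2)})$ weighted linear regression instances \eqref{eq:weighted-lin-reg} with multiplicative-weights updates on the diagonal matrix $R$, and the convergence analysis there shows that this suffices. The only modification in our version is that each instance is solved using a spectral preconditioner $A^\top \widetilde R^{-1} A$ maintained via the Sherman–Morrison–Woodbury identity, with $\widetilde R$ kept within an $\tilde O(1)$ factor of $R$ by a bucketing rule, and the inverse operator itself produced by the sparse solver (Theorem \ref{thm:sparse_inverse}) and accumulated into a dense matrix $Q$. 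So I would argue that (i) whenever $\widetilde R \Approx{\tilde O(1)} R$ entrywise, $A^\top \widetilde R^{-1} A \Approx{\tilde O(1)} A^\top R^{-1} A$ as quadratic forms, hence $Y+Q$ is an $\tilde O(1)$-spectral approximation; (ii) by Lemma \ref{lemma:richardson} a constant number of Richardson iterations then computes $(A^\top R^{-1}A)^{-1}A^\top R^{-1}g$ to the polynomial accuracy needed; and (iii) the error from the sparse solver's Frobenius bound $\kappa^{-10}n^{-10}$ and from the finite bit length $\tilde O(m)$ of the operator is absorbed into this accuracy. With an accurate enough solution of each \eqref{eq:weighted-lin-reg}, the outer MWU analysis of \cite{AdilKPS19} goes through unchanged, giving the $O_p(1)$ approximation and the constraints $A^\top\overline\Delta=0$, $g^\top\overline\Delta=z$ exactly (the latter two are enforced by the closed form \eqref{eq:res-solution}).

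For the second claim — the bound on $\sum_i k_{i,\eta}$ — the key input is the stability of the weights. The bucketing scheme adds index $j$ to the update set $E$ at iteration $i$ only when $R_{jj}$ has moved by a factor in $[2^{-\eta},2^{-\eta+1})$ since it was last refreshed. The MWU analysis of \cite{AdilKPS19} controls the total multiplicative movement of the weight vector across all $t$ iterations in a fixed $\ell_s$-norm (here $s=3$ rather than $2$, which is the source of the $n^{(p-2)/(3p-2)}$ exponents). So the first step is to write down, from that analysis, the bound $\sum_{i=1}^{t}\|\log(R^{(i)}/R^{(i-1)})\|_3^3 \le \tilde O_p(\text{poly})$, or the analogous statement the paper actually uses. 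Then, by a standard potential/counting argument: an index contributing to bucket $\eta$ at iteration $i$ must have accumulated multiplicative change of order $2^{-\eta}$ since its last refresh, so each such contribution costs at least $\sim 2^{-3\eta}$ (in the $3$-norm charging) or is refreshed within $\sim$ (a function of $2^{-\eta}$) iterations of fixed-rate MWU. Combining the per-step change budget with the number of steps $t=\tilde\Theta_p(n^{(p-2)/(3p-2)})$ yields $\sum_i k_{i,\eta} = \tilde O_p(t \cdot 2^{2\eta}) = \tilde O_p(n^{(p+2)/(3p-2)} 2^{2\eta})$ once $2^\eta\le t$ (after which the exponent arithmetic $\frac{p-2}{3p-2}+1=\frac{4p-4}{3p-2}$, paired with the $2^{2\eta}$ factor from the $3$-norm-to-count conversion, lands on $\frac{p+2}{3p-2}$), and $0$ when $2^\eta>t$ because a change larger than the entire horizon cannot occur.

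The two pieces then assemble: the refresh-from-scratch schedule (every $(n/m)^{(p-2)/(3p-2)}$ iterations) guarantees each low-rank update has rank $\le n/m$, which I would verify is consistent with the $k_{i,\eta}$ bound summed over the relevant $\eta$'s, so the Woodbury updates stay within the claimed per-step budget and the algorithm is well-defined; and the bucketed $k_{i,\eta}$ bound is exactly what the running-time analysis downstream will integrate against the cost of a rank-$k$ Woodbury update on $\tilde O(m)$-bit entries.

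The main obstacle I expect is the second claim — specifically, extracting the right per-iteration weight-change bound in the $3$-norm from the MWU analysis of \cite{AdilKPS19} and doing the charging carefully enough to get the precise exponent $2^{2\eta}$ (rather than $2^{\eta}$ or $2^{3\eta}$). The correctness claim is largely bookkeeping on top of \cite{AdilKPS19}, but one must be careful that the sparse-solver error and the $\tilde O(m)$-bit truncation of the operator $Y$, together with the accumulated dense matrix $Q$ whose entries are products of $\tilde O(m)$-bit quantities, do not blow up the bit complexity or spoil the spectral-approximation guarantee over all $t$ iterations; this is where I would spend the most care in the write-up.
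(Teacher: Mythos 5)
The paper does not actually prove Theorem \ref{thm:res-k}: it is stated with the attribution \cite{AdilKPS19} and then used directly in the proof of Theorem \ref{thm:p-norm}. The implicit claim the paper makes — and the one you should be explicit about — is that the modifications in Algorithm \ref{alg:res-problem} (the underlined Steps (1) and (5)) do not touch the MWU weight-update rule (the $r_j^{(i)}$, $w_j^{(i)}$ updates in Step (3)) or the bucketing/refresh logic in Step (4); they only change \emph{how} the preconditioner $Y+Q$ is represented and refreshed. Since the iteration count and the $k_{i,\eta}$ combinatorics in \cite{AdilKPS19} depend only on the weight trajectory, the theorem transfers verbatim, with the caveat that the periodic from-scratch recomputation in Step (1) resets $\hat{r}$ and $c$, and — because $r_j$ is non-decreasing along the trajectory — resetting $\hat{r}_j$ to a larger value can only shrink the relative changes, so the cited bound remains an upper bound. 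Your sketch identifies all of this, but it reads as though you expect to re-derive the theorem; you should instead frame the argument as ``the MWU analysis is unchanged, and the only new thing to verify is that the modified preconditioner maintenance still yields an $\Otil(1)$-spectral approximation at every step,'' which is exactly the point of your item (i)–(iii) and of the numerical-stability Lemma \ref{lemma:smw-stability} elsewhere in the paper.

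On the $k_{i,\eta}$ bound itself: your charging heuristic has the right ingredients but the arithmetic you flag as uncertain is resolved by the specific bucketing rule in Step (4). The algorithm only adds $j$ to $E$ via bucket $\eta$ when the counter $c_{j,\eta}$ reaches $2^\eta$, i.e.\ it waits for $2^\eta$ per-iteration hits of relative size $\approx 2^{-\eta}$ before refreshing. If the \cite{AdilKPS19} stability bound controls the total number of $(i,j)$ pairs with per-iteration relative change in $[2^{-\eta},2^{-\eta+1})$ by (budget) $\cdot\, 2^{3\eta}$ (a $3$-norm charge, as you guessed), then dividing by the $2^\eta$ hits required per trigger gives (budget) $\cdot\, 2^{2\eta}$ trigger events. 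That is where the $2^{2\eta}$ comes from; the budget itself carries the extra $n^{4/(3p-2)}$ factor beyond $t$ that lands the total on $n^{(p+2)/(3p-2)}$. So your sketch is directionally right, but you should state the ``$c_{j,\eta}\geq 2^\eta$ threshold converts $2^{3\eta}\mapsto 2^{2\eta}$'' step explicitly rather than leaving both $2^{2\eta}$ and $2^{3\eta}$ as live candidates; and you should be clear that neither you nor the paper is re-deriving the \cite{AdilKPS19} budget, only reusing it.
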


\RestyleAlgo{algoruled}
\IncMargin{0.15cm}

\begin{algorithm}[!h]
\textbf{Input:} $A\in \mathbb{R}^{n\times d}$, $t\in \mathbb{R}^n$, $g\in \mathbb{R}^n$, $z\in \mathbb{R}$, $p\in (1,\infty)$\\
$w_j^{(0)} \leftarrow 0, \forall j\in[n]$\\
$x \leftarrow \vec{0}$\\
$\rho \leftarrow \tilde{\Theta}_p(n^{\frac{(p^2-4p+2)}{p(3p-2)}})$\\
$\beta \leftarrow \tilde{\Theta}_p(n^{\frac{p-2}{3p-2}})$\\
$\alpha \leftarrow \tilde{\Theta}_p(n^{-\frac{(p^2-5p+2)}{p(3p-2)}})$\\
$\tau \leftarrow \tilde{\Theta}_p(n^{\frac{(p-1)(p-2)}{(3p-2)}})$\\
$T = \alpha^{-1} n^{1/p} = \tilde{\Theta}_p(n^{\frac{p-2}{3p-2}})$\\
$r_j^{(0)} \leftarrow (n^{1/p} t_j)^{p-2}, \forall j\in [n]$\\
\ForAll{$i = 0,\ldots, t$}{
{\bf \underline{(1) Recompute the sparse inverse operator.}}\\
\If{\underline{$i$ is a multiple of $(n/m)^{(p-2)/(3p-2)}$}}{
\underline{$\hat{r}_j \leftarrow r_j^{(i)}, \forall j\in [n]$}\\
\underline{$c_{j, \eta} \leftarrow 0$ for all $j\in [n]$ and $\eta \in [\log(t)] \cup \{0\}$}\\
\underline{Use Theorem~\ref{thm:sparse_inverse} to find
inverse operator for $A^\top \hat{R}^{-1} A$
with error $\kappa^{-10} n^{-10}$, $Y$.}\\
\underline{Set $Q$ to a $d \times d$ matrix of all zeros.}
}

{\bf (2) Solve the weighted linear regression by Richardson's iteration and preconditioning. See Section \ref{subsec:spectral_approx}, Lemma \ref{lemma:richardson}, and Theorem \ref{thm:linear-regression}.}
\\
$\Delta^* = \argmin_{\Delta} \sum_{j\in [n]} r_j \Delta_j^2$ s.t. $A^\top \Delta = 0$ and $g^\top \Delta = z$. \DontPrintSemicolon \tcp*{The solution is given by \eqref{eq:res-solution} and we can use $Y+Q$ as a preconditioner to find $\Delta^*$ with a high accuracy.}
{\bf (3) Update the weights.}\\
\If{$\norm{ \Delta^* }_p^p \leq \tau$}{
$w^{(i+1)}_j \leftarrow w^{(i)}_j + \alpha |\Delta_j^*|, \forall j \in [n]$\\
$x \leftarrow x + \alpha \Delta^*$\\
} \Else{
For all $j\in [n]$ with $|\Delta_j| \geq \rho$ and $r_j \leq \beta$ do \\
~  ~ ~ $w_j^{(i+1)} \leftarrow 4^{1/(p-2)} \max \{n^{1/p}, w_j^{(i)}\}$\\
For rest of $j\in[n]$ do $w_j^{(i+1)} \leftarrow w_j^{(i)}$
}
$r_j^{(i+1)} \leftarrow (n^{1/p} t_j)^{p-2} + w_j^{p-2}$\\
{\bf (4) Find the significant buckets.}\\
For all $j\in [n]$ find the least non-negative integer $\eta_j$ such that $\frac{1}{2^{\eta_j}} \leq \frac{r_j^{(i+1)} - r_j^{(i)}}{\hat{r}_j}$\\
For all $j \in [n]$, $c_{j,\eta_j} \leftarrow c_{j,\eta_j} + 1$\\
$E \leftarrow \cup_{\eta: i+1 \mod{2^\eta}\equiv 0} \{j: c_{j,\eta} \geq 2^\eta\}$\\
$\hat{r}_{j} \leftarrow r_j^{(i+1)}, \forall j \in E$\\
$c_{j,\eta} \leftarrow 0$ for all $(j,\eta)$ such that $j\in E$.\\
{\bf \underline{(5) Update the preconditioner.}}\\
\underline{$Q \leftarrow Q - (Y + Q) (A_E)^\top ((R^{(i+1)})_{E,E}^{-1} + A_E (Y + Q) (A_E)^\top)^{-1} A_E (Y + Q)$.} \DontPrintSemicolon \tcp*{$A_E$ is a matrix obtained by taking the rows of $A$ in $E$.}
}
\Return $d^{-1/p} x$
\caption{Algorithm for the Residual Problem --- Steps (1) and (5), underlined, are new.}
\label{alg:res-problem}
\end{algorithm}

The above theorem states that after $k$ iterations of the algorithm at most about $k^3$ of the weights have changed significantly.
Now, we are equipped to bound the time complexity of solving the residual problem \eqref{eq:res-problem}.

\pNorm*
\begin{proof}
First, note that the only randomness of the algorithm comes from finding the inverse matrices in iterations that are multiples of $(n/m)^{(p-2)/(3p-2)}$. Because the number of iterations of the algorithm is $\tilde{O}_p(n^{(p-2)/(3p-2)})$, using the sparse inverse approach of~\cite{PengVempala21},
i.e, Theorem \ref{thm:reg-using-res},
the algorithm succeeds with high probability.

Next, we need to bound the time complexity of Algorithm \ref{alg:res-problem}. Note that the time complexity of this algorithm is bounded by finding the sparse inverse operators (Line 14), solving the weighted linear regression problems (Line 16), and performing low-rank updates to the inverse (Line 29). In the following, we bound the running time of these.

\textbf{Running time of finding the sparse inverse operators.} The algorithm has at most $\tilde{O}_p(n^{(p-2)/(3p-2)})$ iterations and we compute the sparse inverse once every $\tilde{O}_p((n/m)^{(p-2)/(3p-2)})$ iterations. Therefore, by Theorem \ref{thm:sparse_inverse}, the total cost of computing sparse inverse operators over the course of the algorithm is
\[
\tilde{O}_p\left(m^{(p-2)/(3p-2)}\left(n\cdot \text{nnz}(A)\cdot m + n^2 \cdot m^3 + n^{\omega} m^{2-\omega}\right)\right)
\]

\textbf{Running time of solving weighted linear regression problems using the preconditioner $Y+Q$.}
As discussed in the beginning of this section, to find the solution \eqref{eq:res-solution} to the weighted regression problem \eqref{eq:weighted-lin-reg}, it is enough to have an inverse $(A^\top \widetilde{R}^{-1} A)^{-1}$, where $\widetilde{R}$ is within a factor of $\tilde{O}(1)$ of $R$, and use this inverse in the Richardson's iteration (Lemma \ref{lemma:richardson}). We call $(A^\top \widetilde{R}^{-1} A)^{-1}$ a preconditioner for $(A^\top R^{-1} A)^{-1}$. Note that $Y+Q$ (see Algorithm \ref{alg:res-problem}), provides such a preconditioner. The reason is that the algorithm checks once every $2^\eta$ iterations whether the number of changes of size between $2^{-\eta}$ and $2^{-\eta+1}$ to an entry is more than $2^\eta$. This way the algorithm guarantees the contribution of such changes to an entry is at most $2^{-\eta+1} \cdot 2^{\eta} \cdot 2 = 4$. Moreover there are a logarithmic number of different $\eta$'s. Therefore $\widetilde{R}$ can be at most $\tilde{O}(1)$ far from $R$.

By Theorem \ref{thm:sol-weighted-lin-reg} and Lemma \ref{lemma:richardson} to solve the weighted linear regression problems, we need to compute $A^\top R^{-1} g$ and do a logarithmic number of matrix vector multiplications with the spectral approximation of $(A^\top R^{-1} A)^{-1}$ that is provided by $Y + Q$. Note that computing $A^\top R^{-1} g$ takes $O(\text{nnz}(A))$ time. By Theorem \ref{thm:sparse_inverse}, the cost of the multiplications is $\tilde{O}(n^2 + \text{nnz}(A) \cdot m)=\tilde{O}(n^2)$ and by assumption $\text{nnz}(A) \cdot m \leq n^2$. Therefore because the algorithm has $\tilde{O}_p(n^{(p-2)/(3p-2)})$ iterations, the total cost of solving weighted linear regression problems is
\[
\tilde{O}_p(n^{(p-2)/(3p-2)} n^2)
\]

\textbf{Running time of low rank updates.} Because we find the sparse inverse operator once every $(n/m)^{(p-2)/(3p-2)}$ iterations. No low rank update happens due to $\eta$ that $2^\eta > (n/m)^{(p-2)/(3p-2)}$.

We list the operations and the respective running times needed to do an update of rank $r$ in the following.

\begin{enumerate}
    \item Computing $(Y+Q)(A_E)^\top$. By theorem \ref{thm:sparse_inverse}, the cost of multiplying the sparse inverse with a $d\times r$ matrix is $\tilde{O}(r\cdot \text{nnz}(A) \cdot m + d^{2} r^{\omega-2})$. Moreover, $Q$ is an $d\times d$ matrix such that each entry of which has $\tilde{O}(1)$ bits. Therefore multiplying $Q$ by a $d\times r$ matrix takes
    \[
    \tilde{O}(\textsc{MM}(d,d,r))\leq  \tilde{O}\left(\left(\frac{d}{r}\right)^2 \textsc{MM}(r,r,r)\right) = \tilde{O}(d^2 r^{\omega-2}) \leq \tilde{O}(n^2 r^{\omega-2})
    \]
    time. Computing $A_E (Y+Q)$ is similar.
    \item Computing $((R^{(i+1)})_{E,E}^{-1} + A_E (Y + Q) (A_E)^\top)^{-1}$. Computing $A_E (Y + Q) (A_E)^\top$
    is a left multiply by $A_E$ which has size $r \times d$. This multiplication takes
    \[
    \tilde{O}(\textsc{MM}(r,d,r)) \leq \tilde{O}\left(\frac{d}{r}\textsc{MM}(r,r,r)\right) = \tilde{O}(d r^{\omega-1}) \leq \tilde{O}(n^2 r^{\omega-2})
    \]
    time, where the last inequality follows from $r, d\leq n$. Finally $((R^{(i+1)})_{E,E}^{-1} + A_E (Y + Q) (A_E)^\top)$ is an $r\times r$ matrix and each of its entries have $\tilde{O}(1)$ bits. Therefore computing its inverse takes $\tilde{O}(r^\omega) \leq \tilde{O}(n^2 r^{\omega-2})$.
    
    \item Computing $(Y + Q) (A_E)^\top ((R^{(i+1)})_{E,E}^{-1} + A_E (Y + Q) (A_E)^\top)^{-1} A_E (Y + Q)$. For this we need to multiply a $d\times r$ matrix with an $r\times r$ matrix and then multiply a $d \times r$ matrix with an $r\times d$ matrix. This takes
    \[
    \tilde{O}(\textsc{MM}(d,r,d)+\textsc{MM}(d,r,r))=\tilde{O}(\textsc{MM}(d,r,d))\leq  \tilde{O}\left(\left(\frac{d}{r}\right)^2 \textsc{MM}(r,r,r)\right) = \tilde{O}(d^2 r^{\omega-2}) \leq \tilde{O}(n^2 r^{\omega-2})
    \]
    time.
\end{enumerate}

Therefore the cost of an update of rank $r$ is $\tilde{O}(r\cdot \text{nnz}(A) \cdot m + n^{2} r^{\omega-2})$. Hence, by Theorem \ref{thm:res-k}, the total cost of low rank updates over the course of the algorithm is

\begin{align*}
& \sum_{\eta=0}^{\log (n/m)^{(p-2)/(3p-2)}}
\sum_{i=0}^z \left(k_{i,\eta} \cdot \text{nnz}(A) \cdot m + n^2 \left(k_{i,\eta}\right)^{\omega-2}\right) \\
& = \left(\text{nnz}\left(A\right) \cdot m\right)
\left(\sum_{\eta=0}^{\log (n/m)^{\left(p-2\right)/\left(3p-2\right)}}
\sum_{i=0}^z k_{i,\eta} \right)
+
n^2 \sum_{\eta=0}^{\log (n/m)^{(p-2)/(3p-2)}}
\sum_{i=0}^z \left(k_{i,\eta}\right)^{\omega-2} \\
& \leq \tilde{O}_p \left(\text{nnz}\left(A\right)\cdot m \cdot n^{(p+2)/(3p-2)} \cdot \left(\frac{n}{m}\right)^{2(p-2)/(3p-2)}\right)
+
n^2 \sum_{\eta=0}^{\log (n/m)^{(p-2)/(3p-2)}} \tilde{O}_p\left(n^{\frac{p-(10-4\omega)}{3p-2}} 2^{\eta(3\omega-7)}\right) \\
& \leq \tilde{O}_p\left(\text{nnz}(A) \cdot n \cdot m^{(p+2)/(3p-2)}\right) +
\tilde{O}_p\left(n^2 n^{\frac{p-(10-4\omega)}{3p-2}}
\left(1+ \frac{n^{\frac{(p-2)(3\omega-7)}{3p-2}}}{m^{(p-2)/(3p-2)}}\right)\right) \\ &
= \tilde{O}_p
\left(\text{nnz}(A) \cdot n \cdot m^{(p+2)/(3p-2)} + n^{2+\frac{p-(10-4\omega)}{3p-2}} + \frac{n^{\omega}}{m^{(p-2)/(3p-2)}}\right),
\end{align*}
where the first inequality follows from Theorem \ref{thm:sparse_inverse} and the concavity of the function $f(a)=a^{\omega-2}$, which implies that the maximum of the summation happens when all the summands are equal.
The second inequality follows from the fact that the maximum summand of the summation is either for $\eta=0$ or $\eta=\log(n/m)^{(p-2)/(3p-2)}$ depending on whether $3\omega-7$ is positive or negative.

\textbf{Numerical stability of inverse maintenance.}
The inverse operator $Y$ that we start with has some error (see Theorem \ref{thm:sparse_inverse}). We need to argue that this error does not increase over the iterations where we do inverse maintenance using the Sherman-Morrison-Woodbury identity.
Lemma \ref{lemma:smw-stability} shows that the inverse maintenance using Sherman-Morrison-Woodbury identity is numerically stable. The round-off error of finding the low-rank inverses does not increase the overall error by assuming that the round-off error is much smaller than the error of the sparse inverse solver. For the numerical stability of matrix operations, see \cite{DemmelDHK07,DemmelDH07}. For stability of inverse maintenance (in the context of linear programming), see \cite{renegar88}.

\end{proof}

\begin{lemma}[Numerical stability of inverse maintenance by Sherman-Morrison-Woodbury identity]
\label{lemma:smw-stability}
Let $Z, \widetilde{Z}, C$ be positive semi-definite matrices. Let $0 < \epsilon < 1$. Suppose 
\begin{align}
\label{eq:smw-assumption}
\frac{1}{1+\epsilon} Z^{-1} \preceq \widetilde{Z}^{-1} \preceq \frac{1}{1-\epsilon} Z^{-1}.
\end{align}
Then
\[
\frac{1}{1+\epsilon}(Z + U^\top C U)^{-1} \preceq \widetilde{Z}^{-1} - \widetilde{Z}^{-1} U (C^{-1} + U^\top \widetilde{Z}^{-1} U)^{-1} U^\top \widetilde{Z}^{-1} \preceq \frac{1}{1-\epsilon}(Z + U^\top C U)^{-1}
\]
\end{lemma}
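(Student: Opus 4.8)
The plan is to reduce the statement to the Sherman--Morrison--Woodbury identity together with two elementary monotonicity facts for the Loewner order: inversion reverses it on positive definite matrices ($0\prec M\preceq N$ implies $N^{-1}\preceq M^{-1}$), and it is preserved by adding a fixed matrix and by multiplying by a nonnegative scalar. As a preliminary remark, although the hypotheses say ``positive semidefinite'', the appearance of $Z^{-1}$ and $\widetilde Z^{-1}$ forces $Z,\widetilde Z\succ 0$, and since $C\succeq 0$ we have $Z+U^\top C U\succeq Z\succ 0$; hence every inverse written below is well defined. (If $C$ is singular, $C^{-1}$ in the displayed identity is read via the usual limiting argument $C\to C+\delta I$, $\delta\downarrow 0$, or via a pseudoinverse form; this is a routine technicality I will not dwell on.)

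First I would apply the Sherman--Morrison--Woodbury identity (the lemma stated immediately above) with $M=\widetilde Z$ to collapse the middle expression into a single inverse:
\[
\widetilde Z^{-1}-\widetilde Z^{-1}U(C^{-1}+U^\top\widetilde Z^{-1}U)^{-1}U^\top\widetilde Z^{-1}=(\widetilde Z+U^\top C U)^{-1}.
\]
Thus it suffices to prove $\frac{1}{1+\epsilon}(Z+U^\top C U)^{-1}\preceq(\widetilde Z+U^\top C U)^{-1}\preceq\frac{1}{1-\epsilon}(Z+U^\top C U)^{-1}$.

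Next I would invert the hypothesis: applying order-reversal to $\frac{1}{1+\epsilon}Z^{-1}\preceq\widetilde Z^{-1}\preceq\frac{1}{1-\epsilon}Z^{-1}$ gives $(1-\epsilon)Z\preceq\widetilde Z\preceq(1+\epsilon)Z$. Writing $M:=U^\top C U\succeq 0$ and adding $M$ throughout, and then using $\epsilon M\succeq 0$ to shift the scalar factor inside, one obtains the chain $(1-\epsilon)(Z+M)\preceq(1-\epsilon)Z+M\preceq\widetilde Z+M\preceq(1+\epsilon)Z+M\preceq(1+\epsilon)(Z+M)$. All of $\widetilde Z+M$ and $(1\pm\epsilon)(Z+M)$ are positive definite, so inverting and reversing the order yields precisely the two-sided bound on $(\widetilde Z+M)^{-1}$ required above; combined with the first displayed equation this finishes the proof.

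There is essentially no hard step: the argument is a short chain of Loewner-order manipulations bracketing one invocation of Sherman--Morrison--Woodbury. The only points requiring care are bookkeeping ones — making explicit that the hypothesis silently assumes $Z,\widetilde Z$ invertible (otherwise order reversal under inversion is unavailable and the lemma is vacuous), and, for full rigor, handling a possibly singular $C$ through the limiting argument noted above. Neither is a genuine obstacle, but both should be flagged so the lemma's scope is unambiguous.
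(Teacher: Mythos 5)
Your proposal is correct and matches the paper's argument: apply Sherman--Morrison--Woodbury to rewrite the middle expression as $(\widetilde Z + U^\top C U)^{-1}$, invert the hypothesis to $(1-\epsilon)Z \preceq \widetilde Z \preceq (1+\epsilon)Z$, add $U^\top C U$ and absorb the scalar using $\epsilon\, U^\top C U \succeq 0$, then invert again. The paper performs these same steps (adding the PSD term and reinverting), differing only in exposition; your explicit flag that the hypotheses implicitly require $Z,\widetilde Z\succ 0$ and that singular $C$ needs a limiting or pseudoinverse reading is a reasonable clarification the paper leaves tacit.
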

\begin{proof}
First note that because $C$ is positive semi-definite $U^\top C U$ is also positive semi-definite. Moreover $(1-\epsilon)(U^\top C U) \preceq U^\top C U \preceq (1+\epsilon)(U^\top C U)$. Because $\epsilon<1$, $(1-\epsilon)(U^\top C U)$ is positive semi-definite. Therefore by assumption \eqref{eq:smw-assumption},
\[
(1-\epsilon) (Z + U^\top C U) \preceq (\widetilde{Z} + U^\top C U) \preceq (1+\epsilon) (Z + U^\top C U).
\]
Therefore because $Z + U^\top C U$ and $\widetilde{Z} + U^\top C U$ are positive semi-difinite matrices,
\begin{align}
\label{eq:lemma-stab-smw-1}
\frac{1}{1+\epsilon}(Z + U^\top C U)^{-1} \preceq (\widetilde{Z} + U^\top C U)^{-1} \preceq \frac{1}{1-\epsilon}(Z + U^\top C U)^{-1}
\end{align}
Moreover by Sherman-Morrison-Woodbury identity,
\begin{align}
\label{eq:lemma-stab-smw-2}
(\widetilde{Z} + U^\top C U)^{-1} = \widetilde{Z}^{-1} - \widetilde{Z}^{-1} U (C^{-1} + U^\top \widetilde{Z}^{-1} U)^{-1} U^\top \widetilde{Z}^{-1}.
\end{align}
The result follows by combining \eqref{eq:lemma-stab-smw-1} and \eqref{eq:lemma-stab-smw-2}.
\end{proof}

\section{Accessing the sparse block-Krylov inverse}
\label{sec:accessInverse}

In this section, we formalize, with error bounds,
the type of access one has to the inverse of projection operator
defined from sparse matrix.
Specifically, we describe the running time of solving a sparse matrix against a batch of vectors as stated
in Theorem~\ref{thm:sparse_inverse}.
The statements below are closely based on the top-level
claims in~\cite{PengVempala21}\footnote{Version 2, \url{https://arxiv.org/pdf/2007.10254v2.pdf}}. 

\begin{proof}[Proof of Theorem~\ref{thm:sparse_inverse}.]

Since $A W A^{\top}$ is already symmetrized,
we can ignore the outer step involving a multiplication
by the transpose of an asymmetric matrix.
So we will show how to give access to an
operator $Z_{AWA^{\top}}$ such that
\begin{align}
\norm{
Z_{AWA^{\top}}
-
\left( A W A^{\top} \right)^{-1}
}_{F}
\leq
\epsilon
\end{align}

The algorithm that computes access to this $Z$ was given in Section 7
of~\cite{PengVempala21}.
\begin{enumerate}
    \item Perturb with random Gaussian $R$ to form the perturbed matrix
    \[
    \widehat{A}
    =
    AWA^{\top}
    +
    R
    \]
    \item Generate Krylov space with $\Otil(m)$ extra columns,
    \[
    K
    = 
    \left[
    \begin{array}{ccccc}
    G^{S}
    &
    \widehat{A} G^{S}
    &
    \widehat{A}^2 G^{S}
    &
    \ldots
    &
    \widehat{A}^{m - 1} G^{S}
    \end{array}
    \right],
    \]
    which is padded with a dense, $n$-by-$n - ms$ dense Gaussian $G$ to form $Q = [K, G]$.
    \item Replace the inverse of the block Krylov space portion, $H = K^{\top} \widehat{A} K$
    using the block-Hankel inverse.
    \item Complete the inverse using another Schur complement
    / low rank perturbation,
    and further multiplications by $Q$ on the outside.
\end{enumerate}

Specifically, for step (3),
the $Z_H$ generated by the block-Hankel solver
is the product of two explicit matrices, each with
$\Otil(m \log(\kappa))$ bits,
\[
Z_H
=
X_{H} Y_{H}
\]
such that the cost of 
computing $X_{H} B$, $Y_{H} B$, $X_{H}^{\top} B$, $Y_{H}^{\top} B$
for some $sm$-by-$r$ matrix $B$ with up to
$\Otil(m \log{\kappa})$ bits per entry is
$\Otil(m^2 \log{\kappa}
\textsc{MM}(\frac{n}{m}, \frac{n}{m}, r))$
by Lemma 6.6 of~\cite{PengVempala21}\footnotemark[1],

Then in step (4),
$Z_H$ is extended to the full inverse for
$AWA^{\top}$,
$Z_{AWA^{\top}}$, via
the operator:
\begin{equation}
Z_{A W A^{\top}}
=
Q
\left[\begin{array}{c|c}
I & Z_{HG}\\
\hline
0 & I
\end{array}\right]
\left[\begin{array}{c|c}
Z_H & 0 \\
\hline
0 & Z_{GG}
\end{array}\right]
\left[\begin{array}{c|c}
I & 0 \\
\hline
Z_{GH} & I
\end{array}\right]\\
\end{equation}
where the intermediate matrices $Z_{GH}$, $Z_{GG}$, and $Z_{HG}$ are given by:
\begin{align}
Z_{GH} = Z_{HG}^{\top} & =
- \left( \widehat{A} G \right)^{\top} \widehat{A} K Z_{H}\\
Z_{GG}
& =
\left[ 
\left( \widehat{A} G \right)^{\top} \widehat{A} G
- \left( \widehat{A} G \right)^{\top} \widehat{A} K
Z_{H} \left( \widehat{A} K \right)^{\top} \widehat{A} G
\right]^{-1}
\end{align}

The last block has size $\Otil(m)$,
so the blocks get explicitly computed.
We can also extract out its effect, and treat it as a
separate perturbation to the overall matrix:
\[
Z_{A W A^{\top}}
=
Q
\left( 
Z_{H}
+
\left[\begin{array}{c|c}
Z_{HG} Z_{GG} Z_{GH} & Z_{HG} Z_{GG} \\
\hline
Z_{GG} Z_{GH} & Z_{GG}
\end{array}\right]
\right)
\left( \widehat{A} Q \right)^{\top}.
\]
Here we overloaded notation
by extending $Z_{H}$ onto the full coordinates
(filling the extra with $0$s).
Observe the second matrix is
\[
\left[
\begin{array}{c}
Z_{HG}\\
I
\end{array}
\right]
Z_{GG}
\left[
\begin{array}{cc}
Z_{GH} & I
\end{array}
\right]
\]
(this is, in fact, excatly what Sherman-Moorison-Woodbury gives).
So we can treat the whole thing as a rank-$\Otil(m)$
perturbation to $Z_{H}$.
Substituing in the factorization of $Z_{H}$ as
$X_{H} Y_{H}^{\top}$, we get back
\[
Z_{A W A^{\top}}
=
\left[
\begin{array}{cc}
X_{H}
& \left[
\begin{array}{c}
Z_{HG}\\
I
\end{array}
\right]
Z_{GG}
\end{array}
\right]
\left[
\begin{array}{c}
Y_{H}\\
\left[
\begin{array}{cc}
Z_{GH} & I
\end{array}
\right]
\end{array}
\right]
\]

The cost of multiplying $Z$ against a $n$-by-$r$ matrix
$B$ is then broken down into three parts:
\begin{enumerate}
    \item The cost of multiplying $Y_{H}$ against an $sm$-by-$r$ matrix: by Lemma 6.6 of~\cite{PengVempala21}\footnotemark[1],
    this takes time $\Otil((\frac{n}{m})^{\omega} m^2 \log{\kappa})$.
    \item The cost of multipling $X_{H}$ against an $sm$-by-$r$ matrix, with $\Otil(m \log{\kappa})$ extra bits in the numbers.
    This takes the same time as above, since both $X_{H}$ and $Y_{H}$ already have $\Otil(m \log{\kappa})$ bits in their entries.
    \item Multiplying the extra matrices $Z_{HG}$,
    $Z_{GG}$, and $Z_{GH}$: these are $\Otil(m)$-by-$n$
    matrices (with $\Otil(m \kappa)$ bits per number),
    so the running times are lower order
    terms by the assumption of $m < n^{1/4}$.
    \item Mutliplying $n$-by-$r$ matrices with $\Otil(m \log(\kappa))$ bits by $Q$ and $(AQ)^{\top}$:
    this has two parts: multiplying by $G^{S}$, and by
    a degree $m$ polynomial in $\widehat{A}$.
    The former's cost is at most $O(n^2 m^{3})$ by the
    sparsity bound on $G^{S}$, while the latter's cost 
    is the cost of $O(m r)$ matrix-vector multiplies in 
    $A$ against vectors with $m \log\kappa$ bits.
\end{enumerate}

\end{proof}

\bibliographystyle{alpha}
\bibliography{ref}
\end{document}